\documentclass[preprint,3p, times]{elsarticle}
\pdfoutput=1

\journal{Journal of Logical and Algebraic Methods in Programming}
\usepackage[T1]{fontenc}

\usepackage{graphicx}
\usepackage{subfigure}
\usepackage[table, dvipsnames]{xcolor}
\usepackage{amsmath,amssymb}
\usepackage{mathtools}
\usepackage{longtable}
\usepackage{booktabs}
\usepackage{multirow,multicol}
\usepackage{array}
\usepackage{fontawesome}

\usepackage{enumitem}

\usepackage{trimclip}

\usepackage{adjustbox}

\usepackage{icgtjlampextras}

\newcommand{\ti}[1]{%
 \ensuremath{\vcenter{\hbox{\includegraphics{diagrams/#1.pdf}}}}%
}

\colorlet{h1color}{blue!70!black} %
\colorlet{h2color}{orange!90!black} %
\colorlet{h3color}{blue!40!white} %
\colorlet{h4color}{green!40!black} %
\usepackage[scope, electronic]{knowledge} %

\knowledgeconfigure{notion}

\knowledgeconfigureenvironment{theorem,proposition,lemma,corollary,definition,proof}{knowledge=attracts} %

\knowledge{anchor point}[anchor points|Anchor points]{notion}
\knowledge{diagnose file}{notion}
\knowledge{knowledge package}{notion,typewriter}
\knowledge{paper mode}{notion}
\knowledge{electronic mode}{notion}
\knowledge{composition mode}{notion}

\knowledge{knowledge}{notion}
\knowledge{knowledges}{synonym}
\knowledge{Knowledges}{synonym}

\knowledge{LaTeX}[latex|LATEX|Latex]{url={https://fr.wikipedia.org/wiki/LaTeX},  text=\LaTeX}

\knowledge{pdflatex}{scope={section:example},url={https://en.wikipedia.org/wiki/PdfTeX}, typewriter}
\begin{scope}\label{package}
   \knowledge{knowledge}{link=knowledge package,link scope}
\end{scope}

\knowledge{notion}
 | Grothendieck fibration
 | Grothendieck fibrations

\begin{scope}\label{gf}

\knowledge{notion}
 | Cartesian morphism
 | Cartesian
 | Cartesianity

\knowledge{notion}
 | Cartesian lifting
 | Cartesian liftings

\end{scope}

\knowledge{notion}
 | Grothendieck opfibration
 | Grothendieck opfibrations

\begin{scope}\label{gopf}

\knowledge{notion}
 | op-Cartesian morphism
 | op-Cartesian
 | op-Cartesianity

\knowledge{notion}
 | op-Cartesian lifting
 | op-Cartesian liftings
 | op-Cartesian lifts

\end{scope}

\knowledge{notion}
 | multi-opfibration
 | multi-opfibrations

\begin{scope}\label{mof}

\knowledge{notion}
 | multi-op-Cartesian liftings
 | multi-op-Cartesian lifting
 | multi-op-Cartesian lifts

\knowledge{notion}
 | Universal property of multi-opfibrations
 | universal property of multi-opfibrations

\knowledge{notion}
 | Essential uniqueness
 | essential uniqueness

\knowledge{notion}
 | strong

\knowledge{notion}
 | lifting property of isomorphisms
 | lifting property of isomorphisms for strong multi-opfibrations
 | isomorphism lifting

\knowledge{notion}
 | Pullback-lifting lemma for strong multi-opfibrations
 | pullback lifting lemma for strong multi-opfibrations
 | pullback lifting

\end{scope}

\knowledge{notion}
 | residual multi-opfibration
 | residual multi-opfibrations

\begin{scope}\label{rmof}

\knowledge{notion}
 | residual multi-op-Cartesian liftings
 | residual multi-op-Cartesian lifting
 | residual multi-op-Cartesian lifts

\knowledge{notion}
 | residue
 | residues

\knowledge{notion}
 | Universal property of residual multi-opfibrations
 | universal property of residual multi-opfibrations
 | universal property

\knowledge{notion}
 | Essential uniqueness
 | essentially unique
 | essential uniqueness

\end{scope}

\knowledge{notion}
 | universal property@rmofrup
 | universal property of residues@rmofrup

\knowledge{notion}
 | residual opfibration

\knowledge{notion}
 | residual op-Cartesian@rof
 | residual op-Cartesian lifts@rof

\knowledge{notion}
 | isoglobular residual opfibration

\knowledge{notion}
 | stable system of monics

\begin{scope}\label{ssm}

\knowledge{notion}
 | stable under pullback
 | stability of $\cM $-morphisms under pullback
 | stability of monomorphisms under pullback
 | $\cM $-morphisms are stable under pullback
 | $\cM $-morphisms stable under pullback

\knowledge{notion}
 | includes all isomorphisms

\knowledge{notion}
 | stable under composition

\knowledge{notion}
 | decomposition property of $\cM $-morphisms

\knowledge{notion}
 | has pullbacks along $\cM $-morphisms
 | pullbacks along $\cM $-morphisms
 | has pullbacks and pushouts along $\cM $-morphisms
 | has pushouts and pullbacks along $\cM $-morphisms
 | pullbacks along $\cM _2$-morphisms
 | has pullbacks along monomorphisms
 | has pullbacks of spans of $\cM $-morphisms

\knowledge{notion}
 | has pushouts along $\cM $-morphisms
 | has pushouts and final pullback complements (FPCs) along $\cM $-morphisms
 | pushout along an $\cM $-morphism
 | has pullbacks, pushouts and FPCs along $\cM $-morphisms
 | has pushouts along monomorphisms
 | Pushouts along monomorphisms
 | has pushouts along regular monomorphisms
 | Pushouts along regular monomorphisms
 | Pushouts along $\cM $-morphisms
 | pushouts along $\cM _1$-morphisms
 | pushouts along $\cM $-morphisms
 | has pushouts and FPCs along $\cM $-morphisms
 | pushouts along regular monomorphisms

\knowledge{notion}
 | has final pullback complements (FPCs) along $\cM $-morphisms
 | has FPCs along $\cM $-morphisms
 | having FPCs along $\cM $-morphisms
 | FPCs along $\cM $-morphisms

\knowledge{notion}
 | $\cM $-morphisms are stable under pushout
 | stability of $\cM $-morphisms under pushout
 | stable under pushout

\knowledge{notion}
 | stability of $\cM $-morphisms under FPCs
 | $\cM $-morphisms are stable under FPCs
 
\knowledge{notion}
 | pushouts along $\cM $-morphisms are stable under $\cM $-pullbacks
 | stability of $\cM $-pushouts under $\cM $-pullbacks

\knowledge{notion}
 | $\cM $-initial
 | $\cM $-initial object

\knowledge{notion}
 | extremal morphisms
 | extremality

\knowledge{notion}
 | has pullbacks of cospans of $\cM $-morphisms

\end{scope}

\knowledge{notion}
 | universal property of pullbacks
 | Universal property of pullbacks (PBs)
 | pullbacks
 | pullback

\knowledge{notion}
 | Universal property of pushouts (POs)
 | universal property of pushouts
 | pushouts
 | pushout

\knowledge{notion}
 | Universal property of final pullback complements (FPCs)
 | universal property of FPCs
 | final pullback complement (FPC)
 | final pullback complement
 | final pullback complements
 | FPCs
 | FPC
 | final pullback complements (FPCs)
 | universal property of FPCs

\knowledge{notion}
 | has pullbacks
 | have pullbacks

\knowledge{notion}
 | has pushouts
 | have pushouts

\knowledge{notion}
 | has final pullback complements (FPCs)

\knowledge{notion}
 | multi-initial pushout complement (mIPC)
 | multi-initial pushout complement
 | multi-initial pushout complements (mIPCs)
 | $\cM $-multi-initial pushout complements ($\cM $-multi-IPCs)
 | $\cM $-multi-IPC
 | multi-initial pushout complements
 | $\cM $-multi-initial pushout complement
 | $\cM $-multi-IPCs
 | mIPC
 | multi-IPC
 | multi-IPCs
 | multi-IPC (mIPC)

\knowledge{notion}
 | has multi-initial pushout complements (mIPCs) along $\cM $-morphisms
 | has $\cM $-multi-IPCs

\begin{scope}\label{mIPC}

\knowledge{notion}
 | universal property
 | universal property of mIPCs
 | universal property of multi-IPCs
 | universal isomorphisms

\end{scope}

\begin{scope}\label{cmtSq}

\knowledge{notion}
 | source

\knowledge{notion}
 | target

\knowledge{notion}
 | domain

\knowledge{notion}
 | codomain

\knowledge{notion}
 | $dom:\textsf {PB}_h(\bfC ,\cM )\rightarrow \bfC $ is a Grothendieck fibration

\knowledge{notion}
 | $dom:\textsf {PB}_h(\bfC ,\cM )\rightarrow \bfC $ is a Grothendieck opfibration

\knowledge{notion}
 | $dom:\textsf {PB}_h(\bfC ,\cM )\rightarrow \bfC $ satisfies a Beck-Chevalley condition (BCC)

\knowledge{notion}
 | (BCC-1)

\knowledge{notion}
 | (BCC-2)
\end{scope}

\begin{scope}\label{domCsqPBT}

\knowledge{notion}
 | stable under pullbacks
 
\knowledge{notion}
 | Pushouts along $\cM $-morphisms are pullbacks

\end{scope}

\knowledge{notion}
 | has pushouts along $\cM $-morphisms@domFpo
 | pushouts along $\cM $-morphisms@domFpo

\knowledge{notion}
 | has FPCs along $\cM $-morphisms@domFfpc
 | FPCs along $\cM $-morphisms@domFfpc

\knowledge{notion}
 | target functor $T:\mathsf {PO}_v(\bfC ,\cM )\rightarrow \bfC \vert _{\cM }$ is a multi-opfibration
 | target functor $T_{\mathsf {PO}}:\mathsf {PO}_v(\bfC ,\cM )\rightarrow \bfC \vert _{\cM }$ is a multi-opfibration

\knowledge{notion}
 | source functor $S:\mathsf {PO}_v(\bfC ,\cM )\rightarrow \bfC \vert _{\cM }$ is a Grothendieck opfibration
 | source functor $S_{\mathsf {PO}}:\mathsf {PO}_v(\bfC ,\cM )\rightarrow \bfC \vert _{\cM }$ is a Grothendieck opfibration

\knowledge{notion}
 | target functor $T: \mathsf {FPC}_v(\bfC ,\cM )\rightarrow \bfC \vert _{\cM }$ is a Grothendieck opfibration
 | target functor $T_{\mathsf {FPC}}: \mathsf {FPC}_v(\bfC ,\cM )\rightarrow \bfC \vert _{\cM }$ is a Grothendieck opfibration

\knowledge{notion}
 | source functor $S:\mathsf {FPC}_v(\bfC ,\cM )\rightarrow \bfC \vert _{\cM }$ is a residual multi-opfibration
 | source functor $S_{\mathsf {FPC}}:\mathsf {FPC}_v(\bfC ,\cM )\rightarrow \bfC \vert _{\cM }$ is a residual multi-opfibration

\knowledge{notion}
 | pushouts along $\cM $-morphisms are stable under pullback@mOPFtrgt

\knowledge{notion}
 | factorization structure for morphisms

\knowledge{notion}
 | $(E,M)$-structured
 | (epi,$\cM $)-structured
 | ($\cE $, $\cM $)-structured
 | ($\cE $,$\cM $)-structured
 | ($\cE _{\text {fin}}$, $\cM _{\text {fin}}$)-structured
 | (epi, mono)-structured

\begin{scope}\label{EMS}

\knowledge{notion}
 | closed under composition with isomorphisms

\knowledge{notion}
 | has $(E,M)$-factorizations of morphisms
 | has (auto-augmented, inert)-factorizations of morphisms
 | (auto-augmented, inert) factorization
 | epi-$\cM $-factorization
 | epi-$\cM $-factorizations
 | $\cE $-$\cM $-factorization
 | $\cE $-$\cM $-factorization
 | epi-mono-factorizations
 | $\cE $-$\cM $-factorizations
 | (auto-augmented, inert)-factorizations
 | epi-mono-factorization
 | epi-regular mono-factorization

\knowledge{notion}
 | has the unique $(E,M)$-diagonalization property
 | has a unique (auto-augmented,inert) diagonalization property
 | unique $(E,M)$-diagonalization property

\knowledge{notion}
 | diagonal

\knowledge{notion}
 | morphisms that are both in $E$ and in $M$ are isomorphisms

\knowledge{notion}
 | closed under composition

\knowledge{notion}
 | $(E,M)$-factorizations are essentially unique
 | epi-$\cM $-factorizations are essentially unique
 | essential uniqueness of epi-$\cM $-factorizations
 | $\cE $-$\cM $-factorizations are essentially unique
 | essential uniqueness

\end{scope}

\knowledge{notion}
 | $\cM $-final pullback complement pushout augmentation (FPA)
 | final pullback complement pushout augmentation (FPA)
 | final pullback complement pushout augmentations (FPAs)
 | final pullback complement pushout augmentation
 | FPAs
 | FPA
 | FPC-pushout-augmentations (FPAs)
 | FPC-pushout-augmentation
 | $\cM $-FPC-pushout-augmentation
 | FPC-pushout-augmentation (FPA)

\knowledge{notion}
 | $\mathsf {FPC}_v(\bfC ,\cM )$ is (auto-augmented,inert)-structured
 | (auto-augmented, inert)-structured

\begin{scope}\label{FPCfact}

\knowledge{notion}
 | auto-augmented FPCs
 | auto-augmented FPC

\knowledge{notion}
 | inert FPCs
 | inert FPC

\end{scope}

\knowledge{notion}
 | double category (DC)
 | double category
 | double categories

\begin{scope}\label{dc}
\knowledge{notion}
 | domain
 | domain pseudofunctor

\knowledge{notion}
 | source
 | source functor

\knowledge{notion}
 | codomain

\knowledge{notion}
 | target
 | target functor

\knowledge{notion}
 | objects

\knowledge{notion}
 | vertical morphisms

\knowledge{notion}
 | horizontal morphisms

\knowledge{notion}
 | squares

\knowledge{notion}
 | horizontal unit
 | $\exists $ horizontal/vertical units
 | horizontal and vertical units

\knowledge{notion}
 | horizontal unit square

\knowledge{notion}
 | vertical composition

\knowledge{notion}
 | horizontal composition
 
\knowledge{notion}
  | globular isomorphisms
 | globular isomorphism

\end{scope}

\knowledge{notion}
 | $S:\mathsf {FPC}_v(\bfC ,\cM )\rightarrow \bfC \vert _{\cM }$ is a residual multi-opfibration

\knowledge{notion}
 | multi-sum
 | $\cM $-multi-sums
 | multi-sums
 | $\cM $-multi-sum

\begin{scope}\label{ms}

\knowledge{notion}
 | (multi-) universal property
 | universal property of multi-sums

\knowledge{notion}
 | has multi-sums

\knowledge{notion}
 | multi-sum extension Lemma
 | multi-sum extension

\end{scope}

\knowledge{notion}
 | compositional rewriting DC (crDC)
 | crDC
 | compositional rewriting double category
 | compositional rewriting double category (crDC)
 | compositional rewriting double categories
 | compositional rewriting double categories (crDCs)
 | crDCs

\begin{scope}\label{crDC}

\knowledge{notion}
 | $\bD _0$ has multi-sums
 | multi-sum structure of $\bD _0$

\knowledge{notion}
 | $\bD _0$ and $\bD _1$ have pullbacks
 | $\bD _0$ has pullbacks
 | $\mathbb {D}_1$ has pullbacks
 | $\bD _0:=\bfC \vert _{\cM }$ has pullbacks

\knowledge{notion}
 | $\hComp :\bD _1\times _{\bD _0}\bD _1\rightarrow \bD _1$ is an isoglobular residual opfibration
 | isoglobular residual opfibration

\knowledge{notion}
 | $S:\bD _1 \rightarrow \bD _0$ is a strong multi-opfibration
 | the source functor $S:\bD _1\rightarrow \bD _0$ is a strong multi-opfibration
 | the source functor is a strong multi-opfibration
 | $S$ is a strong multi-opfibration
 | the source functor $S:\bD _1\rightarrow \bD _0$ is a multi-opfibration
 | $S$ is a multi-opfibration

\knowledge{notion}
 | $T:\bD _1 \rightarrow \bD _0$ is a residual multi-opfibration
 | the target functor $T:\bD _1\rightarrow \bD _0$ is a residual multi-opfibration
 | $T$ is a residual multi-opfibration

\knowledge{notion}
 | horizontal decomposition property
 | horizontal decomposition

\knowledge{notion}
 | complex decomposition property

\knowledge{notion}
 | composite rule
 | composite
 | composites

\knowledge{notion}
 | rule composition
 | composition

\end{scope}

\knowledge{notion}
 | \gcong

\knowledge{notion}
 | Concurrency theorem
 | concurrency theorem
 | concurrency theorems
 | concurrency

\begin{scope}\label{cct}

\knowledge{notion}
 | Synthesis
 | synthesis

\knowledge{notion}
 | Analysis
 | analysis

\end{scope}

\knowledge{notion}
 | Associativity theorem for compositional rewriting double categories
 | associativity theorem
 | associativity theorems

\knowledge{notion}
 | $\cM $-partial map classifier
 | $\mono {\bfC }$-partial map classifier
 | $\cM $-partial map classifiers

\begin{scope}\label{Madh}

\knowledge{notion}
 | $\bfC $ has pushouts and pullbacks along $\cM $-morphisms

\knowledge{notion}
 | has pushouts and pullbacks along $\cM $-morphisms

\knowledge{notion}
 | $\cM $-morphisms are stable under pushout
 | Stability of $\cM $-morphisms under pushout
 | stability of $\cM $-morphisms under pushout
 | under pushouts

\knowledge{notion}
 | $\cM $-van Kampen squares

\knowledge{notion}
 | Pushouts along $\cM $-morphisms are pullbacks
 | pushouts along $\cM $-morphisms are pullbacks

\end{scope}

\knowledge{notion}
 | pushouts along $\cM $-morphisms are pullbacks@lemMIPCexistence

\knowledge{notion}
 | pushouts along $\cM $-morphisms are pullbacks@thmTPOvMOF

\knowledge{notion}
 | category $\mathbf {Graph}$

\knowledge{notion}
 | category of directed simple graphs
 | directed simple graphs
 | $\mathbf {SGraph}$

\knowledge{notion}
 | $\cM $-subobject

\knowledge{notion}
 | finitary
 | finitary objects

\knowledge{notion}
 | finitary category

\knowledge{notion}
 | finitary restriction

\knowledge{notion}
 | category $\mathbf {FinGraph}$

\knowledge{notion}
 | category $\mathbf {FinSGraph}$

\knowledge{notion}
 | extremal monomorphisms

\knowledge{notion}
 | quasi-topos
 | quasi-topoi

\begin{scope}\label{qt}

\knowledge{notion}
 | has pushouts along regular monomorphisms

\knowledge{notion}
 | pushouts along $\cM $-morphisms are pullbacks
 | pushouts along regular monomorphisms are pullbacks

\knowledge{notion}
 | pushouts along regular monomorphisms are stable under pullbacks
 | stability of pushouts along regular monomorphisms under pullbacks

\knowledge{notion}
 | rm-quasi-adhesive
 | rm-quasi-adhesive category

\knowledge{notion}
 | exists a final pullback-complement (FPC)

\knowledge{notion}
 | strict initial object

\knowledge{notion}
 | disjoint coproducts

\knowledge{notion}
 | unions of regular subobjects are effective

\knowledge{notion}
 | solid quasi-topos

\knowledge{notion}
 | weak adhesive HLR category

\end{scope}

\knowledge{notion}
 | (VK-a)

\knowledge{notion}
 | (VK-b)

\knowledge{notion}
 | (X-iii-a)@notationVKa
 | (L-iii-a)@notationVKa
 | (H-iii-a)@notationVKa
 | (V-iii-a)@notationVKa
 | (W-iii-a)@notationVKa
 | (A-iii-a)@notationVKa

\knowledge{notion}
 | (X-iii-b)@notationVKb
 | (V-iii-b)@notationVKb
 | (L-iii-b)@notationVKb
 
\begin{scope}\label{adhVK}

\knowledge{notion}
 | (A-iii)
 
\knowledge{notion}
 | (Q-iii)

\knowledge{notion}
 | (L-iii)
 
\knowledge{notion}
 | (H-iii)

\knowledge{notion}
 | (V-i)

\knowledge{notion}
 | (V-ii)

\knowledge{notion}
 | (V-iii)

\knowledge{notion}
 | (W-iii)

\end{scope}

\knowledge{notion}
 | Pushout-pushout-(de-)composition
 | pushout-pushout composition
 | pushout composition
 | pushout-pushout decomposition
 | PO-PO-

\knowledge{notion}
 | Pullback-pullback-(de-)composition
 | pullback-pullback decomposition
 | pullback-pullback composition

\knowledge{notion}
 | Pushout-pullback-decomposition
 | pushout-pullback decomposition

\knowledge{notion}
 | Pullback-pushout-decomposition
 | pullback-pushout decomposition

\knowledge{notion}
 | Horizontal FPC-FPC-(de-)composition
 | horizontal FPC composition
 | horizontal FPC decomposition
 
\knowledge{notion}
 | Vertical FPC-FPC-(de-)composition
 | vertical FPC composition
 | vertical FPC-FPC decomposition

\knowledge{notion}
 | Vertical FPC-pullback decomposition
 | vertical FPC-pullback decomposition

\knowledge{notion}
 | stable under pullbacks@PO
 | pushouts along $\cM $-morphisms are stable under pullbacks@PO
 | stability under pullbacks@PO

\knowledge{notion}
 | stable under pullbacks@FPC
 | stability of FPCs under pullbacks@FPC

\knowledge{notion}
 | stability of FPCs under pullbacks
 | FPCs are stable under pullbacks

\knowledge{notion}
 | van Kampen (VK) square
 | VK squares
 | van Kampen squares

\knowledge{notion}
 | horizontal weak VK squares

\knowledge{notion}
 | vertical weak VK squares
 | vertical weak van Kampen squares
 | vertical weak VK square

\knowledge{notion}
 | adhesive category
 | adhesive
 | adhesive categories

\knowledge{notion}
 | quasi-adhesive
 | quasi-adhesive categories
 | rm-adhesive category
 | rm-adhesive
 | quasi-adhesive/rm-adhesive
 | rm-adhesive categories

\knowledge{notion}
 | adhesive high-level replacement (HLR) category
 | adhesive HLR
 | adhesive HLR category
 | adhesive HLR categories

\knowledge{notion}
 | adhesivity properties

 \knowledge{notion}
 | horizontal weak adhesive HLR category
 | horizontal
 | hor.\ weak adh.\ HLR
 | horizontal weak adhesive HLR categories

\knowledge{notion}
 | pushouts along $\cM $-morphisms are pullbacks@aps

\knowledge{notion}
 | pushouts of spans of $\cM $-morphisms are pullbacks@hwahlr

\knowledge{notion}
 | vertical weak adhesive HLR category
 | vertical
 | vert.\ weak adh.\ HLR
 | vertical weak adhesive HLR categories
 | vertical weak adhesive HLR
 | $\cM $-adhesive categories
 | $\cM $-adhesive category
 | $\cM $-adhesive 

\knowledge{notion}
 | weak adhesive HLR category
 | weak adhesive HLR categories
 | weak adhesive HLR

\knowledge{notion}
 | regular mono (rm)-quasi-adhesive category
 | rm-quasi-adhesive
 | rm-quasi-adhesive categories

\knowledge{notion}
 | comma category
 | comma categories

\knowledge{notion}
 | directed multigraphs
 | $\mathbf {Graph}$
 | category of directed graphs

\knowledge{notion}
 | $\mathbf {HyperGraph}$
 | directed ordered hypergraphs

\knowledge{notion}
 | $\mathbf {PTNets}$
 | place/transition nets
 
\knowledge{notion}
 | $\mathbf {ElemNets}$
 | elementary Petri nets

\knowledge{notion}
 | undirected multigraphs

\begin{scope}\label{msEM}

\knowledge{notion}
 | Existence:
 | existence

\knowledge{notion}
 | Construction:
 | construction

\knowledge{notion}
 | Refinements:
 | refinements

\end{scope}

\knowledge{notion}
 | quotients
 | quotient

\knowledge{notion}
 | Double-Pushout (DPO) semantics
 | DPO-semantics
 | double-pushout semantics
 | Double-Pushout semantics
 | DPO-
 | Double Pushout (DPO)
 | Double-Pushout

\knowledge{notion}
 | direct derivation@crsDPO
 | direct derivations@crsDPO
 | DPO-type direct derivation@crsDPO
 | DPO-type direct derivations@crsDPO

\knowledge{notion}
 | set of DPO-admissible matches@crsDPO
 | admissible match@crsDPO
 | admissible matches@crsDPO

\knowledge{notion}
 | set of DPO-type admissible matches@crsDPOrComp
 | admissible match@crsDPOrComp

\knowledge{notion}
 | DPO-type rule composition@crsDPOrComp
 | rule composition@crsDPOrComp
 | DPO-type rule compositions@crsDPOrComp

\knowledge{notion}
 | composite rule@crsDPOrComp

\knowledge{notion}
 | suitable for DPO-semantics@crsDPO

\knowledge{notion}
 | Sesqui-Pushout (SqPO) semantics
 | sesqui-pushout (SqPO) semantics
 | sesqui-pushout semantics
 | SqPO-semantics
 | Sesqui-Pushout semantics
 | SqPO rewriting
 | Sesqui-Pushout (SqPO)

 \knowledge{notion}
 | direct derivation@crsSqPO
 | direct derivations@crsSqPO
 | SqPO-type direct derivation@crsSqPO
 | SqPO-type direct derivations@crsSqPO
 | sesqui-pushout direct derivations@crsSqPO

\knowledge{notion}
 | set of SqPO-admissible matches@crsSqPO
 | admissible matches@crsSqPO

\knowledge{notion}
 | suitable for SqPO-semantics@crsSqPO
 | non-linear SqPO-rewriting to be well-posed@crsSqPO

\knowledge{notion}
 | set of SqPO-type admissible matches@crsSqPOrComp
 | admissible SqPO-type matches of rules@crsSqPOrComp
 | admissible match@crsSqPOrComp
 | admissible matches of rules@crsSqPOrComp

\knowledge{notion}
 | SqPO-type rule composition@crsSqPOrComp
 | SqPO-type rule composition@crsSqPO
 | rule composition@crsSqPOrComp
 | rule compositions@crsSqPOrComp

\knowledge{notion}
 | composite rule@crsSqPOrComp
 | composite rules@crsSqPOrComp

\begin{scope}\label{crs}

\knowledge{notion}
 | rule
 | rules

\knowledge{notion}
 | output-linear
 
\knowledge{notion}
 | input-linear
 
\knowledge{notion}
 | linear
 
\knowledge{notion}
 | generic

\knowledge{notion}
 | direct derivation
 | direct derivations

\knowledge{notion}
 | match
 | matches

\knowledge{notion}
 | co-match
 | co-matches

\end{scope}

\begin{scope}\label{crsType}

\knowledge{notion}
 | generic

\knowledge{notion}
 | output-linear
 | output- (or right-) linear

\knowledge{notion}
 | input-linear
 | input- (or left-) linear

\knowledge{notion}
 | linear

\knowledge{notion}
 | semi-linear

\end{scope}

\knowledge{notion}
 | pushouts along $\cM $-morphisms are stable under pullbacks@auxDirectDPO

\knowledge{notion}
 | pushouts along $\cM $-morphisms are pullbacks@FPAconstr

\begin{document}
\begin{frontmatter}

\title{Fundamentals of Compositional Rewriting Theory\tnoteref{mytitlenote}}
\tnotetext[mytitlenote]{This is an invited extended journal version of the ICGT 2021 conference paper entitled ``Concurrency Theorems for Non-linear Rewriting Theories''~\cite{BehrHK21,BehrHK21-ext}.}

\author[IRIF]{Nicolas Behr\corref{cor1}}
\ead{nicolas.behr@irif.fr}
 \cortext[cor1]{Corresponding author.}
 \affiliation[IRIF]{organization={Université Paris Cit\'{e}, CNRS, IRIF},
             addressline={8 Place Aurélie Nemours},
             city={Paris Cedex 13},
             postcode={75205},
             country={France}
             }
\author[ENS]{Russ Harmer}
 \ead{russell.harmer@ens-lyon.fr}
 \affiliation[ENS]{organization={Université de Lyon, ENS de Lyon, UCBL, CNRS, LIP},
             addressline={46 allée d'Italie},
             city={Lyon Cedex 07},
             postcode={69364},
             country={France}}

\author[IRIF]{Jean Krivine}
\ead{jean.krivine@irif.fr}
\begin{abstract}
A foundational theory of compositional categorical rewriting theory is presented, based on a collection of fibration-like properties that collectively induce and intrinsically structure the large collection of lemmata used in the proofs of theorems such as concurrency and associativity. The resulting highly generic proofs of these theorems are given. It is noteworthy that the proof of the concurrency theorem takes only a few lines and, while that of associativity remains somewhat longer, it would be unreadably long if written directly in terms of the basic lemmata. In essence, our framework improves the readability and ease of comprehension of these proofs by exposing latent modularity. A curated list of known instances of our framework is used to conclude the paper with a detailed discussion of the conditions under which the Double Pushout and Sesqui-Pushout semantics of graph transformation are compositional.
\end{abstract}

\begin{keyword}
\texttt{elsarticle.cls}\sep \LaTeX\sep Elsevier \sep template
\MSC[2010] 16B50\sep 60J27\sep 68Q42 (Primary) 60J28\sep 16B50\sep 05E99 (Secondary)
\end{keyword}

\end{frontmatter}

\tableofcontents

\section{Introduction}

The main contribution of the present paper is a novel framework of \kl{compositional rewriting double categories (crDCs)} which,  at its core, is based upon mathematical notions of fibrational structures relevant to categorical rewriting theories. Our motivation for this development has been the technically highly involved nature of the definitions of rule compositions and the resulting \kl{concurrency theorems} for \kl(crsType){generic} \kl{Double-Pushout (DPO) semantics} and \kl{Sesqui-Pushout (SqPO) semantics} as presented in ~\cite{BehrHK21} (with additional details and proofs presented in~\cite{BehrHK21-ext}).

In this extended version, we show that one can modularize the statement of the rewriting semantics, its list of prerequisites and also the statement and proof of the concurrency theorem in a \emph{uniform} fashion by establishing the notion of \kl{crDCs}: once, for a given semantics, the notion of \kl(crs){direct derivations} is specified (and thus, in a certain sense, the very definition of the semantics itself is provided), rather than trying to follow the steps prescribed by the categorical rewriting literature in the tradition of the work of Ehrig et al.~\cite{ehrig:2006fund}, our novel concept of \kl{crDCs} permits to decide whether or not the given semantics is \emph{compositional} (i.e., admits both a \kl{concurrency theorem} and an \kl{associativity theorem}) purely based upon the properties of the \kl(crs){direct derivations} themselves.

It should be noted that establishing that a given semantics indeed yields a \kl{crDC} structure is still a technically involved task (as will become evident when presenting instantiations of our novel framework for some concrete examples of rewriting semantics in Section~\ref{sec:rs}); however, the required reasoning has a rather more mechanical character than that of the proofs of complex theorems in compositional rewriting theory---which are \emph{automatically guaranteed} to hold once a \kl{crDC} structure is verified for a given semantics.

\subsection{Rewriting}\label{sec:intro:bg}

The theory of graph transformation has been under development for about the last fifty years. Over this time, it has gradually evolved from working with specific concrete settings---such as multi or simple graphs, with or without attributes---to being expressed in terms of certain classes of categories---such as \kl{adhesive} \cite{ls2004adhesive,lack2005adhesive,garner2012axioms}, \kl{quasi-adhesive/rm-adhesive} \cite{lack2005adhesive,garner2012axioms,quasi-topos-2007} and \kl{$\cM$-adhesive} \cite{ehrig:2006fund,GABRIEL_2014} categories---that provide sufficient structure to reprove abstractly the key theorems that hold in those concrete settings. This theory has found application in a variety of contexts such as model-driven software engineering \cite{heckel2020graph}, graph databases \cite{braatz2008graph,bonifati2019schema}, graph-based knowledge representation \cite{chein2008graph,harmer2019bio,harmer2020knowledge} and executable representation of complex systems \cite{danos2007rule,faeder2009rule,andersen2016software} as well as more theoretical uses in graph grammars, structural graph theory and string diagrams.

In this theory, a graph transformation \emph{rule} $O\leftharpoonup r-I$ is interpreted in a given category as a span, i.e., as two morphisms $O \leftarrow o_r- K_r -i_r\rightarrow I$ with a common source object $K_r$, called the \emph{context} or \emph{preserved region}, an \emph{input} (or LHS) object $I$ and an \emph{output} (or RHS) object $O$. Note that we depart from the classical representation of rules in this theory in two ways: the application to rule algebra \cite{bp2019-ext,nbSqPO2019,Behr2021} makes it more natural to reverse the orientation of a rule, so as to match with the usual right-to-left ordering of composition of functions, whereupon the traditional terminology of `LHS' and `RHS' becomes rather confusing; as such, we prefer the neutral, self-explanatory terms `input' and `output'. Similarly, we use the terms `input-linear' and `output-linear' instead of the more usual `left-linear' and `right-linear' when speaking of rules where $i_r$ or, respectively, $o_r$ are monomorphisms of some kind.

In typical concrete settings, the two arrows express the correspondence between the entities (nodes or edges, etc.) in $I$ and those in $O$. This naturally suggests a small set of primitive operations---deletion and addition, where $i_r$ and $o_r$ are non-surjective, and cloning and merging, where they are non-injective---that correspond to our intuitive ideas of how graphs can be transformed; a rule then interprets a combination of these primitive operations. In order to formalize the effect of a rule, several distinct, but closely related, semantics have been proposed, the most prominent of which are the Double Pushout (DPO)~\cite{CorradiniMREHL97}, Single Pushout (SPO)~\cite{Loewe_1993} and, more recently, Sesqui-Pushout (SqPO)~\cite{Corradini_2006} semantics.
 
In all of these approaches, a rule is applied to an object $X$ through a so-called \emph{matching} $I\rtail m\rightarrow X$, where $m$ is a monomorphism\footnote{The DPO approach has sometimes been formulated without requiring $m$ to be a monomorphism; we do not consider this variant here.} potentially chosen from a specified restricted class $\mathcal{M}$. In the case of DPO or SqPO semantics, rewriting proceeds in two steps: first, we use $i_r$ and $m$ to construct an intermediate object $K_{r_m}$, a monomorphism $K_r \rtail k_{r_m}\rightarrow K_{r_m}$ and a morphism $K_{r_m}-i_{r_m}\rightarrow X$ (i.e., the square (*) below); then we use $o_r$ and $k_{r_m}$ to construct an object $r_m(X)$ (i.e., the direct derivation of $X$ along $r$ with match $m$), a monomorphism $O\rtail m^{*}\rightarrow r_m(X)$, and a morphism $K_{r_m} - o_{r_m} \rightarrow r_m(X)$ (i.e. the square ($\dagger$) below):
\begin{equation}\label{eq:ddIntro}
\ti{ddIntro}
\end{equation}
In both DPO and SqPO semantics, the second step is determined by taking the pushout (PO) of $k_{r_m}$ and $o_r$ as in square $(\dagger)$; the difference between these semantics arises in the first step: the DPO approach specifies that the square $(*)$ be a PO while the SqPO approach specifies it to be the final pullback complement of $m$ and $i_r$~\cite{dyckhoff1987exponentiable}: given two composable arrows $A - f' \rightarrow B$ and $B - g \rightarrow D$, a \emph{final pullback complement} (FPC; cf.\ diagram below) consists of two composable arrows $A - g' \rightarrow C$ and $C - f \rightarrow D$ such that (i) the resulting square is a pullback (PB); and (ii) for all PB squares such as the outer square in the diagram below, and for all factorizations $A' - a \rightarrow A$ of $\bar{f}'$ through $f'$, there exists a unique arrow $C' - c \rightarrow C$ such that $f \circ c = \bar{f}$ and $c \circ \bar{g}' = g \circ \bar{f}'$.

\begin{equation*}\label{eq:defFPC}
\ti{defFPC}
\end{equation*}\\

The notion of FPC is defined by a universal property\footnote{It can be seen as a categorical generalization of the notion of set difference: the FPC is the largest, i.e., least general, $C$ together with arrows $A-g' \rightarrow C$ and $C-f \rightarrow D$ for which the resulting square is a PB.} so that, under SqPO semantics, given $m$ and $i_r$, $K_{r_m}$ is \emph{essentially} unique (i.e., unique up to unique isomorphism). However, the construction of $K_{r_m}$ under the DPO semantics need not be uniquely determined because, in general, there may be several non-isomorphic objects for which the square $(*)$ is a PO. Nonetheless, in many concrete settings under the assumption that the rule is \kl(crsType){input-linear}, there is in fact \emph{at most one way} for square $(*)$ to be a PO: this follows from the fact that, in the category $\Set$ of sets and functions, there is exactly one $K_{r_m}$ for which square $(*)$ is a PO, provided that $i_r$ is injective. In these settings, it is easy to show that this square---if it exists, as is characterized by the no-dangling edges condition \cite{ehrig:2006fund}---satisfies a universal property that is categorically dual to that of an FPC: it is the smallest, i.e., most general, $K_{r_m}$ together with arrows from and to $K_r$ and $X$ respectively for which the resulting square is a PO. This is a mild generalization of the notion of what is called the \emph{initial pushout complement} in \cite{taentzer1999distributed} or \emph{minimal pushout complement} in \cite{BRAATZ2011246}; despite the risk of slight confusion, we reuse the terminology of \emph{initial pushout complement} (IPC) as this is standard practice in the naming of categorical duals.

In the case that $i_r$ is not a monomorphism, there is no longer any guarantee of uniqueness: even in typical concrete settings  based on sets and functions, there may be several \emph{minimal} but incomparable candidate PO squares. This leads us to consider the less familiar categorical notion of \kl{multi-IPC (mIPC)} which formalizes the notion of a \emph{family} of minimal solutions, i.e., PO squares over $m$ and $i_r$, that are \emph{collectively} universal: any PO square using $i_r$ and factoring through $m$ factors uniquely through a unique family member. This multi-universal property is an instance of the general theory of Diers~\cite{diers1978familles}; in the case at hand, it effectively states that the family of $K_{r_m}$s contains precisely all possible rewrites of $X$ by $i_r$ via $m$ that are compatible with the DPO semantics. We will investigate this construction in more detail in Sections~\ref{sec:targetFunctors} and~\ref{sec:exConstr:mIPCs}.

Finally, in \emph{SPO semantics}~\cite{Loewe_1993}, the rewrite is applied in a single step by taking the PO, in the bi-category of spans, of $m$ and the rule $r$---all subject to the condition that $i_r$ be a monomorphism, i.e., that the rule is \kl(crsType){input-linear}. %
Due to the well-known fact that SPO semantics in the setting where monic matches are used in fact coincides with a special case of \kl{Sesqui-Pushout (SqPO) semantics}~\cite{Corradini_2006}, we will not consider SPO semantics separately in the present paper. Note in particular that the SqPO approach, unlike SPO, does \emph{not} require input-linearity and works with fully general rules.

An interesting consequence of the definition of DPO semantics is that rule applications are always \emph{reversible} because squares $(*)$ and $(\dagger)$ are both POs; in effect, the no-dangling condition prevents the rule from being applied in the case that it would otherwise produce an irreversible transformation---or, alternatively, induce a \emph{side-effect}. This is not necessarily the case for SqPO semantics because there is no a fortiori reason that the square $(*)$, defined by an FPC, be a PO: it is well-known that a rule which deletes a node can always be applied under the SqPO semantics, but induces an irreversible transformation in the case that the deleted node has incident edges in $X$. However, such a rule can only be applied under the DPO semantics if the  no-dangling condition holds, i.e., the targeted node has no incident edges. Equally, there is no a fortiori reason that the square $(\dagger)$ be an FPC: a rule that merges two nodes generally loses the information about their incident edges that would be required to reverse the transformation---this is another kind of side-effect. The special case of \emph{reversible SqPO}, where the FPC is also always a PO and the PO is also always an FPC, was studied in~\cite{reversibleSqPO,harmer2020reversibility}; in practice, at least for linear rules, this amounts to restricting to the DPO semantics since this constraint is then equivalent to the no-dangling condition.

The use of input-non-linear rules under the SqPO semantics allows for the expression of the natural operation of the \emph{cloning} of a node or an edge (when this is meaningful), as explained in~\cite{Corradini_2006,BRAATZ2011246,Corradini_2015}. More recently, such rules have also been used to express operations such as concept refinement in schemata for graph databases \cite{bonifati2019schema} and, more generally, in graph-based knowledge representation \cite{harmer2020knowledge}. In combination with output-non-linear rules, as for (non-linear) DPO rewriting, the SqPO semantics thus allows the expression of all natural \emph{primitive operations on graphs}: \emph{addition} and \emph{deletion} of nodes and edges; and \emph{cloning} and \emph{merging} of nodes and edges. Moreover, the SqPO semantics also allows for all natural side-effects, or failures of reversibility, that arise intuitively in the case of deletion and merging, as mentioned above, and also in the case of cloning; this has notably been exploited in the definition of the semantics of the \texttt{Kappa} language where all rules are linear---so no cloning or merging---but deletion may have side-effects~\cite{Boutillier:2018aa}, and in the \texttt{KAMI} bio-curation framework~\cite{harmer2019bio}, where fully general rules are used with the SqPO semantics.

\subsection{Compositional rewriting}

In the discussion above, we have seen that the definition of a setting for graph transformation requires us to specify a number of things, including: the category in which we work, the classes of rules and matchings under consideration, the semantics we use to apply rules, etc. %
In this paper, we adopt the stance that a choice of parameters in this design space should be compatible with \emph{compositionality} in the sense that a notion of \emph{rule composition} exists, which amounts to requiring that the \kl{concurrency theorem} holds, and that this satisfies an appropriate form of \kl{associativity theorem}. These properties enable the static analysis of collections of rules, such as rule algebras \cite{bp2019-ext,nbSqPO2019,Behr2021} or the causal analysis found in the Kappa language \cite{Boutillier:2018aa,BK2020}, that ultimately depend on the notion of \emph{tracelets}~\cite{behr2019tracelets} that follows from having such an associative rule composition.

\subsubsection{Rule composition}

The composition of two rules, $O_1 \leftharpoonup r_1 - I_1$ and $O_2 \leftharpoonup r_2 - I_2$, is a third rule $r_2 \circ r_1$ whose effect on an object $X$ should be the same as that of applying first $r_1$ to $X$ then $r_2$ to the resulting $r_{m_1}(X)$. However, the effect of applying first $r_1$ then $r_2$ depends critically on the way in which the images of the matchings $m^*_1$ and $m_2$, of $O_1$ and $I_2$ respectively, overlap in $r_{m_1}(X)$: given an initial matching $m_1$ of $I_1$ in $X$, the resulting $m^*_1$ is uniquely determined; however, there may be many possible choices for the matching $m_2$ of $I_2$ in $r_{m_1}(X)$. As such, in general, there is not one single composite rule $r_2 \circ r_1$ but rather one such composite for each possible overlap of $m^*_1$ with some such $m_2$.

In order to express these ideas independently of the particular choice of $X$, we use Diers' notion of multi-sum \cite{diers1978familles} as the means to express the \emph{family} of all possible overlaps of matchings from $O_1$ and $I_2$. This is a generalization of the familiar categorical notion of co-product which replaces the single co-product $O_1 \rightarrow O_1 + I_2 \leftarrow I_2$ with a family of co-spans satisfying a multi-universal property: essentially, any given co-span from $O_1$ and $I_2$ factors through exactly one family member and does so uniquely\footnote{We give a formal definition in Section 2.1 which, although its precise statement differs from this informal account by allowing for essential uniqueness, remains essentially equivalent.}.

This kind of multi-universal construction often arises in concrete settings, based on the category of sets, where we wish to restrict our attention to injective functions. In this case, although the inclusions $O_1 \rightarrow O_1 + I_2 \leftarrow I_2$ are injective, the universal arrow is generally non-injective (unless the images of $O_1$ and $I_2$ are disjoint). The multi-sum construction side-steps this problem by providing all possible overlaps of the images of $O_1$ and $I_2$, including the case where they are disjoint, so that all identifications---that would lead to violations of injectivity---necessary for the usual universal arrow can instead be accounted for by choosing the appropriate family member.

The general notion of rule composition can thus be stated purely at the level of $r_1$ and $r_2$, provided that we are working in a setting where the multi-sum of $O_1$ and $I_2$ is guaranteed to exist, and it provides one composite rule per family member of that multi-sum. The \emph{synthesis} part of the concurrency theorem then states that a sequential application, of $r_1$ then $r_2$, can be simulated in a single step by identifying the relevant multi-sum element and using the appropriate induced composite rule; and the \emph{analysis} part of the theorem states, conversely, that the direct application of such a composite rule can be decomposed back into a sequential application of its constituents with overlap determined by the corresponding multi-sum element.

\subsubsection{Compositionality}

In this paper, we seek to provide foundations for \emph{compositional} rewriting that apply equally to DPO and SqPO semantics. In particular, we provide a single proof of the \kl{concurrency theorem} in Section~\ref{sec:comp:cct}) and a single proof of an appropriate associativity property of the induced notion of rule composition, i.e., an \kl{associativity theorem}, in Section~\ref{sec:comp:at}) that work for linear and non-linear rules under both semantics. This associativity property is important as it guarantees that, for any sequence of rule applications, their overall composite transformation can be computed, by iterating the concurrency theorem, in any order without changing the result (up to isomorphism): this is precisely what we mean by `compositionality'.

In this paper, we consider a rule application as a single unit (analogous to SPO semantics) rather than decomposing, as usual, into two stages that proceed via an intermediate object. This choice eases the path to a characterization of the necessary categorical structure for compositional rewriting in terms of (i) the existence of certain kinds of fibrations; and (ii) a small number of additional axioms specific to rewriting.

From this point of view, the large collection of lemmata used in these proofs---as, for example, collected in the appendix of~\cite{BehrHK21-ext}---fall into two groups: a first group of fundamental results that do not specifically relate to rewriting; and a second group that specializes this theory precisely to the case of rewriting.
The advantage of this new approach is that it enables the use of macros that express the key steps in proofs at a higher level of abstraction than usual and, indeed, the resulting proof of the \kl{concurrency theorem} is very compact. The proof of the \kl{associativity theorem} is significantly longer and technically more involved---although this seems to be intrinsic to its nature---but would have been essentially impossible to express at a lower level of abstraction. Our new approach makes a clear and clean separation of the basic building blocks from their means of combination; we return to this point in the conclusion.

\subsection*{Outline of the paper}

The paper is structured as follows. In Section~\ref{sec:fib}, we present all the preliminary material necessary for the definition of compositional rewriting double categories with a particular emphasis on the required fibrational structures. In Section~\ref{sec:comp}, we define this novel concept and apply it to state and prove the concurrency theorem and the associativity theorem in a universal fashion. In Section~\ref{sec:examplesFib}, we investigate the fibrational structure of various categories of squares (pullbacks, pushouts and final pullback complements). In Section 5, we study some classes of categories that admit constructions that are necessary in order to formulate compositional rewriting theories, and, in Section 6, we focus on the DPO- and SqPO-semantics in order to clarify under what conditions, and for what classes of rules, these semantics are compositional. Finally, we conclude with a detailed comparison of the approach in this paper with that of its conference version \cite{BehrHK21,BehrHK21-ext} as well as a discussion of related and future work.
The reader interested in our fibration-based proofs of concurrency and associativity can therefore read Sections 2 and 3 only;  one more interested in how our new framework can be put to use might prefer to skim those sections and focus principally on Sections 4, 5 and 6.

\section{On multi-sums and fibrational structures}\label{sec:fib}

In this section, we provide some prerequisite material for our compositional rewriting theory framework. We begin with the notion of \emph{multi-sum} which is an instance of the general theory of multi-co-limits developed by Diers \cite{diers1978familles}. We then introduce the mathematical theory for a number of fibrational structures, namely the well-known notions of \kl{Grothendieck fibration} and \kl{Grothendieck opfibration}, but also \kl{multi-opfibrations} and \kl{residual multi-opfibrations} which, to the best of our knowledge, are original results of our work.

\subsection{Multi-sums}\label{sec:ms}

An atypical feature of fibrational structures relevant for compositional rewriting theories is the following type of mathematical property, which may eventually play an important role in the static analysis of rewriting systems.

\begin{definition}\label{def:ms}
Let $\bfC$ be a category. %
A \AP\intro{multi-sum} $\Msum{A}{B}$ of two objects $A$ and $B$ of $\bfC$ is %
a family of cospans $\{A-a_j\rightarrow M_j \leftarrow b_j- B\}_{j\in J}$ such that %
for every cospan $A-a\rightarrow X\leftarrow b- B$, %
there exists a $j\in J$ and morphism $M_j -x\rightarrow X$ %
such that $a=x\circ a_j$ and $b=x\circ b_j$, and %
with the following \AP\intro(ms){(multi-) universal property}: %
for every cospan $A-a'\rightarrow Y\leftarrow b'-B$ and morphism $Y-y\rightarrow X$ such that $a=y\circ a'$ and $b=y\circ b'$, there exists a unique morphism $M_j-m_j\rightarrow Y$ such that $a'=m_j\circ a_j$ and $b'=m_j\circ b_j$: 

\begin{equation}\label{eq:ms:up}
\ti{multiSumUniversalProperty}
\end{equation}
We say that $\bfC$ \AP\intro(ms){has multi-sums} if every pair of objects has a multi-sum. 
\end{definition}

While we postpone the presentation of some concrete examples of multi-sum structures to Section~\ref{sec:exConstr:mms}, suffice it here to introduce a technical result that will be necessary in our ensuing constructions:

\begin{lemma}[Multi-sum extension]\label{lem:ms1}
\AP\phantomintro(ms){multi-sum extension Lemma}
Let $\bfC$ be a category that \kl(ms){has multi-sums} and that \kl{has pullbacks}. Then for every commutative diagram such as in~\eqref{eq:lem:multiSumExtensionA} below, where $A\rightarrow M\leftarrow B$ and $C\rightarrow N\leftarrow D$ are multi-sum elements, there exists a universal arrow $M\rightarrow N$ that makes the diagram commute.

\begin{equation}\label{eq:lem:multiSumExtensionA}
\ti{lemmaMultiSumExtension}
\end{equation}
\end{lemma}
The proof of this lemma, and indeed of all the results in Section~\ref{sec:fib}, can be found in \ref{app:ps2}.

\subsection{Grothendieck fibrations and opfibrations}\label{sec:gfof}

In the technical constructions developed in this paper, we will require certain generalizations of the notion of \kl{Grothendieck opfibration}. We will therefore employ a notation for fibrations that slightly differs from the standard conventions in category theory (cf.\ e.g.\ \cite{streicher2018fibered,jacobs1999categorical,borceux1994handbook,benabou1985fibered}). Let us therefore briefly recall the definitions of \kl{Grothendieck fibrations} and \kl{Grothendieck opfibrations} for the readers convenience, expressed in our notational conventions:

\begin{definition}\label{def:grothendieckFibration}
A functor $G:\bfE\rightarrow \bfB$ is a \AP\intro{Grothendieck fibration} if the following property holds:
\begin{equation}\label{eq:def:grothendieckFibration}
\begin{aligned}
&\forall\ti{defGrothendieckFibrationA}\;:\;
\exists \ti{defGrothendieckFibrationB}\;:\\
&\qquad \forall\ti{defGrothendieckFibrationC}\;:\; \ti{defGrothendieckFibrationD}
\end{aligned}
\end{equation}
The second line encodes that $\gamma(f)$ is a \AP\intro(gf){Cartesian morphism}, hence we will refer to it as a \AP\intro(gf){Cartesian lifting} of $f$.
\end{definition}

\begin{definition}\label{def:grothendieckOpfibration}
A functor $G:\bfE\rightarrow \bfB$ is a \AP\intro{Grothendieck opfibration} if the following property holds:
\begin{equation}\label{eq:def:grothendieckOpfibration}
\ti{defGrothendieckOpfibration}
\end{equation}
The second line encodes that $\varepsilon(f)$ is an \AP\intro(gopf){op-Cartesian morphism}, so we will refer to it as an \AP\intro(gopf){op-Cartesian lifting} of $f$.
\end{definition}

It is well known that the above definitions imply that the (op-)Cartesian lifting of $f$ is essentially unique, i.e., unique up to unique isomorphism; see, for example, Proposition 1.1.4 of \cite{jacobs1999categorical}.
\subsection{Multi-opfibrations}\label{sec:mof}

The definition of a \kl{Grothendieck opfibration} may be generalized in the following form, whereby instead of requiring the existence of \kl(gopf){op-Cartesian lifts} neither existence nor essential uniqueness are required. This particular variant of a fibrational structure postulates instead the existence of a (possibly empty) family of \kl(mof){multi-op-Cartesian lifts}, subject to a somewhat more intricate universal property. As will be demonstrated in Section~\ref{sec:targetFunctors}, this generalized notion is the appropriate fibrational concept capable of formalizing so-called \kl{multi-initial pushout complements}, which in turn play a key role in categorical rewriting semantics. 

\begin{definition}\label{def:multi-opfibration}
A functor $M:\bfE\rightarrow \bfB$ is a \AP\intro{multi-opfibration} if the following property holds:
\begin{equation}\label{def:multi-opfibration}
\begin{aligned}
&\forall \ti{defMultiOpfibrationA}%
\; : \; \exists \left\lbrace 
\ti{defMultiOpfibrationB}
\right\rbrace_{j\in J_{f;e}}\;:\\
&\qquad \forall \ti{defMultiOpfibrationC}%
\; :\;%
\ti{defMultiOpfibrationD}:\\
&\qquad\forall%
\ti{defMultiOpfibrationE}:\; \ti{defMultiOpfibrationF}
\end{aligned}
\end{equation}
In words: 
\begin{enumerate}[label=(\roman*)]
\item For every $b-f\rightarrow b'$ in $\bfB$ and $e\in \bfE$ with $M(e)=b$, %
there exists a (possibly empty) family $J_{f;e}$ of \AP\intro(mof){multi-op-Cartesian liftings} $e-\mu_j (f)\rightarrow e_f'$ (with $M(\mu_j(f))=f$). %
\item \AP\intro(mof){Universal property of multi-opfibrations}: Multi-op-Cartesianity of the liftings entails that for all $e-\alpha\rightarrow e''$ in $\bfE$ and $b'-g\rightarrow b''$ in $\bfB$ with $M(\alpha)=g\circ f$, %
there exists a $j\in J_{f;e}$ such that %
there exists a unique $e_j'-\beta_j\rightarrow e''$ with $\alpha =\beta_j\circ \mu_j(f)$ %
and $M(\beta_j)=g$. %
\item \AP\intro(mof){Essential uniqueness}: for a given \kl(mof){multi-op-Cartesian lifting} as in (ii), if there exists some $e-\chi\rightarrow \tilde{e}$ and %
$\tilde{e}-\tau\rightarrow e''$ such that %
$\tau\circ \chi = \alpha$, $M(\chi)=f$ and $M(\tau)=g$, then %
there exists a unique morphism $e'_j-\phi\rightarrow \tilde{e}$ such that %
$\chi=\phi\circ \mu_j(f)$, $\beta_j=\tau\circ \phi$, %
and $M(\phi)=id_{b'}$.
\end{enumerate}
We say that a \kl{multi-opfibration} is \AP\intro(mof){strong} if the morphisms $\phi$ in (iii) above are \emph{isomorphisms}.
\end{definition}

It is useful to note that a \kl{Grothendieck opfibration} is a special case of a \kl{multi-opfibration}, namely when for every $f\in \bfB$, the family of \kl(mof){multi-op-Cartesian lifts} is non-empty, and such that all members of a given family are in the same equivalence class under the universal property (i.e., for all $j,k\in J_{e;f}$, there exists an isomorphism $e'_j-\phi_{jk}\to e'_j$ such that $M(\varphi_{jk})=id_{b'}$ and $\mu_k(f)=\varphi_{jk}\circ \mu_j(f)$). However, a \kl{Grothendieck opfibration} is in general \emph{not} a special case of a \emph{\kl(mof){strong}} \kl{multi-opfibration}. This is relevant since \kl(mof){strong} \kl{multi-opfibrations} enjoy two important technical properties (\kl(mof){isomorphism lifting} and \kl(mof){pullback lifting}, see below) that are crucial for obtaining compositional rewriting theories:

\begin{lemma}\label{lem:mofIsoLifting}
Let $M:\bfE\rightarrow \bfB$ be a \kl(mof){strong} \kl{multi-opfibration}. Then the following \AP\intro(mof){lifting property of isomorphisms} is satisfied:
\begin{equation}\label{eq:lem:mofIsoLifting}
\begin{aligned}
&\forall\;  
\ti{mofIsoLiftingA}%
\; : \;\forall\; %
\ti{mofIsoLiftingB}\;:\\
&\qquad  (g\in \iso{\bfB}\;\Rightarrow \; \beta_j\in \iso{\bfE}) \land 
(f\in \iso{\bfB}\;\Rightarrow \; \varepsilon_j(f)\in \iso{\bfE})
\end{aligned}
\end{equation}
\end{lemma}

We conclude the general discussion of multi-opfibrations with the following technical result which will be used in the proof of the \kl{associativity theorem} for compositional rewriting theories in Section~\ref{sec:comp:at}:

\begin{lemma}[Pullback-lifting lemma for strong multi-opfibrations]\label{lem:pbsmopf}
\AP\phantomintro(mof){Pullback-lifting lemma for strong multi-opfibrations}
Let $\bfE$ be a category that \kl{has pullbacks}, and let $M:\bfE\rightarrow \bfB$ be a \kl(mof){strong} \kl{multi-opfibration}. Then the following property holds:
\begin{equation}\label{eq:mofPBsplittingLemma}
\forall\;\ti{mofPBsplittingLemmaA}\;:\; 
\ti{mofPBsplittingLemmaB}
\end{equation}
More explicitly, for every diagram such as on the left of~\eqref{eq:mofPBsplittingLemma}, whose bottom part contains a pullback square in $\bfB$, the following properties hold:
\begin{enumerate}[label=(\roman*)]
\item There exists an $\bfE$-morphism $e-\varepsilon_j(f)\rightarrow e_j'$ such that there exists a unique $\bfE$-morphism $e_j'-\beta_j\rightarrow e'''$ with $M(\varepsilon_j(f))=f$ and $M(\beta_j)=h_1\circ g_1=h_2\circ g_2$, and such that the diagram commutes.
\item There then exist $\bfE$-morphisms $e_j'-\varepsilon_{j,k}(g_1)\rightarrow e''_{j,k}$ and $e_j'-\varepsilon_{j,\ell}(g_2)\rightarrow e''_{j,\ell}$ such that there exist unique $\bfE$-morphisms $e''_{j,k}-\gamma_{j,k}\rightarrow e'''$ and $e''_{j,\ell}-\delta_{j,\ell}\rightarrow e'''$ such that $M(\varepsilon_{j,k}(g_1))=g_1$, $M(\varepsilon_{j,\ell}(g_2))=g_2$, $M(\gamma_{j,k})=h_1$ and $M(\delta_{j,\ell})=h_2$, and such that the diagram commutes.
\item Moreover, the square in $\bfE$ into $e'''$ is a pullback.
\end{enumerate}
\end{lemma}
\subsection{Residual multi-opfibrations}\label{sec:rmof}

The following concept constitutes yet a further generalization of fibrational concepts -- while \kl{multi-opfibrations} generalize \kl{Grothendieck opfibrations} via replacing (essentially unique) \kl(gopf){op-Cartesian lifts} with \kl(mof){multi-op-Cartesian lifts}, one encounters in compositional rewriting theory situations where moreover morphisms may in general not possess such liftings, but only certain \emph{extensions} of morphisms, referred to as \kl(rmof){residues} in the definition below. The reason for introducing this concept will become evident only when considering the salient examples of fibrational properties of \kl{final pullback complement} squares, and of \kl(crsSqPO){sesqui-pushout direct derivations} in the later parts of this paper.

\begin{definition}\label{def:residualMultiOpfibration}
A functor $R:\bfE\rightarrow \bfB$ is a \AP\intro{residual multi-opfibration} if the following property holds:
\begin{equation}\label{eq:def:residualMultiOpfibration}
\begin{aligned}
&\forall \ti{defResidualMultiOpfibrationA}%
\; : \; \exists \left\lbrace 
\ti{defResidualMultiOpfibrationB}\right\rbrace_{j\in J_{f;e}}\;:\\
&\qquad \forall \ti{defResidualMultiOpfibrationC}%
\; : \; %
\ti{defResidualMultiOpfibrationD}\;:\\
&\qquad \forall \ti{defResidualMultiOpfibrationE}\;:\; \ti{defResidualMultiOpfibrationF}
\end{aligned}
\end{equation}
In words: 
\begin{enumerate}[label=(\roman*)]
\item For every $b-f\rightarrow b'$ in $\bfB$ and $e\in \bfE$ with $R(e)=b$, %
there exists a (possibly empty) family $J_{f;e}$ of \AP\intro(rmof){residual multi-op-Cartesian liftings} $e-\rho_j (f)\rightarrow e_j'$ %
(with $R(\rho_j(f))=f_{\star j}\circ f$, and with $f_{\star j}$ referred to as a %
\AP\intro(rmof){residue} with respect to $(e;f)$).%
\item \AP\intro(rmof){Universal property of residual multi-opfibrations}: Residual multi-op-Cartesianity of the liftings entails that for all $e-\alpha\rightarrow e''$ in $\bfE$ and $b'-g\rightarrow b''$ in $\bfB$ with $R(\alpha)=g\circ f$, %
there exists a $j\in J_{f;e}$ such that %
there exists a unique $e_j'-\beta_j\rightarrow e''$ with %
$\alpha =\beta_j\circ \rho_j(f)$ and %
$g = R(\beta_j)\circ f_{\star j}$. %
\item \AP\intro(rmof){Essential uniqueness}: For all $b-h\rightarrow \tilde{b}$, $e-\chi\to \tilde{e}$ and $\tilde{e}-\tau\to e''$ such that $R(\chi)=h\circ f$ and $g=R(\tau)\circ h$, %
there exists a unique $e'_j-\psi\to \tilde{e}$ such that %
$\chi=\psi\circ \rho_j(f)$, $\beta_j=\tau\circ \psi$ (which then implies moreover that $h=R(\psi)\circ f_{\star_j}$ and $R(\beta_j)=R(\tau)\circ R(\beta_j)$).
\end{enumerate}
\end{definition}

We record the following technical result for \kl{residual multi-opfibrations} which will be crucial later in the paper when it plays a central role in the proof of the \kl{associativity theorem} of Section~\ref{sec:comp:at}:
\begin{lemma}\label{lem:rmopfUP}
Let $R:\bfE\rightarrow \bfB$ be a \kl{residual multi-opfibration}. Then \kl(rmof){residues} have the following \AP\intro(rmofrup){universal property}:
\begin{equation}\label{eq:corUPrmof}
\forall \ti{corUPrmofA}%
\; : \; \exists %
\ti{corUPrmofB}\;\Rightarrow \tau\in \iso{\bfE}\land R(\tau)\in \iso{\bfB}
\end{equation}
In particular, this property entails that if a \kl(rmof){residue} $f_{\star k}$ factorizes a \kl(rmof){residue} $f_{\star j}$ as $f_{\star j}=R(\beta_k)\circ f_{\star k}$ for some $\beta_k\in \bfE$, then the residues $f_{\star j}$ and $f_{\star_k}$ (both of the same morphism $f\in \bfB$) are related by an isomorphism $R(\beta_k)\in \iso{\bfB}$, as are their liftings $\rho_j(f)=\beta_k\circ \rho_k(f)$ via $\beta_k\in \iso{\bfE}$.
\end{lemma}

Finally, we note the following specialization of \kl{residual multi-opfibration}:
\begin{definition}
A functor $R:\bfE\rightarrow \bfB$ is a \AP\intro{residual opfibration} if it is a \kl{residual multi-opfibration} such that for all objects $e$ of $\bfE$ and morphisms $R(e)-f\to b$ of $\bfB$ the family of \kl(rmof){residual multi-op-Cartesian lifts} is non-empty, and such that all lifts in the family are equivalent up to universal isomorphisms (i.e., for all $(\rho_j(f),f_{\star_j})$ and $(\rho_k(f),f_{\star_k})$, there exists a unique isomorphism $\varphi$ in $\bfE$ such that $\rho_k(f)=\varphi\circ \rho_j(f)$ and $f_{\star_k}=R(\varphi)\circ f_{\star_j}$). We will sometimes refer to such lifts as \kl(rof){residual op-Cartesian} for brevity.
\end{definition}

\section{Fundamentals of compositional rewriting theories}\label{sec:comp}

Taking the notion of \kl{double categories} as a convenient ``book-keeping'' device, we will demonstrate in this key section of the present paper that a very general class of compositional rewriting theories---including in particular the ``non-linear'' variants of DPO- and SqPO-semantics~\cite{BehrHK21}---may be elegantly expressed and studied from a fibrational viewpoint. More precisely, based upon and motivated by the fibrational structures presented in Section~\ref{sec:fib}, we introduce the novel notion of \textbf{\kl{compositional rewriting double category (crDC)}} . We then demonstrate that \kl{crDCs} provide a very high-level representation of categorical rewriting theories with compositionality properties in the sense that every \kl{crDC} admits a \kl{concurrency theorem} (Section~\ref{sec:comp:cct}) and an \kl{associativity theorem} (Section~\ref{sec:comp:at}). The crucial point of our novel approach to proving compositional properties via \kl{crDCs} is that the aforementioned \kl{concurrency theorems} and \kl{associativity theorems} may be established in an entirely \emph{universal} form, i.e., entirely independently of the concrete rewriting semantics underlying a given \kl{crDC}.

\subsection{Double categories}\label{sec:crDC}
Throughout this paper, we work exclusively with the ``algebraic'' order in compositions of morphisms and commutative squares (i.e., $g\circ f$ rather than the ``diagram'' order notation $f;g$ common in category theory). For reasons of convenience, we will swap the roles of the classes of morphisms that have a weakly associative composition, usually the \emph{vertical} morphisms~\cite{grandis1999limits}, to be the class of \emph{horizontal} morphisms. (We opted for this particular convention so that it is essentially a 90 degrees clockwise rotation of the standard mathematical one.) Finally, since we will be exclusively interested in \emph{finitary} categories, we will often not mention finitarity explicitly in what follows.

\begin{definition}[Cf.\ e.g.\ \cite{grandis1999limits, kelly1974review, fiore2007pseudo}]\label{def:pcd}
A \AP\intro{double category (DC)} $\bD$ is a weakly internal category in the $2$-category $\mathcal{CAT}$ of all categories~\cite{hansen2019constructing}\footnote{Some authors prefer the term ``pseudo double category'', cf.\ also \href{https://ncatlab.org/nlab/show/double+category}{nLab} article on double categories.}.
\end{definition}

In particular, this entails that a double category consists of a category $\bD_0$ of \AP\intro(dc){objects} and \AP\intro(dc){vertical morphisms}, and a category $\bD_1$ of \AP\intro(dc){horizontal morphisms} and \AP\intro(dc){squares} of $\bD$, equipped with functors $S,T:\bD_1\rightarrow \bD_0$, referred to as \AP\intro(dc){source} and \AP\intro(dc){target} functors, respectively (cf.\ Figure~\ref{fig:dcDef:ST}), and with a functor $U:\bD_0\rightarrow \bD_1$ which maps every object $A$ of $\bD_0$ to a \AP\intro(dc){horizontal unit} $U_A$ (depicted in Figure~\ref{fig:pcdDef:horizontalUnitarity} as identity horizontal morphisms), and every morphism $f$ of $\bD_0$ to a \AP\intro(dc){horizontal unit square} $U_f$ (depicted in Figure~\ref{fig:pcdDef:horizontalUnitarity} as \kl(dc){squares} annotated with the symbol $id_{\dotsc}$ for better readability). %
We denote vertical morphisms by $\rightarrowtail$ and horizontal morphisms by $\leftharpoonup$, respectively. %
We denote by $\vComp$ the \AP\intro(dc){vertical composition} of squares as in Figure~\ref{fig:pdcDef:verticalComposition} (i.e., the associative composition operation of $\bD_1$). %
$\bD$ moreover carries a weakly associative \AP\intro(dc){horizontal composition} of squares (cf.\ Figure~\ref{fig:pdcDef:horizontalComposition}) $\hComp:\bD_1\times_{\bD_0}\bD_1\rightarrow \bD_1$. %
Finally, for technical convenience, we assume without loss of generality\footnote{We follow here the viewpoint of~\cite{FIORE20111174}, whereby utilizing the strictification theorem of pseudo double categories \cite[Thm. 7.5]{grandis1999limits}, this amounts to implicitly utilizing a pseudo-functor into an equivalent double category where unitarity is strict, thus not reducing generality of our constructions.} that both types of compositions are strictly unitary (cf.\  Figures~\ref{fig:pcdDef:verticalUnitarity} and~\ref{fig:pcdDef:horizontalUnitarity}).

\begin{figure}
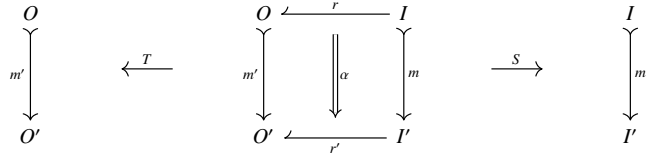

\centering
	$
	\ti{dcSTdefinition}
	$
\caption{\label{fig:dcDef:ST}Convention for source and target functors for double categories.}
\end{figure}

\begin{figure}
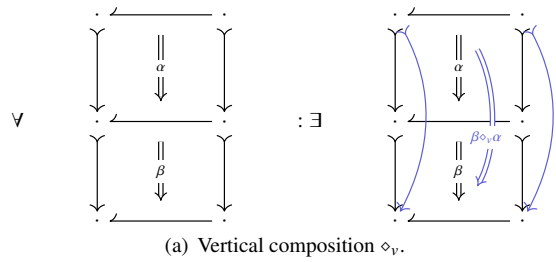
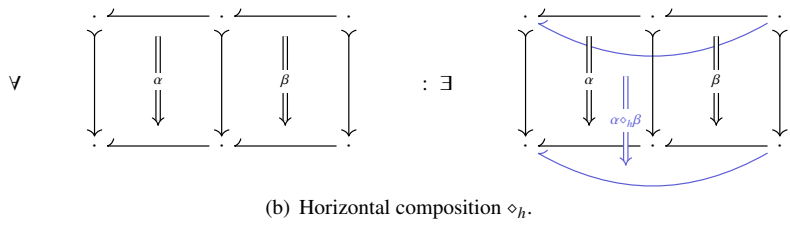
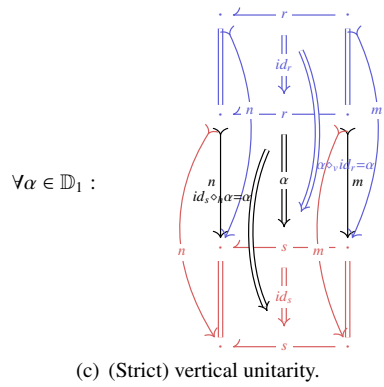
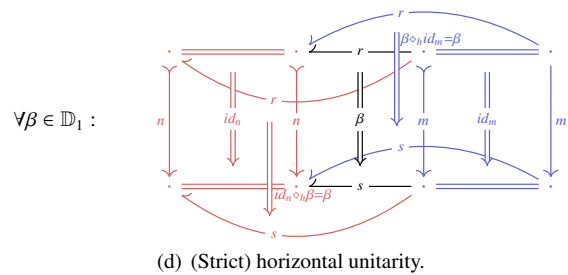

\centering
 \subfigure[\label{fig:pdcDef:verticalComposition}Vertical composition $\diamond_v$.]{%
	$
	\ti{pdcDefVerticalComposition}
	$
}%
\\[1.5em]
\subfigure[\label{fig:pdcDef:horizontalComposition}Horizontal composition $\diamond_h$.]{%
	$
	\ti{pcdDefHorizontalComposition}
	$
}%
\\[1.5em]
\subfigure[\label{fig:pcdDef:verticalUnitarity}(Strict) vertical unitarity.]{%
	$
	\ti{pcdDefVerticalUnitarity}
	$
}%
\\[1.5em]
\subfigure[\label{fig:pcdDef:horizontalUnitarity}(Strict) horizontal unitarity.]{%
	$ 
	\ti{pcdDefHorizontalUnitarity}
	$
}%
\caption{On the definition of \kl{double categories}.}
\label{fig:pcdDef}
\end{figure}
\subsection{Compositional rewriting double categories}\label{sec:comp:crDCs}

\begin{definition}\label{def:crc}
A double category (DC) $\bD$ is a \AP\intro{compositional rewriting DC (crDC)} if it has the following properties:
\begin{enumerate}[label=(\roman*)]
\item \AP\phantomintro(crDC){$\bD_0$ has multi-sums}$\bD_0$ \kl(ms){has multi-sums}.
\item \AP\phantomintro(crDC){$\bD_0$ and $\bD_1$ have pullbacks}$\bD_0$ and $\bD_1$ \kl{have pullbacks}. 
\item \AP\phantomintro(crDC){$\hComp:\bD_1\times_{\bD_0}\bD_1\rightarrow \bD_1$ is an isoglobular residual opfibration}%
The \kl(dc){horizontal composition} functor $\hComp:\bD_1\times_{\bD_0}\bD_1\rightarrow \bD_1$ is an \AP\intro{isoglobular residual opfibration}, namely a \kl{residual opfibration} such that\footnote{The definition in fact amounts to a special case of a so-called Street opfibration; this aspect and further variations of fibrational structures in rewriting theory are studied in~\cite{BMZ-2023}.} all \kl(rmof){residues} are \AP\intro(dc){globular isomorphisms} (i.e., isomorphisms $\varphi$ of $\bD_1$ such that $T(\varphi)$ and $S(\varphi)$ are identity morphisms). 
\item \AP\phantomintro(crDC){$S:\bD_1 \rightarrow \bD_0$ is a strong multi-opfibration}The \kl(dc){source functor} $S:\bD_1\rightarrow \bD_0$ is a \kl(mof){strong} \kl{multi-opfibration}.
\item \AP\phantomintro(crDC){$T:\bD_1 \rightarrow \bD_0$ is a residual multi-opfibration} The \kl(dc){target functor} $T:\bD_1\rightarrow \bD_0$ is a \kl{residual multi-opfibration}.
\end{enumerate}
\end{definition}

\begin{remark}\label{rem:hCompDom}
It is worthwhile unpacking the fibrational property of the \kl(dc){horizontal composition} functor in a \kl{crDC} into a more explicit form in view of later applications in the proof of the \kl{associativity theorem} (cf.\ Section~\eqref{sec:comp:at}):
\begin{itemize}
\item Squares in a \kl{crDC} $\bD$ have the following \AP\intro(crDC){horizontal decomposition property}:
\begin{equation}\label{eq:def:crDChorDecomp}
\begin{aligned}
&\forall \ti{crDChorDecompA}\;:\; 
\exists \; \ti{crDChorDecompB}\;:\; \alpha_2\hComp \alpha_1 = \sigma\vComp \alpha_{21}
\end{aligned}
\end{equation}
In particular, utilizing the notation $\AP\intro{\gcong}$ for equality up to \kl(dc){globular isomorphism}, one has $s\kl{\gcong}r_2'\hComp r_1'$.
\item Unpacking the definition of the \kl(rmof){universal property of residual multi-opfibrations} for the case at hand, we find the following \AP\intro(crDC){complex decomposition property} (where in the diagram below $\gamma_2\hComp\gamma_1=\beta_{21}\vComp\alpha_{21}$):
\begin{equation}
\begin{aligned}
&\forall \ti{crDChorDecompC}\;:\; 
\exists \; \ti{crDChorDecompD}\;:\; \begin{aligned}
\alpha_2\hComp \alpha_1 &= \sigma\vComp \alpha_{21}\\
\land\; \beta_2\hComp \beta_1&= \beta_{21}\vComp \sigma^{-1}
\end{aligned}
\end{aligned}
\end{equation}
\end{itemize}
\end{remark}

\subsection{Concurrency theorem}\label{sec:comp:cct}
Let us finally put the fibrational structures, introduced in Section~\ref{sec:fib}, and the above concept of \kl{compositional rewriting double category} to use by proving a first theorem---the \kl{concurrency theorem}---that plays a key role in the static analysis of rewriting systems.

\begin{theorem}\label{thm:ccThm}
\phantomintro{Concurrency theorem}
Let $\bD$ be a \kl{compositional rewriting double category}. Then the following statements hold:
\begin{equation}\label{eq:concurrencyThm}
\ti{concurrencyTheorem}
\end{equation}
\begin{itemize}
\item \textbf{\AP\intro(cct){Synthesis}:} %
For every pair of \kl(dc){squares} $(\alpha_1,\alpha_2)$ of $\bD$ ``adjacent at the foot'' (left diagram above), %
there exist a \kl{multi-sum} element $(m_2,m_1)\in \Msum{S(r_2)}{T(r_1)}$, a \kl(rmof){residue} $m_{1_{\star}}$ for $(r_1;m_1)$, and \kl(dc){squares} $(\beta_1,\beta_2,\beta_{21}')$ (with $T(\beta_1)=m_{1_{\star}}\circ m_1$ and $S(\beta_2)=m_{1_{\star}}\circ m_2$), uniquely determined up to universal isomorphisms, %
such that $r_{21}''\kl{\gcong} r_2''\hComp h_1''$.
\item \textbf{\AP\intro(cct){Analysis}:} For every \kl{multi-sum} element $(m_2,m_1)\in \Msum{S(r_2)}{T(r_1)}$,  \kl(rmof){residue} $m_{1_{\star}}$ for $(r_1;m_1)$, and \kl(dc){squares} $(\beta_1,\beta_2,\beta_{21}')$ (with $T(\beta_1)=m_{1_{\star}}\circ m_1$ and $S(\beta_2)=m_{1_{\star}}\circ m_2$), there exist \kl(dc){squares} $(\alpha_1,\alpha_2)$ of $\bD$, determined uniquely up to universal isomorphisms,  such that %
$r_2''\hComp r_1'' \kl{\gcong} r_{21}''$, where $r_2''=codom(\alpha_2)$, $r_1''=codom(\alpha_1)$, and $r_{21}''=codom(\beta_{21}')$.
\end{itemize}
Consequently, modulo a suitable notion of isomorphisms (induced by essential uniqueness of the respective constructions), the resulting sets of equivalence classes are isomorphic. 
\end{theorem}
\begin{proof}
\textbf{Synthesis part:} Construct the diagram in~\eqref{eq:cctSynthesisPartProof} from the premise as follows:
\begin{itemize}
\item Via the \kl(ms){universal property of multi-sums}, there exists a cospan of $\bD_0$-morphisms $(m_2,m_1)$ into an object $\lozenge$ and a mediating $\bD_0$-morphism $\lozenge \rightarrowtail \cdot$.
\item Since \kl(crDC){the target functor $T:\bD_1\rightarrow \bD_0$ is a residual multi-opfibration}, there exists a residue $m_{1_{\star}}: \lozenge \rightarrowtail \blacklozenge$ with respect to $(r_1;m_1)$ and a $\bD_0$-morphism $\blacklozenge \rightarrowtail \cdot$ such that $T(\beta_1)= m_{1_{\star}}\circ m_1$ and $\alpha_1 = \beta_1' \vComp \beta_1$.
\item Since \kl(crDC){the source functor $S:\bD_1\rightarrow \bD_0$ is a multi-opfibration}, there exist squares $\beta_2$ and $\beta_2'$ such that $S(\beta_1)=m_{1_{\star}}\circ m_2$ and $\alpha_2=\beta_2'\vComp \beta_2$. Thus the claim follows by letting $\beta_{21}':= \beta_2' \hComp \beta_1'$.
\end{itemize}

\begin{equation}\label{eq:cctSynthesisPartProof}
\ti{cctSynthesisPart}
\end{equation}

\textbf{Analysis part:} Construct the diagram in~\eqref{eq:cctProof:analysisPart} as follows:
\begin{itemize}
\item By the \kl(crDC){horizontal decomposition property} of squares in $\bD$, there exist squares $\beta_2'$ and $\beta_1'$ such that $\sigma\circ \beta_{21}' =  \beta_2' \hComp \beta_1'$ (for some \kl(dc){globular isomorphism} $\sigma$).
\item The claim follows be letting $\alpha_i := \beta_i'\vComp \beta_i$ for $i=1,2$, since $r_2''\hComp r_1''\kl{\gcong} r_{21}''$.
\end{itemize}

\begin{equation}\label{eq:cctProof:analysisPart}
\ti{cctProofAnalysisPart}
\end{equation}
\qed \end{proof}

While an interesting mathematical structure on double categories in its own right, the deeper meaning and importance of the \kl{concurrency theorem} in formulating compositional rewriting semantics becomes apparent when interpreting squares in a \kl{compositional rewriting double category} as rewriting steps (traditionally referred to as \emph{direct derivations} in rewriting theory). To this end, consider a two-step rewriting sequence, where the result of the first step is the starting object of the second step, as depicted in the form of the squares $\alpha_2$ and $\alpha_1$ in~\eqref{eq:concurrencyThm}. The \kl{concurrency theorem} then implies that there exists a one-step rewrite, depicted as the square $\beta_{21}$ in~\eqref{eq:concurrencyThm}, from the start object to the end object of the two-step sequence, and uniquely determined up to universal isomorphisms. Moreover, the particular property of the one-step rewrite operation is that it is taken along some \AP\intro(crDC){composite rule} (here: $r_{2}'\hComp r_1'$), which----again up to universal isomorphisms---is uniquely determined from the data of the two-step rewrite sequence. One may thus interpret the top half of the right diagram in~\eqref{eq:concurrencyThm} (i.e., the squares $\beta_2$ and $\beta_1$, the $\bD_1$-object $r_{2}'\hComp r_1'$ as well as the \kl{multi-sum} and \kl(rmof){residue}) as defining a notion of \AP\intro(crDC){rule composition}. Indeed, as we shall illustrate in Section~\ref{sec:nlCRS}, the abstract \kl{crDC}-based formulation of the \kl{concurrency theorem} instantiates precisely to the traditional concepts of concurrency and rule compositions when considering \kl(crsType){linear} \kl{Double-Pushout semantics}~\cite{ehrig:2006fund}, but also provides an abstraction of compositional rewriting for more general \kl(crsType){semi-linear} and \kl(crsType){generic} \kl{Double-Pushout} as well as \kl{Sesqui-Pushout semantics}, as first introduced in~\cite{BehrHK21}.\\

Let us finally note that in comparison to the instantiations of compositional rewriting theories to concrete choices of semantics, our abstract \kl{crDC}-based approach as presented here allows an efficient modularization of the proof of the \kl{concurrency theorem} by clearly separating the concrete definitions of compositional rewriting theories (i.e., proving that a certain semantics and choice of base category gives rise to a \kl{crDC}) from the universal structures provided by \kl{crDCs}.

\subsection{Associativity theorem}\label{sec:comp:at}

Unlike for the case of the \kl{concurrency theorem}, the statement and proof of which were a straightforward and efficient application of the multi-sum and fibrational concepts, our second main theorem has indeed such a complex statement that its proof relies much more non-trivially upon fibrational structures---which serve in a certain sense as a form of ``proof macros''---without which the proof would be extremely long and presumably difficult to follow.

\begin{theorem}\label{thm:assoc}
\AP\phantomintro{Associativity theorem for compositional rewriting double categories}
Let $\bD$ be a \kl{compositional rewriting double category}. Then every diagram as in~\eqref{eq:assocThmA} below (interpreted as encoding a \kl(crDC){composition} of rules $r_2$ and $r_1$, and of the \kl(crDC){composite} with $r_3$),
\begin{equation}\label{eq:assocThmA}
\ti{assocThmA}
\end{equation}
determines uniquely up to universal isomorphisms a diagram as in~\eqref{eq:assocThmPartB} below (interpreted as encoding a \kl(crDC){composition} of rules $r_3$ with $r_2$, and of the \kl(crDC){composite} with $r_1$), and vice versa:
\begin{equation}\label{eq:assocThmPartB}
\ti{assocThmPartB}
\end{equation}
Moreover, the equivalence is such that in addition
\begin{equation}
r_3'' \hComp r_{21}'' \cong r_{32}''\hComp r_1''\,.
\end{equation}
Thus for a suitable notion of equivalence up to isomorphisms (induced by the essential uniqueness of the respective constructions), there exists an isomorphism between the sets of equivalence classes of nested \kl(crDC){composites} of the three rules in the two different nesting orders. This amounts to a notion of \emph{associativity} for the \kl(crDC){rule composition} operation.
\end{theorem}
\begin{proof}
For the $\Rightarrow$ direction of the equivalence, construct the following diagram from the premise by applying the \kl(crDC){horizontal decomposition property} to the square $\alpha_{21}$, obtaining a globular isomorphism $\sigma_{21}$ in $\bD_1$ and squares $\alpha_2'$ and $\alpha_1'$ such that $\sigma_{21}\vComp \alpha_{21}=\alpha_2\hComp \alpha_1$ (and with $r_2''\hComp r_1''\kl{\gcong}r_{21}''$):
\begin{equation}
\ti{assocThmProofPartAstep1}
\end{equation}

Next, we apply a part of the \kl(cct){synthesis} construction of the \kl{concurrency theorem}, in that we synthesize from the squares $\alpha_3''$ and $\alpha_2'\vComp \alpha_2$ a \kl(crDC){composite} of rules $r_3$ and $r_2$ (as encoded via the squares $\beta_3$ and $\beta_2$, with the \kl(crDC){composite rule} itself omitted for clarity) and squares $\beta_3'$ and $\beta_2'$ such that $\alpha_3''=\beta_3'\vComp \beta_3$ and $\alpha_2'\vComp \alpha_2 = \beta_2'\vComp \beta_2$:
\begin{equation}
\ti{assocThmProofPartAstep2}
\end{equation}

By applying the \kl(cct){synthesis} construction of the \kl{concurrency theorem} to the pair of squares $\beta_3'\hComp\beta_2'$ and $\alpha_1'\vComp \alpha_1$, we may obtain the diagram in~\eqref{eq:assocThm:PartA:step2B} below (where $\gamma_{32}'\vComp\gamma_{32}=\beta_3'\hComp \beta_2'$, and $\gamma_1'\vComp\gamma_1= \alpha_1'\vComp\alpha_1$):

\begin{equation}\label{eq:assocThm:PartA:step2B}
\ti{assocThmProofPartAstep2B}
\end{equation}

We now apply the following sequence of manipulations to obtain the diagram in~\eqref{eq:assocThm:PartA:step3}:
\begin{itemize}
\item Via the \kl(ms){multi-sum extension Lemma}, there exists a $\bD_0$-morphism $\circ \rightarrowtail \triangle$ between the multi-sum objects $\circ$ and $\triangle$ (and analogously between the multi-sum objects $\square$ and $\lozenge$, albeit this is irrelevant for the proof and thus omitted from the diagrams).
\item Via the \kl(crDC){complex decomposition property},  there exist squares $\gamma_3$, $\gamma_3'$, $\gamma_2'$ and $\gamma_2$ and a \kl(dc){globular isomorphism} $\tau_{32}$ such that $\gamma_3\hComp\gamma_2= \tau_{32}\vComp \gamma_{32}$ and $\gamma_3'\hComp \gamma_2'= \gamma_{32}'\vComp \tau_{32}^{-1}$.
\end{itemize}

\begin{equation}\label{eq:assocThm:PartA:step3}
\ti{assocThmProofPartAstep3}
\end{equation}
For the final step of the $\Rightarrow$ part of the proof, we construct the diagram in~\eqref{eq:assocThm:PartA:step4} below via the following steps:
\begin{itemize}
\item Take a pullback (admissible since \kl(crDC){$\bD_0$ has pullbacks}) in order to obtain the object marked $;$ on the back rightmost part of the diagram in~\eqref{eq:assocThm:PartA:step4}, yielding a number of morphisms as indicated (all of which are in $\bD_0$, again since \kl(crDC){$\bD_0$ has pullbacks}). 
\item Since \kl(crDC){the source functor is a strong multi-opfibration}, by applying the \kl(mof){pullback lifting lemma for strong multi-opfibrations} we obtain squares $\delta_1$, $\delta_1'$ and $\varepsilon_1$ such that $\alpha_1=\varepsilon_1 \vComp \delta_1$ and $\gamma_1=\delta_1'\vComp \delta_1$. The lemma also implies that since the square from the object marked $;$ was by construction a pullback, so is the square from $\triangledown$, which by the \kl{universal property of pullbacks} yields the existence of a morphism into $\triangledown$ (marked $+$), which is a $\bD_0$-morphism since \kl(crDC){$\bD_0$ has pullbacks}.
\item Applying the \kl(mof){pullback lifting lemma for strong multi-opfibrations} once again, we may obtain the configuration in the middle of the diagram in~\eqref{eq:assocThm:PartA:step4}, i.e., squares $\delta_2$, $\delta_2'$ and $\varepsilon_2$ such that $\alpha_2=\varepsilon_2\vComp \delta_2$ and $\gamma_2\vComp\beta_2 =\delta_2'\vComp\delta_2$. The lemma also entails that since the commutative square from $\triangledown$ is a pullback, the square from the object marked $;;$ is a pullback, too, and there exists the $\bD_0$-morphism $codom(r_2)\rightarrowtail ;;$.
\item By the \kl(rmofrup){universal property of residues}, since $\alpha_1=\varepsilon_1\vComp\delta_1$, and since the residue $\circ\rightarrowtail \bullet$ marked $\star$ (which forms the second factor of $T(\alpha_1)$) factors through $\triangledown \rightarrowtail \bullet$ (i.e., through $T(\varepsilon_1)$), we find that the square $\varepsilon_1$ is an isomorphism in $\bD_1$. By the \kl(mof){lifting property of isomorphisms for strong multi-opfibrations}, the square $\varepsilon_2$ is then an isomorphism in $\bD_1$, too.
\item The latter point entails that we may form the cospan $\cdot -S(\gamma_3\vComp\beta_3)\rightarrow :: \leftarrow T(\delta_2'\vComp \varepsilon_2^{-1})-\cdot$ of morphisms in $\bD_0$; hence by the \kl(ms){universal property of multi-sums}, there exists a $\bD_0$-morphism $\lozenge \rightarrowtail ::$.
\item The existence of the morphism $\lozenge \rightarrowtail ::$ together with $\alpha_2'\hComp \alpha_1' = (\gamma_2'\hComp \gamma_1') \vComp (\delta_2'\hComp\delta_1')\vComp 
(\varepsilon_2^{-1}\hComp\varepsilon_1^{-1})$ implies via the \kl(rmofrup){universal property of residues} that $\gamma_2'\hComp \gamma_1'$ is an isomorphism in $\bD_1$. 
\item Since $\gamma_2'\hComp \gamma_1'$ is an isomorphism in $\bD_1$, $S(\gamma_2'\hComp \gamma_1')=S(\gamma_1')$ is an isomorphism in $\bD_0$; thus applying the \kl(mof){lifting property of isomorphisms for strong multi-opfibrations} repeatedly, we find that the squares $\gamma_1'$, $\gamma_2'$ and $\gamma_3'$ are all isomorphisms in $\bD_1$, which concludes the proof of the $\Rightarrow$ part of the theorem.
\end{itemize}

\begin{equation}\label{eq:assocThm:PartA:step4}
\ti{assocThmPartAstep4}
\end{equation}

For the $\Leftarrow$ part of the claim, via the \kl(crDC){horizontal decomposition property} (here for the square $\alpha_{32}$), we obtain a globular isomorphism $\sigma_{32}$ and squares $\alpha_3'$ and $\alpha_2'$ such that $\sigma_{32}\vComp \alpha_{32}=\alpha_3'\hComp \alpha_2'$ (and with $r_{32}'\kl{\gcong}r_3'\hComp r_2'$):

\begin{equation}\label{eq:assocThm:proof:PartB:step1}
\ti{assocThmProofPartBstep1}
\end{equation}

To proceed, we may now apply the \kl(cct){synthesis} part of the \kl{concurrency theorem} to the sequence formed by the composite square $\alpha_2'\vComp \alpha_2$ and the square $\alpha_1''$, again not explicitly carrying out the horizontal composition of squares in the last step of the construction. We thus arrive at a diagram as in~\eqref{eq:assocThm:proof:PartB:step2} below, with the squares $\beta_1$, $\beta_1'$, $\beta_2$ and $\beta_2'$ arising from the aforementioned construction (where the existence of the $\bD_0$-morphism $\circ\rightarrowtail\triangle$ follows from the \kl(ms){multi-sum extension Lemma}):

\begin{equation}\label{eq:assocThm:proof:PartB:step2}
\ti{assocThmProofPartBstep2}
\end{equation}

By applying the \kl(cct){synthesis} construction of the \kl{concurrency theorem} to the pair of squares $\alpha_3'\vComp \alpha_3$ and $\beta_2'\hComp\beta_1'$, we may obtain the diagram in~\eqref{eq:assocThm:PartB:step2B} below (where $\gamma_3'\vComp\gamma_3= \alpha_3'\vComp\alpha_3$, and $\gamma_{21}'\vComp\gamma_{21}=\beta_2'\hComp \beta_1'$):

\begin{equation}\label{eq:assocThm:PartB:step2B}
\ti{assocThmProofPartBstep2B}
\end{equation}

Next, we apply the following sequence of manipulations to obtain the diagram in~\eqref{eq:assocThm:proof:PartB:step3}:
\begin{itemize}
\item Via the \kl(ms){multi-sum extension Lemma}, there exists a $\bD_0$-morphism $\square  \rightarrowtail \lozenge$ between the multi-sum objects $\square$ and $\lozenge$.
\item  Via the \kl(crDC){complex decomposition property},  there exist squares $\gamma_2$, $\gamma_2'$, $\gamma_1'$ and $\gamma_1$ and a \kl(dc){globular isomorphism} $\tau_{21}$ such that $\gamma_2\hComp\gamma_1= \tau_{21}\vComp \gamma_{21}$ and $\gamma_2'\hComp \gamma_1'= \gamma_{21}'\vComp \tau_{21}^{-1}$.
\end{itemize}

\begin{equation}\label{eq:assocThm:proof:PartB:step3}
\ti{assocThmProofPartBstep3}
\end{equation}

In order to complete the proof, we proceed as follows (yielding the diagram in~\eqref{eq:assocThm:proof:PartB:step4}):

\begin{itemize}
\item Take a pullback to obtain the span $\cdot \leftarrowtail \oplus \rightarrowtail \odot$. Since \kl(crDC){$\bD_0$ has pullbacks}, the span consists of two $\bD_0$-morphisms. By the \kl{universal property of pullbacks}, there exists a morphism $\cdot \rightarrowtail \oplus$ (marked as a dashed arrow in~\eqref{eq:assocThm:proof:PartB:step4}), which again since \kl(crDC){$\bD_0$ has pullbacks} is also a $\bD_0$-morphism.
\item Applying the \kl(mof){pullback lifting lemma for strong multi-opfibrations} to the pullback square over the object $\oplus$, we obtain squares $\delta_2$, $\delta_2'$ and $\varepsilon_2$ such that $\alpha_2 = \varepsilon_2\vComp\delta_2$ and $\gamma_2\vComp\beta_2 = \delta_2'\vComp\delta_2$. The lemma also entails that the square under the object marked $\otimes$ is a pullback. By the \kl{universal property of pullbacks}, this entails the existence of a morphism $\square \rightarrow \otimes$, which is a $\bD_0$-morphism since \kl(crDC){$\bD_0$ has pullbacks}.
\item By the \kl(rmofrup){universal property of residues}, since %
$\alpha_2=\varepsilon_2\vComp\delta_2$, and since %
the residue $\square\rightarrowtail \blacksquare$ marked $\star$ (which forms the second factor of of $T(\alpha_2)$) factors through $\otimes \rightarrowtail \blacksquare$ (i.e., through $T(\varepsilon_2)$), we find that the square $\varepsilon_2$ is an isomorphism in $\bD_1$. Thus in particular the morphism $\oplus \rightarrowtail \cdot$ marked $\dag$ in~\eqref{eq:assocThm:proof:PartB:step4}, i.e., $S(\varepsilon_2)$, is an isomorphism.
\item The latter fact entails that by applying the \kl(ms){universal property of multi-sums} to the cospan $dom(r_2') \rightarrowtail \oplus\rightarrowtail \odot \leftarrowtail codom(r_1)$, there exists a $\bD_0$-morphism $\triangle \rightarrowtail \odot$ (marked $\dag\dag$ in~\eqref{eq:assocThm:proof:PartB:step4}). 
\item By the \kl(rmofrup){universal property of residues}, since %
$\alpha_1''=\gamma_1'\vComp(\gamma_1\vComp\beta_1)$, and since %
the residue $\triangle\rightarrowtail \blacktriangle$ marked $\star$ (which forms the second factor of of $T(\alpha_1)$) factors through $\odot \rightarrowtail \blacktriangle$ (i.e., through $T(\gamma_1')$), we find that the square $\gamma_1'$ is an isomorphism in $\bD_1$, and thus in particular $T(\gamma_1')$ is an isomorphism in $\bD_0$; thus by repeated application of the \kl(mof){lifting property of isomorphisms for strong multi-opfibrations}, the squares $\gamma_2'$ and $\gamma_3'$ are found to be isomorphisms in $\bD_1$. This concludes the proof of the $\Leftarrow$ part of the theorem.
\end{itemize}

\begin{equation}\label{eq:assocThm:proof:PartB:step4}
\ti{assocThmProofPartBstep4}
\end{equation}

\qed \end{proof}

This proof of the \kl{associativity theorem} in a crDC provides a strong indication that modularizing the categorical structures in this form renders vastly complex mathematical developments feasible and, at the same time, provides some deep structural insights. Most importantly, our characterization of a given categorical rewriting semantics to qualify as being \emph{compositional} is based exclusively on verifying properties of just the \kl(dc){squares} of a \kl{double category} (which model \emph{direct derivations}) and on the existence of \kl{multi-sums}, i.e., only on the very definition of the rewriting semantics being formalized as a \kl{compositional rewriting double category}.

\section{Examples of fibrational structures relevant for rewriting theory}\label{sec:examplesFib}

In this section, we demonstrate that a number of constructions of commutative squares that form the building blocks of standard categorical rewriting semantics in fact carry fibrational structures; this will eventually allow us to instantiate our general compositional rewriting theory to these standard semantics. After a quick review of the notion of stable system of monics, we define various categories of pullback, pushout or final pullback complement squares where composition is defined by either horizontal or vertical pasting. In the remainder of the section, we then analyze the fibrational structures on the four natural boundary functors (domain, codomain, target and source) from these categories of squares.

\subsection{Categories of squares}\label{sec:cos}

A system of monics $\cM$ in a category $\bfC$ is a collection of monomorphisms that \AP\intro(ssm){includes all isomorphisms} and is \AP\intro(ssm){stable under composition}. Throughout the remainder of this paper, we will reserve the notation $\rightarrowtail$ for monos in $\cM$, and $\hookrightarrow$ for generic monomorphisms.
We say that $\bfC$ \AP\intro(ssm){has pullbacks along $\cM$-morphisms} if pullbacks of cospans of the form $A\rightarrow B\leftarrowtail B'$ always exist in $\bfC$.

For a category $\bfC$, $\cM$ is a \AP\emph{\intro{stable system of monics}} \cite{COCKETT2002223} if $\bfC$ \kl(ssm){has pullbacks along $\cM$-morphisms} and $\cM$ is \AP\intro(ssm){stable under pullback}: if $A \leftarrow m' - A' \rightarrow B'$ is a pullback of $A \rightarrow B \leftarrow m \ltail B'$ then $m'\in \cM$. The morphisms in $m \in \cM$ satisfy the following \AP\intro(ssm){decomposition property of $\cM$-morphisms} \cite{COCKETT2002223}: if $m = m' \circ f$ where $m'$ is a mono then $f \in \cM$.\\

For later convenience, we introduce the following auxiliary definitions, which permit us to succinctly express whether or not a given category admits pullbacks, pushouts or final pullback complements for generic input data, or only when the morphisms on the input are of a certain nature relative to a stable system of monics:

\begin{definition}
Let $\bfC$ be a category. 
\begin{enumerate}[label=(\roman*)]
\item $\bfC$ \AP\intro{has pullbacks} if $\bfC$ admits pullbacks of all cospans. 
\item $\bfC$ \AP\intro{has pushouts} if $\bfC$ admits pushouts of all spans.
\item $\bfC$ \AP\intro{has final pullback complements (FPCs)} if $\bfC$ admits \kl{FPCs} along arbitrary sequences of composable morphisms $A\rightarrow B\rightarrow B'$.
\end{enumerate}
If $\bfC$ has a \kl{stable system of monics} $\cM$, we define also the following variants and additional concepts:
\begin{enumerate}[label=(\roman*')]
\item $\bfC$ \AP\intro(ssm){has pushouts along $\cM$-morphisms} if pushouts of spans of the form $A\leftarrow B\rightarrowtail B'$ exist in $\bfC$.
\item $\bfC$ \AP\intro(ssm){has final pullback complements (FPCs) along $\cM$-morphisms} iff \kl{FPCs} of sequences of composable morphisms of the form $A\rightarrow B\rightarrowtail B'$ exist in $\bfC$.
\item \AP\intro(ssm){$\cM$-morphisms are stable under pushout} in $\bfC$ if whenever $A'\rightarrow B'\leftarrow \beta-B$ is a pushout of a span of the form $A'\leftarrow \alpha \ltail A\rightarrow B$, then $\beta\in \cM$.
\item \AP\intro(ssm){pushouts along $\cM$-morphisms are stable under $\cM$-pullbacks}\footnote{Throughout this paper, in order to avoid confusion, we follow the convention that ``stable under \emph{pullback}'' exclusively refers to the stability of morphisms when considering individual pullback squares (as in the definition of \kl(ssm){$\cM$-morphisms stable under pullback}), while ``stable under \emph{pullbacks}''  always refers to stability properties that involve commutative cubes with vertical squares being pullbacks (as in the definition of \kl(ssm){stability of $\cM $-pushouts under $\cM $-pullbacks}).} in $\bfC$ if for all diagrams of the form below,
\begin{equation}\label{eq:def:MstableMpushouts}
\ti{defMstableMpushouts}
\end{equation}
where the bottom square is a \kl(ssm){pushout along an $\cM$-morphism}, and the vertical squares are \kl(ssm){pullbacks along $\cM$-morphisms}, then the top square is a pushout.
\end{enumerate}
\end{definition}

\begin{definition}\label{def:cstCats}
Let $\bfC$ be a category with a \kl{stable system of monics} $\cM$%
. Let $\mathsf{T}$ be a type of commutative squares, for which we consider $\mathsf{PB}$ (\kl{pullbacks}), $\mathsf{PO}$ (\kl{pushouts}), or $\mathsf{FPC}$ (\kl{final pullback complements}). Then we define the following categories:
\begin{enumerate}[label=(\roman*)]
\item $\mathsf{T}_h(\bfC,\cM)$ has as objects the morphisms of $\cM$, and as morphisms commutative squares of type $\mathsf{T}$ along arbitrary morphisms of $\bfC$, and a morphism composition induced by \emph{horizontal pasting} of squares of type $\mathsf{T}$.
\item $\mathsf{T}_v(\bfC,\cM)$ has as objects the morphisms of $\bfC$, and as morphisms commutative squares of type $\mathsf{T}$ along $\cM$-morphisms, and a morphism composition induced by \emph{vertical pasting} of squares of type $\mathsf{T}$.
\end{enumerate}
In Figure~\ref{fig:boundaryFunctors}, we depict a square of type $\mathsf{T}$ (with $\cM$-morphism drawn vertically) and the action of four  ``boundary functors'' that naturally arise from the above definitions:
\begin{enumerate}[label=(\alph*)]
\item The \AP\intro(cmtSq){domain} functor $dom: \mathsf{T}_h(\bfC,\cM)\rightarrow \bfC$ and the \AP\intro(cmtSq){codomain} functor $codom: \mathsf{T}_h(\bfC,\cM)\rightarrow \bfC$.
\item The \AP\intro(cmtSq){source} functor $S: \mathsf{T}_v(\bfC,\cM)\rightarrow \bfC\vert_{\cM}$ and the \AP\intro(cmtSq){target} functor $codom: \mathsf{T}_v(\bfC,\cM)\rightarrow \bfC\vert_{\cM}$, where $\bfC\vert_{\cM}$ has the same objects as $\bfC$, and as morphisms those of $\cM$.
\end{enumerate}
\end{definition}

\begin{figure}
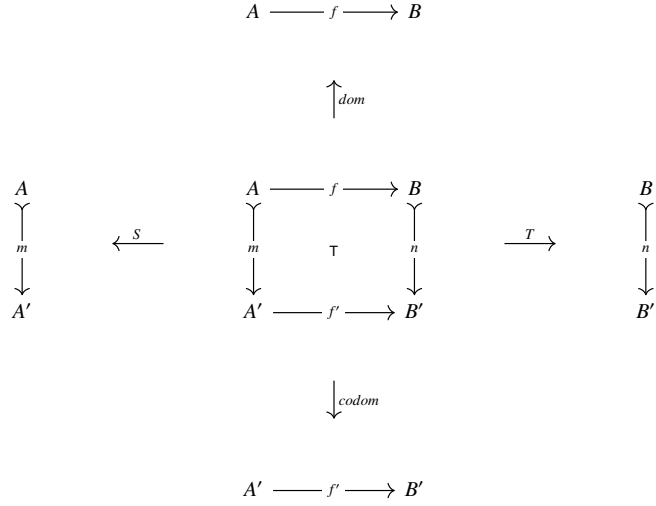

\centering
$\ti{boundaryFunctors}$
\caption{Boundary functors.\label{fig:boundaryFunctors}}
\end{figure}
The following result, whose proof can be found in \ref{app:ps4}, establishes that these categories are indeed well-defined.
\begin{lemma}\label{lem:PBPOFPCcats}
The categories $\mathsf{T}_h(\bfC,\cM)$ and $\mathsf{T}_v(\bfC,\cM)$ for $\mathsf{T}\in \{\mathsf{PB}, \mathsf{PO} , \mathsf{FPC}\}$ as introduced in Definition~\ref{def:cstCats} are well-defined, i.e., their composition operations are well-typed, associative and unital.
\end{lemma}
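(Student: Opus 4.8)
The plan is to verify, for each of the six categories of Definition~\ref{def:cstCats} and as uniformly as possible, the three requirements: that the pasting operation used as composition is well-typed, that it is associative, and that it is unital. The useful preliminary observation is that a morphism of $\mathsf{T}_h(\bfC,\cM)$ from $m$ to $m'$ is precisely a pair $(s,t)$ of $\bfC$-morphisms with $m'\circ s = t\circ m$ whose induced square happens to be of type $\mathsf{T}$, with composition given componentwise by $(s',t')\circ(s,t) := (s'\circ s,\, t'\circ t)$; the analogous description holds for $\mathsf{T}_v(\bfC,\cM)$, with the two generic legs of the square replaced by $\cM$-morphisms. Hence, once we know that this componentwise composite of two type-$\mathsf{T}$ squares is again a type-$\mathsf{T}$ square whose boundary morphisms lie in the prescribed classes, associativity will be inherited verbatim from associativity of composition in $\bfC$, and unitality from the existence of a degenerate identity square; so the real content of the statement is the well-typedness claim.

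Well-typedness splits into a ``boundary part'' and a ``square-type part''. The boundary part is immediate: in $\mathsf{T}_h(\bfC,\cM)$ the objects (the $\cM$-morphisms) sit on the two fixed sides and are never composed, and the legs being pasted are generic, so nothing requires checking; in $\mathsf{T}_v(\bfC,\cM)$ vertical pasting composes the two $\cM$-valued legs, whose composite again lies in $\cM$ because $\cM$ is \kl(ssm){stable under composition}. The square-type part is exactly a pasting lemma for the relevant class of squares: for $\mathsf{T}=\mathsf{PB}$ it is the pullback pasting lemma (the horizontal, resp.\ vertical, composite of two pullback squares is a pullback), for $\mathsf{T}=\mathsf{PO}$ the dual pushout pasting lemma, and for $\mathsf{T}=\mathsf{FPC}$ it is the fact that the horizontal composite of two \kl{FPC} squares is an FPC (for $\mathsf{FPC}_h$: \kl{horizontal FPC composition}) and that the vertical composite of two FPC squares is an FPC (for $\mathsf{FPC}_v$: \kl{vertical FPC composition}). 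The pullback and pushout pasting lemmas are standard; the two FPC composition lemmas are the only non-routine ingredient and may be quoted from the literature on final pullback complements (cf.\ \cite{dyckhoff1987exponentiable,Corradini_2015}).

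Given well-typedness, associativity and unitality are purely formal. For associativity, both bracketings of a triple composite produce the square whose two pasted legs are the respective triple composites computed in $\bfC$; these agree by associativity in $\bfC$, and the resulting square is of type $\mathsf{T}$ by two applications of the pasting lemma (in either bracketing). For unitality, the candidate identity on an object is the degenerate square carrying identity morphisms on the two pasted legs and the object-morphism (an $\cM$-morphism in the $\mathsf{T}_h$ case, a generic one in the $\mathsf{T}_v$ case) repeated on the two fixed sides; this square is trivially a pullback and a pushout, and it is likewise an FPC, since a square with identities along two opposite sides is always a final pullback complement. Pasting such a degenerate square onto any morphism simply composes its legs with identities in $\bfC$ and hence returns that morphism unchanged, which also reconfirms that the square type and the boundary classes are preserved.

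The main obstacle is therefore entirely concentrated in the $\mathsf{T}=\mathsf{FPC}$ case: in contrast to pullbacks and pushouts, final pullback complements do not compose by a one-line diagram chase, so the horizontal and vertical FPC composition lemmas---together with the observation that the degenerate identity squares are FPCs---carry essentially all of the weight; the remainder of the argument is nothing more than the associativity and identity laws of $\bfC$ plus closure of $\cM$ under composition.
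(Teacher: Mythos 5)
Your proposal is correct and follows essentially the same route as the paper: reduce everything to closure of $\cM$ under composition, the pasting lemmas for the three square types (with \kl{horizontal FPC composition} and \kl{vertical FPC composition} carrying all the real weight), and the observation that the degenerate identity squares are simultaneously pullbacks, pushouts and FPCs. The only difference is that the paper spells out the two FPC-composition arguments (since they are a slight generalization of the literature versions, using only \kl(ssm){pullbacks along $\cM$-morphisms} and \kl(ssm){stability of $\cM$-morphisms under pullback}) and justifies the vertical identity square being an FPC via stability of isomorphisms under pullback, whereas you quote or assert these facts; both are true, so this is a matter of self-containedness rather than a gap.
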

\subsection{Fibrational properties of the domain and codomain functors}\label{sec:domFunctors}

We now begin to investigate a number of interesting fibrational structures carried by the boundary functors of the various categories of squares, considering first the case of the domain functor.

\begin{theorem}\label{thm:domFunctorFP}
Let $\bfC$ be a category with a \kl{stable system of monics} $\cM$, and with the following additional properties:
\begin{enumerate}
\item $\bfC$ \kl{has pullbacks}.
\item $\bfC$ \kl(ssm){has pushouts and final pullback complements (FPCs) along $\cM$-morphisms}. 
\item Pushouts along $\cM$-morphisms are \AP\phantomintro(domCsqPBT){stable under pullbacks}\kl(PO){stable under pullbacks}.
\item \emph{\AP\intro(domCsqPBT){Pushouts along $\cM$-morphisms are pullbacks}}.
\end{enumerate}
Then the domain functor $dom:\textsf{PB}_h(\bfC,\cM)\rightarrow \bfC$ from the category of pullback squares along $\cM$-morphisms and under horizontal composition to the underlying category $\bfC$ satisfies the following properties:
\begin{enumerate}[label=(\roman*)]
\item \AP\phantomintro(cmtSq){$dom:\textsf{PB}_h(\bfC,\cM)\rightarrow \bfC$ is a Grothendieck fibration}  $dom:\textsf{PB}_h(\bfC,\cM)\rightarrow \bfC$ is a \kl{Grothendieck fibration}, with the \kl(gf){Cartesian liftings} given by FPCs.
\item \AP\phantomintro(cmtSq){$dom:\textsf{PB}_h(\bfC,\cM)\rightarrow \bfC$ is a Grothendieck opfibration}  $dom:\textsf{PB}_h(\bfC,\cM)\rightarrow \bfC$ is a \kl{Grothendieck opfibration}, with the \kl(gopf){op-Cartesian liftings} given by pushouts.
\item \AP\intro(cmtSq){$dom:\textsf{PB}_h(\bfC,\cM)\rightarrow \bfC$ satisfies a Beck-Chevalley condition (BCC)}: adopting the notation $m-(f,f')\rightarrow n$ for morphisms in $\mathsf{PB}_h(\bfC,\cM)$ (cf.\ Figure~\ref{fig:boundaryFunctors}), consider a commutative square in $\mathsf{PB}_h(\bfC,\cM)$ that is mapped by $dom$ into a pullback square in $\bfC$:
\begin{equation}\label{eq:BCCpremise}
\ti{BCCpremise}
\end{equation}
Then the following two equivalent conditions hold:
\begin{itemize}
\item \AP\intro(cmtSq){(BCC-1)}: $(f,f')$ is \kl(gopf){op-Cartesian} if $(i,i')$ is \kl(gopf){op-Cartesian} and $(g,g')$ and $(h,h')$ are \kl(gf){Cartesian}.
\item \AP\intro(cmtSq){(BCC-2)}: $(g,g')$ is \kl(gf){Cartesian} if $(h,h')$ is \kl(gf){Cartesian} and $(f,f')$ and $(i,i')$ are \kl(gopf){op-Cartesian}.
\end{itemize}
\end{enumerate}
\end{theorem}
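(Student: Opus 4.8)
The plan is to handle the three items separately, in each case exhibiting the asserted lift explicitly and checking the relevant universal property; throughout write $m-(p,p')\to n$ for a morphism of $\mathsf{PB}_h(\bfC,\cM)$, i.e.\ a pullback square with horizontal edges $p$ (top), $p'$ (bottom) and $\cM$-valued vertical edges $m$ (left), $n$ (right), so that $dom$ of this morphism is $p$. For~(i): given an object $n:C\rightarrowtail D$ and $f:A\to C$ in $\bfC$, use hypothesis~(2) to form the \kl{FPC} of $A\xrightarrow{f}C\xrightarrow{n}D$, producing $A\xrightarrow{m}B\xrightarrow{f'}D$ with $(f,f')$ a pullback square; in that square $m$ is a pullback of $n$, so $m\in\cM$ by stability of $\cM$ under pullback, and $m-(f,f')\to n$ is a morphism of $\mathsf{PB}_h(\bfC,\cM)$ over $f$. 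It is \kl(gf){Cartesian}: for any $\ell-(h,h')\to n$ with domain $E$ and any $q:E\to A$ with $f\circ q=h$, the \kl{universal property of FPCs} applied to the pullback $\ell-(h,h')\to n$ together with $q$ gives a unique bottom edge $q'$ with $\ell-(q,q')\to m$ and $(f,f')\circ(q,q')=(h,h')$, while \kl{pullback-pullback decomposition} makes $(q,q')$ itself a pullback square; uniqueness is immediate because a morphism of $\mathsf{PB}_h$ is fixed by its two horizontal edges, both of which are forced. Hence $dom$ is a \kl{Grothendieck fibration}.

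Item~(ii) is dual. Given $m:A\rightarrowtail A'$ and $f:A\to B$, form with hypothesis~(2) the pushout of $A'\xleftarrow{m}A\xrightarrow{f}B$, yielding $A'\xrightarrow{f'}B'\xleftarrow{n}B$; here $n$ is a pushout of the $\cM$-morphism $m$, hence $n\in\cM$, and the square is a pullback by hypothesis~(4), so $m-(f,f')\to n$ is a morphism of $\mathsf{PB}_h$ over $f$. For \kl(gopf){op-Cartesianity}, given $m-(\phi,\phi')\to p$ with $p:X\rightarrowtail Y$ and $g:B\to X$ with $g\circ f=\phi$, the \kl{universal property of pushouts} gives a unique $\beta:B'\to Y$ with $\beta\circ f'=\phi'$ and $\beta\circ n=p\circ g$, whence $(g,\beta)\circ(f,f')=(\phi,\phi')$. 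The one nontrivial check is that the square $n-(g,\beta)\to p$ is a \emph{pullback}: one pulls the pushout square $m-(f,f')\to n$ back along the $\cM$-morphism $p$ (via the maps into $Y$ induced by $\beta$), obtaining a cube whose base is a pushout along an $\cM$-morphism and whose four side faces are pullbacks along $\cM$-morphisms — crucially this holds even over the non-$\cM$ edges $f,f'$, since the opposite legs of those cospans are pullbacks of $p\in\cM$; as $p$ is monic the pulled-back span is isomorphic to $B\xleftarrow{f}A\xrightarrow{\mathrm{id}}A$, whose pushout is $B$, so hypothesis~(3) identifies $B$ with $B'\times_Y X$ via the canonical comparison. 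Uniqueness is again automatic, so $dom$ is a \kl{Grothendieck opfibration}.

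For~(iii), by~(i)--(ii) a morphism of $\mathsf{PB}_h$ is \kl(gf){Cartesian} exactly when its underlying square is an FPC and \kl(gopf){op-Cartesian} exactly when it is a pushout. The equivalence of (BCC-1) and (BCC-2) is the standard mate correspondence: both assert that the canonical comparison $f_!\,g^{*}\Rightarrow h^{*}\,i_!$ of fibre functors attached to the base pullback square is invertible, so it suffices to prove one, say (BCC-1). There one is given a commuting cube in $\bfC$ whose $dom$-image is a pullback $A=B\times_D C$, whose four $\cM$-side faces are pullbacks, with the $(i,i')$-face a pushout and the $(g,g')$- and $(h,h')$-faces FPCs, and must show the $(f,f')$-face is a pushout. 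Form $P:=B\sqcup_A A'$, the pushout of $B\xleftarrow{f}A\xrightarrow{a}A'$ along $a\in\cM$, together with the canonical $\lambda:P\to D'$; pulling this pushout back along $d:D\rightarrowtail D'$ and invoking hypothesis~(3) exactly as in~(ii) shows that $P$ is a pullback complement of $h$ and $d$. A comparison of $P$ with the FPC $B'$ of $h$ and $d$, combined with the pushout universal property of $P$ and the finality expressed by the given FPC faces, then forces this comparison to be an isomorphism, which is precisely the statement that $(f,f')$ is a pushout.

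The main obstacle is not~(i), which is routine, but the pullback-producing cube argument of~(ii) and, above all, its reprise in~(iii): one must exhibit a \emph{pullback} square where only a colimiting construction is available, and this is exactly what hypotheses~(3)--(4) are designed to supply — provided one is careful to arrange the ambient cube so that the faces acting as the ``vertical pullbacks'' of hypothesis~(3) genuinely are pullbacks along $\cM$-morphisms. Within~(iii) the sharpest single step is establishing \emph{finality} of the candidate pushout complement $P$ (as opposed to merely producing a comparison map $P\to B'$), where the pushout universal property of $P$ and the finality of the FPC faces must be played off against one another; this is the part whose details would most benefit from an explicit diagrammatic treatment.
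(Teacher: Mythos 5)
Parts (i) and (ii) of your proposal coincide in substance with the paper's proof: Cartesian lifts are the FPCs (with Cartesianity coming straight from the universal property of FPCs), op-Cartesian lifts are the pushouts, and your cube argument in (ii) is an inlined proof of the pullback--pushout decomposition lemma that the paper simply cites. (Like the paper, you quietly use that the pushout of an $\cM$-morphism along $f$ is again in $\cM$, which is not among hypotheses (1)--(4); this is needed for the lifted square to be an object of $\mathsf{PB}_h(\bfC,\cM)$ at all.)

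Part (iii) contains two genuine gaps. For (BCC-1) you form the pushout $P=B\sqcup_A A'$ and must show the comparison $P\to B'$ into the FPC is invertible; but both universal properties at your disposal point the same way --- the pushout property of $P$ and the finality of $B'$ each only manufacture maps \emph{from} $P$ \emph{to} $B'$ --- so no inverse appears, and you yourself flag this step as unfinished. The paper works from the opposite corner: it takes the \emph{pullback} $Q=C'\times_{D'}B'$, observes that the resulting square is an FPC of the \emph{same} composable pair as the given Cartesian left face (by stability of FPCs under pullbacks, Lemma~\ref{lem:FPCs}, which is where hypothesis (1) enters), so that essential uniqueness of FPCs of identical data yields $A'\cong Q$ outright; hypothesis (3) then makes the square through $Q$ a pushout. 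Comparing two FPCs (or two pushouts) of the same data gives the isomorphism for free; comparing a pushout with an FPC has no such uniqueness principle, which is exactly why your route stalls. Second, reducing (BCC-2) to (BCC-1) by ``the standard mate correspondence'' is not valid: (BCC-1) is indeed equivalent to invertibility of $f_!g^*\Rightarrow h^*i_!$, but (BCC-2) has a different logical form, and deriving it from (BCC-1) in an arbitrary bifibration additionally requires $f_!$ to reflect the relevant isomorphism (two op-Cartesian arrows over $f$ with the same codomain need not have isomorphic domains). The paper supplies that missing ingredient concretely: it forms the FPC $B\to F\to D'$ of $B\to D\rightarrowtail D'$, applies (BCC-1) to the cube with $F$ in place of $B'$ to see that the square over $A'\to F$ is a pushout of the span $B\leftarrow A\to A'$, and then uses essential uniqueness of pushouts of that fixed span (the back face being a pushout by hypothesis) to conclude that $B'\to F$ is an isomorphism. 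You need either to reproduce an argument of this kind or to prove (BCC-2) directly.
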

\begin{proof} 
As the first two parts of the proof will demonstrate, the $dom$ functor is in a certain sense a prototypical example of a Grothendieck bifibration, in that the fibration and opfibration structures arise directly from universal properties of FPCs and pushouts, respectively.

\emph{Ad (i) --- $dom$ is a \kl{Grothendieck fibration}:} this statement follows by specializing the defining equation~\eqref{eq:def:grothendieckFibration} to the case of the $dom$ functor. The existence of \kl(gf){Cartesian liftings} is guaranteed since the category $\bfC$ by assumption \kl(ssm){has FPCs along $\cM$-morphisms}, while the requisite universal property that qualifies the liftings as being Cartesian (i.e., the second line of~\eqref{eq:def:grothendieckFibration}) is satisfied via the \kl{universal property of FPCs}.

\emph{Ad (ii) --- $dom$ is a \kl{Grothendieck opfibration}:}  specializing the defining equation~\eqref{eq:def:grothendieckOpfibration} to the case of the $dom$ functor, we find that the \kl(gopf){op-Cartesian liftings} exist in the form of pushouts (which are guaranteed to exist since $\bfC$ by assumption \kl(ssm){has pushouts along $\cM$-morphisms}), while the universal property which qualifies these liftings as op-Cartesian (i.e., the second line of~\eqref{eq:def:grothendieckOpfibration}) is satisfied via \kl{pullback-pushout decomposition}.

\emph{Ad (iii) --- Beck-Chevalley condition (BCC):} The proof can be found in \ref{app:ps4}.
\qed \end{proof}

As the above results indicate, the domain functor $dom:\mathsf{PB}_h(\bfC,\cM)\rightarrow \bfC$ is (for suitable categories $\bfC$) a Grothendieck bifibration, i.e., simultaneously a Grothendieck fibration and opfibration. An interesting variant of this type of result---which moreover has important computational meaning in its own right---arises when considering the domain functors from the categories $\mathsf{PO}_h$ and $\mathsf{FPC}_h$ instead, which permits to state fibrational properties under considerably weaker assumptions on the underlying categories $\bfC$:

\begin{theorem}\label{cor:domPOFPC}
Let $\bfC$ be a category with a \kl{stable system of monics} $\cM$.
\begin{enumerate}[label=(\roman*)]
\item If $\bfC$ \kl(ssm){has pushouts along $\cM$-morphisms}, the functor $dom: \mathsf{PO}_h(\bfC,\cM)\rightarrow \bfC$ is a \kl{Grothendieck opfibration}.
\item If $\bfC$ \kl(ssm){has FPCs along $\cM$-morphisms}, the functor $dom: \mathsf{FPC}_h(\bfC,\cM)\rightarrow \bfC$ is a \kl{Grothendieck fibration}.
\end{enumerate}
\end{theorem}
\begin{proof}
It is straightforward to demonstrate that, for case $(i)$, \kl(ssm){pushouts along $\cM$-morphisms} provide the \kl(gopf){op-Cartesian liftings} (as was also the case for $dom:\mathsf{PB}_h(\bfC,\cM)\rightarrow \bfC$), while the \kl(gopf){op-Cartesianity} properties of the liftings are realized in the form of \kl{pushout-pushout decomposition}. For case $(ii)$, \kl(ssm){FPCs along $\cM$-morphisms} provide the \kl(gf){Cartesian liftings}, while the \kl(gf){Cartesianity} properties of liftings are realized in the form of \kl{horizontal FPC decomposition}.  
\qed \end{proof}

Let us briefly compare the results of Theorem~\ref{thm:domFunctorFP} and Theorem~\ref{cor:domPOFPC}. The \kl(gopf){op-Cartesianity} of \kl(gopf){op-Cartesian liftings} for the functor $dom:\mathsf{PB}_h(\bfC,\cM)\rightarrow \bfC$ relies on \kl{pullback-pushout decomposition} while, for $dom:\mathsf{PO}_h(\bfC,\cM)\rightarrow \bfC$, \kl{pushout-pushout decomposition}, valid in any category, is all that is required. On the other hand, the \kl(gf){Cartesianity} of \kl(gf){Cartesian liftings} for the functor $dom:\mathsf{PB}_h(\bfC,\cM)\rightarrow \bfC$ relies on the \kl{universal property of FPCs} while it is a consequence of \kl{horizontal FPC decomposition} for $dom:\mathsf{FPC}_h(\bfC,\cM)\rightarrow \bfC$. Since the requisite properties of FPCs hold in any category that admits FPCs, it appears interesting to note that the strong requirements necessary for $dom:\mathsf{PB}_h(\bfC,\cM)\rightarrow \bfC$ to carry bifibrational structures appear to be caused mainly by the \kl{Grothendieck opfibration} part of the structure. 

\begin{remark}
In contrast to the domain functors discussed in the previous section, only the codomain functor $codom:\mathsf{PB}_h(\bfC,\cM)\rightarrow \bfC$ appears to admit some fibrational structure (see below), while $codom:\mathsf{PO}_h(\bfC,\cM)\rightarrow \bfC$ and $codom:\mathsf{FPC}_h(\bfC,\cM)\rightarrow \bfC$ fail to do so. Since none of these three codomain functors play a role in our constructions, this causes no technical problems, but we found it interesting to mention the following result here for symmetry nonetheless.

\begin{theorem}
Let $\bfC$ be a category with a \kl{stable system of monics}. Then $codom:\mathsf{PB}_h(\bfC,\cM)\rightarrow \bfC$ is a \kl{Grothendieck fibration}.
\end{theorem}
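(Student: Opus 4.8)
The plan is to unwind Definition~\ref{def:grothendieckFibration} for the functor $G := codom : \mathsf{PB}_h(\bfC,\cM)\rightarrow \bfC$. Recall that an object of $\mathsf{PB}_h(\bfC,\cM)$ is an $\cM$-morphism, that $codom$ sends an object $n: C\rightarrowtail D$ to $D$, that a morphism $(f,f'): m\rightarrow n$ is a pullback square with top arrow $f$ and bottom arrow $f'$ (so $codom((f,f')) = f'$), and that composition is horizontal pasting of such squares. Thus, fixing an object $n: C\rightarrowtail D$ and a morphism $f': B\rightarrow D$ in $\bfC$ (i.e.\ a morphism into $codom(n)$), the task is to exhibit a Cartesian lifting of $f'$.

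For the lifting, I would simply take a pullback of $f'$ along $n$. This exists because $\bfC$ \kl(ssm){has pullbacks along $\cM $-morphisms}; it produces an object $m: A\rightarrowtail B$ — whose left leg lies in $\cM$ by \kl(ssm){stability of $\cM $-morphisms under pullback} — together with a top arrow $f: A\rightarrow C$, and the resulting square is by construction a pullback, hence a morphism $(f,f'): m\rightarrow n$ in $\mathsf{PB}_h(\bfC,\cM)$ with $codom(m) = B$ and $codom((f,f')) = f'$, exactly as required.

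It then remains to check that $(f,f')$ is Cartesian, i.e.\ satisfies the second line of~\eqref{eq:def:grothendieckFibration}. Given a further object $p: X\rightarrowtail Y$, a morphism $(g,g''): p\rightarrow n$ of $\mathsf{PB}_h(\bfC,\cM)$ with top $g$ and bottom $g''$, and a morphism $v: Y\rightarrow B$ in $\bfC$ with $g'' = f'\circ v$, I would observe that $n\circ g = g''\circ p = f'\circ v\circ p$, so that $g: X\rightarrow C$ and $v\circ p: X\rightarrow B$ form a cone over the cospan $n$, $f'$ (with common target $D$). The \kl{universal property of pullbacks} applied to the pullback square $(f,f')$ then yields a unique morphism $u_0: X\rightarrow A$ with $f\circ u_0 = g$ and $m\circ u_0 = v\circ p$. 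The square with left leg $p$, right leg $m$, top $u_0$ and bottom $v$ hence commutes, and since both the square $(g,g'')$ — viewed as the outer rectangle of the horizontal paste — and the square $(f,f')$ are pullbacks, \kl{pullback-pullback decomposition} shows that this square too is a pullback, so it is a morphism $u: p\rightarrow m$ of $\mathsf{PB}_h(\bfC,\cM)$. Unwinding horizontal pasting gives $(f,f')\circ u = (g,g'')$ and $codom(u) = v$; uniqueness of $u$ follows from uniqueness of $u_0$ together with the fact that $codom(u) = v$ already pins down the bottom arrow of $u$.

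I do not anticipate a genuine obstacle here: the statement is in essence a repackaging of the pullback cancellation lemma. The one step that genuinely uses a lemma, rather than a definition, is the verification that the comparison square $u$ is itself a pullback, for which \kl{pullback-pullback decomposition} is invoked and the hypothesis that morphisms of $\mathsf{PB}_h(\bfC,\cM)$ are pullback squares is applied to the competing morphism $(g,g'')$. Beyond that, the only thing requiring attention is bookkeeping — in particular keeping straight that composition in $\mathsf{PB}_h(\bfC,\cM)$ is horizontal pasting, so that in the universal property the comparison morphism $u$ is pasted on the \emph{left} of the lifting $(f,f')$, and correctly reading off the compressed pictorial form of Cartesianity in~\eqref{eq:def:grothendieckFibration}.
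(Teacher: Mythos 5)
Your proposal is correct and follows exactly the paper's argument: the paper's proof is the one-line observation that Cartesian liftings are given by taking pullbacks along the $\cM$-morphism and that Cartesianity amounts to \kl{pullback-pullback decomposition}, which is precisely what you spell out in detail (including the correct use of \kl(ssm){stability of $\cM$-morphisms under pullback} to see that the lifted object is again in $\cM$). No gaps.
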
 
\begin{proof}
\kl(gf){Cartesian liftings} are provided by taking pullbacks, while the \kl(gf){Cartesianity} of the liftings amounts to \kl{pullback-pullback decomposition}.
\qed \end{proof}
\end{remark}

\subsection{Fibrational properties of the target functors}\label{sec:targetFunctors}

As we show in this section, the target functors will have rather different fibrational structures: 
\begin{itemize}
\item $T:\mathsf{PB}_v(\bfC,\cM)\rightarrow \bfC\vert_{\cM}$ carries no fibrational structures.
\item $T:\mathsf{FPC}_v(\bfC,\cM)\rightarrow \bfC\vert_{\cM}$ carries a \kl{Grothendieck opfibration} structure.
\item $T:\mathsf{PO}_v(\bfC,\cM)\rightarrow \bfC\vert_{\cM}$ carries a \kl(mof){strong} \kl{multi-opfibration} structure.
\end{itemize}

We begin with the following theorem, that deals with the case of $T:\mathsf{FPC}_v(\bfC,\cM)\rightarrow \bfC\vert_{\cM}$, whose full proof can be found in~\ref{app:ps4}.

\begin{theorem}\label{thm:trgtFPC-FP}
Let $\bfC$ be a category with a \kl{stable system of monics}  $\cM$ and that \kl(ssm){has FPCs along $\cM$-morphisms}. \AP\phantomintro{target functor $T: \mathsf{FPC}_v(\bfC,\cM)\rightarrow \bfC\vert_{\cM}$ is a Grothendieck opfibration}Then the target functor $T: \mathsf{FPC}_v(\bfC,\cM)\rightarrow \bfC\vert_{\cM}$ is a \kl{Grothendieck opfibration}.
\end{theorem}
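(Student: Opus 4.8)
The plan is to specialize the definition of a \kl{Grothendieck opfibration} (Definition~\ref{def:grothendieckOpfibration}, equation~\eqref{eq:def:grothendieckOpfibration}) to the functor $G = T: \mathsf{FPC}_v(\bfC,\cM)\rightarrow \bfC\vert_{\cM}$. Recall that the objects of $\mathsf{FPC}_v(\bfC,\cM)$ are the morphisms of $\bfC$, that $T$ sends an object $a: A\rightarrow B$ to its codomain $B$ and an FPC square to its right-hand $\cM$-leg, that composition is vertical pasting, and that $\mathsf{FPC}_v(\bfC,\cM)$ is a well-defined category by Lemma~\ref{lem:PBPOFPCcats}. Thus the task is: for every $\cM$-morphism $f: B\rightarrowtail B'$ of $\bfC\vert_{\cM}$ and every object $e=(a: A\rightarrow B)$ with $T(e)=B$, produce an \kl(gopf){op-Cartesian lifting} $\varepsilon(f): e\rightarrow e'$ with $T(\varepsilon(f))=f$.

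First I would build the lifting from an FPC. Since $\bfC$ \kl(ssm){has FPCs along $\cM$-morphisms}, the composable pair $A\xrightarrow{a}B\xrightarrow{f}B'$ admits an FPC, yielding $m: A\rightarrow A_1$ and $a_1: A_1\rightarrow B'$ whose square is a pullback; as that square is a pullback and $f\in\cM$, \kl(ssm){stability of $\cM$-morphisms under pullback} forces $m\in\cM$, so the square is an FPC square along $\cM$-morphisms — i.e.\ a morphism $\varepsilon(f): e\rightarrow e':=(a_1: A_1\rightarrow B')$ of $\mathsf{FPC}_v(\bfC,\cM)$ with $T(\varepsilon(f))=f$. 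This is the ``vertical'', opfibrational counterpart of how FPCs supply the \kl(gf){Cartesian liftings} for $dom:\mathsf{FPC}_h(\bfC,\cM)\rightarrow\bfC$ in Corollary~\ref{cor:domPOFPC}(ii).

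It then remains to verify \kl(gopf){op-Cartesianity}. Given a morphism $\alpha: e\rightarrow e''$ of $\mathsf{FPC}_v(\bfC,\cM)$ — an FPC square with top edge $a$, bottom edge $a'': A''\rightarrow B''$, left $\cM$-leg $\mu$ and right $\cM$-leg $T(\alpha)$, i.e.\ the FPC of $A\xrightarrow{a}B\xrightarrow{T(\alpha)}B''$ — together with an $\cM$-morphism $g: B'\rightarrowtail B''$ satisfying $T(\alpha)=g\circ f$, one must exhibit a \emph{unique} $\beta: e'\rightarrow e''$ with $\beta\circ\varepsilon(f)=\alpha$ and $T(\beta)=g$; concretely such a $\beta$ is precisely an FPC square with top edge $a_1$, bottom edge $a''$, right leg $g$, and left leg $p: A_1\rightarrow A''$ with $p\circ m=\mu$. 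For \emph{existence} I would form the FPC of $A_1\xrightarrow{a_1}B'\xrightarrow{g}B''$ and paste it vertically below $\varepsilon(f)$; by \kl{vertical FPC composition} (cf.\ the proof of Lemma~\ref{lem:PBPOFPCcats}) the paste is an FPC of $A\xrightarrow{a}B\xrightarrow{g\circ f}B''$, hence by the \kl{universal property of FPCs} agrees with $\alpha$ up to a unique isomorphism, along which I transport the FPC of $(a_1,g)$ to obtain the required $\beta$ (still an FPC square). For \emph{uniqueness}, since $\beta,\beta_0$ are morphisms of $\mathsf{FPC}_v(\bfC,\cM)$, both are FPC squares of $(a_1,g)$ with common apex $A''$ and bottom leg $a''$, so the \kl{universal property of FPCs} yields a unique isomorphism $h: A''\rightarrow A''$ with $h\circ p_0=p$ and $a''\circ h=a''$; but then $h\circ\mu=h\circ p_0\circ m=p\circ m=\mu$, so $h$ is an endomorphism of the \emph{terminal} pullback-complement $\alpha$ in the category of pullback-completions of $(a,g\circ f)$, whence $h=id_{A''}$ and $\beta=\beta_0$.

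I expect the op-Cartesianity step to be the main obstacle, and inside it the passage from uniqueness-up-to-isomorphism (immediate from the FPC universal property for $(a_1,g)$) to the strict uniqueness the opfibration axiom requires: that is exactly where terminality of the given FPC square $\alpha$ must be invoked to eliminate the automorphism $h$. The remaining work is bookkeeping with the vertical FPC composition/decomposition lemmas (the ``vertical'' analogues of the horizontal ones used for $dom:\mathsf{FPC}_h(\bfC,\cM)\rightarrow\bfC$) plus stability of $\cM$ under pullback; one should check in passing that the hypothesis that $\bfC$ \kl(ssm){has pullbacks along $\cM$-morphisms} is precisely what lets those FPC lemmas be applied here.
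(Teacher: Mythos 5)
Your proof is correct, and its skeleton matches the paper's: the op-Cartesian lifting of $f$ at $a\colon A\to B$ is the FPC of $(a,f)$, whose new leg lands in $\cM$ by pullback-stability, and op-Cartesianity is then a matter of FPC universal properties. The two arguments part ways only in how that verification is organized. The paper works by \emph{decomposition}: it takes the pullback of $a''$ along the $\cM$-morphism $g$, uses the decomposition property of $\cM$-morphisms together with vertical FPC-pullback decomposition to split the given square $\alpha$ into two stacked FPCs, and identifies the intermediate apex with $A_1$ via essential uniqueness of the FPC of $(a,f)$; strict uniqueness of the mediating square is then read off from the uniqueness clause of the FPC universal property applied to the pullback obtained by pasting the lifting with the square $(\mathrm{id}_{B'},g)$, using that $g$ is monic. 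You work by \emph{composition}: you build a fresh FPC of $(a_1,g)$, paste it below the lifting, use vertical FPC-FPC composition to recognize the paste as an FPC of $(a,g\circ f)$, and transport along the unique comparison isomorphism with $\alpha$; for uniqueness you combine essential uniqueness of FPCs of $(a_1,g)$ with the observation that the terminal pullback-complement $\alpha$ admits no nontrivial automorphism. Both routes rest on the same hypotheses and the same cluster of FPC lemmata; yours isolates more explicitly the point where up-to-isomorphism uniqueness is upgraded to strict uniqueness, while the paper's avoids having to transport a freshly built FPC along an isomorphism by carving the required square directly out of $\alpha$.
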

It is interesting to note that the proof strategy for \kl(gopf){op-Cartesianity} (cf.\ \ref{app:ps4}) would fail if we were to work in the category $\mathsf{PB}_v(\bfC,\cM)$ rather than in $\mathsf{FPC}_v(\bfC,\cM)$, since the existence of the isomorphism $A'-\eta\rightarrow P$ and the uniqueness of $A'-\alpha'\rightarrow A''$ relied upon the \kl{universal property of FPCs} (i.e., both of the FPC in the front and in the back of the diagram). Indeed, if we were to consider the analogue of the diagrams in~\eqref{eq:proof:trgtFPCgopfClaimExplicit} in $\mathsf{PB}_v(\bfC,\cM)$, i.e., where the front vertical square would be merely a pullback, taking a pullback as indicated would only yield that the squares under and over $P-p'\rightarrow B'$ are pullbacks. By the \kl{universal property of FPCs} (of the back vertical FPC square, i.e., the one of the lifting), we could only conclude that there exists a unique mediating arrow $P-\eta'\rightarrow A'$, but this arrow will in general not be an isomorphism, hence we can indeed not prove \kl(gopf){op-Cartesianity} of the liftings in $\mathsf{PB}_v(\bfC,\cM)$.

Let us now turn our attention to the remaining variant of the target functor, i.e., $T:\mathsf{PO}_v(\bfC,\cM)\rightarrow \bfC\vert_{\cM}$. This yields a first example of a \kl{multi-opfibration}. In order to formulate this result, we require the following multi-universal notion:

\begin{definition}\label{def:mipc}
Let $\bfC$ be a category with a \kl{stable system of monics} $\cM$. %
For all composable sequences of morphisms of the form $A-f\rightarrow B\rtail\beta\rightarrow B'$ (i.e., with $\beta\in \cM$), we define the following class:
\begin{equation}\label{eq:def:mIPC}
\mIPC{f}{\beta} := \{(A\rtail\alpha\rightarrow A', A'-f'\rightarrow B')\in \mor{\bfC} \times \mor{\bfC} \mid \alpha\in \cM \land (f',\beta) =\pO{\alpha,f}\}\,.
\end{equation}
More explicitly, $\mIPC{f}{\beta}$ consists of all composable sequences of morphisms $A\rtail\alpha\rightarrow A'-f'\rightarrow B'$ such that there exists a pushout square in $\bfC$ whose boundary is given by $(\alpha,f')$ and $(f,\beta)$. We refer to $\mIPC{f}{\beta}$ as the ($\cM$-) \AP\intro{multi-initial pushout complement (mIPC)} of $(f,\beta)$ if the class satisfies the following \AP\intro(mIPC){universal property}:
\begin{equation}\label{def:mIPC:UP}
\begin{aligned}\forall \ti{defMIPCupA}: &\exists (\alpha,f')\in \mIPC{f}{\beta}:  \exists!\, \alpha'\in \cM\;:
\ti{defMIPCupB}:\\
&\quad \forall \ti{defMIPCupC}: \exists!\, A'-\varphi\rightarrow C'\in \iso{\bfC}: \gamma=\varphi\circ \alpha \land \alpha'= \gamma'\circ \varphi
\end{aligned}
\end{equation}
We say that $\bfC$ has \AP\intro{has multi-initial pushout complements (mIPCs) along $\cM$-morphisms} if $\bfC$ has an mIPC for every composable sequence of morphisms of the form $A-f\rightarrow B\rtail\beta\rightarrow B'$.
\end{definition}

\begin{remark} It is worthwhile pointing out that just as for ordinary (``non-multi-'') pushout complements, a \kl{multi-IPC} for a given composable sequence of morphisms may be an \emph{empty} set. For example, in $\mathbf{Graph}$, the category of \kl{directed multigraphs} , the \kl{multi-IPC} of the sequence $\varnothing \rightarrowtail \bullet \rightarrow \bullet\!\!\!\! \to\!\!\!\!-\!\circ$ is empty, a well-known phenomenon in the graph rewriting literature, interpreted as the impossibility to apply a vertex-deletion operation in DPO-semantics to a vertex with incident edges (i.e., since deletion of a vertex with incident edges would leave \emph{``dangling'' edges}).
\end{remark}

The following lemma establishes sufficient conditions to guarantee that $\bfC$ has mIPCs along $\cM$-morphisms; the proof is given in~\ref{app:ps4}.

\begin{lemma}\label{prop:mIPCexistence}
Let $\bfC$ be a category with a \kl{stable system of monics} $\cM$. If \kl(ssm){pushouts along $\cM$-morphisms are stable under $\cM$-pullbacks}, %
and if \AP\intro(lemMIPCexistence){pushouts along $\cM$-morphisms are pullbacks}, %
then $\bfC$ \kl{has multi-initial pushout complements (mIPCs) along $\cM$-morphisms}.
\end{lemma}

After this somewhat lengthy excursion, a direct comparison of the notion of \kl(mof){strong} \kl{multi-opfibration} (Definition~\ref{def:multi-opfibration}) and of \kl{multi-initial pushout complement} yields the following important result:

\begin{theorem}\label{thm:TPOvMOF}
Let $\bfC$ be a category with a \kl{stable system of monics} $\cM$. If \kl(ssm){pushouts along $\cM$-morphisms are stable under $\cM$-pullbacks} in $\bfC$, %
and if \AP\intro(thmTPOvMOF){pushouts along $\cM$-morphisms are pullbacks}, %
\AP\phantomintro{target functor $T:\mathsf{PO}_v(\bfC,\cM)\rightarrow \bfC\vert_{\cM}$ is a multi-opfibration}then the target functor $T:\mathsf{PO}_v(\bfC,\cM)\rightarrow \bfC\vert_{\cM}$ is a \kl(mof){strong} \kl{multi-opfibration}.
\end{theorem}
\begin{proof}
The \kl(mof){multi-op-Cartesian liftings} are provided by \kl{multi-initial pushout complements (mIPCs)}, whose existence and uniqueness up to isomorphism are guaranteed under the stated assumptions according to Lemma~\ref{prop:mIPCexistence}.
\qed \end{proof}

\subsection{Fibrational properties of the source functors}\label{sec:sourceFunctors}

Finally, let us investigate the fibrational structures of the source functors. This gives rise to the following results:
\begin{itemize}
\item $S:\mathsf{PB}_v(\bfC,\cM)\rightarrow \bfC\vert_{\cM}$ carries no fibrational structures.
\item $S:\mathsf{PO}_v(\bfC,\cM)\rightarrow \bfC\vert_{\cM}$ carries a \kl{Grothendieck opfibration} structure.
\item $S:\mathsf{FPC}_v(\bfC,\cM)\rightarrow \bfC\vert_{\cM}$ carries a \kl{residual multi-opfibration} structure.
\end{itemize}

\begin{theorem}\label{thm:sourcePOgopf}
Let $\bfC$ be a category with a \kl{stable system of monics} $\cM$, that \kl(ssm){has pushouts along $\cM$-morphisms}, and such that \kl(ssm){$\cM$-morphisms are stable under pushout}. \AP\phantomintro{source functor $S:\mathsf{PO}_v(\bfC,\cM)\rightarrow \bfC\vert_{\cM}$ is a Grothendieck opfibration}Then the source functor $S:\mathsf{PO}_v(\bfC,\cM)\rightarrow \bfC\vert_{\cM}$ is a \kl{Grothendieck opfibration}, with the \kl(gopf){op-Cartesian liftings} provided by pushouts.
\end{theorem}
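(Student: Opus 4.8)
The plan is to specialize the defining condition of a \kl{Grothendieck opfibration} (equation~\eqref{eq:def:grothendieckOpfibration}) to the functor $S:\mathsf{PO}_v(\bfC,\cM)\rightarrow\bfC\vert_{\cM}$, in direct analogy with the proof of Corollary~\ref{cor:domPOFPC}(i). Recall that an object of $\mathsf{PO}_v(\bfC,\cM)$ is a morphism $a:A_0\rightarrow A_1$ of $\bfC$, with $S(a)=A_0$, and that a morphism of $\bfC\vert_{\cM}$ with source $A_0$ is an $\cM$-morphism $f:A_0\rightarrow B_0$. Given such data, I would take as the candidate \kl(gopf){op-Cartesian lifting} $\varepsilon(f)$ of $f$ at $a$ the pushout square of the span $A_1\xleftarrow{a}A_0\xrightarrow{f}B_0$: this pushout exists because $\bfC$ \kl(ssm){has pushouts along $\cM$-morphisms} and $f\in\cM$; its induced coprojection $m:A_1\rightarrow B_1$ lies in $\cM$ because \kl(ssm){$\cM$-morphisms are stable under pushout}; hence $\varepsilon(f)$ is a genuine morphism of $\mathsf{PO}_v(\bfC,\cM)$ from $a$ to its codomain object $b:B_0\rightarrow B_1$, and $S(\varepsilon(f))=f$.

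For \kl(gopf){op-Cartesianity} I would then take an arbitrary morphism $\alpha:a\rightarrow c$ of $\mathsf{PO}_v(\bfC,\cM)$ — that is, a pushout square with top $a$, bottom $c:C_0\rightarrow C_1$, and vertical legs $p_0,p_1\in\cM$ — together with an $\cM$-morphism $g:B_0\rightarrow C_0$ satisfying $S(\alpha)=p_0=g\circ f$, and construct the unique $\beta:b\rightarrow c$ with $S(\beta)=g$ and $\beta\circ\varepsilon(f)=\alpha$. The commuting square $\alpha$ exhibits a cocone $A_1\xrightarrow{p_1}C_1\xleftarrow{c\circ g}B_0$ over the span $A_1\xleftarrow{a}A_0\xrightarrow{f}B_0$ (the compatibility $p_1\circ a=c\circ p_0=(c\circ g)\circ f$ being precisely the commutativity of $\alpha$), so the \kl{universal property of pushouts} applied to $B_1$ yields a unique $q:B_1\rightarrow C_1$ with $q\circ b=c\circ g$ and $q\circ m=p_1$. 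Taking $\beta$ to be the square with top $b$, bottom $c$, left leg $g$ and right leg $q$: it commutes by the first equation; it is a pushout square by \kl{pushout-pushout decomposition}, since its vertical composite with $\varepsilon(f)$ on top is the pushout $\alpha$ while $\varepsilon(f)$ is itself a pushout; and its right leg $q$ lies in $\cM$, again by \kl(ssm){$\cM$-morphisms are stable under pushout}, so $\beta$ is a morphism of $\mathsf{PO}_v(\bfC,\cM)$ with $S(\beta)=g$. For uniqueness, any competitor $\beta'$ necessarily has the same top $b$, bottom $c$ and left leg $g$, and the two requirements that $\beta'$ commute and that $\beta'\circ\varepsilon(f)=\alpha$ pin its right leg down to exactly the equations $q\circ b=c\circ g$ and $q\circ m=p_1$ characterising $q$ via the pushout universal property just invoked.

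I do not anticipate a genuine obstacle: the whole argument runs on the \kl{universal property of pushouts} (existence and uniqueness of the mediating morphism $q$) and on \kl{pushout-pushout decomposition} (the induced lower square is again a pushout), with \kl(ssm){stability of $\cM$-morphisms under pushout} invoked only to keep the constructed squares inside $\mathsf{PO}_v(\bfC,\cM)$. The one point deserving care is the bookkeeping: unwinding what ``$\beta\circ\varepsilon(f)=\alpha$'' means in terms of vertical pasting, so as to recognise that the two equations it imposes on the right leg of $\beta$ are exactly those supplied by the pushout universal property, which is what makes the diagrammatic composition in $\mathsf{PO}_v(\bfC,\cM)$ line up on the nose with that universal property.
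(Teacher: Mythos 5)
Your proposal is correct and follows essentially the same route as the paper's proof: the op-Cartesian lifting is the pushout along the $\cM$-morphism, op-Cartesianity comes from the universal property of pushouts (existence and uniqueness of the mediating morphism) together with pushout-pushout decomposition, and stability of $\cM$-morphisms under pushout is used exactly where you use it — to keep both the lifting and the induced lower square inside $\mathsf{PO}_v(\bfC,\cM)$. Your version merely spells out the bookkeeping (identifying the two equations imposed on the right leg of $\beta$ with those of the pushout's universal property) that the paper leaves implicit.
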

\begin{proof}
It suffices to instantiate the definition of \kl{Grothendieck opfibration} to the case at hand:
\begin{equation}\label{eq:proofSpoGopfA}
\begin{aligned}
&\forall \ti{proofSpoGopfA}\;:\;
\exists \ti{proofSpoGopfB}\;: \forall \ti{proofSpoGopfC}\;:\; \ti{proofSpoGopfD}
\end{aligned}
\end{equation}
Here, the existence of \kl(gopf){op-Cartesian liftings} is provided by the assumption that $\bfC$ \kl(ssm){has pushouts along $\cM$-morphisms}, while the \kl(gopf){op-Cartesianity} of the liftings follows from the \kl{universal property of pushouts} (yielding the existence of a unique morphism $B'-\beta'\rightarrow B''$), \kl{pushout-pushout decomposition} (which ensures the bottom square in the rightmost diagram in~\eqref{eq:proofSpoGopfA} is a pushout), and finally from the assumption that \kl(ssm){$\cM$-morphisms are stable under pushout} (ensuring that $\beta'\in \cM$, so that the pushout square over it indeed qualifies as a morphism in $\mathsf{PO}_v(\bfC,\cM)$).
\qed \end{proof}

It is worthwhile considering whether the above proof strategy for the \kl{Grothendieck opfibration} structure of $S:\mathsf{PO}_v(\bfC,\cM)\rightarrow \bfC\vert_{\cM}$ could be adapted to the case of the source functor $S:\mathsf{PB}_v(\bfC,\cM)\rightarrow \bfC\vert_{\cM}$. However, even under the additional assumption that pushouts along $\cM$-morphisms are pullbacks, we could \emph{not} prove that $\beta'\in \cM$ for the analogue of the last diagram in~\eqref{eq:proofSpoGopfA} where the outer square is merely a pullback (this was true for~\eqref{eq:proofSpoGopfA}, because here we could rely upon the assumed \kl(ssm){stability of $\cM$-morphisms under pushout}). Nevertheless, it is interesting to observe that $S:\mathsf{PB}_v(\bfC\vert_{\cM},\cM)\rightarrow \bfC\vert_{\cM}$ (i.e., restricting to pullback squares where all morphisms are in $\cM$) \emph{does} have the structure of a \kl{Grothendieck opfibration}, with the \kl(gopf){op-Cartesian liftings} given by pushouts, and \kl(gopf){op-Cartesianity} ensured if the \kl{pullback-pushout decomposition} lemma holds (which requires certain additional assumptions on $\bfC$).

Finally, let us consider the case of the source functor $S:\mathsf{FPC}_v(\bfC,\cM)\rightarrow \bfC\vert_{\cM}$ from the category of FPCs along $\cM$-morphisms with vertical pasting. This case requires the introduction of a novel universal construction, that of $\cM$-final pullback complement pushout augmentation. Before giving the definition, we first quote some prerequisite standard concepts from category theory pertaining to factorization structures on morphisms, which will be used in this paper for instance in the form of \kl(EMS){epi-$\cM$-factorizations}, but also to demonstrate a certain factorization structure on FPC squares (seen as morphisms in $\mathsf{FPC}_v(\bfC,\cM)$).

\begin{definition}[\cite{adamek2006}, Def. 14.1]\label{def:EMstruct}
For a category $\bfC$, let $E$ and $M$ be classes of morphisms. By convention, in commutative diagrams, let morphisms in $E$ be depicted as $\twoheadrightarrow$, and morphisms in $M$ by $\rightarrowtail$. Then $(E,M)$ is called a \AP\intro{factorization structure for morphisms} in $\bfC$, and $\bfC$ is called \AP\intro{$(E,M)$-structured} iff
\begin{enumerate}[label=(\roman*)]
\item both $E$ and $M$ are \AP\intro(EMS){closed under composition with isomorphisms},
\item $\bfC$ \AP\intro(EMS){has $(E,M)$-factorizations of morphisms} (i.e., for every morphism $f$ in $\bfC$, there exist $m\in M$ and $e\in E$ such that $f=m\circ e$),
\item $\bfC$ \AP\intro(EMS){has the unique $(E,M)$-diagonalization property}:
\begin{equation}\label{eq:def:EMdiagProp}
\forall\ti{defEMdiagPropA}\;:\;\ti{defEMdiagPropB}
\end{equation}
In words: for all commutative squares as in~\eqref{eq:def:EMdiagProp} above, where $e\in E$ and $m\in M$, there exists a unique morphism $d$ (referred to as the \AP\intro(EMS){diagonal}) such that $f=d\circ e$ and $g=m\circ d$.
\end{enumerate}
\end{definition}

\begin{proposition}[\cite{adamek2006}, Prop. 14.4 and 14.6]
Let $\bfC$ be a category that is \kl{$(E,M)$-structured}. Then the following properties hold:
\begin{enumerate}[label=(\roman*)]
\item \AP\phantomintro(EMS){morphisms that are both in $E$ and in $M$ are isomorphisms}$E\cap M =\iso{\bfC}$.
\item the classes $E$ and $M$ are both \AP\intro(EMS){closed under composition}.
\item \AP\intro(EMS){$(E,M)$-factorizations are essentially unique}:
\begin{enumerate}[label=(\alph*)]
\item If $e_1\circ m_1 =e_2\circ m_2$ (for $m_1,m_2\in M$ and $e_1,e_2\in E$), then there exists an isomorphism $h$ such that $e_2=h\circ e_1$ and $m_1=m_2\circ h$.
\item If $f=e\circ m$ (for $e\in E$ and $m\in M$), and if  $e'=h\circ e$ and $m'=m\circ h$ for an isomorphism $h$, then $(e',m')$ is also an $(E,M)$-factorization of $f$.
\end{enumerate}
\end{enumerate}
\end{proposition}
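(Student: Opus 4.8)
The plan is to derive all three items from the \emph{uniqueness} half of the diagonal fill-in property of an $(E,M)$-structure (Definition~\ref{def:EMstruct}(iii)), proceeding in the order (iii) then (i) then (ii), as in~\cite{adamek2006}; the axiom that $E$ and $M$ are closed under composition with isomorphisms will be invoked only to re-insert isomorphisms at the end of each sub-argument. I would use the convention $f = m\circ e$ of Definition~\ref{def:EMstruct} throughout (the statements of (iii) are unaffected by this choice, up to relabeling the intermediate objects).

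I would prove essential uniqueness (iii)(a) first, as it is the engine for everything else. Given $f = m_1\circ e_1 = m_2\circ e_2$ with $e_i\in E$, $m_i\in M$ and intermediate objects $B_1,B_2$, feeding the commuting square with left edge $e_1\in E$, right edge $m_2\in M$, top edge $e_2$ and bottom edge $m_1$ into the diagonalization property yields a $d\colon B_1\to B_2$ with $d\circ e_1 = e_2$ and $m_2\circ d = m_1$, and the symmetric square yields a $d'\colon B_2\to B_1$. Then both $d'\circ d$ and $id_{B_1}$ are diagonal fillers for the trivially commuting square with left and top edge $e_1$ and right and bottom edge $m_1$, so \emph{uniqueness} of the filler forces $d'\circ d = id_{B_1}$, and symmetrically $d\circ d' = id_{B_2}$; hence $h := d$ is the required isomorphism. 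Item (iii)(b) is then immediate from associativity together with closure of $E$ and $M$ under composition with isomorphisms.

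For (i): to prove $\iso{\bfC}\subseteq E\cap M$, write an isomorphism $\phi$ as $\phi = m\circ e$; the trivially commuting square with left and top edge $e$ and right and bottom edge $m$ has both $id$ and $e\circ\phi^{-1}\circ m$ as diagonal fillers, so uniqueness gives $e\circ(\phi^{-1}\circ m) = id$, which together with the evident $(\phi^{-1}\circ m)\circ e = id$ makes $e$ an isomorphism, hence also $m = \phi\circ e^{-1}$, whence $\phi = m\circ e\in E\cap M$ by closure under composition with isomorphisms; in particular every identity lies in $E\cap M$. For the reverse inclusion, if $f\in E\cap M$ then $f = id\circ f$ and $f = f\circ id$ are two $(E,M)$-factorizations of $f$, so (iii)(a) supplies an isomorphism $h$ with $id = h\circ f$ and $id = f\circ h$; thus $f$ is an isomorphism.

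Finally, for (ii) I would show $E$ is closed under composition; closure of $M$ then follows by the dual argument in $\bfC^{op}$, whose induced structure is $(M,E)$. Given $e_1,e_2\in E$, factor $e_2\circ e_1 = m\circ e$ with $e\in E$ and $m\in M$: one diagonalization (for the square with left edge $e_1$ and right edge $m$) produces a $d$ with $m\circ d = e_2$; a second (for the square with left edge $e_2$ and right edge $m$) produces a section $s$ of $m$; and uniqueness of the diagonal filler of the square with left and top edge $e$ and right and bottom edge $m$ forces $s\circ m = id$. Hence $m$ is an isomorphism and $e_2\circ e_1 = m\circ e\in E$. I expect the genuine difficulty here to be conceptual rather than computational: one must notice that \emph{every} nontrivial step rests on \emph{uniqueness} of the diagonal filler (not merely its existence), and that the development must be ordered so that essential uniqueness (iii) is already in hand before attacking either half of (i) or the closure statement (ii).
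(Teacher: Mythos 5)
Your proof is correct and follows the standard argument from Ad\'amek--Herrlich--Strecker (unique diagonal fill-ins giving essential uniqueness, then both halves of (i), then closure of $E$ with $M$ handled by duality); the paper itself supplies no proof of this proposition, citing it directly from \cite{adamek2006}, so there is no in-paper argument to diverge from. (One minor remark: item (iii)(b) as stated appears to contain a typo --- for $(e',m')$ to factor $f$ one needs $m'=m\circ h^{-1}$ rather than $m\circ h$ --- and your ``immediate from associativity and closure under composition with isomorphisms'' step implicitly uses the corrected form.)
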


\begin{definition}\label{def:FPA}
Let $\bfC$ be a category with a \kl{stable system of monics} $\cM$, and such that\footnote{Here and in the following, it is important to emphasize that we do not require any particular properties of the class of morphisms $\cE$ other than that it is a class such that $\bfC$ is \kl{($\cE$, $\cM$)-structured}. While in many applications of interest $\cE$ will coincide with the class of epimorphisms of $\bfC$ or a subclass thereof, in some cases $\cE$ will not even be a class of epimorphisms~\cite{GABRIEL_2014}.} $\bfC$ is \kl{($\cE$,$\cM$)-structured}. Given a pushout square along an $\cM$-morphism as in back of the diagram below,
\begin{equation}\label{eq:defFPA}
\ti{defFPA}
\end{equation}
we define an \AP\intro{$\cM$-final pullback complement pushout augmentation (FPA)} as a triple of morphisms $(\gamma, g, e)$ such that
\begin{enumerate}[label=(\roman*)]
\item $\gamma\in \cM$, $e\in \cE$, $e\circ \beta \in \cM$, and $g\circ \gamma = e\circ f'$,
\item $(\gamma\circ \alpha, g)$ is an FPC of $(f, e\circ \beta)$, and
\item $(id_B, \beta)$ is a pullback of $(e\circ \beta, e)$.
\end{enumerate}
We denote the class of all FPAs of a given pushout square $(f,\beta,\alpha,f')$ by $\FPA{f,\beta,\alpha,f'}$.
\end{definition}

\begin{remark}
It appears worthwhile to note that for a diagram as in~\eqref{eq:defFPA}, since the left vertical square is a pullback (given that $\gamma$ is in $\cM$ and thus a monomorphism), and since the front vertical square is an FPC and thus a pullback, by \kl{pullback-pullback composition} the composite of the right and back vertical squares is a pullback; therefore, we find that the morphism $\gamma$ must coincide with the unique morphism from $A'$ to $C$ that exists by the \kl{universal property of FPCs}. The subtlety in the definition of \kl{FPAs} then lies in the nature of $\gamma$ as being in $\cM$: supposing for a moment that $\gamma$ is a generic morphism, if the pushout square in the back is known to be a pullback (which will be the case in all categories of interest), then the composite of the right and back vertical is again a pullback, thus $\gamma$ again has to coincide with the unique morphism that exists by the universal property of FPCs; however, we would not be able to conclude from this set of assumptions that $\gamma$ must be in $\cM$, hence this is indeed found to be a non-trivial part of our set of assumptions.
\end{remark}

Throughout this paper, we will exclusively be interested in situations where the definition above may be slightly simplified: 
\begin{lemma}\label{lem:FPAaux}
Let $\bfC$ be a category with a \kl{stable system of monics} $\cM$, and such that $\bfC$ is \kl{($\cE$,$\cM$)-structured}. If \kl(ssm){pushouts along $\cM$-morphisms are stable under $\cM$-pullbacks} in $\bfC$, condition (iii) in Definition~\ref{def:FPA} is automatically satisfied, i.e., $(id_B, \beta)$ is a pullback of $(e\circ \beta, e)$ in the notations of~\eqref{eq:defFPA}.  
\end{lemma}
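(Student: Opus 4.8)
The plan is to produce the genuine pullback of the cospan $(e\circ\beta,e)$ explicitly, record the canonical comparison morphism of $B$ into it, and then reduce condition (ii) to the assertion that this comparison morphism is invertible; invertibility is exactly where the stability hypothesis — together, as it turns out, with conditions (i), (iii) and (iv) of Definition~\ref{def:FPA} — will be used. Concretely, I would first form the pullback $P$ of $B\xrightarrow{e\circ\beta}\bar B\xleftarrow{e}B'$, with projections $p\colon P\to B$ and $p'\colon P\to B'$; this pullback exists because $e\circ\beta\in\cM$ and the hypothesis that pushouts along $\cM$-morphisms are stable under $\cM$-pullbacks presupposes that $\bfC$ has pullbacks along $\cM$-morphisms. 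Since the outer square built from $id_B$ and $\beta$ commutes over this cospan, the universal property of $P$ yields a unique $u\colon B\to P$ with $p\circ u=id_B$ and $p'\circ u=\beta$, and it suffices to prove that $u$ is an isomorphism. As an immediate reduction, $e\circ\beta\in\cM$ and stability of $\cM$-morphisms under pullback give $p'\in\cM$; then $\beta=p'\circ u\in\cM$ together with the decomposition property of $\cM$-morphisms forces $u\in\cM$, so $u$ is at least a split monomorphism (with retraction $p$) lying in $\cM$.

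To upgrade $u\in\cM$ to an isomorphism I would assemble the data into a commutative cube over the back pushout square $(f,\beta,\alpha,f')$, which is a pushout along the $\cM$-morphism $\alpha$ (note that $\beta\in\cM$, being the pushout of $\alpha$). The front face of this cube is the FPC square $(\gamma\circ\alpha,g)$ of $(f,e\circ\beta)$ supplied by condition (iii) — which, being an FPC, is in particular a pullback — the connecting edges are $id_A$, $id_B$, $\gamma$ and $e$, and the remaining faces are the trivial square on $f$, the square on $(\alpha,\gamma\circ\alpha,\gamma)$ (a pullback since $\gamma\in\cM$ by condition (i)), the square on $(f',\gamma,g,e)$ (a pullback by the way $\gamma$ is chosen in condition (iv), as the mediating morphism out of the relevant composite pullback), and precisely the square of condition (ii). Pulling the back pushout square back along $p'\in\cM$ and invoking the stability hypothesis produces a pushout on the opposite face; matching this, via the essential uniqueness of pushouts (and the cube lemma for pullbacks applied to the faces already known to be pullbacks), against the original pushout identifies the induced comparison isomorphism with (a retract/section pair containing) $u$, whence $p$, and so $u$, is invertible, which is condition (ii).

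The main obstacle I anticipate is the bookkeeping of this last step: building the cube so that all of its auxiliary faces are manifestly pullbacks along $\cM$-morphisms, which is where conditions (iii) and (iv) of the FPA definition genuinely enter — indeed a one-line example in $\Set$ (collapsing a point outside the image of $\beta$) shows that condition (i) alone does \emph{not} imply (ii), so the FPC and its universality must do work — and then matching the pushout delivered by the stability hypothesis against the original one precisely enough that essential uniqueness pins down $u$. By contrast, everything surrounding this — the existence of $P$, the identities $p\circ u=id_B$ and $p'\circ u=\beta$, and the membership $u\in\cM$ — is routine diagram chasing using only the stable-system-of-monics axioms.
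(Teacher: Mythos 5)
Your overall strategy---form the genuine pullback $P$ of $(e\circ\beta,e)$, note that the comparison $u\colon B\to P$ is a split mono in $\cM$, and use stability to force $u$ to be invertible---is in effect an unwinding of the \emph{pullback-pushout decomposition} lemma (item 4 of the double-square lemmata in the appendix), which is precisely what the paper's proof invokes as a black box: the horizontal pasting of the pushout square with the condition-(ii) square coincides with the horizontal pasting of the square $(id_A,\alpha;\gamma\circ\alpha,\gamma)$ with the FPC square; both of the latter are pullbacks (the first because $\gamma$ is monic, the second by definition of FPC), so the composite is a pullback, and since the pushout is a stable pushout along an $\cM$-morphism with $e\circ\beta\in\cM$, the decomposition lemma immediately yields that the condition-(ii) square is a pullback. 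Re-deriving that lemma is legitimate, but one of your intermediate claims is unjustified: the square $(f',\gamma;g,e)$ is \emph{not} a pullback ``by the way $\gamma$ is chosen in condition (iv)''. The FPC universal property only makes the comparison square between the two pullback complements, namely $(id_A,\alpha;\gamma\circ\alpha,\gamma)$, a pullback (automatic anyway from $\gamma\in\cM$); the face $(f',\gamma;g,e)$ does turn out to be a pullback, but only as a consequence of the stability hypothesis \emph{via the very decomposition statement you are re-proving}, so using it as an input is circular. Fortunately it is also unnecessary: the only fact the argument needs is that the composite of the pushout with the condition-(ii) square is a pullback, and that follows from the two faces you have correctly identified as pullbacks.

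The genuine gap is the final step. Pulling the pushout square back along $p'\colon P\rightarrowtail B'$ and invoking stability produces a pushout whose span is $A'\times_{B'}P\leftarrow A\times_{A'}(A'\times_{B'}P)\rightarrow B$ (the fibre of $\beta$ over $p'$ is indeed $B$ itself, via $u$, since $p'$ is monic and $p'\circ u=\beta$), \emph{not} the original span $A'\leftarrow A\rightarrow B$; the vertical comparison over $A'$ is $\alpha$-like rather than invertible, so ``matching against the original pushout via essential uniqueness of pushouts'' does not apply---the two squares are not pushouts of the same span. What actually closes the argument is: (a) the composite square is a pullback, hence $A'\times_{B'}P\cong A'\times_{\bar B}B\cong A$ compatibly with the projections---this is the \emph{only} place conditions (iii)/(iv) enter; (b) since $\alpha$ (equivalently $p'$) is monic, the remaining fibre $A\times_{A'}(A'\times_{B'}P)$ is also $A$; hence (c) the top span contains an isomorphism in the $A'$-direction, so its pushout is $B$ itself and the induced comparison $u$ is an isomorphism, which is condition (ii). Steps (a)--(c) are exactly what your sketch elides, and without (a) the pushout delivered by stability cannot be identified with anything. (A minor further point: the existence of $P$ should be drawn from the ambient assumption that $\bfC$ has pullbacks along $\cM$-morphisms, not ``presupposed'' by the stability hypothesis, which is a conditional statement about cubes that happen to exist.)
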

\begin{proof}
The proof follows by invoking \kl{pullback-pushout decomposition} (which holds due to the assumption that \kl(ssm){pushouts along $\cM$-morphisms are stable under $\cM$-pullbacks} in $\bfC$) to the commutative diagram in~\eqref{eq:defFPA}: since the front and left vertical squares compose into a pullback, the vertical morphisms are in $\cM$, and since the back vertical square is a \kl(ssm){pushout along an $\cM$-morphism}, the right vertical square is a pullback.
\qed \end{proof}

The concept of \kl{FPC-pushout-augmentations (FPAs)} introduced above gives rise to an interesting factorization system on FPCs as the following theorem explains; its proof is in~\ref{app:ps4}.

\begin{theorem}\label{thm:FPCfact}
Let $\bfC$ be a category with a \kl{stable system of monics} $\cM$, that is \kl{($\cE$,$\cM$)-structured}, that \kl(ssm){has pushouts and FPCs along $\cM$-morphisms}, such that \kl(ssm){$\cM$-morphisms are stable under pushout}, and such that \kl(ssm){pushouts along $\cM$-morphisms are stable under $\cM$-pullbacks}.  
\AP\phantomintro{$\mathsf{FPC}_v(\bfC,\cM)$ is (auto-augmented,inert)-structured}Then the category $\mathsf{FPC}_v(\bfC, \cM)$ is \kl{(auto-augmented, inert)-structured}. Here, the class of \AP\intro(FPCfact){auto-augmented FPCs}  is defined as
\begin{equation}\label{eq:defAAfpc}
\ti{defAAfpA} \in \mor{\mathsf{FPC}_v(\bfC,\cM)}\vert_{auto-augmented}
\; :\Leftrightarrow\; \exists\;
\ti{defAAfpcB}
\end{equation}
In words: an FPC square along an $\cM$-morphism (seen as a morphism in $\mathsf{FPC}_v(\bfC,\cM)$ is auto-augmented iff when taking a pushout of the span within the FPC, the mediating morphism into the cospan object of the FPC is a morphism in $\cE$.\footnote{Note that since we admit arbitrary morphisms of $\bfC$ for the horizontal morphisms, the mediating morphism would in general be a morphism with a non-trivial \kl(EMS){$\cE$-$\cM$-factorization}, hence for this morphism to be an $\ cE$-morphism is indeed a non-trivial requirement.}
Moreover, the class of \AP\intro(FPCfact){inert FPCs} is defined as 
\begin{equation}\label{eq:defIfpc}
\mor{\mathsf{FPC}_v(\bfC,\cM)}\vert_{inert} :=
\left.\left\lbrace\ti{defIfpc}\;\right\vert \alpha \in \cE\cap \cM = \iso{\bfC}\right\rbrace
\end{equation}

\end{theorem}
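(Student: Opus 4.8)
The plan is to verify, for $E$ the class of auto-augmented FPCs and $M$ the class of inert FPCs, the three defining clauses of an $(E,M)$-structured category from Definition~\ref{def:EMstruct}: closure of $E$ and $M$ under composition with isomorphisms, existence of $(E,M)$-factorizations, and the unique $(E,M)$-diagonalization property. Closure under composition with isomorphisms is routine bookkeeping: an isomorphism of $\mathsf{FPC}_v(\bfC,\cM)$ is an FPC square both of whose vertical legs are isomorphisms of $\bfC$; composing with such a square alters the pushout of an FPC square's span, and the morphism it induces into the square's codomain corner, only by a canonical isomorphism, so membership in $\cE$ is preserved because $\cE$ is closed under composition with isomorphisms, and ``the source leg is an isomorphism'' is manifestly preserved as well.

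For the existence of factorizations, let $S$ be a morphism of $\mathsf{FPC}_v(\bfC,\cM)$, i.e.\ an FPC square with top $a\colon A\to B$, monic right leg $m\colon B\to B'$, left leg $n\colon A\to C$ (monic, since FPCs are pullbacks and $\cM$-morphisms are stable under pullback) and bottom $c\colon C\to B'$, so that $(n,c)$ is the FPC of $(a,m)$. First I would form the pushout $(p\colon C\to P,\ q\colon B\to P)$ of the span $C\xleftarrow{n}A\xrightarrow{a}B$ -- which exists since $\bfC$ has pushouts along $\cM$-morphisms -- and the unique $e\colon P\to B'$ with $e\circ p=c$ and $e\circ q=m$. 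Taking a factorization $e=\hat m\circ\bar e$ with $\bar e\in\cE$, $\hat m\in\cM$ through an object $\hat B$, I set $\hat c:=\bar e\circ p$ and $\hat m_B:=\bar e\circ q$; since $\hat m\circ\hat m_B=e\circ q=m\in\cM$ and $\hat m\in\cM$, the decomposition property of $\cM$-morphisms gives $\hat m_B\in\cM$. The proposed factorization is $S=S_M\circ S_E$, where $S_E$ is the square with top $a$, left leg $n$, right leg $\hat m_B$, bottom $\hat c$, and $S_M$ is the square with top $\hat c$, left leg $id_C$, right leg $\hat m$, bottom $c$; both commute and compose vertically back to $S$. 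One checks at once that $S_E$ and $S_M$ are pullback squares (using that $\hat m$ is monic and $S$ a pullback), that $S_E$ is auto-augmented (its span's pushout is again $P$ and the induced mediating morphism into $\hat B$ is exactly $\bar e\in\cE$), and that $S_M$ is inert since $id_C\in\cE\cap\cM=\iso{\bfC}$.

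The substantive work is to upgrade ``pullback square'' to ``final pullback complement'' for $S_E$ and $S_M$. For $S_M$: given any pullback complement $(k\colon C\to D,\ d\colon D\to B')$ of $(\hat c,\hat m)$, stacking $S_E$ above it yields, by pullback--pullback composition, a pullback complement $(k\circ n,\ d)$ of $(a,m)$; the universal property of the FPC $(n,c)$ then produces a unique $\phi\colon D\to C$ with $\phi\circ k\circ n=n$ and $c\circ\phi=d$, and the morphism $\phi\circ k$ satisfies precisely the equations defining the (unique) mediator of the trivial pullback complement $(n,c)$ of itself, whence $\phi\circ k=id_C$; this exhibits $S_M$ as the FPC of $(\hat c,\hat m)$. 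For $S_E$: being a pullback complement of $(a,\hat m_B)$ it compares via some $\lambda\colon C\to C_0$ with the actual FPC $(n_0,c_0)$ of $(a,\hat m_B)$; by vertical FPC composition (Lemma~\ref{lem:PBPOFPCcats}) the FPC of $(a,\hat m_B)$ followed by the FPC of $(c_0,\hat m)$ is the FPC of $(a,m)$, giving a comparison isomorphism $\mu$ with $(n,c)$, and a short chase (again using that the mediator of a trivial pullback complement -- here of $(a,\hat m_B)$, respectively of $(a,m)$ -- is the identity) shows $\lambda$ and $\mu\circ n_0'$ are mutually inverse, so $S_E$ is the FPC of $(a,\hat m_B)$.

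Finally, for the unique diagonalization property, consider a commuting square in $\mathsf{FPC}_v(\bfC,\cM)$ with an auto-augmented $S_E$ on top, an inert $S_M$ on the bottom, and FPC squares $U$ (left) and $V$ (right). Since the source leg of $S_M$ is an isomorphism, commutativity forces the source leg of any diagonal to equal that of $V$ (modulo that isomorphism), and, the right leg of $S_M$ being monic, the diagonal's target leg is then uniquely determined -- so a diagonal is unique if it exists. For existence I would exploit the auto-augmentation of $S_E$: its span's pushout $P$ carries the comparison $\bar e\in\cE$, and the data of $U$, $V$, $S_M$ assemble (after checking the requisite identities on the pushout cocone) into a commuting square having $\bar e\in\cE$ opposite the monic right leg of $S_M$; the unique $(\cE,\cM)$-diagonal in $\bfC$ supplies the target leg of a diagonal square $d$, whose source leg is fixed as above and lies in $\cM$ by the decomposition property. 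A diagram chase yields $d\circ S_E=U$ and $S_M\circ d=V$, and $d$ is an FPC square by vertical FPC--FPC decomposition applied to the identity $V=S_M\circ d$ of FPC squares. \textbf{The main obstacle} I expect is exactly this finality bookkeeping: proving that the constructed squares $S_E$, $S_M$ and $d$ are honest FPC squares rather than mere pullbacks, which forces one to interleave the universal property of the original FPC $(n,c)$, the vertical FPC composition/decomposition lemmas, and the ``mediator of a trivial pullback complement is the identity'' trick; the commutativities, the $\cM$-memberships of the constructed legs, and the auto-augmented/inert bookkeeping are comparatively routine.
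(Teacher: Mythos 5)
Your factorization construction is exactly the paper's: take the pushout $P$ of the span inside the FPC, $\cE$-$\cM$-factor the comparison $e\colon P\to B'$, and split the square vertically at the image object, with the $\cM$-membership of the new vertical legs coming from the decomposition property of $\cM$-morphisms. Where you diverge is in how the two halves (and later the diagonal) are certified as FPCs, and in how the diagonal is produced. For the former, you re-derive by hand the finality of $S_E$ and $S_M$ from the universal property of the original FPC together with the ``the only endomorphism of a terminal pullback complement is the identity'' trick; the paper simply observes that $S_M$ is a pullback (identity source leg, monic target leg) and invokes \kl{vertical FPC-pullback decomposition} (Lemma~\ref{lem:dsl}(7)), which yields both FPCs in one stroke. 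Your argument is sound but is essentially an inline proof of that lemma; citing it saves the bookkeeping you correctly identify as the main obstacle. For the diagonalization, the paper never touches the $(\cE,\cM)$-diagonalization of $\bfC$: since both vertical legs of an inert square are isomorphisms, the diagonal is written down directly as the legs of the outer left square post-composed with those inverses, then shown to be a pullback by \kl{pullback-pullback decomposition} and an FPC by the two variants of \kl{vertical FPC-FPC decomposition}. Your route instead builds the target leg of the diagonal by diagonalizing $\bar e\in\cE$ (the auto-augmentation witness on $P$) against the monic leg of $S_M$ in $\bfC$; this works (the required cocone on $P$ commutes by the hypotheses of the diagonalization square) and is conceptually pleasing in that it makes the $(\cE,\cM)$-structure of $\bfC$ do the work, but it is strictly more machinery than needed, and it uses auto-augmentation of $S_E$ where the paper's argument needs only inertness of $S_M$. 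One point you treat too briskly: closure of the auto-augmented class under composition with isomorphisms. For pre-composition the span of the composite square genuinely changes, and the paper needs the non-trivial auxiliary observation that an isomorphism of $\mathsf{FPC}_v(\bfC,\cM)$ is itself a pushout square, so that the composite span's pushout comparison is again the original $\cE$-morphism; ``only by a canonical isomorphism'' elides exactly this step.
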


We refer the interested readers to Example~\ref{ex:FPA} of Section~\ref{sec:exConstr:FPAs} for an illustration of the practical meaning of the \kl(EMS){(auto-augmented, inert) factorization} of FPCs, for the case of directed simple graphs, where it will be demonstrated that, in a certain sense, the factorization provides a static analysis of the classes of cloning with implicit deletion that can be modeled by FPCs.

Besides its quintessential role in factorizations of FPCs, the concept of \kl{FPC-pushout-augmentations (FPAs)} is also crucial for the fibrational structure of the source functor $S:\mathsf{FPC}_v(\bfC,\cM)\rightarrow \bfC\vert_{\cM}$ as the following theorem shows. Its proof can be found in~\ref{app:ps4}.

\begin{theorem}\label{thm:SourceFPCvRMOF}
Let $\bfC$ be a category with a \kl{stable system of monics} $\cM$, that is \kl{($\cE$, $\cM$)-structured}, that \kl(ssm){has pullbacks, pushouts and FPCs along $\cM$-morphisms}, such that \kl(ssm){$\cM$-morphisms are stable under pushout}, and such that \kl(ssm){pushouts along $\cM$-morphisms are stable under $\cM$-pullbacks}. %
\AP\phantomintro{source functor $S:\mathsf{FPC}_v(\bfC,\cM)\rightarrow \bfC\vert_{\cM}$ is a residual multi-opfibration}Then $S:\mathsf{FPC}_v(\bfC,\cM)\rightarrow \bfC\vert_{\cM}$ is a \kl{residual multi-opfibration}.
\end{theorem}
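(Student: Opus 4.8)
The plan is to instantiate Definition~\ref{def:residualMultiOpfibration} for the functor $R=S$ with $\bfE=\mathsf{FPC}_v(\bfC,\cM)$ and $\bfB=\bfC\vert_{\cM}$, recalling that $S$ sends an object (an arbitrary morphism of $\bfC$) to its domain and an FPC square to its left vertical leg, so that the ``backward non-determinism'' which prevents $S$ from being a genuine \kl{Grothendieck opfibration} is exactly what the \kl{FPC-pushout-augmentations (FPAs)} of Definition~\ref{def:FPA} are designed to parametrize. Concretely, fix an object $a\colon A_0\to A_1$ of $\mathsf{FPC}_v(\bfC,\cM)$ and a morphism $\ell\colon A_0\rightarrowtail B_0$ in $\cM$. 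First I would form the \kl(ssm){pushout along an $\cM$-morphism} of $A_1\xleftarrow{a}A_0\xrightarrow{\ell}B_0$, obtaining a pushout square with right leg $\bar{\ell}\colon A_1\rightarrowtail\bar{B}$ (in $\cM$ by \kl(ssm){$\cM$-morphisms are stable under pushout}) and bottom edge $\bar{a}\colon B_0\to\bar{B}$. The index set $J_{\ell;a}$ is then taken to be the class of \kl{FPAs} $(\gamma_j,g_j,e_j)$ of this pushout square, and for each $j$ the lifting $\rho_j(\ell)$ is the FPC square furnished by condition~(iii) of Definition~\ref{def:FPA}: the FPC of $A_0\xrightarrow{a}A_1\xrightarrow{e_j\circ\bar{\ell}}E_j$, i.e. the square with top edge $a$, right leg $e_j\circ\bar{\ell}\in\cM$, left leg $\gamma_j\circ\ell$ and bottom edge $g_j$, so that $S(\rho_j(\ell))=\gamma_j\circ\ell$ and the \kl(rmof){residue} is $\ell_{\star j}:=\gamma_j$. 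This family is well-defined since $\bfC$ \kl(ssm){has pushouts along $\cM$-morphisms} and \kl(ssm){has FPCs along $\cM$-morphisms} and is \kl{($\cE$,$\cM$)-structured}, and it is nonempty because $e_j:=\mathrm{id}_{\bar{B}}$ yields the FPA whose lifting is simply the FPC of $(a,\bar{\ell})$; Lemma~\ref{lem:FPAaux} discharges condition~(ii) of Definition~\ref{def:FPA} uniformly under the present hypotheses.

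For the \kl(rmof){universal property of residual multi-opfibrations}, suppose given a morphism $\alpha\colon a\to e''$ in $\mathsf{FPC}_v(\bfC,\cM)$ --- an FPC square with bottom edge $q\colon D_0\to D_1$, left leg $a_0$ and right leg $a_1$ --- together with $g\colon B_0\rightarrowtail D_0$ in $\cM$ such that $S(\alpha)=a_0=g\circ\ell$. The point is that the \kl(EMS){$\cE$-$\cM$-factorization} reads off the correct index: since the square $\alpha$ commutes, $a_1\circ a=q\circ a_0=(q\circ g)\circ\ell$, so the \kl{universal property of pushouts} gives a unique $u\colon\bar{B}\to D_1$ with $u\circ\bar{\ell}=a_1$ and $u\circ\bar{a}=q\circ g$; factor $u=m_u\circ e_u$ with $e_u\in\cE$, $m_u\in\cM$. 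Then $e_u\circ\bar{\ell}\in\cM$ (by the \kl(ssm){decomposition property of $\cM$-morphisms}, as $m_u\circ(e_u\circ\bar{\ell})=a_1\in\cM$), and the triple $(\gamma,g_E,e_u)$ --- with $(\gamma\circ\ell,g_E)$ the FPC of $(a,e_u\circ\bar{\ell})$ and $\gamma$ its induced universal morphism --- satisfies all of conditions~(i)--(iv) of Definition~\ref{def:FPA} (condition~(ii) by Lemma~\ref{lem:FPAaux}, and $\gamma\in\cM$ by \kl(ssm){stability of $\cM$-morphisms under pullback} applied to the pullback square witnessing $\gamma$, which is produced via \kl{pullback-pushout decomposition}), hence picks out an index $j\in J_{\ell;a}$. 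Finally, $a_1=m_u\circ(e_u\circ\bar{\ell})$ exhibits $\alpha$ as an FPC of the chain $A_0\xrightarrow{a}A_1\xrightarrow{e_u\circ\bar{\ell}}E\xrightarrow{m_u}D_1$, so \kl{horizontal FPC decomposition} factors it, uniquely up to isomorphism, as the vertical composite of $\rho_j(\ell)$ with the FPC of $(g_E,m_u)$; taking the latter with bottom edge $q$ yields the desired $\beta_j\colon e_j'\to e''$, whose left leg $c_0:=S(\beta_j)$ satisfies $c_0\circ\gamma\circ\ell=a_0=g\circ\ell$, hence $g=S(\beta_j)\circ\ell_{\star j}$ (as $\ell$ is monic), while $\alpha=\beta_j\circ\rho_j(\ell)$ holds by construction. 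Uniqueness of $\beta_j$ follows because any competitor is again an FPC square from $g_E$ to $q$ whose right leg is forced to equal $m_u$ (by cancelling the monomorphism $e_u\circ\bar{\ell}$) and whose left leg is then the unique mediator into the relevant pullback; crucially, one must not attempt to cancel $e_u$, since $\cE$-morphisms need not be epimorphisms.

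For \kl(rmof){essential uniqueness}, any index $k$ for which the corresponding $\beta_k$ exists uniquely must have $u=c_1^{(k)}\circ e_k$ with $c_1^{(k)}\in\cM$ the right leg of $\beta_k$ and $e_k\in\cE$ --- again an \kl(EMS){$\cE$-$\cM$-factorization} of $u$ --- so \kl(EMS){$\cE$-$\cM$-factorizations are essentially unique} provides an isomorphism between the factorizations for $j$ and $k$, which the \kl{universal property of FPCs} promotes to the unique isomorphism $\phi\colon e_j'\to e_k'$ with $\rho_k(\ell)=\phi\circ\rho_j(\ell)$, $\ell_{\star k}=S(\phi)\circ\ell_{\star j}$ and $\beta_j=\beta_k\circ\phi$. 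I expect the main obstacle to be the universal property, and within it the verification that the $\cE$-$\cM$-factorization of $u$ genuinely yields a member of $J_{\ell;a}$ (checking all clauses of Definition~\ref{def:FPA}, the $\cM$-membership of $\gamma$ being the only non-immediate one) together with the careful bookkeeping needed to extract $\beta_j$ from \kl{horizontal FPC decomposition} without assuming that $\cE$ consists of epimorphisms; this is, in effect, the $\mathsf{FPC}_v$-level repackaging of the factorization-theoretic content already established in Theorem~\ref{thm:FPCfact}.
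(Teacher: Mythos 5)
Your proposal follows the paper's proof essentially step for step: the residual liftings are the FPC-pushout-augmentations of the pushout of $(\ell,a)$, with residues given by the universal morphisms $\gamma$; the universal property is realized by $\cE$-$\cM$-factorizing the pushout-mediating morphism $u$ and then splitting the given FPC square; and essential uniqueness is inherited from that of pushouts, $\cE$-$\cM$-factorizations and pullbacks. (One terminological slip: the splitting is along the factorization $a_1=m_u\circ(e_u\circ\bar{\ell})$ of the $\cM$-legs, so the relevant lemmata are the \emph{vertical} FPC--FPC and FPC--pullback decompositions, not the horizontal one.) Two of your justifications, however, fail as written.

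First, $\gamma\in\cM$ does not follow from stability of $\cM$-morphisms under pullback applied to the square produced by pullback--pushout decomposition: in that square the leg parallel to $\gamma$ is $e_u$, which lies in $\cE$ and not in $\cM$, so stability gives nothing. The paper's route is to first form the pullback of $(q,m_u)$, whose projection $\iota'$ into $D_0$ is in $\cM$ by stability against $m_u\in\cM$; since $\iota'\circ\gamma=g\in\cM$, the decomposition property of $\cM$-morphisms then yields $\gamma\in\cM$ (and constructing this pullback first is also what produces $\beta_j$ with $\alpha=\beta_j\vComp\rho_j(\ell)$ holding on the nose rather than merely up to isomorphism). Second, the uniqueness of $\beta_j$ cannot be obtained ``by cancelling the monomorphism $e_u\circ\bar{\ell}$'': monomorphisms cancel on the left, whereas here you need to cancel $e_u\circ\bar{\ell}$ on the right of $m'\circ(e_u\circ\bar{\ell})=m_u\circ(e_u\circ\bar{\ell})$, which would require an epimorphism. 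The repair is to use that the pushout cocone $(\bar{\ell},\bar{a})$ is jointly epic: the conditions imposed on a competitor $\beta'$ (namely $\beta'\vComp\rho_j(\ell)=\alpha$ and $g=S(\beta')\circ\gamma$, together with commutativity of the square $\beta'$) force its right leg $m'$ to satisfy $m'\circ e_u=u=m_u\circ e_u$, after which the unique $(\cE,\cM)$-diagonalization property --- and not mere essential uniqueness of factorizations, precisely because $\cE$-morphisms need not be epic --- controls the comparison of $m'$ with $m_u$, and the left leg is then the unique mediating morphism into the pullback, as you say. With these two repairs your argument coincides with the paper's.
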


Let us finally note here that, in all of our applications, we will only consider base categories which are \kl{finitary}, so that one may indeed provide algorithms for the various universal constructions that yield finite sets of solutions up to isomorphisms.

\section{Examples of categories suitable for defining categorical constructions with fibrational properties}\label{sec:ccrts}

This section is structured into two parts: in Sections~\ref{sec:caps} and~\ref{sec:qt}, we present two classes of categories that may ultimately serve as a basis for defining compositional rewriting theories (cf.\ Section~\ref{sec:rs}), i.e., categories with \kl{adhesivity properties} and \kl{quasi-topoi}, respectively; and in Section~\ref{sec:exConstr}, we demonstrate that these classes of categories admit certain key categorical constructions of with fibrational properties.

\subsection{Categories with adhesivity properties}\label{sec:caps}

Starting in the early 2000s, the seminal work of Lack and Sobocinski~\cite{ls2004adhesive,lack2005adhesive,quasi-topos-2007} introducing \kl{adhesive} and \kl{quasi-adhesive} categories, which was later generalized by Ehrig et al.\cite{ehrig:2006fund,ehrig2004adhesive,ehrig2010categorical} to \kl{adhesive HLR} and \kl{weak adhesive HLR} categories and their variants, constituted a significant breakthrough in formalizing and standardizing the theory of Double-Pushout (DPO) rewriting. In this section, we will quote the salient definitions as well as key results from this research, with the purpose of providing a curated list of categories of practical interest that carry one of the variants of adhesivity properties mentioned above. We refer the interested readers to~\cite{ehrig:2006fund,ehrig2010categorical} (cf.\ also~\cite{bp2019-ext}) for further background materials.

In order to formulate the various notions of adhesivity, we require the following definitions:
\begin{definition}[Notions of van Kampen (VK) squares]\label{def:VKS}
Let $\bfC$ be a category. Then a pushout square is a \AP\intro{van Kampen (VK) square} iff for any commutative diagram as in~\eqref{eq:def:VKS} below, where the bottom square highlighted in blue is the aforementioned pushout square, and where the back squares are pullbacks, the following conditions hold:
\begin{itemize}
\item \AP\intro{(VK-a)} If the front and the right squares are pullbacks, then the top square is a pushout.
\item \AP\intro{(VK-b)} If the top square is a pushout, then the front and the right squares are pullbacks.
\end{itemize}
\begin{equation}\label{eq:def:VKS}
\ti{defVKS}
\end{equation}
If $\bfC$ has a \kl{stable system of monics} $\cM$, we define the following weakenings of the notion of VK squares:
\begin{itemize}
\item \AP\intro{horizontal weak VK squares} are defined as pushouts whose morphism are all in $\cM$, and that are \kl{VK squares}.
\item \AP\intro{vertical weak VK squares} are defined as pushouts which satisfy the defining properties of \kl{VK squares} modulo the constraint that all vertical morphisms in~\eqref{eq:def:VKS} are in $\cM$ (i.e., when $\alpha',\beta',\gamma,\delta\in \cM$).
\end{itemize}
\end{definition}

\begin{definition}\label{def:adhesivityProperties}
We define the following variants of \AP\intro{adhesivity properties} for categories:
\begin{enumerate}
\item Let $\bfA$ be a category. Then $\bfA$ is said to be an \AP\intro{adhesive category}~\cite{ls2004adhesive} iff it has the following properties:
\begin{enumerate}[label=(A-\roman*)]
\item $\bfA$ \kl{has pullbacks}.
\item $\bfA$ \kl(ssm){has pushouts along monomorphisms}.
\item \AP\phantomintro(adhVK){(A-iii)}\kl(ssm){Pushouts along monomorphisms} in $\bfA$ are \kl{van Kampen squares}.
\end{enumerate}
\item Let $\bfQ$ be a category, and let $\regmono{\bfQ}$ denote the class of regular monomorphisms of $\bfQ$. Then $\bfQ$ is said to be \AP\intro{quasi-adhesive}~\cite{lack2005adhesive} (sometimes also referred to as \kl{rm-adhesive}~\cite{garner2012axioms}) iff $\bfQ$ satisfies the following properties:
\begin{enumerate}[label=(Q-\roman*)]
\item $\bfA$ \kl{has pullbacks}.
\item $\bfA$ \kl(ssm){has pushouts along regular monomorphisms}.
\item \AP\phantomintro(adhVK){(Q-iii)}\kl(ssm){Pushouts along regular monomorphisms} in $\bfA$ are \kl{van Kampen squares}.
\end{enumerate}
\item Let $\bfL$ be a category that admits a \kl{stable system of monics} $\cM$. Then $\bfL$ is said to be an \AP\intro{adhesive high-level replacement (HLR) category}~\cite{ehrig:2006fund} iff $\bfL$ satisfies the following properties:
\begin{enumerate}[label=(L-\roman*)]
\item $\bfL$ \kl(ssm){has pullbacks along $\cM$-morphisms}.
\item $\bfL$ \kl(ssm){has pushouts along $\cM$-morphisms}, and \kl(ssm){$\cM$-morphisms are stable under pushout}.
\item \AP\phantomintro(adhVK){(L-iii)}\kl(ssm){Pushouts along $\cM$-morphisms} in $\bfL$ are \kl{van Kampen squares}.
\end{enumerate}
\item Let $\bfH$ be a category that admits a \kl{stable system of monics} $\cM$. Then $\bfH$ is said to be a \AP\intro{horizontal weak adhesive HLR category}~\cite{ehrig2010categorical} iff $\bfH$ satisfies the following properties:
\begin{enumerate}[label=(H-\roman*)]
\item $\bfH$ \kl(ssm){has pullbacks along $\cM$-morphisms}.
\item $\bfH$ \kl(ssm){has pushouts along $\cM$-morphisms}, and \kl(ssm){$\cM$-morphisms are stable under pushout}.
\item \AP\phantomintro(adhVK){(H-iii)}$\bfH$ has \kl{horizontal weak VK squares}.
\end{enumerate}
\item Let $\bfV$ be a category that admits a \kl{stable system of monics} $\cM$. Then $\bfV$ is said to be a \AP\intro{vertical weak adhesive HLR category}~\cite{ehrig2010categorical} (often alternatively referred to as an \kl{$\cM$-adhesive category}) iff $\bfV$ satisfies the following properties:
\begin{enumerate}[label=(V-\roman*)]
\item \AP\phantomintro(adhVK){(V-i)} $\bfV$ \kl(ssm){has pullbacks along $\cM$-morphisms}.
\item \AP\phantomintro(adhVK){(V-ii)} $\bfV$ \kl(ssm){has pushouts along $\cM$-morphisms}, and \kl(ssm){$\cM$-morphisms are stable under pushout}.
\item \AP\phantomintro(adhVK){(V-iii)}\kl(ssm){Pushouts along $\cM$-morphisms} in $\bfV$ are \kl{vertical weak van Kampen squares}.
\end{enumerate}
\item Let $\bfW$ be a category that admits a \kl{stable system of monics} $\cM$. Then $\bfW$ is said to be a \AP\intro{weak adhesive HLR category}~\cite{ehrig2010categorical} iff $\bfW$ has the properties of both a \kl{horizontal} and a \kl{vertical weak adhesive HLR category}.
\end{enumerate}
Finally, since in many of the proofs that rely upon the above \kl{adhesivity properties} one in fact needs different sub-statements of the axioms (X-iii) (i.e., of the VK-type axioms), we will use the notation \AP\intro(notationVKa){(X-iii-a)} for the part of axiom (X-iii) referring to stability under pullbacks, and to \AP\intro(notationVKb){(X-iii-b)} for the variant of the statement of axiom \kl{(VK-b)} in the definition of \kl{van Kampen squares}.
\end{definition}

\begin{remark}
The above list of definitions of categories with \kl{adhesivity properties} might appear to have a certain ``asymmetry'' in that for \kl{adhesive} and for \kl{quasi-adhesive categories}, stability of the relevant class of monics under pushout is not explicitly stated. However, one may prove~\cite[Prop.~6.4]{lack2005adhesive} that this stability in fact follows from the other axioms for these kinds of adhesivity. 
\end{remark}

The motivation of the seemingly peculiarly long list of \kl{adhesivity properties} for categories is indeed given by the intricate nature of requirements on categories to admit various notions of rewriting semantics (cf.\ Section~\ref{sec:rs}). It should also be noted that an oddity in this line of research is~\cite{ehrig2010categorical}  that to date only a single example of a category is known that is a \kl{vertical}, but not a \kl{horizontal weak adhesive HLR category} (namely the category $\mathbf{lSet}$ of list sets as introduced by Heindel in~\cite{10.1007/978-3-642-15928-2_17}), while all other known examples of categories with weak forms of \kl{adhesivity properties} are indeed \kl{weak adhesive HLR categories}. This is illustrated in Table~\ref{tab:adh}, which is an adaptation of a similar table presented in~\cite{bp2019-ext}, and which lists both examples of categories with \kl{adhesivity properties} as well as examples of \kl{quasi-topoi} (cf.\ Section~\ref{sec:qt}).

Before presenting some examples of categories with \kl{adhesivity properties} in further detail, it is worthwhile stating the following sufficient condition for when vertical implies horizontal weak HLR adhesivity, which requires the following well-known result:
\begin{theorem}[\cite{ehrig:2006fund}, Thm.~4.26(1)]\label{cor:adhPOPB}
Let $\bfC$ be a category with one of the variants of \kl{adhesivity properties} for some \kl{stable system of monics} $\cM$ (which for the case of $\bfC$ being an \kl{adhesive category} is $\cM=\mono{\bfC}$). Then \AP\intro(aps){pushouts along $\cM$-morphisms are pullbacks} in all cases but one, i.e., when $\bfC$ is a \kl{horizontal weak adhesive HLR category}, in which case \AP\intro(hwahlr){pushouts of spans of $\cM$-morphisms are pullbacks}.
\end{theorem}

\begin{lemma}\label{lem:vwadhTohwadh}
Let $\bfV$ be a \kl{vertical weak adhesive HLR category} with respect to a \kl{stable system of monics} $\cM$. Then a sufficient condition for $\bfV$ to also carry the structure of a \kl{horizontal weak adhesive HLR category} (and thus overall of a \kl{weak adhesive HLR category}) is that \kl(ssm){pushouts along $\cM$-morphisms} are \kl(PO){stable under pullbacks}.
\end{lemma}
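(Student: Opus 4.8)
The plan is to verify the three axioms of a \kl{horizontal weak adhesive HLR category} for $\bfV$. Axioms (H-i) and (H-ii) are literally (V-i) and (V-ii), which hold by hypothesis, so the whole content is axiom (H-iii): every pushout square all of whose edges lie in $\cM$ must be shown to be a (full) \kl{van Kampen square}, i.e.\ the van Kampen property has to be established for \emph{arbitrary} connecting cubes, not only those whose vertical edges lie in $\cM$. So fix such a pushout and display it as the bottom face of a commutative cube, with span $A\leftarrowtail C\rightarrowtail B$, cospan $A\rightarrowtail D\leftarrowtail B$ (all four edges in $\cM$ by hypothesis), top face $A'\leftarrow C'\to B'$, $A'\to D'\leftarrow B'$, and arbitrary vertical edges $a\colon A'\to A$, $b\colon B'\to B$, $c\colon C'\to C$, $d\colon D'\to D$; assume the two back squares, over $C\rightarrowtail A$ and over $C\rightarrowtail B$, are pullbacks. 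Every pullback taken below exists by (V-i), since the leg being pulled back ($A\rightarrowtail D$, $B\rightarrowtail D$, or $C\rightarrowtail D$) is in $\cM$.

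For \kl{(VK-a)}, assume in addition that the side squares over $A\rightarrowtail D$ and $B\rightarrowtail D$ are pullbacks. These say $A'=A\times_D D'$ and $B'=B\times_D D'$; pasting the back pullbacks onto them (\kl{pullback-pullback composition}) gives $C'=C\times_A A'=C\times_D D'$. Thus the top face together with $D'$ is precisely the pullback of the bottom pushout along $d$. Since the bottom square is a \kl(ssm){pushout along an $\cM$-morphism} and \kl(ssm){pushouts along $\cM$-morphisms} are \kl(PO){stable under pullbacks} by hypothesis, the top square is a pushout, which is \kl{(VK-a)}.

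The substance is \kl{(VK-b)}: assume the top square is a pushout; we must show the two side squares are pullbacks. Form $Q:=A\times_D D'$, $R:=B\times_D D'$, $S:=C\times_D D'$; their legs $Q\rightarrowtail D'$, $R\rightarrowtail D'$, $S\rightarrowtail D'$ are in $\cM$ by \kl(ssm){stability of $\cM$-morphisms under pullback}, and $S\rightarrowtail Q$, $S\rightarrowtail R$ are in $\cM$ as pullbacks of $C\rightarrowtail A$, $C\rightarrowtail B$. By the hypothesis that \kl(ssm){pushouts along $\cM$-morphisms} are \kl(PO){stable under pullbacks}, the span $Q\leftarrowtail S\rightarrowtail R$ with cospan $Q\rightarrowtail D'\leftarrowtail R$ is a pushout. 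Using that the side faces of the original cube commute, the universal properties of $Q,R,S$ produce comparison morphisms $u\colon A'\to Q$, $v\colon B'\to R$, $w\colon C'\to S$ lying over $D'$ and compatible with $a,b,c$; and a routine pullback-pasting argument (using $S=C\times_A Q=C\times_B R$ together with the original back pullbacks) shows that the auxiliary cube with bottom $Q\leftarrowtail S\rightarrowtail R$, top $A'\leftarrow C'\to B'$, and vertical edges $w,u,v$ and $id_{D'}$ has its two back squares pullbacks.

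The key observation is that the comparison maps are automatically in $\cM$. Indeed, the back pullbacks of the original cube force $C'\rightarrowtail A'$ and $C'\rightarrowtail B'$ into $\cM$, so the given top pushout is a pushout of an $\cM$-span, whence $A'\rightarrowtail D'$ and $B'\rightarrowtail D'$ are in $\cM$ by \kl(ssm){stability of $\cM$-morphisms under pushout}; since $u$ composed with the $\cM$-morphism $Q\rightarrowtail D'$ equals $A'\rightarrowtail D'\in\cM$, the \kl(ssm){decomposition property of $\cM$-morphisms} yields $u\in\cM$, and likewise $v,w\in\cM$ (and $id_{D'}\in\cM$ trivially). Hence the auxiliary cube has all vertical edges in $\cM$, a bottom that is a pushout along an $\cM$-morphism, and pullback back squares, so the van Kampen property of the \kl{vertical weak adhesive HLR category} $\bfV$ (axiom (V-iii)) applies; as its top face is a pushout, \kl{(VK-b)} for $\bfV$ makes the side squares over $Q\rightarrowtail D'$ and $R\rightarrowtail D'$ pullbacks, and since their right edge is $id_{D'}$ this says exactly that $u$ and $v$ are isomorphisms. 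As $u,v$ then identify $A'\cong A\times_D D'$ and $B'\cong B\times_D D'$ compatibly with $a,b,d$, the two side squares of the original cube are pullbacks, establishing \kl{(VK-b)} and hence (H-iii). The main obstacle is precisely this last step --- recognising that the mediating morphisms into the pulled-back pushout must lie in $\cM$ --- since that is what unlocks the \emph{vertical} weak van Kampen property even though the original cube has unconstrained vertical edges; everything else reduces to routine pullback pasting and stability of $\cM$-morphisms.
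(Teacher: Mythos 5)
Your proof is correct, but for the \kl{(VK-b)} half it takes a genuinely different (more self-contained) route than the paper. The paper disposes of \kl{(VK-b)} in two lines: since \kl(aps){pushouts along $\cM$-morphisms are pullbacks} in a \kl{vertical weak adhesive HLR category} (Corollary~\ref{cor:adhPOPB}), the composite of a back pullback with the bottom face is a pullback; the top face is a pushout of an $\cM$-span (its legs are in $\cM$ by \kl(ssm){stability of $\cM$-morphisms under pullback}) and hence a \emph{stable} pushout by hypothesis; so \kl{Pullback-pushout-decomposition} (Lemma~\ref{lem:dsl}(4)) immediately forces the front and right faces to be pullbacks. You instead pull the bottom pushout back along $d$ to get the auxiliary pushout $Q\leftarrowtail S\rightarrowtail R\rightarrowtail D'$, build mediating morphisms $u,v,w$, observe they land in $\cM$ (via stability of $\cM$ under pushout applied to the top face plus the \kl(ssm){decomposition property of $\cM$-morphisms}), and then invoke the vertical weak VK axiom \kl(adhVK){(V-iii)} on the resulting all-$\cM$-vertical cube to conclude $u,v$ are isomorphisms. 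This is essentially an inline re-proof of the pullback-pushout decomposition lemma; what it buys is transparency about exactly where axiom \kl(adhVK){(V-iii)} and the $\cM$-membership of the comparison maps are used, at the cost of length. Your treatment of \kl{(VK-a)} coincides with the paper's (both are immediate from the stability hypothesis), and the reduction of (H-i), (H-ii) to (V-i), (V-ii) is as in the paper.
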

\begin{proof}
Since by assumption \kl(ssm){pushouts along $\cM$-morphisms} are \kl(PO){stable under pullbacks}, it remains to prove that in the diagram below, where the top and bottom squares are pushouts of spans of $\cM$-morphisms, and where the left and back squares are pullbacks, the front and right squares are pullbacks:
\begin{equation}
	\ti{qtHWAHLRproof1}
\end{equation}
Since yet again by assumption \kl(ssm){pushouts along $\cM$-morphisms} are \kl(PO){stable under pullbacks}, and since in a \kl{vertical weak adhesive HLR category}, \kl(aps){pushouts along $\cM$-morphisms are pullbacks}, the claim follows by invoking \kl{pullback-pushout decomposition}.
\qed \end{proof}

The paradigmatic example of an \kl{adhesive category} is the following one:
\begin{definition}\label{def:Graph}
The category $\mathbf{Graph}$ of \AP\intro{directed multigraphs} is defined as the presheaf category $\mathbf{Graph}:=(\bG^{op}\rightarrow \mathbf{Set})$, where $\bG:=(\cdot \rightrightarrows \star)$ is a category with two objects and two morphisms~\cite{ls2004adhesive}. Objects $G=(V_G,E_G,s_G,t_G)$ of $\mathbf{Graph}$ are given by a set of vertices $V_G$, a set of directed edges $E_G$ and the source and target functions $s_G,t_G:E_G\rightarrow V_G$. Morphisms of $\mathbf{Graph}$ between $G,H\in \obj{\mathbf{Graph}}$ are of the form $\varphi=(\varphi_V,\varphi_E)$, with $\varphi_V:V_G\rightarrow V_H$ and $\varphi_E:E_G\rightarrow E_H$ such that $\varphi_V\circ s_G=s_H\circ \varphi_E$ and $\varphi_V\circ t_G=t_H\circ \varphi_E$.
\end{definition}
\begin{theorem}
The category \kl{$\mathbf{Graph}$} is an \kl{adhesive category} and (by definition) a \emph{presheaf topos}~\cite{ls2004adhesive} (and thus in particular a \kl{quasi-topos}), with strict-initial object $\mIO=(\emptyset, \emptyset,\emptyset\to\emptyset,\emptyset\to\emptyset)$ the empty graph, and with the following additional properties:
\begin{itemize}
\item Morphisms are in the classes $\mono{\mathbf{Graph}}$/$\epi{\mathbf{Graph}}$/$\iso{\mathbf{Graph}}$ if they are component-wise injective/surjective/bijective functions, respectively. All monos in $\mathbf{Graph}$ are regular, and $\mathbf{Graph}$ therefore possesses an epi-mono-factorization.
\item For each $G\in \obj{\mathbf{Graph}}$~\cite[Sec.~2.1]{Corradini_2015}, $\eta_G:G\rightarrow T(G)$ is defined as the embedding of $G$ into $T(G)$, where $T(G)$ is defined as the graph with vertex set $V_G':=V_G\uplus \{\star\}$ and edge set $E_G\uplus E_G'$. Here, $E_G'$ contains one directed edge $e_{n,p}: v_n\rightarrow v_p$ for each pair of vertices $(v_n,v_p)\in V_G'\times V_G'$.
\end{itemize}
\end{theorem}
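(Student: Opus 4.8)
The plan is to assemble standard facts about presheaf categories, specialised to the index category $\bG=(\cdot\rightrightarrows\star)$, and then to carry out the one genuinely computational verification, namely that the stated $\eta_G:G\rightarrow T(G)$ furnishes the $\mono{\mathbf{Graph}}$-partial map classifier. Since $\mathbf{Graph}$ is by construction the functor category $(\bG^{op}\rightarrow\mathbf{Set})$ over a small category, it is an (elementary, indeed Grothendieck) topos; this already yields the asserted presheaf-topos status, and, because every topos is in particular a \kl{quasi-topos} (with the weak/regular-mono subobject classifier provided by the ordinary subobject classifier, all monos being regular), also the quasi-topos claim. Adhesivity then follows either by quoting that every topos is an \kl{adhesive category}~\cite{lack2005adhesive}, or, directly, from the facts that $\mathbf{Set}$ is adhesive~\cite{ls2004adhesive} and that functor categories into adhesive categories are adhesive --- limits, colimits and the van Kampen condition in $(\bG^{op}\rightarrow\mathbf{Set})$ being computed pointwise.

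For the initial object I would argue that, colimits in $(\bG^{op}\rightarrow\mathbf{Set})$ being pointwise, the initial object is the constant presheaf at $\emptyset$, i.e.\ the empty graph $\mIO$; strictness is inherited from the strictness of $\emptyset$ in $\mathbf{Set}$, since a morphism $G\rightarrow\mIO$ consists of functions $V_G\rightarrow\emptyset$ and $E_G\rightarrow\emptyset$, which can exist --- and then be the identity --- only when $G=\mIO$. The characterisation of $\mono{\mathbf{Graph}}$, $\epi{\mathbf{Graph}}$ and $\iso{\mathbf{Graph}}$ follows because (co)limits, hence the mono/epi/iso properties of natural transformations, are tested componentwise, together with the corresponding elementary facts in $\mathbf{Set}$; that every mono is regular is the standard topos-theoretic fact (equivalently: in $\mathbf{Set}$ every injection is the equalizer of its cokernel pair, and equalizers are computed pointwise), whence the ordinary (surjective, injective) image factorization furnishes the required epi-mono-factorization.

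The one step requiring actual work is the verification of the $\eta_G:G\rightarrow T(G)$ construction. Given a partial map $H\hookleftarrow D\rightarrow G$ with $D\rtail H$ a subgraph, the classifying morphism $\chi:H\rightarrow T(G)$ is forced: a vertex of $H$ lying in $D$ is sent to its $G$-image and every other vertex to $\star$, while an edge of $H$ lying in $D$ is sent to its $G$-image and every other edge $e$ to the unique new edge $e_{n,p}$ with $n=\chi_V(s_H(e))$ and $p=\chi_V(t_H(e))$. I would check that this respects source and target maps (using that $D$ is a subfunctor, so that $s_H(e),t_H(e)\in D$ whenever $e\in D$), that pulling $\eta_G$ back along $\chi$ reproduces exactly $D\rtail H$ together with $D\rightarrow G$, and that $\chi$ is the unique such morphism; the slightly delicate bookkeeping point is that $E_{T(G)}$ must contain the new edges $e_{n,p}$ for \emph{all} ordered pairs from $V_G\uplus\{\star\}$ --- including pairs of genuine $G$-vertices --- since an edge outside $D$ whose endpoints lie in $D$ must still be routed through a fresh edge rather than through an existing edge of $G$. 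Alternatively this entire step may be quoted from~\cite[Sec.~2.1]{Corradini_2015} together with the general theory of $\cM$-partial map classifiers in toposes. The only obstacle, such as it is, is keeping this enumeration of edges straight; everything else is an invocation of standard results.
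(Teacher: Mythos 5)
Your proposal is correct and, in effect, takes the same route as the paper, which states this theorem without proof and simply cites \cite{ls2004adhesive} for the presheaf-topos/adhesivity facts and \cite[Sec.~2.1]{Corradini_2015} for the partial map classifier. Your additional direct verification of $\eta_G:G\rightarrow T(G)$ is accurate: the edge set $E_G\uplus E_G'$ with $E_G'\cong (V_G\times V_G)\uplus V_G\uplus V_G\uplus\{(\star,\star)\}$ matches the standard presheaf computation of $T(G)$ (partial maps out of the representables), and the bookkeeping point you flag --- that an edge of $H$ outside $D$ with both endpoints in $D$ must be classified by a fresh edge $e_{n,p}$ rather than an edge of $G$ --- is exactly what the pullback condition forces.
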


Many of the examples listed in Table~\ref{tab:adh} are obtained via the following construction:
\begin{definition}[\cite{ehrig:2006fund}, Def. A.41]
Let $F:\bfA\rightarrow \bfC$ and $G:\bfB\rightarrow \bfC$ be two functors, and let $\cI$ be an index set. Then the \AP\intro{comma category} $\mathbf{ComCat}(F,G;\cI)$ is defined as a category whose objects are of the form
\begin{equation}
\obj{\mathbf{ComCat}(F,G;\cI)}:=\{(A,B,op=\{op_i\}_{i\in \cI})\mid \forall i\in \cI: op_i\in \mor{F(A),G(B)}\}\,,
\end{equation}
and whose morphisms $f:(A,B;op)\rightarrow (A',B';op')$ consist of pairs of morphisms $f_{\bfA}:A\rightarrow A'$ and $f_{\bfB}:B\rightarrow B'$ such that $G(f_{\bfB})\circ op_i=op_i'\circ F(f_{\bfA})$ for all $i\in \cI$.
\end{definition}

The main interest in this definition of \kl{comma categories} is that they enjoy a number of important properties that render them extremely useful for determining whether various datatypes of practical importance have \kl{adhesivity properties}:
\begin{theorem}\label{thm:comCats}
Let $\mathbf{ComCat}(F,G;\cI)$ be a \kl{comma category}, for $F:\bfA\rightarrow \bfC$ and $G:\bfB\rightarrow \bfC$ two functors, and where $\cI$ is an index set.
\begin{enumerate}[label=(\roman*)]
\item Morphisms $f=(f_{\bfA},f_{\bfB})$ in $\mathbf{ComCat}(F,G;\cI)$ are mono-/epi-/isomorphisms iff they are component-wise mono-/epi-/isomorphisms, respectively~\cite[Fact A.43]{ehrig:2006fund}.
\item If $\bfA$ and $\bfB$ \kl{have pushouts} and $F$ preserves pushouts, then $\mathbf{ComCat}(F,G;\cI)$ \kl{has pushouts}, and these are constructed component-wise~\cite[Fact A.43]{ehrig:2006fund}.
\item If $\bfA$ and $\bfB$ \kl{have pullbacks} and $G$ preserves pullbacks, then $\mathbf{ComCat}(F,G;\cI)$ \kl{has pullbacks}, and these are constructed component-wise~\cite[Fact A.43]{ehrig:2006fund}.
\item If $(\bfA,\cM_1)$ and $(\bfB,\cM_2)$ are \kl{adhesive HLR categories}, $F:\bfA\rightarrow \bfC$ preserves \kl(ssm){pushouts along $\cM_1$-morphisms} and $G:\bfB\rightarrow \bfC$ preserves pullbacks, then  $\mathbf{ComCat}(F,G;\cI)$ is an \kl{adhesive HLR category} with respect to the \kl{stable system of monics} $\cM=(\cM_1\times\cM_2)\cap \mor{\mathbf{ComCat}(F,G;\cI)}$~\cite[Thm.~4.15.4]{ehrig:2006fund}.
\item If $(\bfA,\cM_1)$ and $(\bfB,\cM_2)$ are \kl{adhesive HLR categories}, $F:\bfA\rightarrow \bfC$ preserves \kl(ssm){pushouts along $\cM_1$-morphisms} and $G:\bfB\rightarrow \bfC$ preserves \kl(ssm){pullbacks along $\cM_2$-morphisms}, then  $\mathbf{ComCat}(F,G;\cI)$ is a \kl{weak adhesive HLR category} with respect to the \kl{stable system of monics} $\cM=(\cM_1\times\cM_2)\cap \mor{\mathbf{ComCat}(F,G;\cI)}$~\cite[Thm.~4.15.4]{ehrig:2006fund}.
\end{enumerate}
\end{theorem}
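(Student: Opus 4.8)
The plan is to reduce everything to the two projection functors $P_{\bfA}:\mathbf{ComCat}(F,G;\cI)\to\bfA$ and $P_{\bfB}:\mathbf{ComCat}(F,G;\cI)\to\bfB$, which are jointly faithful and, for the (co)limits of interest, detect and build them componentwise once the induced comparison families $op=\{op_i\}_{i\in\cI}$ are carried along. I would prove part~(i) first, purely from universal properties: if $f=(f_{\bfA},f_{\bfB})$ is componentwise a mono/epi/iso the corresponding property in $\mathbf{ComCat}(F,G;\cI)$ is immediate, while conversely one tests parallel pairs and uses joint faithfulness, and for isomorphisms one checks that the componentwise inverse is again a comma-category morphism using that $F(f_{\bfA})$ and $G(f_{\bfB})$ are isomorphisms. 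This is \cite[Fact~A.43]{ehrig:2006fund}.

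For part~(ii), given a span in $\mathbf{ComCat}(F,G;\cI)$ I would form pushouts componentwise in $\bfA$ and $\bfB$ (which exist by hypothesis), obtaining a candidate object $(A',B')$ with its two cocone legs. Since $F$ preserves pushouts, $F(A')$ is the pushout in $\bfC$ of the $F$-image of the $\bfA$-span; the squares $G(f_{\bfB})\circ op_i=op_i'\circ F(f_{\bfA})$ then exhibit a cocone under that pushout, so its universal property produces a unique family $op'$, turning $(A',B';op')$ into an object of $\mathbf{ComCat}(F,G;\cI)$ and the legs into comma-category morphisms; the universal property of this pushout in $\mathbf{ComCat}(F,G;\cI)$ follows by combining the two componentwise universal properties with the uniqueness of $op'$. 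Part~(iii) is strictly dual, using that $G$ preserves pullbacks to produce the $op$-maps into the pullback object; both are again \cite[Fact~A.43]{ehrig:2006fund}.

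For part~(iv) I would verify the three axioms of an \kl{adhesive HLR category} for the pair $\bigl(\mathbf{ComCat}(F,G;\cI),\cM\bigr)$ with $\cM=(\cM_1\times\cM_2)\cap\mor{\mathbf{ComCat}(F,G;\cI)}$. That $\cM$ is a \kl{stable system of monics} (contains isomorphisms, closed under composition and under pullback) is componentwise, via part~(i), part~(iii) and the corresponding properties of $\cM_1,\cM_2$. Existence of pullbacks along $\cM$-morphisms is part~(iii), the componentwise pullbacks needed being along $\cM_1$/$\cM_2$-morphisms, which exist in the adhesive HLR categories $\bfA,\bfB$ and are preserved by $G$; existence of pushouts along $\cM$-morphisms and stability of $\cM$ under pushout are componentwise as in part~(ii), using that $F$ preserves pushouts along $\cM_1$-morphisms. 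The substantive axiom is that pushouts along $\cM$-morphisms be \kl{van Kampen squares}: I would take an arbitrary van Kampen cube (bottom face such a pushout, back faces pullbacks), project it to $\bfA$ and to $\bfB$, where by adhesive-HLR-ness each projected cube has the van Kampen property, and then reconstruct the $op$-family on the top-face object and check that the comma-category top face is a pushout (the (VK-a) direction), resp.\ that the front and right faces are pullbacks (the (VK-b) direction). This last reconstruction is exactly where preservation is needed --- $F$ preserving the relevant $\bfA$-pushout and $G$ the relevant $\bfB$-pullback force the comparison maps produced by the componentwise universal properties to be the canonical ones --- and tracking these $op$-morphisms around the cube is the main obstacle; everything else is componentwise bookkeeping. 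This is \cite[Thm.~4.15.4]{ehrig:2006fund}.

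Part~(v) runs identically, the only change being that in the weak van Kampen squares every vertical leg is an $\cM$-morphism, so in the $\bfB$-projection of the cube every pullback whose preservation is invoked is a pullback along an $\cM_2$-morphism; hence the weakened hypothesis that $G$ preserve pullbacks along $\cM_2$-morphisms already suffices. Carrying this out for both the horizontal and the vertical weak van Kampen conditions shows that $\mathbf{ComCat}(F,G;\cI)$ is at once a horizontal and a vertical weak adhesive HLR category, hence a \kl{weak adhesive HLR category}. Since all five items merely restate results of Ehrig et al., one may alternatively cite \cite[Fact~A.43 and Thm.~4.15.4]{ehrig:2006fund} directly; the sketch above indicates the route a self-contained argument would take.
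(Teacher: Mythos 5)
The paper never actually proves Theorem~\ref{thm:comCats}: all five items are quoted from Ehrig et al., with the arguments delegated wholesale to \cite[Fact~A.43]{ehrig:2006fund} and \cite[Thm.~4.15.4]{ehrig:2006fund}, so there is no in-paper route for your proof to diverge from. Your componentwise reconstruction of (ii)--(v) is essentially the standard argument of that source and is sound: form the pushout (pullback) in $\bfA$ and $\bfB$, use preservation by $F$ (resp.\ $G$) to induce the family $op'$ on the new object, and verify the van Kampen conditions by projecting the cube and reassembling; the preservation hypotheses enter exactly where you say they do, namely in showing that a square of $\mathbf{ComCat}(F,G;\cI)$ whose components are pushouts/pullbacks is itself a pushout/pullback (one needs $F(A')$, resp.\ $G(B')$, to retain the universal property in $\bfC$ so that the mediating pair of morphisms is compatible with the $op$-structure), and your localisation of the weakened hypothesis in (v) to pullbacks along $\cM_2$-morphisms is correct.

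The one step that would not go through as written is the converse half of (i). Joint faithfulness of the projections lets you \emph{reflect} monos and epis (componentwise monic implies monic in the comma category), not \emph{preserve} them, which is what the ``only if'' direction requires. To test $f_{\bfA}$ against a parallel pair $g_1,g_2\colon A''\to A$ with $f_{\bfA}\circ g_1=f_{\bfA}\circ g_2$, you must first lift $A''$ to an object of $\mathbf{ComCat}(F,G;\cI)$ and $g_1,g_2$ to morphisms out of it; the natural candidate $op''_i:=op_i\circ F(g_1)$ serves both $g_1$ and $g_2$ only if $op_i\circ F(g_1)=op_i\circ F(g_2)$, and from $f_{\bfA}\circ g_1=f_{\bfA}\circ g_2$ one can only deduce $G(f_{\bfB})\circ op_i\circ F(g_1)=G(f_{\bfB})\circ op_i\circ F(g_2)$, which cannot be cancelled unless $G(f_{\bfB})$ is monic; for $f_{\bfB}$ there is no candidate structure map into $G(B'')$ at all. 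The robust route to the converse, when it is available, is via kernel (cokernel) pairs after (iii) (resp.\ (ii)) provide componentwise pullbacks (pushouts): $f$ is monic iff the diagonal of its kernel pair is an isomorphism, isomorphisms in the comma category are componentwise because functors preserve them, and the kernel pair is computed componentwise, so comma-monic implies componentwise monic under the preservation hypotheses. Alternatively, note that everything the paper later uses only needs the easy direction (componentwise mono/epi/iso implies the corresponding property in $\mathbf{ComCat}(F,G;\cI)$) together with preservation of isomorphisms by the projection functors, so one may simply weaken (i) accordingly or, as you suggest, cite \cite[Fact~A.43]{ehrig:2006fund} outright.
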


Generalizing from directed graphs to hypergraphs, it is interesting to note that the various notions of hypergraphs yield different notions of \kl{adhesivity properties}. We present here one of the standard constructions in the literature:
\begin{definition}
Let $\bf{HyperGraph}$ denote the category of  \kl{directed ordered hypergraphs}~\cite[Fact~4.17]{ehrig:2006fund}, defined as the \kl{comma category} $(Id_{\mathbf{Set}}, \square^{*}; \{1,2\})$. Here, $Id_{\mathbf{Set}}$ denotes the identity functor on the category $\mathbf{Set}$, while $\square^{*}:\mathbf{Set}\rightarrow \mathbf{Set}$ denotes the free monoid functor (which assigns to each set $A$ the free monoid $A^{*}$ on $A$, and to each set morphism $f:A\rightarrow B$ the free monoid morphism $f^{*}:A^{*}\rightarrow B^{*}$). More explicitly, an object of $\mathbf{HyperGraph}$ is a tuple $H=(V_H,E_H,s_H,t_H)$, where $V_H$ is the set of vertices, $E_H$ the set of hyperedges, and where $s_H,t_H:E\rightarrow V^{*}$ are the source and target functions (assigning to each edge an ordered list of source and target vertices). A morphism $\varphi:H\rightarrow H'$ in $\mathbf{HyperGraph}$ is given by a pair of morphisms $\varphi_V:V_H\rightarrow V_{H'}$ and $\varphi_E:E_H\rightarrow E_{H'}$ such that the diagram in~\eqref{def:HyperGraphMorph} below commutes.
\begin{equation}\label{def:HyperGraphMorph}
\ti{defHyperGraphMorph}
\end{equation}
\end{definition}

\begin{proposition}[\cite{ehrig:2006fund}, Fact~4.17]
The category $\mathsf{HyperGraph}$ is an \kl{adhesive HLR category} with respect to the \kl{stable system of monics} $\cM_{\mathbf{HyperGraph}}$ given by morphisms $\varphi=(\varphi_V,\varphi_E)$ where $\varphi_V$ and $\varphi_E$ are both monomorphisms.
\end{proposition}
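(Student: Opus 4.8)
The plan is to obtain the statement directly from Theorem~\ref{thm:comCats}(iv), since $\mathbf{HyperGraph}$ is by definition the \kl{comma category} $\mathbf{ComCat}(Id_{\mathbf{Set}},\square^{*};\{1,2\})$. Matching the notation of that theorem, I take $\bfA=\bfB=\bfC=\mathbf{Set}$, $F=Id_{\mathbf{Set}}$, $G=\square^{*}$ and $\cI=\{1,2\}$, so that the source and target maps $s_H,t_H\colon E_H\rightarrow V_H^{*}$ of an object play the role of the operations $op_1,op_2$ (note $F(E_H)=E_H$ and $G(V_H)=V_H^{*}$). The three hypotheses of Theorem~\ref{thm:comCats}(iv) then read: (a) $(\mathbf{Set},\mono{\mathbf{Set}})$ is an \kl{adhesive HLR category}; (b) $F$ preserves pushouts along $\mono{\mathbf{Set}}$-morphisms; and (c) $G$ preserves pullbacks.

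Hypotheses (a) and (b) are immediate. Since $\mathbf{Set}$ is a (presheaf) topos it is \kl{adhesive}, so taking all monos as its \kl{stable system of monics} makes it an \kl{adhesive HLR category}, which gives (a) with $\cM_1=\cM_2=\mono{\mathbf{Set}}$; and $F=Id_{\mathbf{Set}}$ preserves all colimits, a fortiori (b). The substantive point is (c): that the free-monoid functor $\square^{*}$ preserves pullbacks. I would prove this by a direct element count: for a pullback $P=A\times_{C}B$ over a cospan $f\colon A\rightarrow C$, $g\colon B\rightarrow C$, two words $a_1\cdots a_m\in A^{*}$ and $b_1\cdots b_n\in B^{*}$ have equal image in $C^{*}$ iff $m=n$ and $f(a_i)=g(b_i)$ for every $i$, i.e.\ iff $((a_1,b_1),\dots,(a_m,b_m))$ is a word over $P$; hence the canonical comparison $P^{*}\rightarrow A^{*}\times_{C^{*}}B^{*}$ is a bijection. (Conceptually, $\square^{*}=\sum_{n\in\mathbb{N}}(-)^{n}$ is a polynomial functor and hence preserves connected limits, in particular pullbacks; note that it does \emph{not} preserve binary products, since $1^{*}=\mathbb{N}\ne\mathbb{N}\times\mathbb{N}=1^{*}\times 1^{*}$, which is precisely why the comma-category criterion only demands pullback-preservation on the $G$ side.)

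With (a)--(c) established, Theorem~\ref{thm:comCats}(iv) yields that $\mathbf{HyperGraph}$ is an \kl{adhesive HLR category} with respect to the \kl{stable system of monics} $\cM=(\mono{\mathbf{Set}}\times\mono{\mathbf{Set}})\cap\mor{\mathbf{HyperGraph}}$. It remains only to identify this $\cM$ with the $\cM_{\mathbf{HyperGraph}}$ of the statement, which is immediate from Theorem~\ref{thm:comCats}(i): a morphism $\varphi=(\varphi_V,\varphi_E)$ is a monomorphism of $\mathbf{HyperGraph}$ exactly when both $\varphi_V$ and $\varphi_E$ are monomorphisms of $\mathbf{Set}$, so the intersection above is precisely the class of componentwise-monic pairs $(\varphi_V,\varphi_E)$, as claimed. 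I expect (c) to be the only real work; everything else is pure bookkeeping within the comma-category machinery, so I would isolate (c) as a short lemma before invoking Theorem~\ref{thm:comCats}.
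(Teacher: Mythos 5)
Your proposal is correct and follows exactly the route the paper intends: the paper states this result by citation to Ehrig et al.\ (Fact~4.17), whose argument is precisely the comma-category criterion of Theorem~\ref{thm:comCats}(iv) applied to $\mathbf{ComCat}(Id_{\mathbf{Set}},\square^{*};\{1,2\})$, with the only substantive verification being that the free-monoid functor preserves pullbacks. Your element-wise check of that fact (and the identification of $\cM$ with the component-wise monos) is sound, so there is nothing to add.
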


We conclude our brief presentation of examples by mentioning a number of slightly more sophisticated cases. Many interesting examples of categories with \kl{adhesivity properties} may be obtained by using the construction of presheaves (cf.\ e.g.\ \cite[Sec.~5]{Grochau_Azzi_2019} for a review within the context of categorical rewriting theory). Remarkable examples include the category of asynchronous graphs as introduced in~\cite{10.1145/3373718.3394762}, which permit to model certain structures in game semantics, and various notions of attributed and symbolic graphs as discussed in~\cite{Grochau_Azzi_2019}. Many other examples concern comma category constructions, with a number of illustrative examples provided in Table~\ref{tab:ccsExamples}. More intricate examples still have been developed in the context of so-called hierarchical graphs, which are obtained via comma-categorical constructions along various notions of super-power functors, and whose \kl{adhesivity properties} have been studied in~\cite{10.1007/978-3-319-61470-0_2, padberg2017towards} (see also~\cite{cgm:fossacs22}).

\begin{table}[h]
\centering
\begin{tabular}{cl}
$\ti{ccsDir}$ 
& 
\begin{tabular}{cL{7.5cm}}
$F$ & Description\\
\toprule
$Id_{\mathbf{Set}}$ & \kl{directed multigraphs}\\
\midrule
$\square^{*}$ & directed ``ordered'' hypergraphs with multiple incidences (\AP\intro{$\mathbf{HyperGraph}$}~\cite[Fact~4.17]{ehrig:2006fund} aka $\mathbf{PNet}$~\cite[Ex.~7]{quasi-topos-2007})\\
\midrule
$\cM$ & directed ``unordered'' hypergraphs with multiple incidences (= \AP\intro{$\mathbf{PTNets}$} of~\cite[Fact~4.21]{ehrig:2006fund})\\
\midrule
$\cP$ & directed ``unordered'' hypergraphs with simple incidences (= \AP\intro{$\mathbf{ElemNets}$} of~\cite[Fact~4.20]{ehrig:2006fund})\\[1em]
\end{tabular}
\\
\midrule 
$\ti{ccsUnDir}$ 
& 
\begin{tabular}{cL{7.5cm}}
$F$ & Description\\
\toprule
$\cP^{(1,2)}$ & \AP\intro{undirected multigraphs}~\cite{bp2019-ext}\\
\midrule
$\square^{*}$ & undirected ``ordered'' hypergraphs with multiple incidences (i.e.\ lists)\\
\midrule
$\cM$ & undirected ``unordered'' hypergraphs with multiple incidences\\
\midrule
$\cP$ & undirected ``unordered'' hypergraphs with simple incidences\\[1em]
\end{tabular}
\\
\end{tabular}
\caption{\label{tab:ccsExamples}Collection of examples for categories with \kl{adhesivity properties} based upon two ``schemas'' of \kl{comma category} constructions. Here, we employ the notations $\square^{*}$ for the free monoid functor, $\cM$ (also denoted $\oplus^{*}$ in~\cite{ehrig:2006fund}) for the free commutative monoid functor, $\cP$ for the covariant powerset functor, and $\cP^{(1,2)}$ for the restricted version thereof (cf.\ e.g.\ \cite{padberg2017towards}).}
\end{table}

\begin{table}[htpb]
\centering
\vspace{2em}
{\setlength{\extrarowheight}{5pt}
  \begin{tabular}{C{5.5cm}ccccccccc}
Category\newline(underlying data type)\newline&
\rot{\kl{quasi-topos}} &
\rot{\kl{adhesive}} &
\rot{\kl{quasi-adhesive}} &
\rot{\kl{adhesive HLR}} &
\rot{\kl{hor.\ weak adh.\ HLR}} &
\rot{\kl{vert.\ weak adh.\ HLR}} &
references
    \\[-0.5em] \toprule
$\mathbf{Set}$\newline (sets)  &%
\YES	 	& %
\YES 	& %
\YES	 	& %
\YES 	& %
\YES		& %
\YES		& %
\cite{lack2005adhesive}	\\%
$\mathbf{Graph}$\newline(\kl{directed multigraphs})	&%
\YES		& %
\YES		& %
\YES		& %
\YES		& %
\YES		& %
\YES		& %
\cite{lack2005adhesive}		\\ %
$\mathbf{HyperGraph}$\newline(\kl{directed ordered hypergraphs}) 	& %
\YES	& %
\YES	& %
\YES	& %
	\YES & %
	\YES & %
	\YES & %
 \cite[Ex.~7]{quasi-topos-2007}	\\ %
 $\mathbf{Sig}$\newline(algebraic signatures) 	& %
\YES	& %
\YES	& %
\YES	& %
\YES	& %
\YES & %
\YES & %
 \cite[Ex.~6]{quasi-topos-2007}	\\ %
$\hat{\mathbf{S}}$\newline(presheaves on category $\mathbf{S}$)	& %
\YES		& %
\YES		& %
\YES		& %
\YES		& %
\YES		& %
\YES		& %
\cite{Grochau_Azzi_2019,Lack2006}	\\ %
$\hat{\mathbf{T}_{\Sigma}}$\newline(term graphs over a signature $\Sigma$)	& %
\MAYBE		& %
		& %
\YES		& %
\YES		& %
\YES		& %
\YES		& %
\cite{CORRADINI200543}	\\ %
$\mathbf{TripleGraph}$\newline(functor category $[\mathbf{S}_3,\mathbf{Graph}]$) 	& %
\MAYBE	& %
	& %
	& %
	\YES & %
	\YES & %
	\YES & %
 \cite[Fact~4.18]{ehrig:2006fund}\\ %
$\mathbf{AGraph}_{\Sigma}$\newline(attributed graphs over signature $\Sigma$)	& %
\MAYBE	& %
	& %
	& %
\YES	& %
\YES 	& %
\YES 	& %
\cite[Thm.~11.11]{ehrig:2006fund}, \cite{GABRIEL_2014,Grochau_Azzi_2019}	\\ %
$\mathbf{SymbGraph}_{D}$\newline(symbolic graphs over $\Sigma$-algebra $D$)	& %
\MAYBE	& %
	& %
	& %
\YES 	& %
\YES 	& %
\YES 	& %
\cite{Grochau_Azzi_2019}, \cite[Thm.~2]{orejas2010symbolic}	\\ %
$\mathbf{uGraph}$\newline(\kl{undirected multigraphs}) & %
\MAYBE	& %
		& %
		& %
		& %
\YES	& %
\YES	& %
\cite{bp2019-ext}	\\ %
$\mathbf{ElemNets}$\newline(\kl{elementary Petri nets})	& %
\MAYBE	& %
	& %
	& %
{\color{orange}(!)}	& %
\YES	& %
\YES	& %
\cite{GABRIEL_2014}	\\ %
$\mathbf{PTnets}$\newline(\kl{place/transition nets})	& %
\MAYBE	& %
	& %
	& %
	& %
\YES	& %
\YES	& %
\cite[Fact~4.21]{ehrig:2006fund}, \cite{GABRIEL_2014}	\\ %
$\mathbf{Spec}$\newline(algebraic specifications)	& %
\YES	& %
	& %
	& %
	& %
\YES	& %
\YES	& %
\cite[Fact 4.24]{ehrig:2006fund}, \cite[Ex.~6]{quasi-topos-2007}	\\ %
$\mathbf{SGraph}$\newline(\kl{directed simple graphs}) & %
\YES	& %
		& %
		& %
		& %
{\color{green}\YES}	& %
\YES	& %
\cite[Prop.~17]{quasi-topos-2007}, {\color{green}Corollary~\ref{cor:main}(q-v)}	\\ %
$\mathbf{Set}_F$\newline(coalgebras for $F:\mathbf{Set}\rightarrow \mathbf{Set}$)	& %
$(*)$ 	& %
	& %
	& %
	& %
	& %
$(\dag)$	& %
\cite{quasi-topos-2007}, \cite[Thm.~1]{padberg2017towards}	\\ %
$\mathbf{lSets}$\newline(list sets)	& %
\MAYBE	& %
	& %
	& %
	& %
	& %
\YES	& %
 \cite{10.1007/978-3-642-15928-2_17}	\\ %
\bottomrule
\end{tabular}
}
\caption{\label{tab:adh}Examples of categories exhibiting various forms of \kl{adhesivity properties}. The symbol \MAYBE~indicates when a certain property is (to the best of our knowledge) not known to hold. Note that for the HLR variants of adhesivity properties, the information not contained in the table is the precise nature (cf.\ references provided) of the \kl{stable system of monics} $\cM$ for which the adhesivity properties hold. Moreover, the precise conditions $(*)$ and $(\dag)$ under which the category $\mathbf{Set}_F$ of $F$-coalgebras has quasi-topos or adhesivity properties are provided in~\cite{quasi-topos-2007} and~\cite[Thm.~1]{padberg2017towards}, respectively.}
\vspace{2em}
\end{table}

\subsection{Quasi-topoi}\label{sec:qt}

Quasi-topoi have been considered in the context of rewriting theories as a natural generalization of adhesive categories in~\cite{lack2005adhesive}. While several adhesive categories of interest to rewriting are topoi, including in particular the category $\mathbf{Graph}$ of directed multigraphs (cf.\ Definition~\ref{def:Graph}), it is not difficult to find examples of categories equally relevant to rewriting theory that fail to be topoi. A notable such example is the category $\mathbf{SGraph}$ of directed simple graphs (cf.\ Definition~\ref{def:SGraph}). 

Let us first recall a number of results from the work of Cockett and Lack~\cite{COCKETT2002223,COCKETT200361} on restriction categories. We will only need a very small fragment of their theory, namely the definition and existence guarantees for $\cM$-partial map classifiers, so we will follow mostly \cite{Corradini_2015}. We will in particular not be concerned with the notion of $\cM$-partial maps itself.

\begin{definition}[\cite{Corradini_2015}, Sec.~2.1; compare~\cite{COCKETT200361}, Sec.~2.1]
For a \kl{stable system of monics} $\cM$ in a category $\bfC$, an \AP\intro{$\cM$-partial map classifier} $(T,\eta)$ is a functor $T:\bfC\rightarrow \bfC$ and a natural transformation $\eta:ID_{\bfC}\xrightarrow{.} T$ such that 
\begin{enumerate}
\item for all $X\in \obj{\bfC}$, $\eta_X:X\rightarrow T(X)$ is in $\cM$.
\item for each span $(A\xleftarrow{m}X\xrightarrow{f}B)$ with $m\in \cM$, there exists a unique morphism $A\xrightarrow{\varphi(m,f)}T(B)$ such that $(m,f)$ is a pullback of $(\varphi(m,f),\eta_B)$.
\end{enumerate}
\end{definition}

\begin{proposition}[\cite{Corradini_2015},  Prop.~6]
For every $\cM$-partial map classifier $(T,\eta)$, $T$ preserves pullbacks, and $\eta$ is Cartesian, i.e., for each $X\xrightarrow{f}Y$, $(\eta_x,f)$ is a pullback of $(T(f),\eta_Y)$.
\end{proposition}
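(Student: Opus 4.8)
The plan is to recognise that condition (2) exhibits $T(B)$ as a representing object for $\cM$-partial maps with codomain $B$, and then to read off both assertions by a Yoneda argument. Concretely, for objects $A,B$ write $\mathrm{Par}_{\cM}(A,B)$ for the set of isomorphism classes of spans $A\xleftarrow{m}W\xrightarrow{\phi}B$ with $m\in\cM$. The assignment $(m,\phi)\mapsto\varphi(m,\phi)$ furnished by (2) gives a map $\mathrm{Par}_{\cM}(A,B)\to\bfC(A,T(B))$; it is a bijection, with inverse sending $h\colon A\to T(B)$ to the pullback of $(h,\eta_B)$ -- this pullback exists and its first leg lies in $\cM$ by stability of $\cM$-morphisms under pullback (using $\eta_B\in\cM$), and the two directions are mutually inverse by the uniqueness clause of (2). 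This bijection is natural in $A$: precomposing $h$ with $a\colon A'\to A$ corresponds, by pullback--pullback composition and decomposition, to pulling the classified span back along $a$. I would then record the description of $T$ on morphisms that is forced by this set-up: for $f\colon B\to B'$, naturality of $\eta$ makes the relevant outer rectangle a pasting of the defining pullback of some $h$ with the $\eta$-naturality square at $f$, and tracking the legs shows that $T(f)\circ h$ classifies the span obtained from the one classified by $h$ by postcomposing its non-$\cM$ leg with $f$.

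With this dictionary in place, the claim that $\eta$ is Cartesian follows by applying $\bfC(Z,-)$ to the candidate square and checking it is a pullback of sets for every $Z$. The fibre product $\bfC(Z,Y)\times_{\bfC(Z,T(Y))}\bfC(Z,T(X))$ translates, via the bijections above, into the set of pairs consisting of a total map $a\colon Z\to Y$ and an $\cM$-partial map $Z\rightharpoonup X$ whose images in $\mathrm{Par}_{\cM}(Z,Y)$ agree; equality with a total partial map forces the $\cM$-leg of the partial map to be an isomorphism, so the pair is exactly a total map $b\colon Z\to X$ with $f\circ b=a$, which reproduces $\bfC(Z,X)$ together with the correct projections $\bfC(Z,f)$ and $\bfC(Z,\eta_X)$. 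By Yoneda the square is a pullback in $\bfC$. For the statement that $T$ preserves pullbacks the same strategy applies: by Yoneda and the natural bijection $\bfC(Z,T(-))\cong\mathrm{Par}_{\cM}(Z,-)$ it suffices to show that $\mathrm{Par}_{\cM}(Z,-)$ preserves pullbacks, and given a pullback $A=B\times_D C$ in $\bfC$, two $\cM$-partial maps $Z\rightharpoonup B$ and $Z\rightharpoonup C$ whose images in $\mathrm{Par}_{\cM}(Z,D)$ coincide can be presented on a common $\cM$-domain $W\rightarrowtail Z$, whereupon the universal property of pullbacks in $\bfC$ yields a unique $\cM$-partial map $Z\rightharpoonup A$ restricting to both.

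The main obstacle, and the only genuinely non-formal part, is setting up this dictionary cleanly: that the classifier bijection is natural in $A$, and above all that $T$ acts on morphisms by postcomposition of partial maps. If one tries to prove the latter directly from the abstract data $(T,\eta)$, it interlocks with the very statements being proved (both proofs keep reducing to ``$\eta_X$ is mono, not epi'', which is not enough). The honest route is therefore to carry out the relevant fragment of the theory of partial-map classifiers in the style of Cockett--Lack, in which $\mathrm{Par}_{\cM}(\bfC)$ is built intrinsically and $T$ arises as the functor part of the $\cM$-partial-map monad with $\mathrm{Par}_{\cM}(\bfC)$ as Kleisli category; Cartesianity of the unit and preservation of pullbacks by the functor are established there, and a given $(T,\eta)$ satisfying (1)--(2) is identified with that canonical one by uniqueness of representing objects. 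A more hands-on partial route to Cartesianity is to take the genuine pullback $(\bar p\colon\bar P\rightarrowtail T(X),\ \bar g\colon\bar P\to Y)$ of $(T(f),\eta_Y)$, observe that $T(f)$ is its classifying map by the uniqueness in (2), use naturality of $\eta$ to obtain a mediating map $u\colon X\to\bar P$ that lies in $\cM$ by the decomposition property of $\cM$-morphisms, and reduce the whole claim to showing $u$ is an isomorphism, equivalently that $\bar p$ factors through $\eta_X$; but verifying this last factorization is precisely where the partial-map-classifier machinery is needed, so the obstacle is the same, merely relocated.
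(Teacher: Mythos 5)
The paper does not prove this proposition --- it is imported verbatim from Corradini et al.\ (ultimately from the Cockett--Lack theory of $\cM$-partial map classifiers) --- so there is no in-paper argument to compare yours against; what follows assesses your plan on its own terms. Your reduction of both claims to a single statement, namely that $\bfC(A,T(B))\cong\mathrm{Par}_{\cM}(A,B)$ naturally and that $T(f)$ acts on the right-hand side by postcomposition of the non-$\cM$ leg with $f$, is correct, and your two Yoneda arguments (for Cartesianity of $\eta$ and for preservation of pullbacks) do go through once that dictionary is in place. You have also correctly located the one genuinely non-formal point: with the definition as stated, where $T$ is an arbitrary functor equipped with a natural $\eta$ satisfying the classification property, the identity $T(f)=\varphi(\eta_X,f)$ is \emph{equivalent} to Cartesianity of $\eta$ at $f$ (both assert that the naturality square is a pullback), so any argument assuming one to prove the other is circular, and naturality only yields $T(f)\circ\eta_X=\varphi(\eta_X,f)\circ\eta_X$ with $\eta_X$ monic but not epic.

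The gap is that your proposed escape does not close this circle. Identifying the given $(T,\eta)$ with the canonically constructed classifier ``by uniqueness of representing objects'' produces, for each object $B$, a unique isomorphism $\theta_B\colon T(B)\to T^{\mathrm{can}}(B)$ compatible with the units; but transporting Cartesianity of $\eta^{\mathrm{can}}$ back to $\eta$ requires $\theta$ to be natural with respect to the \emph{given} action of $T$ on arrows, i.e.\ $\theta_Y\circ T(f)=T^{\mathrm{can}}(f)\circ\theta_X$, and unwinding this condition is once again exactly the identity $T(f)=\varphi(\eta_X,f)$. So the obstacle you correctly diagnosed in your ``hands-on'' route reappears unchanged in the ``honest'' one. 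The standard resolution is definitional rather than deductive: in the sources, $T$ is determined on objects by representability and its action on arrows is \emph{defined} by $T(f):=\varphi(\eta_X,f)$ (equivalently, $T$ is the monad of the adjunction between $\bfC$ and the category of $\cM$-partial maps), whereupon Cartesianity of $\eta$ holds by construction and the remaining content of the proposition is functoriality of this assignment, naturality of $\eta$, and preservation of pullbacks --- the last of which your $\mathrm{Par}_{\cM}(Z,-)$ argument handles correctly. If you insist on the definition as literally stated, you must either add $T(f)=\varphi(\eta_X,f)$ as an axiom or supply a genuinely global argument for it; neither your proposal nor the obvious local manipulations provide one.
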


\begin{definition}[\cite{quasi-topos-2007}, Def.~9]
A category $\bfC$ is a \AP\emph{\intro{quasi-topos}} iff
\begin{enumerate}
\item it has finite limits and colimits.
\item it is locally Cartesian closed.
\item it has a regular-subobject-classifier.
\end{enumerate}
\end{definition}

Based upon a variety of different results from the rich literature on quasi-topoi, we will now exhibit that quasi-topoi indeed possess all technical properties required in order for \kl(crsSqPO){non-linear SqPO-rewriting to be well-posed}:
\begin{corollary}\label{cor:main}
Every \emph{quasi-topos} $\bfC$ enjoys the following properties:
\begin{enumerate}[label=(q-\roman*)]
\item It has (by definition) a \kl{stable system of monics} $\cM=\regmono{\bfC}$ (the class of regular monos), which coincides with the class of \AP\emph{\intro{extremal monomorphisms}}~\cite[Cor.~28.6]{adamek2006}, i.e., if $m=f\circ e$ for $m\in \regmono{\bfC}$ and $e\in \epi{\bfC}$, then $e\in \iso{\bfC}$.
\item It has (by definition) an \kl{$\cM$-partial map classifier} $(T,\eta)$.
\item It is \AP\intro(qt){rm-quasi-adhesive},  i.e., %
it \AP\phantomintro(qt){has pushouts along regular monomorphisms}\kl(ssm){has pushouts along regular monomorphisms}, %
these are \AP\phantomintro(qt){pushouts along regular monomorphisms are stable under pullbacks}\kl(PO){stable under pullbacks}, and %
\AP\intro(qt){pushouts along regular monomorphisms are pullbacks}~\cite{garner2012axioms}.
\item It is a \kl{vertical weak adhesive HLR category} (sometimes referred to as $\cM$-adhesive category)~\cite[Lem.~13]{10.1007/978-3-642-15928-2_17}.
\item The latter entails according to Lemma~\ref{lem:vwadhTohwadh} that every quasi-topos is in fact \AP\phantomintro(qt){weak adhesive HLR category} a \kl{weak adhesive HLR category}.
\item For all pairs of composable morphisms $A\xrightarrow{f}B$ and $B\xrightarrow{m}C$ with $m\in \cM$, there \AP\intro(qt){exists a final pullback-complement (FPC)} $A\xrightarrow{n}F\xrightarrow{g}C$, and with $n\in \cM$ (\cite[Thm.~1]{Corradini_2015}; cf.\ Theorem~\ref{thm:FPC}).
\item It possesses an \kl(EMS){epi-$\cM$-factorization}~\cite[Prob.~28.10]{adamek2006}: each morphism $A\xrightarrow{f}B$ factors as $f=m\circ e$, with morphisms $A\xrightarrow{e}\bar{B}$ in $\epi{\bfC}$ and $\bar{B}\xrightarrow{m}A$ in $\cM$ (uniquely up to isomorphism in $\bar{B}$).
\item It possesses a \AP\intro(qt){strict initial object} $\mIO\in \obj{\bfC}$~\cite[A1.4]{johnstone2002sketches}, i.e., for every object $X\in \obj{\bfC}$, there exists a morphism $i_X:\mIO\rightarrow X$, and if there exists a morphism $X\rightarrow \mIO$, then $X\cong \mIO$. 
\end{enumerate}
If in addition the strict initial object $\mIO$ is \AP\intro(ssm){$\cM$-initial}, i.e., if for all objects $X\in \obj{\bfC}$ the unique morphism $i_X:\mIO\rightarrow X$ is in $\cM$, then $\bfC$ has \AP\intro(qt){disjoint coproducts}, i.e., for all $X,Y\in \obj{\bfC}$, the pushout of the $\cM$-span $X\leftarrowtail\mIO\rightarrowtail Y$ is $X\rightarrowtail X+Y\leftarrowtail Y$ (cf.\ \cite[Thm.~3.2]{MONRO1986141}, which also states that this condition is equivalent to requiring $\bfC$ to be a \AP\intro(qt){solid quasi-topos}), and the coproduct injections are $\cM$-morphisms as well. Finally, if pushouts along regular monos of $\bfC$ are van Kampen, $\bfC$ is a \kl{rm-adhesive category}~\cite[Def.~1.1]{garner2012axioms}.
\end{corollary}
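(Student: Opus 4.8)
The statement is a \emph{synthesis} result: each clause either quotes an established fact about \emph{quasi-topoi} and factorization structures, or follows from an earlier clause together with Lemma~\ref{lem:vwadhTohwadh}. The plan is therefore to discharge the clauses in an order that lets later ones reuse earlier ones, namely (q-i) and (q-vii) first, then (q-ii), then (q-vi), then (q-iii), then (q-iv) and (q-v), then (q-viii), and finally the conditional addendum.

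\textbf{Factorization clauses.} First I would record that in any quasi-topos the class $\regmono{\bfC}$ of regular monomorphisms contains all isomorphisms, is closed under composition, and is stable under pullback, so that $\cM:=\regmono{\bfC}$ is indeed a \emph{stable system of monics}; this proves the first half of (q-i). The identification of $\regmono{\bfC}$ with the extremal monomorphisms is \cite[Cor.~28.6]{adamek2006}, which uses the \emph{(epi, regular mono)}-factorization that a quasi-topos carries, and this factorization is precisely clause (q-vii), recorded as \cite[Prob.~28.10]{adamek2006}. I would state both facts here so that the remaining clauses may invoke the factorization freely.

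\textbf{Partial-map-classifier and FPC clauses.} For (q-ii) I would unpack the regular-subobject-classifier postulated in the definition of a quasi-topos into an \emph{$\cM$-partial map classifier} $(T,\eta)$ for $\cM=\regmono{\bfC}$: one checks that every $\eta_X$ is a regular mono and that each $\cM$-span $(A\xleftarrow{m}X\xrightarrow{f}B)$ admits a unique classifying morphism $A\to T(B)$ with the required pullback property. Given (q-ii), clause (q-vi) is then an application of \cite[Thm.~1]{Corradini_2015} (reproduced as Theorem~\ref{thm:FPC}), which constructs a final pullback complement of any composable pair $A\to B\rightarrowtail C$ out of an $\cM$-partial map classifier; the leg $A\to F$ lands in $\cM$ by stability of $\cM$ under pullback applied to the pullback squares internal to that construction.

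\textbf{Adhesivity clauses and strictness.} Clause (q-iii) --- that $\bfC$ is \emph{rm-quasi-adhesive}, i.e.\ admits pushouts along regular monos, these are stable under pullbacks, and pushouts along regular monos are pullbacks --- is the content of \cite{garner2012axioms}. Clause (q-iv) --- that $(\bfC,\regmono{\bfC})$ is a \emph{vertical weak adhesive HLR} (equivalently $\cM$-adhesive) category --- is \cite[Lem.~13]{10.1007/978-3-642-15928-2_17}. Clause (q-v) is then purely internal: the pullback-stability of pushouts along $\cM$-morphisms from (q-iii) together with the vertical weak adhesive HLR structure from (q-iv) feeds into Lemma~\ref{lem:vwadhTohwadh} to yield the horizontal, hence the full \emph{weak adhesive HLR}, structure. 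Clause (q-viii), strictness of the initial object, holds for any locally Cartesian closed category with finite colimits and is recorded in \cite[A1.4]{johnstone2002sketches}.

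\textbf{Conditional addendum and main obstacle.} Assuming $\mIO$ is \emph{$\cM$-initial}, the claim that $\bfC$ then has \emph{disjoint coproducts} with $\cM$-coproduct injections, and that this is equivalent to being a \emph{solid quasi-topos}, is \cite[Thm.~3.2]{MONRO1986141}; and the closing sentence --- that further requiring pushouts along regular monos to be van Kampen upgrades $\bfC$ to an \emph{rm-adhesive category} --- is then immediate by comparing \cite[Def.~1.1]{garner2012axioms} with clause (q-iii). I do not anticipate a genuine obstacle, since the corollary is a curated compilation; the only points that require care are making the translation in (q-ii) between ``regular-subobject-classifier'' and ``$\cM$-partial map classifier'' fully explicit rather than asserting it, and, in (q-vi), verifying that the left leg of the FPC produced by the construction of \cite{Corradini_2015} is a genuine $\cM$-morphism and not merely a monomorphism --- which is exactly where pullback-stability of $\cM$ is used.
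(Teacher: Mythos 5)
Your proposal is correct and matches the paper's treatment: the paper gives no separate proof of this corollary, relying on exactly the citations you list for each clause and on Lemma~\ref{lem:vwadhTohwadh} for (q-v), which is precisely your route. Your extra care about translating the regular-subobject-classifier into an $\cM$-partial map classifier for (q-ii) and checking $n\in\cM$ in (q-vi) via pullback-stability is a sound elaboration of the same argument, not a deviation.
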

\begin{remark}\label{rem:qt}
An interesting (and, as it turns out, highly relevant) curiosity of the above list of properties enjoyed by every \kl{quasi-topos} is that, while \kl{quasi-topoi} in general \emph{fail} to be \kl{adhesive HLR categories} (cf.\ e.g.\ \cite{quasi-topos-2007,GABRIEL_2014} for the famous and paradigmatic example of $\mathbf{SGraph}$, the category of \kl{directed simple graphs}), they \emph{do} satisfy axiom \kl(notationVKa){(L-iii-a)}, i.e., \kl(ssm){pushouts along regular monomorphisms} are \kl(PO){stable under pullbacks}. Therefore, as we will demonstrate in Section~\ref{sec:rs}, (\kl{finitary}) \kl{quasi-topoi} are a suitable type of category for all variants of \kl{Sesqui-Pushout (SqPO) semantics}, while they in general do \emph{not} have sufficient properties to support \kl(crsType){generic} \kl{Double-Pushout (DPO) semantics}  (cf.\ Table~\ref{tab:main}).
\end{remark}

The prototypical example of quasi-topoi in rewriting is the following notion of directed simple graphs:

\begin{definition}\label{def:SGraph}
Let $\mathbf{SGraph}$, the \AP\intro{category of directed simple graphs}\footnote{Some authors prefer to not consider directly the category $\mathbf{BRel}$, but rather define $\mathbf{SGraph}$ as some category equivalent to $\mathbf{BRel}$, where simple graphs are of the form $\langle V,E\rangle$ with $E\subseteq V\times V$. This is evidently equivalent to directly considering $\mathbf{BRel}$, whence we chose to not make this distinction in this paper.}, be defined as the category of binary relations $\mathbf{BRel}\cong\mathbf{Set}\git\Delta$~\cite{quasi-topos-2007}. Here, $\Delta:\mathbf{Set}\rightarrow\mathbf{Set}$ is the pullback-preserving diagonal functor defined via $\Delta X:= X\times X$, and $\mathbf{Set}\git\Delta$ denotes the full subcategory of the slice category $\mathbf{Set}/\Delta$ defined via restriction to objects $m:X\rightarrow \Delta X$ that are monomorphisms. More explicitly, an object of $\mathbf{Set}\git\Delta$ is given by $S=(V,E, \iota)$, where $V$ is a set of vertices, $E$ is a set of directed edges, and where $\iota:E\rightarrow V\times V$ is an injective function. A morphism $f=(f_V,f_E)$ between objects $S$ and $S'$ is a pair of functions $f_V:V\rightarrow V'$ and $f_E:E\rightarrow E'$ such that $\iota'\circ f_E=(f_V\times f_V)\circ\iota$ (see~\eqref{eq:epiRMfactorizationSGraph}).
\end{definition}

The category $\mathbf{SGraph}$ satisfies the following well-known properties:
\begin{theorem}\label{thm:SGraphProperties}
The category $\mathbf{SGraph}$ is \emph{not} adhesive, but it is a quasi-topos~\cite{quasi-topos-2007}, and with the following additional properties:
\begin{enumerate}[label=(S-\roman*)]
\item In $\mathbf{SGraph}$~\cite{quasi-topos-2007} (compare~\cite[Prop.~9]{BRAATZ2011246}), morphisms $f=(f_V,f_E)$ are monic (epic) if $f_V$ is monic (epic), while isomorphisms satisfy that both $f_V$ and $f_E$ are bijective. \emph{Regular} monomorphisms in $\mathbf{SGraph}$ are those for which $(\iota, f_E)$ is a pullback of $(\Delta(f_V),\iota')$~\cite[Lem.~14(ii)]{quasi-topos-2007}, i.e., a monomorphism is regular iff it is \emph{edge-reflecting}. As is the case for any \kl{quasi-topos}, $\mathbf{SGraph}$ possesses an epi-regular mono-factorization.
\item The regular mono-partial map classifier $(T,\eta)$ of $\mathbf{SGraph}$ is defined as follows~\cite[Ex.~28.2(3)]{adamek2006}: for every object $S=(V,E,\iota)\in \obj{\mathbf{SGraph}}$, 
\begin{equation}
T(S):= (V_{\star}=V\uplus \{\star\}, E_{\star}=E\uplus (V\times\{\star\})\uplus (\{\star\}\times V)\uplus \{(\star,\star)\}, \iota_{\star})\,,
\end{equation}
where $\iota_{\star}$ is the evident inclusion map, and moreover $\eta_{S}:S\rightarrowtail T(S)$ is the (by definition edge-reflecting) inclusion of $S$ into $T(S)$.
\item $\mathbf{SGraph}$ possesses a regular mono-initial object $\mIO=(\emptyset, \emptyset,\emptyset\to\emptyset)$.
\end{enumerate}
\end{theorem}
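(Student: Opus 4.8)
The plan is to dispatch the opening dichotomy by citation and then verify the three labelled items using the presentation $\mathbf{SGraph}\cong\mathbf{BRel}\cong\mathbf{Set}\git\Delta$ together with the generic \kl{quasi-topos} facts collected in Corollary~\ref{cor:main}. That $\mathbf{SGraph}$ is a \kl{quasi-topos} but fails to be \kl{adhesive} is exactly~\cite{quasi-topos-2007}, so I would simply invoke it; in particular this already supplies the \kl(EMS){epi-regular mono-factorization} asserted at the end of (S-i) via Corollary~\ref{cor:main}(q-vii), since for a quasi-topos one has $\cE=\epi{\bfC}$ and $\cM=\regmono{\bfC}$. For the remaining parts of (S-i) --- that $f=(f_V,f_E)$ is monic (resp.\ epic) iff $f_V$ is, that isomorphisms are exactly the componentwise bijections, and that the regular monos are precisely the edge-reflecting morphisms --- I would recall the short arguments from~\cite{quasi-topos-2007} (compare~\cite[Prop.~9]{BRAATZ2011246}): in the comma category $(\mathrm{Id}_{\mathbf{Set}}\downarrow\Delta)$ a morphism is monic iff its components are, and injectivity of $f_V$ already forces injectivity of $f_E$ because $\iota'$ is monic; surjectivity of $f_V$ forces $f$ to be epic by a one-line chase (two parallel maps out of the codomain agreeing after $f$ agree on vertices, hence on edges since the structure maps are monic); and --- because $\Delta$ preserves pullbacks --- $f$ being a regular subobject unwinds exactly to $(\iota,f_E)$ being a pullback of $(\Delta(f_V),\iota')$~\cite[Lem.~14(ii)]{quasi-topos-2007}, i.e.\ to the condition that every edge of the target both of whose endpoints lie in the image of $f_V$ already lies in the image of $f_E$.

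The substance of the theorem is clause (S-ii): verifying that the stated pair $(T,\eta)$ --- which is the classifier of~\cite[Ex.~28.2(3)]{adamek2006} --- is an \kl{$\cM$-partial map classifier} for $\cM=\regmono{\mathbf{SGraph}}$ in the sense of the definition quoted from~\cite{Corradini_2015}. Condition~(1) is immediate: in $T(S)$ the only edges both of whose endpoints lie in $V\subseteq V_\star$ are those already in $E$, so the inclusion $\eta_S\colon S\rightarrowtail T(S)$ is edge-reflecting, hence a regular mono. For condition~(2), given a span $A\xleftarrow{m}X\xrightarrow{f}B$ with $m$ a regular (edge-reflecting) mono, I would define $\varphi(m,f)=(\varphi_V,\varphi_E)$ by $\varphi_V(a)=f_V(x)$ whenever $a=m_V(x)$ (well-defined since $m_V$ is injective) and $\varphi_V(a)=\star$ otherwise, and send an edge $a=m_E(x)$ to $f_E(x)$ and every other edge of $A$ to the unique edge of $T(B)$ carrying the prescribed pair of endpoints. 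The latter is well-defined precisely because edge-reflection of $m$ forces any edge of $A$ whose two endpoints lie in the image of $m_V$ to already lie in the image of $m_E$, so that each edge outside $\mathrm{im}(m_E)$ has at least one $\varphi_V$-endpoint equal to $\star$, and $T(B)$ contains exactly one edge for each pair of vertices involving $\star$. One then checks that $\varphi(m,f)$ is a well-typed morphism of $\mathbf{SGraph}$, that $(m,f)$ is a pullback of $(\varphi(m,f),\eta_B)$ --- pullbacks in $\mathbf{SGraph}$ being computed componentwise since $\Delta$ preserves them; on vertices this exhibits $\mathrm{im}(m_V)$ as $\varphi_V^{-1}(V_B)$, and on edges it exhibits $\mathrm{im}(m_E)$ as $\varphi_E^{-1}(E_B)$ --- and that it is the unique such morphism, any competitor being forced to agree on vertices by the vertex pullback and then on edges because simple graphs have no parallel edges. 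I expect this bookkeeping --- pinning down the edge set of $T(B)$ and the edge component of $\varphi(m,f)$, and verifying the pullback and uniqueness on the edge component --- to be the only genuinely computational step and hence the main obstacle.

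Clause (S-iii) is then immediate: the empty graph $\mIO=(\emptyset,\emptyset,\emptyset\to\emptyset)$ is initial in $\mathbf{SGraph}$ because $\emptyset$ is initial in each component of the comma category and the structure map is forced, and the unique morphism $\mIO\to X$ has empty edge component, hence reflects edges vacuously and is a regular mono; thus $\mIO$ is $\cM$-initial for $\cM=\regmono{\mathbf{SGraph}}$, which completes the proof.
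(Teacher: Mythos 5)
Your proposal is correct, but it verifies a different piece of the statement than the paper does. The paper's proof is essentially ``most of this is standard'' plus one explicit construction: it shows that the epi-regular mono-factorization of $\mathbf{SGraph}$ is \emph{inherited} from the epi-mono-factorization of $\mathbf{Set}$, by factoring $f_V=m_V\circ e_V$ in $\mathbf{Set}$, applying $\Delta$, and pulling back $(\Delta(m_V),\iota')$ so that the resulting square is precisely an edge-reflecting (regular) mono while the remaining square is an epi; the partial map classifier of (S-ii) and the rest of (S-i), (S-iii) are dispatched by citation to~\cite{quasi-topos-2007,BRAATZ2011246,adamek2006}. You invert this division of labour: you obtain the factorization abstractly from the generic quasi-topos fact (Corollary~\ref{cor:main}(q-vii), ultimately~\cite{adamek2006}), and instead spend your effort verifying directly that the stated $(T,\eta)$ satisfies both clauses of the $\cM$-partial map classifier definition for $\cM=\regmono{\mathbf{SGraph}}$ --- constructing $\varphi(m,f)$ explicitly, using edge-reflection of $m$ exactly where it is needed (to guarantee that every edge outside $\mathrm{im}(m_E)$ has a $\star$-endpoint, so its image is forced), checking the componentwise pullback, and deducing uniqueness from simplicity of $T(B)$. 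Your checks are sound, including the subtle points (the vertex pullback forces $\psi_V$ to be $\star$ off $\mathrm{im}(m_V)$; edge-reflection rules out edges with both endpoints in $\mathrm{im}(m_V)$ but outside $\mathrm{im}(m_E)$). What each route buys: the paper's construction gives a concrete recipe for the factorization that is reused in examples and makes the relationship to $\mathbf{Set}$ transparent, while leaving the classifier as a citation; yours makes the genuinely load-bearing ingredient for the SqPO constructions (the classifier, which feeds Theorem~\ref{thm:FPC}) self-contained, at the price of taking the factorization on faith from the general theory. Either is acceptable; if you wanted your write-up to subsume the paper's, you could add the two-line pullback argument showing how the factorization is computed concretely from $\mathbf{Set}$.
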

\begin{proof}
While most of these results are standard, we briefly demonstrate that the epi-regular mono-factorization of $\mathbf{SGraph}$~\cite{quasi-topos-2007} is  ``inherited'' from the epi-mono-factorization of the adhesive category $\mathbf{Set}$. To this end, given an arbitrary morphism $f=(f_V,f_E)$ in $\mathbf{SGraph}$ as on the left of~\eqref{eq:epiRMfactorizationSGraph}, the epi-mono-factorization $f_V=m_V\circ e_V$ lifts via application of the diagonal functor $\Delta$ to a decomposition of the morphism $f_V\times f_V$. Pulling back $(\Delta(m_v),\iota')$ results in a span $(\tilde{\iota},f_E'')$ and (by the \kl{universal property of pullbacks}) an induced morphism $f_E'$ that makes the diagram commute. By stability of monomorphisms under pullbacks, $\tilde{\iota}$ is a monomorphism, thus the square marked $(*)$ precisely constitutes the data of a regular monomorphism in $\mathbf{SGraph}$, while the square marked $(\dag)$ is an epimorphism in $\mathbf{SGraph}$ (since $e_V\in \epi{\mathbf{Set}}$).
\qed \end{proof}

\begin{equation}\label{eq:epiRMfactorizationSGraph}
\ti{epiRMfactorizationSGraph}
\end{equation}

A number of additional examples of quasi-topoi relevant for rewriting applications is referenced in Table~\ref{tab:adh}. Most of these examples arise via so-called \emph{Artin gluing} as demonstrated in~\cite{quasi-topos-2007}.

\subsection{Explicit constructions of universal categorical operations (multi-sums, FPCs, multi-IPCs and FPAs)}\label{sec:exConstr}

The focus of this section is to provide some results on constructions (as opposed to merely existence) of some of the key concepts necessary for practically working with compositional rewriting theories.

\subsubsection{$\cM$-multi-sums}\label{sec:exConstr:mms}

An important technical ingredient for our constructions is the notion of \kl{$\cM$-multi-sum} (see Definition~\ref{def:ms}), a special case of the general theory of multi-(co-)limits due to Diers~\cite{diers1978familles}. In order to provide a more constructive version of this definition, it is in practice often useful to consider a certain notion of finiteness of objects in the underlying categories in the sense of~\cite{GABRIEL_2014}:

\begin{definition}[\cite{GABRIEL_2014}, Def.~2.8 \& Def.~4.1]
Let $\bfC$ be a category with a \kl{stable system of monics} $\cM$.
\begin{enumerate}[label=(F-\roman*)]
\item An object $A$ of $\bfC$ is said to be \AP\intro{finitary} (or finitely $\cM$-well-powered) if it has only finitely many $\cM$-subobjects up to isomorphism. Here, an \AP\intro{$\cM$-subobject} of $A$ is an $\cM$-morphism $X\rtail x \rightarrow A$, and an $\cM$-subobject $Y\rtail y \rightarrow A$ is defined to be isomorphic to $x$ if there exists an isomorphism $X-\phi\rightarrow Y$ such that $y=\phi\circ x$. $\bfC$ is a \AP\intro{finitary category} (w.r.t.\ $\cM$) if every object of $\bfC$ is \kl{finitary}.
\item The \AP\intro{finitary restriction} of $\bfC$, denoted $\bfC_{\text{fin}}$, is defined by the restriction to \kl{finitary objects} and morphisms thereof.
\end{enumerate}
\end{definition}

The importance of the work presented in~\cite{GABRIEL_2014} for constructing compositional rewriting theories is in particular that it provides an elegant method to demonstrate that finitary restrictions of suitable base categories preserve the requisite \kl{adhesivity properties}, and in addition yield categories that are guaranteed to possess a certain form of factorization system: 
\begin{theorem}[\cite{GABRIEL_2014}, Thm.~4.6 \& Fact~3.4]\label{thm:finitarityCats}
Let $\bfC$ be a \kl{vertical weak adhesive HLR category} with respect to a \kl{stable system of monics} $\cM$. Denote by $C_{\text{fin}}$ the \kl{finitary restriction} of $\bfC$.
\begin{enumerate}[label=(\roman*)]
\item $C_{\text{fin}}$ has a \kl{stable system of monics} $\cM_{\text{fin}}= \cM\cap C_{\text{fin}}$.
\item $\bfC_{\text{fin}}$ is a \kl{vertical weak adhesive HLR category} with respect to $\cM_{\text{fin}}$.
\item $C_{\text{fin}}$ is \kl{($\cE_{\text{fin}}$, $\cM_{\text{fin}}$)-structured}, where $\cE_{\text{fin}}$ denotes the class of \AP\intro(ssm){extremal morphisms} (w.r.t.\ $\cM_{\text{fin}}$) defined as\footnote{It is instructive to compare this definition to the case of a category with \kl(EMS){epi-mono-factorizations}: here, since $m\circ f$ being an epimorphism implies that $m$ is an epimorphism, then if $m$ is also a monomorphism, this indeed implies that it is an isomorphism.  However, it is important to note that as highlighted in~\cite{GABRIEL_2014}, there exist finitary categories where $\cE_{\text{fin}}$ is \emph{not} a class of epimorphisms.}
\begin{equation}
\cE_{\text{fin}} := \{ e \in \bfC_{\text{fin}} \mid \forall m,f \in \bfC_{\text{fin}}: e=m\circ f: m\in \cM_{fin} \Rightarrow m\in \iso{\bfC_{\text{fin}}}\}\,.
\end{equation}
\end{enumerate}
\end{theorem}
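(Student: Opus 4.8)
The statement is \cite{GABRIEL_2014} (Thm.~4.6 and Fact~3.4), so the plan is to reproduce the structure of that proof, organised around the one genuinely non-formal point, namely that \kl{finitary} objects are closed under the (co)limits involved. For (i) I would first record the auxiliary observation that any $\cM$-subobject $E\rtail B$ of a \kl{finitary} object $B$ is again \kl{finitary}, and more generally that a pullback $P\rtail B$ of an $\cM$-morphism formed in $\bfC$ is \kl{finitary} whenever $B$ is: every $\cM$-subobject $Y\rtail P$ composes (by closure of $\cM$ under composition) to an $\cM$-subobject $Y\rtail P\rtail B$, and since $P\rtail B$ is monic, distinct $\cM$-subobjects of $P$ remain distinct over $B$, so $P$ has at most as many $\cM$-subobjects up to isomorphism as $B$. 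Given this, the three defining properties of a \kl{stable system of monics} for $\cM_{\text{fin}}$ — inclusion of isomorphisms, closure under composition, and stability under pullback — follow at once from the corresponding properties of $\cM$ in $\bfC$ together with the fact that $\bfC_{\text{fin}}$ is a full subcategory; this same observation also yields axiom (V-i) in part (ii).

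The heart of the argument is the existence half of axiom (V-ii) for $\bfC_{\text{fin}}$: the pushout $D$ in $\bfC$ of a span $C\leftarrow A\rtail B$ of \kl{finitary} objects is again \kl{finitary}, stability of $\cM_{\text{fin}}$ under pushout then being inherited from the corresponding property of $\cM$. Here I would proceed as follows: given an $\cM$-subobject $S\rtail D$ of the pushout, form its pullbacks $S_C\rtail C$, $S_B\rtail B$ and $S_A\rtail A$ along the three morphisms into $D$ (all admissible by (V-i) since $S\rtail D$ lies in $\cM$), and assemble the evident commutative cube over the pushout square, in which all four vertical edges lie in $\cM$, the two back squares are pullbacks by pullback pasting, and the front and right squares are pullbacks by construction. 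Applying the vertical weak van Kampen property of the bottom pushout square forces the top square to be a pushout, so that $S$ is recovered up to isomorphism as the pushout of $S_C\leftarrow S_A\rtail S_B$; moreover $S_A\cong S_C\times_C A$ and the comparison $S_A\to S_B$ is forced, being the unique factorization of $S_A\rtail A\to B$ through the monomorphism $S_B\rtail B$, so $S$ is in fact determined up to isomorphism by the pair of $\cM$-subobjects $(S_C,S_B)$. As $C$ and $B$ are \kl{finitary} there are only finitely many such pairs, whence $D$ is \kl{finitary}. Axiom (V-iii) is then inherited essentially for free: $\bfC_{\text{fin}}$ is full and, by the foregoing, closed under all the pullbacks and pushouts occurring in a vertical weak van Kampen diagram, so such a diagram is a pushout (respectively pullback) in $\bfC_{\text{fin}}$ exactly when it is one in $\bfC$.

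For (iii) I would run the standard ``least $\cM$-subobject through which $f$ factors'' construction, now legitimate thanks to finiteness. Given $A\xrightarrow{f}B$, the $\cM_{\text{fin}}$-subobjects of $B$ through which $f$ factors form a family that is nonempty (it contains $id_B$), finite (since $B$ is \kl{finitary}), and closed under binary meets (pullbacks of $\cM$-subobjects exist by (V-i), and $f$ factors through them by the \kl{universal property of pullbacks}), hence has a least element $E\rtail B$; one checks $E$ is \kl{finitary}, that $f=(E\rtail B)\circ e$, and that $e\in\cE_{\text{fin}}$ because minimality of $E$ makes any $\cM_{\text{fin}}$-morphism factoring $e$ an isomorphism. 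The unique diagonalization property is the usual chase: in a commutative square with an $\cE_{\text{fin}}$-morphism on one side and an $\cM_{\text{fin}}$-morphism on the opposite side, pull the latter back along the lower edge; the universal property of the pullback produces a factorization of the $\cE_{\text{fin}}$-morphism through the resulting $\cM_{\text{fin}}$-morphism, which is therefore an isomorphism, and composing with its inverse yields the diagonal, unique because the $\cM_{\text{fin}}$-morphism is monic. Together with closure of $\cE_{\text{fin}}$ and $\cM_{\text{fin}}$ under composition with isomorphisms — immediate from the definitions — this gives the claimed factorization structure on $\bfC_{\text{fin}}$. The main obstacle in all of this is the finitariness of pushouts in part (ii), the one place where the van Kampen axiom enters essentially; everything else is bookkeeping with pullbacks and the \kl(ssm){decomposition property of $\cM$-morphisms}, and full details may be found in \cite{GABRIEL_2014}.
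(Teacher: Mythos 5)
Your proof is correct, and it reconstructs the intended argument faithfully: closure of \kl{finitary} objects under $\cM$-subobjects, finitariness of pushouts along $\cM$-morphisms via the vertical weak van Kampen property, and the least-$\cM$-subobject construction with the pullback-based diagonalization for part (iii). Note that the paper itself gives no proof of this statement --- it is imported verbatim from \cite{GABRIEL_2014} (Thm.~4.6 and Fact~3.4) --- so your write-up is essentially a correct reproduction of the approach of that reference rather than an alternative to anything in the present paper.
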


With these preparations, we may verify that as presented below, under suitable conditions on the underlying category, \kl{$\cM$-multi-sums} (as e.g.\ also considered in~\cite{lmcs:1573}) coincide with the concept referred to as \emph{($\cE'$,$\cM$)-pair factorizations} in the graph rewriting literature~\cite{ehrig:2006fund}: 

\begin{lemma}[\cite{ehrig:2006fund}; \cite{GABRIEL_2014}, Fact~A.3.7]\label{lem:constrMS}
Let $\bfC$ be a \kl{finitary} \kl{vertical weak adhesive HLR category} with respect to a \kl{stable system of monics} $\cM$, and denote by $\cE$ the class of \kl(ssm){extremal morphisms} with respect to $\cM$.
\begin{enumerate}[label=(\roman*)]
\item \AP\intro(msEM){Existence:} If $\bfC$ has binary coproducts, then every cospan of $\cM$-morphisms $A\rtail a\rightarrow Z\leftarrow b \ltail B$ factors essentially uniquely through a cospan of $\cM$-morphisms $A\rtail y_A\rightarrow Y\leftarrow y_B \ltail B$ and an $\cM$-morphism $Y\rtail m\rightarrow Z$, where $m$ is obtained via the \kl(EMS){$\cE$-$\cM$-factorization} $A+B-e\twoheadrightarrow Y \rtail m\rightarrow Z$ of the induced morphism $A+B-[a,b]\rightarrow Z$, and where $y_A=e\circ in_A$ and $y_B=e\circ in_B$.
\item \AP\intro(msEM){Construction:} if $\bfC$ in addition has an \kl(ssm){$\cM$-initial object} $\mIO$, then $\Msum{A}{B}$ consists of cospans of $\cM$-morphisms obtained as pushouts $A\rtail p_A \rightarrow P\leftarrow p_B\ltail B$ of $\cM$-spans $A\leftarrow x_A \ltail X\rtail x_B\rightarrow B$ (i.e., ``$\cM$-partial overlaps'') extended by $\cE$-morphisms $P-q\twoheadrightarrow Q$ such that $q_A=q\circ p_A$ and $q_B=q\circ p_B$ are in $\cM$.
\item \AP\intro(msEM){Refinements:} if $\bfC$ in addition \kl{has pullbacks}, and if \kl(ssm){pushouts along $\cM$-morphisms} in $\bfC$ are \kl(PO){stable under pullbacks}, then the extension morphisms $P-q\twoheadrightarrow Q$ are morphisms in $\cE\cap \mono{\bfC}$ (so-called ``refinements'').
\end{enumerate}
\end{lemma}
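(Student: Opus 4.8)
The plan is to build on the machinery of Theorem~\ref{thm:finitarityCats}, which guarantees that $\bfC$ is $(\cE,\cM)$-structured, together with the essential uniqueness of $(\cE,\cM)$-factorizations. The argument naturally splits into the three advertised parts, each building on the previous.

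\emph{Ad (i) (Existence).} Given a cospan of $\cM$-morphisms $A\rtail a\rightarrow Z\leftarrow b\ltail B$, form the coproduct $A+B$ (which exists by hypothesis) together with the induced morphism $[a,b]:A+B\rightarrow Z$. Applying the $(\cE,\cM)$-factorization yields $A+B-e\twoheadrightarrow Y\rtail m\rightarrow Z$ with $e\in\cE$ and $m\in\cM$. Setting $y_A:=e\circ in_A$ and $y_B:=e\circ in_B$, we must check $y_A,y_B\in\cM$: since $m\circ y_A = m\circ e\circ in_A = [a,b]\circ in_A = a$ and $a,m\in\cM$, the \kl(ssm){decomposition property of $\cM$-morphisms} gives $y_A\in\cM$, and symmetrically for $y_B$. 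Essential uniqueness of this factorization is then inherited directly from \kl(EMS){$(E,M)$-factorizations are essentially unique}: any other such factorization differs by a unique isomorphism of $Y$.

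\emph{Ad (ii) (Construction).} The plan is to show that the family obtained by taking pushouts of all $\cM$-spans ($\cM$-partial overlaps) out of $A$ and $B$, and then extending by $\cE$-morphisms so that both composite legs land in $\cM$, satisfies the \kl(ms){universal property of multi-sums}. First one verifies that each such cospan $A\rtail q_A\rightarrow Q\leftarrow q_B\ltail B$ does arise as an $(\cE,\cM)$-factorization as in part (i): given the pushout $A\rtail p_A\rightarrow P\leftarrow p_B\ltail B$ of $A\leftarrow x_A\ltail X\rtail x_B\rightarrow B$ and the extension $P-q\twoheadrightarrow Q$, the induced morphism $A+B\rightarrow P$ is in $\cE$ (pushouts of $\cM$-spans yield the relevant epi-like property via $\cM$-initiality of $\mIO$ and the $\cM$-partial-overlap structure — here one uses that $\mIO$ exists and is $\cM$-initial to realize $A+B$ as the pushout of $A\leftarrowtail\mIO\rightarrowtail B$), so $A+B\twoheadrightarrow P\twoheadrightarrow Q$ composes to an $\cE$-morphism by \kl(EMS){closed under composition}. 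Conversely, given an arbitrary cospan $A-f\rightarrow V\leftarrow g-B$, factor $[f,g]:A+B\rightarrow V$ as $A+B\twoheadrightarrow W\rtail m\rightarrow V$; one then exhibits $W$ as (isomorphic to) one of the family members by taking the pullback of $f,g$ along themselves to recover an $\cM$-span, pushing out, and comparing — the mediating morphism $W\rightarrow V$ is exactly $m$, and the \kl(ms){(multi-) universal property} follows from essential uniqueness of the $(\cE,\cM)$-factorization.

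\emph{Ad (iii) (Refinements).} Assuming additionally that $\bfC$ \kl{has pullbacks} and that \kl(ssm){pushouts along $\cM$-morphisms are stable under pullbacks}, the goal is to upgrade the extension morphism $P-q\twoheadrightarrow Q$ to lie in $\cE\cap\mono{\bfC}$. The plan is: take the kernel pair of $q$, or more directly pull back $q_A$ and $q_B$ to detect any non-monic behaviour; using pullback-stability of the pushout $P$, show that any failure of $q$ to be monic would force the original span $X$ not to have been a genuine $\cM$-partial overlap (i.e., $x_A,x_B$ would fail to be jointly monic in the appropriate sense), contradicting the minimality built into the construction. Concretely, one forms the pullback of $Q\leftarrow q_A- A$ and $Q\leftarrow q_B- B$; pullback-pullback decomposition together with stability of the pushout under this pullback identifies this pullback object with (a copy of) $X$, forcing $q$ to be monic on the relevant part, and extremality ($\cE$) then pins down $q\in\cE\cap\mono{\bfC}$.

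\emph{Main obstacle.} The delicate step is part (ii): correctly matching an \emph{arbitrary} cospan $A-f\rightarrow V\leftarrow g-B$ to a member of the pushout-based family, i.e., reconstructing the right $\cM$-partial overlap $X$ from $(f,g)$ so that the pushout-plus-$\cE$-extension recovers the $(\cE,\cM)$-factorization of $[f,g]$. This requires carefully using \kl{pullback-pushout decomposition} and the $\cM$-initial object to identify $A+B$ with a pushout over $\mIO$, and then threading the universal properties so that the mediating isomorphism is genuinely unique. The verification in part (iii) that the extension is monic is technically the most computation-heavy but is routine once the pullback-stability of $\cM$-pushouts is invoked.
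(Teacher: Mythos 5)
Your part (i) is correct and is exactly the paper's argument, and the overall skeleton of (ii)--(iii) (coproduct plus $\cE$-$\cM$-factorization; pullback-then-pushout-then-factorize for the converse; stability of $\cM$-pushouts under pullbacks for refinements) is the right one. The problem is that the two claims carrying all the weight are asserted rather than proved. In (ii), the crux is precisely what you relegate to a parenthesis: that for a pushout $A\rightarrowtail P\leftarrowtail B$ of an $\cM$-span $A\leftarrowtail X\rightarrowtail B$, the induced morphism $[p_A,p_B]\colon A+B\rightarrow P$ lies in $\cE$. This is not an ``epi-like property'' that falls out of realizing $A+B$ as the pushout of $A\leftarrowtail\mIO\rightarrowtail B$; it is the main content of the construction, and it is exactly where the vertical weak adhesive (van Kampen) hypothesis enters---an axiom your plan never invokes, even though it is the only reason the lemma asks for a vertical weak adhesive HLR category beyond the $(\cE,\cM)$-structure of Theorem~\ref{thm:finitarityCats}. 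The paper proves the claim by factoring $[p_A,p_B]$ as $A+B\twoheadrightarrow E\rightarrowtail P$, pulling the $\cM$-part back along $p_A$ and $p_B$ and using extremality of $id_A$, $id_B$ to see that these pullbacks are isomorphisms, forming one further pullback to get the intermediate overlap over $E$, and then applying the vertical weak VK axiom to the prism whose base is the coproduct-as-pushout square over $\mIO$: this forces the intermediate square over $E$ to be a pushout of (a copy of) the same span, so $E\rightarrowtail P$ is an isomorphism and $[p_A,p_B]\in\cE$. Without an argument of this kind both directions of your (ii) are unsupported, since your ``conversely'' step also silently reuses this fact: you need $A+B\rightarrow P\rightarrow Q$ to be in $\cE$ before essential uniqueness of $\cE$-$\cM$-factorizations can identify $W$ with a family member. (Minor point: the converse should start from a cospan of $\cM$-morphisms, not an arbitrary cospan, otherwise the pullback span need not be an $\cM$-span.)

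In (iii), the concrete step you propose---pulling back $q_A$ against $q_B$ and identifying the result with $X$---does not establish that $q$ is a monomorphism: it only controls the overlap of $A$ and $B$ inside $Q$, and passing from that to monicity of $q$ is exactly what needs proof. The paper instead runs a kernel-pair computation: it first shows, via pullback-pullback and pullback-pushout decomposition, that $(id_A,p_A)$ is a pullback of $(q_A,q)$ (and likewise for $B$); it then forms the pullback $R$ of $(q,q)$, checks that the resulting cube over the pushout square $(x_A,x_B,p_A,p_B)$ has pullback vertical faces, invokes the assumed stability of pushouts along $\cM$-morphisms under pullbacks to conclude that the top face is again a pushout, and deduces $R\cong P$; triviality of the kernel pair then gives $q\in\mono{\bfC}$. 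Your appeal to extremality to ``pin down'' $q\in\cE\cap\mono{\bfC}$ is a non sequitur: $q\in\cE$ holds by construction, and membership in $\cE$ does not imply monicity. Likewise there is no ``minimality'' built into the choice of the $\cM$-partial overlap that a failure of monicity could contradict---any $\cM$-span is admitted in the construction---so that part of the argument would not go through as stated.
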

\begin{corollary}
Every quasi-topos with $\cM$-initial object $\mIO$ has \kl{$\cM$-multi-sums} and \kl(msEM){refinements} according to Lemma~\ref{lem:constrMS}. 
\end{corollary}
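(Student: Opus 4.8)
The plan is to verify that every \kl{quasi-topos} $\bfC$ equipped with an $\cM$-initial object $\mIO$ (where $\cM = \regmono{\bfC}$) satisfies every hypothesis of Lemma~\ref{lem:constrMS}, so that the conclusion follows by a direct appeal to that lemma.

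First I would assemble the structural ingredients from Corollary~\ref{cor:main}. A quasi-topos $\bfC$ carries the \kl{stable system of monics} $\cM = \regmono{\bfC}$ by item (q-i), is a \kl{vertical weak adhesive HLR category} with respect to it by item (q-iv), and, being finitely complete and cocomplete by definition, it \kl{has pullbacks} and binary coproducts. The $\cM$-initial object $\mIO$ is supplied by hypothesis, so the hypotheses of Lemma~\ref{lem:constrMS}(i)--(ii) are in place. For Lemma~\ref{lem:constrMS}(iii) one additionally needs pullbacks (already noted) and that pushouts along $\cM$-morphisms are stable under pullbacks, which is precisely item (q-iii) of Corollary~\ref{cor:main}.

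Second I would reconcile the factorization data, since Lemma~\ref{lem:constrMS} is phrased with $\cE$ the class of \kl(ssm){extremal morphisms} relative to $\cM$, whereas Corollary~\ref{cor:main}(q-vii) speaks of the epi--regular-mono factorization. These agree in a quasi-topos: a morphism that is both an epimorphism and a regular monomorphism is an equalizer of an identity pair, hence an isomorphism, so every epimorphism is extremal relative to $\cM$; conversely, an extremal morphism $e$ has an epi--regular-mono factorization $e = m \circ \bar{e}$ with $m \in \cM$, whence $m \in \iso{\bfC}$ by extremality and $e$ is itself an epimorphism. Thus $\cE = \epi{\bfC}$, and the pair $(\cE, \cM)$ is exactly the factorization structure furnished by item (q-vii).

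Third, and this is the only point requiring care rather than bookkeeping, I would address the clause of Lemma~\ref{lem:constrMS} asking $\bfC$ to be \kl{finitary}, which a general quasi-topos need not be. The clean route is to observe that finitariness enters the proof of Lemma~\ref{lem:constrMS} solely to secure the $(\cE,\cM)$-factorization via Theorem~\ref{thm:finitarityCats}(iii); every remaining step of that proof uses only the vertical weak adhesive HLR axioms, pullbacks and pushouts along $\cM$-morphisms, the $\cM$-initial object, Corollary~\ref{cor:adhPOPB}, and---for refinements---stability of $\cM$-pushouts under pullbacks. Since in a quasi-topos the factorization structure is available unconditionally with $\cE = \epi{\bfC}$ by the second step, the argument of Lemma~\ref{lem:constrMS} applies verbatim. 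Alternatively one may pass to the \kl{finitary restriction} $\bfC_{\text{fin}}$, invoking Theorem~\ref{thm:finitarityCats} to know it is again a finitary vertical weak adhesive HLR category with the required factorization, and checking that binary coproducts, the $\cM$-initial object and the pullback-stability of $\cM$-pushouts are inherited. Either way, once this is settled the existence and construction of \kl{$\cM$-multi-sums} and the sharpening to \kl(msEM){refinements} are immediate from Lemma~\ref{lem:constrMS}(i)--(iii).
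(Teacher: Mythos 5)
Your proposal is correct and follows essentially the same route the paper intends: the corollary is meant to be immediate from Lemma~\ref{lem:constrMS} once one notes, via Corollary~\ref{cor:main}, that a quasi-topos is a vertical weak adhesive HLR category with $\cM=\regmono{\bfC}$, has pullbacks, binary coproducts and pullback-stable pushouts along $\cM$-morphisms, and that its epi-regular-mono factorization plays the role of the $\cE$-$\cM$-factorization (so $\cE=\epi{\bfC}$, which are indeed the $\cM$-extremal morphisms). Your explicit verification that $\cE=\epi{\bfC}$ and your observation that finitariness enters Lemma~\ref{lem:constrMS} only through securing the factorization are exactly the details the paper leaves implicit, and they are handled correctly.
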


\begin{figure}[ht]
\centering
    \includegraphics[scale=0.5]{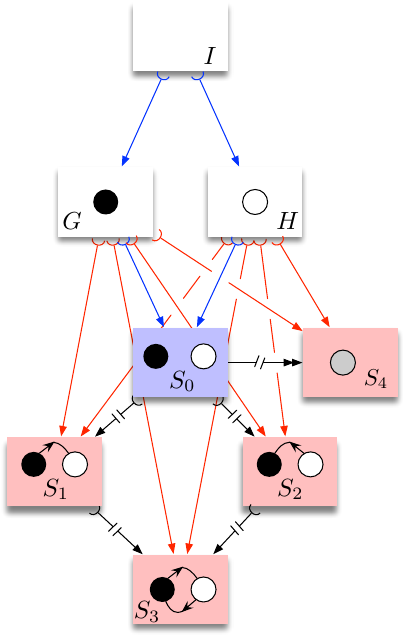}
\caption{\label{fig:msExample}Example for an \kl{$\cM$-multi-sum} $\Msum{\bullet}{\bullet}$ of one-vertex graphs in \kl{$\mathbf{SGraph}$} (i.e., for $\cM=\regmono{\mathbf{SGraph}}$). Note in particular the monic-epis that extend the two-vertex graph $S_0$ into the graphs $S_1$, $S_2$ and $S_3$, all of which have the same vertices as $S_0$ (recalling that a morphism in $\mathbf{SGraph}$ is monic/epic if it is so on vertices), yet additional edges, so that in particular none of the morphisms $S_0\rightarrow S_j$ for $j=1,2,3$ is edge-reflecting.}
\end{figure}
Since in an adhesive category all monomorphisms are regular~\cite{ls2004adhesive}, in this case the multi-sum construction simplifies to the statement that every monic cospan can be uniquely factorized as a cospan obtained as the pushout of a monic span composed with a monomorphism. It is however worthwhile emphasizing that for generic quasi-topoi $\bfC$ one may have $\cM\neq \mono{\bfC}$, as is the case in particular for the quasi-topos $\mathbf{SGraph}$ of simple graphs. We illustrate this phenomenon in Figure~\ref{fig:msExample}, via presenting the multi-sum construction for $A=B=\bullet$.  

\subsubsection{FPCs along $\cM$-morphisms}\label{sec:mFPCs}

Let us now turn to the question of the existence of FPCs. To this end, it will prove useful to recall from~\cite{Corradini_2015} the following constructive result:

\begin{theorem}[\cite{Corradini_2015}, Thm.~1]\label{thm:FPC}
For a category $\bfC$ with \kl{$\cM$-partial map classifier} $(T,\eta)$, the \kl{final pullback complement (FPC)} of a composable sequence of arrows $A\xrightarrow{f}B$ and $B\xrightarrow{m}C$ with $m\in \cM$ is guaranteed to exist, and is constructed via the following algorithm:
\begin{enumerate}
\item Let $\bar{m}:=\varphi(m,id_B)$ (i.e., the morphism that exists by the universal property of $(T,\eta)$, cf.\ square $(1)$ below).
\item Construct $T(A)\xleftarrow{\bar{n}}F\xrightarrow{g}C$ as the pullback of $T(A)\xrightarrow{T(f)}T(B)\xleftarrow{\bar{m}}C$ (cf.\ square $(2)$ below); by the \kl{universal property of pullbacks}, this in addition entails the existence of a morphism $A\xrightarrow{n}F$.
\end{enumerate}
Then $(n,g)$ is the FPC of $(f,m)$, and $n$ is in $\cM$.
\end{theorem}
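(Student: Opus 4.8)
The plan is to follow the construction verbatim and verify three things in turn: that the square it produces is a pullback, that $n\in\cM$, and that it satisfies the finality property characterising \kl{FPCs} (being terminal among pullback complements of $(f,m)$). All of the real work is done by the two structural facts recalled just above --- that $T$ preserves pullbacks and that $\eta$ is Cartesian (\cite{Corradini_2015}, Prop.~6) --- together with the defining pullback property of the classifying morphisms $\varphi(-,-)$.

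First I would fix notation. Step~1 yields $\bar m=\varphi(m,id_B):C\to T(B)$ for which square~$(1)$, the square with corners $B,B,C,T(B)$ and edges $id_B,\eta_B,m,\bar m$, is a pullback; in particular $\bar m\circ m=\eta_B$. Step~2 yields the pullback square~$(2)$, producing $F$ together with $\bar n:F\to T(A)$ and $g:F\to C$ satisfying $T(f)\circ\bar n=\bar m\circ g$. The morphism $n:A\to F$ is the arrow into the pullback~$(2)$ induced by the pair $\bigl(\eta_A:A\to T(A),\ m\circ f:A\to C\bigr)$, which is legitimate since $T(f)\circ\eta_A=\eta_B\circ f=\bar m\circ m\circ f$ (naturality of $\eta$, then $\bar m\circ m=\eta_B$); hence $\bar n\circ n=\eta_A$ and $g\circ n=m\circ f$, the latter being commutativity of the candidate FPC square. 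To see that this square (corners $A,B,F,C$, edges $f,m,n,g$) is a pullback, it suffices to identify the pullback $F\times_C B$ of $(g,m)$ with $A$: this is a short diagram chase --- conveniently done with generalised elements, since representable functors preserve pullbacks --- in which a pair $(x,b)$ of $F\times_C B$ (with $x$ landing in $T(A)$, $b$ in $B$, and $T(f)(x)=\bar m(m(b))$) is, using $\bar m\circ m=\eta_B$, precisely an element of $T(A)\times_{T(B)}B$, which is $A$ because $\eta$ is Cartesian; the two projections then become $n$ and $f$. Equivalently one pastes square~$(1)$, square~$(2)$ and the Cartesian naturality square of $\eta$ at $f$ and applies \kl{pullback-pullback decomposition}. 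Since the FPC square thus exhibits $n$ as a pullback of $m\in\cM$ along $g$, and $\cM$ is a \kl{stable system of monics}, \kl(ssm){$\cM$-morphisms are stable under pullback} gives $n\in\cM$.

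For the \kl{universal property of FPCs}, take any pullback complement $(n':A\to F',\,g':F'\to C)$ of $(f,m)$, so $g'\circ n'=m\circ f$ and the square with edges $f,m,n',g'$ is a pullback --- whence $n'\in\cM$ by the same stability argument --- and produce a unique $t:F'\to F$ with $t\circ n'=n$ and $g\circ t=g'$. Set $h:=\varphi(n',id_A):F'\to T(A)$. The crucial compatibility $T(f)\circ h=\bar m\circ g'$ holds because both sides are classifying morphisms of the same partial map $(n',f)$: pasting the classifier square for $h$ with the Cartesian naturality square of $\eta$ at $f$ exhibits $(n',f)$ as a pullback of $(T(f)\circ h,\eta_B)$, while pasting the given pullback-complement square with square~$(1)$ exhibits it as a pullback of $(\bar m\circ g',\eta_B)$, and the classifying morphism is unique. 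Hence the universal property of square~$(2)$ furnishes $t:F'\to F$ with $\bar n\circ t=h$ and $g\circ t=g'$; comparing composites with the jointly monic projections $(\bar n,g)$ of~$(2)$, and using $h\circ n'=\eta_A=\bar n\circ n$ and $g'\circ n'=m\circ f=g\circ n$, yields $t\circ n'=n$.

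The remaining point --- and the one I expect to be the main obstacle --- is uniqueness of $t$. Given another $t'$ with $g\circ t'=g'$ and $t'\circ n'=n$, one shows that $\bar n\circ t'$ again has the defining pullback property of $\varphi(n',id_A)$: form the pullback of $(\bar n\circ t',\eta_A)$, note that the pullback of $\eta_A\in\cM$ inside it is again in $\cM$, so (by the \kl(ssm){decomposition property of $\cM$-morphisms}) the comparison arrow from $A$ is in $\cM$, and then use the pullback property of square~$(1)$ together with that of the given pullback-complement square to produce an inverse to this comparison arrow, making it an isomorphism. Then $\bar n\circ t'=h=\bar n\circ t$ and $g\circ t'=g'=g\circ t$, so joint monicity of $(\bar n,g)$ forces $t'=t$. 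This argument invokes the uniqueness clause of the $\cM$-partial map classifier twice and needs careful bookkeeping of which pasted squares are pullbacks and of the $\cM$-membership of the auxiliary morphisms; once that is under control, everything else is routine, resting only on the universal properties of squares~$(1)$ and~$(2)$ and on $\eta$ being Cartesian.
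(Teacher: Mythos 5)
Your construction, the verification that the candidate square is a pullback, and the conclusion $n\in\cM$ are all correct, and your route --- the classifier squares, the Cartesianness of $\eta$, and the universal properties of squares $(1)$ and $(2)$ --- is the standard one behind the cited result (the paper itself gives no proof of Theorem~\ref{thm:FPC}, deferring to Corradini et al.). In particular, stacking the candidate square on square $(2)$ so that the outer rectangle becomes the Cartesian naturality square of $\eta$ at $f$ (using $\bar m\circ m=\eta_B$ and $\bar n\circ n=\eta_A$) and applying pullback decomposition is exactly right, as is deducing $n\in\cM$ by stability.

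The genuine gap is in the finality clause: you only compare $(n,g)$ against pullback complements of the pair $(f,m)$ itself, i.e.\ squares with the same top object $A$, the same top arrow $f$, and the identity as comparison morphism. The universal property of final pullback complements used in this paper (Lemma~\ref{lem:UP}(3), and invoked e.g.\ in Definition~\ref{def:FPA} and in Theorem~\ref{thm:trgtFPC-FP}) is stronger: for \emph{every} pullback square $m\circ f'=g'\circ n'$ with $f':A'\to B$, $n':A'\to F'$, $g':F'\to C$, and \emph{every} comparison $\alpha:A'\to A$ with $f\circ\alpha=f'$, there must exist a unique $\beta:F'\to F$ with $g\circ\beta=g'$ and $\beta\circ n'=n\circ\alpha$. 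Terminality among pullback complements of $(f,m)$ with identity comparison is a weaker statement and is not the form needed downstream (there one mediates out of composite pullbacks whose top edge is $f\circ x$, not $f$). Fortunately your own machinery closes the gap essentially verbatim: $n'\in\cM$ by stability, so set $h:=\varphi(n',\alpha)$; the same two pastings (the classifier square of $h$ followed by the Cartesian square at $f$, versus the given pullback square followed by square $(1)$) exhibit both $T(f)\circ h$ and $\bar m\circ g'$ as classifying morphisms of the partial map $(n',f')$, hence they are equal, and square $(2)$ yields $\beta$ with $\bar n\circ\beta=h$ and $g\circ\beta=g'$; then $\beta\circ n'=n\circ\alpha$ follows from joint monicity of $(\bar n,g)$, since $h\circ n'=\eta_A\circ\alpha$. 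Your uniqueness scheme also carries over, but note that at the point where you identify the second leg of the comparison (e.g.\ concluding $\alpha\circ w=v$, respectively $l=p_2$ in your notation) you additionally need monicity of $\eta_A$ (or joint monicity of the relevant pullback projections) before joint monicity of $(\bar n,g)$ finishes the argument; please state the universal property in its general form and run the argument with $\alpha$ in place of $id_A$.
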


\begin{equation}\label{eq:M-FPC}
\ti{FPC}
\end{equation}

An interesting (and to the best of our knowledge open) question that arises in considering (input-) linear Sesqui-Pushout (SqPO) rewriting semantics concerns the nature of the morphism $g$ in the above definition of FPCs under the constraint that $f$ is an $\cM$-morphism, which we refer to as \AP\intro(ssm){stability of $\cM$-morphisms under FPCs}. We may provide a first partial answer for the case where $\cM=\mono{\bfC}$:
\begin{lemma}\label{lem:FPCauxMonoStab}
Let $\bfC$ be a category with a \kl{stable system of monics} $\cM=\mono{\bfC}$ that comprises all monomorphisms of $\bfC$, that \kl(ssm){has pullbacks along monomorphisms}, and that has a \kl{$\mono{\bfC}$-partial map classifier}. Then an FPC $A-n\rightarrow F-g\rightarrow C$ of a composable sequence of monomorphisms $A\rtail f\rightarrow B\rtail m\rightarrow C$ satisfies the property that both $n$ and $g$ are monomorphisms.
\end{lemma}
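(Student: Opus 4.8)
The plan is to read the conclusion off the explicit construction of FPCs recalled in Theorem~\ref{thm:FPC}, together with the quoted fact (\cite[Prop.~6]{Corradini_2015}) that a partial map classifier functor $T$ preserves pullbacks. For $n$ there is literally nothing to do: Theorem~\ref{thm:FPC} already asserts $n\in\cM$, and since $\cM=\mono{\bfC}$ by hypothesis, this says precisely that $n$ is a monomorphism. So the whole argument reduces to showing $g\in\mono{\bfC}$.

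The first step I would carry out is the auxiliary observation that $T$ preserves monomorphisms. This is purely formal: a morphism $f:A\to B$ of $\bfC$ is a monomorphism if and only if the commutative square with all four edges equal to $id_A,id_A,f,f$ (i.e.\ with $(id_A,id_A)$ a pullback of $(f,f)$) is a pullback; since $T$ preserves pullbacks, applying $T$ to this square exhibits $(id_{T(A)},id_{T(A)})$ as a pullback of $(T(f),T(f))$, whence $T(f)\in\mono{\bfC}$. In particular, from $f\in\mono{\bfC}$ we obtain $T(f):T(A)\rightarrowtail T(B)$ in $\mono{\bfC}$.

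The second step is to invoke square~$(2)$ of~\eqref{eq:M-FPC} in the construction of Theorem~\ref{thm:FPC}: there $g:F\to C$ is obtained as the pullback of $T(f):T(A)\to T(B)$ along $\bar m=\varphi(m,id_B):C\to T(B)$. Since in any category the pullback of a monomorphism along an arbitrary morphism is again a monomorphism, and $T(f)$ is a monomorphism by the first step, it follows that $g\in\mono{\bfC}$, completing the proof. (The relevant pullback exists under the stated hypotheses exactly because $T(f)$ is monic and $\bfC$ \kl(ssm){has pullbacks along monomorphisms}, which is in turn what makes the algorithm of Theorem~\ref{thm:FPC} applicable at all.)

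I do not expect a genuine obstacle here: the only slightly delicate point is the preservation-of-monos lemma, but it is immediate from preservation of pullbacks and requires no further assumptions. What does require care is to use only the hypothesis $\cM=\mono{\bfC}$ and not a smaller class, since ``pullback of an $\cM$-morphism is an $\cM$-morphism'' is an axiom (\kl(ssm){stability of $\cM$-morphisms under pullback}) rather than a theorem for a proper stable system of monics; this is precisely why $\cM=\mono{\bfC}$ is assumed here, and why the general statement of \kl(ssm){stability of $\cM$-morphisms under FPCs} is only a partial result in the surrounding discussion.
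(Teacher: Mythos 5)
Your proposal is correct, but it takes a genuinely different route from the paper's proof. The paper establishes that $g$ is monic \emph{abstractly}, directly from the universal property of FPCs: given $h_1,h_2:H\to F$ with $g\circ h_1=g\circ h_2$, it pulls back the common composite along $m$, uses that $f$ is a monomorphism (so that $(id_A,id_A)$ is a pullback of $(f,f)$) to assemble the data of a pullback over the FPC, and then the uniqueness clause of the \kl{universal property of FPCs} forces $h_1=h_2$; the partial map classifier enters only to guarantee that the FPC exists at all. You instead read the conclusion off the explicit construction of Theorem~\ref{thm:FPC}: $T$ preserves pullbacks, hence monomorphisms (your square argument for this is correct), so $T(f)$ is monic, and $g$, being a pullback of $T(f)$ along $\bar m$, is monic by stability of monomorphisms under pullback; the claim for $n$ is immediate from the theorem's assertion $n\in\cM$ together with $\cM=\mono{\bfC}$. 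Both arguments are sound. What the paper's version buys is independence from the concrete construction: it applies verbatim to any FPC of $(f,m)$. Your version is shorter and isolates a reusable fact ($T$ preserves monos), but since the lemma speaks of \emph{an} FPC rather than \emph{the constructed} one, you should add the one-line remark that FPCs are essentially unique (being defined by a universal property), so that monomorphy of $n$ and $g$ transfers from the constructed FPC to any other along the mediating isomorphism; with that said explicitly, your argument is complete.
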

\begin{proof}
The property that $n\in \mono{\bfC}$ follows from the \kl(ssm){stability of monomorphisms under pullback}, hence it remains to show that $g$ is a monomorphism. This is by definition equivalent to showing that for arbitrary morphisms $H-h_1\rightarrow F$ and $H-h_2\rightarrow F$ such that $g\circ h_1=g\circ h_2$, it follows that $h_1=h_2$. To this end, let $d_H=g\circ h_1=g\circ h_2$, and obtain the span $\langle p,p_B\rangle$ via taking a pullback of the cospan $\rangle d_H, m\langle$:
\begin{equation}
\ti{proofLemMonoFPC}
\end{equation}
\begin{itemize}
\item By the \kl{universal property of pullbacks}, there exist unique morphisms $p_A'$ and $p_A''$ that make the front and back parts of the diagram commute, and since $f$ is a monomorphism and thus the span $\langle id_A,id_A\rangle$ is a pullback of the cospan $\rangle f,f\langle$, there also exists a unique morphism $p_A$ that makes the inner part of the diagram commute. The latter entails in particular that $p_A=p_A'=p_A''$.
\item Since by assumption $(n, g)$ is an FPC of $(f,m)$, by the \kl{universal property of FPCs} there exists a unique morphism $H-h\rightarrow F$ that makes the inner diagram commute, and such that in particular $h\circ g=h_1\circ g=h_2\circ g$. Thus uniqueness entails that $h=h_1=h_2$, which proves the claim.
\end{itemize}
\end{proof}

The above result ensures in particular that \kl{adhesive categories} with mono-partial map classifiers yield suitable ``linear'' SqPO rewriting semantics (as already noted in~\cite{Corradini_2006}). For more general scenarios, we quote here the following result from the literature on restriction categories:
\begin{proposition}[\cite{COCKETT200361}, Prop.~4.16]
Let $\bfC$ be a category with a \kl{stable system of monics} $\cM$ and an \kl{$\cM$-partial map classifier} $(T,\eta)$. Then $T(m)\in \cM$ for all $m\in \cM$ iff $T(\eta_C)\in \cM$ for all objects $C$ in $\bfC$.
\end{proposition}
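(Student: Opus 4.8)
The plan is to treat the two implications asymmetrically, as one of them is essentially trivial. For the direction ``$Tm\in \cM$ for all $m\in \cM$ $\Rightarrow$ $T(\eta_C)\in \cM$ for all $C$'', it suffices to recall that, by the very definition of an \kl{$\cM$-partial map classifier}, each component $\eta_C:C\rightarrow T(C)$ already belongs to $\cM$; applying the hypothesis to $m=\eta_C$ then gives $T(\eta_C)\in \cM$ at once. Everything substantive lies in the converse.

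So assume $T(\eta_C)\in \cM$ for every object $C$, and fix an arbitrary $m:A\rightarrowtail B$ in $\cM$; the goal is $Tm\in \cM$. The key idea is to exhibit $Tm$ as a pullback of a morphism of the form $T(\eta_{-})$. First I would invoke the universal property of the \kl{$\cM$-partial map classifier} at the span $B\xleftarrow{m}A\xrightarrow{id_A}A$ — legitimate precisely because it is $m$, not $id_A$, that is required to lie in $\cM$ — obtaining the classifying morphism $\bar m:=\varphi(m,id_A):B\rightarrow T(A)$ together with the fact that the square with apex $A$, projections $id_A:A\to A$ and $m:A\to B$, and cospan $B\xrightarrow{\bar m}T(A)\xleftarrow{\eta_A}A$, is a pullback. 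This is exactly step~$(1)$ of the FPC construction recalled in Theorem~\ref{thm:FPC}.

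Next I would transport this pullback square through $T$. Since $T$ preserves pullbacks for any \kl{$\cM$-partial map classifier} (the Proposition of~\cite{Corradini_2015} quoted above), the image square — with apex $T(A)$, projections $id_{T(A)}:T(A)\to T(A)$ and $Tm:T(A)\to T(B)$, and cospan $T(B)\xrightarrow{T(\bar m)}T(T(A))\xleftarrow{T(\eta_A)}T(A)$ — is again a pullback. In particular, the projection $Tm$ is obtained by pulling $T(\eta_A)$ back along $T(\bar m)$. As $T(\eta_A)\in \cM$ by assumption, and $\cM$, being a \kl{stable system of monics}, is in particular stable under pullback, we conclude $Tm\in \cM$, which finishes the proof.

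I do not expect any real obstacle: once the right instantiation is in place, the whole argument is a single diagram chase. The only genuinely delicate points are (i) recognising that the partial-map-classifier property may be invoked at the degenerate span $(m,id_A)$ at all, and (ii) keeping track, after applying $T$, of which leg of the transported pullback square is the one being pulled back from $T(\eta_A)$ — namely $Tm$ itself, and not the trivial projection $id_{T(A)}$ — so that stability under pullback applies to the morphism we actually care about.
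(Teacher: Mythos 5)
Your proof is correct. Note that the paper itself gives no proof of this statement — it is quoted verbatim from Cockett--Lack (Prop.~4.16) — so there is nothing internal to compare against; your argument is a sound, self-contained derivation using only facts already available in the paper: the easy direction follows from $\eta_C\in\cM$ (condition (1) of the classifier definition), and for the converse you correctly classify $m$ via $\bar m=\varphi(m,id_A)$, apply the quoted fact that $T$ preserves pullbacks, and read off $Tm$ as the pullback of $T(\eta_A)$ along $T(\bar m)$, so that stability of $\cM$ under pullback yields $Tm\in\cM$.
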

We suspect that the above result should provide a method for determining whether or not, e.g., comma categories constructed from categories with partial map classifiers that preserve the respective stable systems of monics will have the sought-after stability properties of monics under FPCs, as is the case for the example of \kl{directed simple graphs}, yet we leave further investigations to future work.

\subsubsection{Multi-initial pushout complements}\label{sec:exConstr:mIPCs}

While the standard literature on graph rewriting (cf.\ \cite{ehrig:2006fund}) provides some examples of explicit constructions of pushout complements, we require in general a construction of \kl{multi-initial pushout complements}, as in Definition~\ref{def:mipc}, for the ``non-linear'' variants of rewriting semantics. In this subsection, we demonstrate sufficient conditions under which  a construction for such mIPCs can be made explicit.

\begin{proposition}\label{prop:constrMIPC}
Let $\bfC$ be a category with a \kl{stable system of monics} $\cM$ that \kl(ssm){has pushouts along $\cM$-morphisms} and that possesses an \kl{$\cM$-partial map classifier} $(T,\eta)$.
Assume further that \kl(ssm){pushouts along $\cM$-morphisms} are pullbacks in $\bfC$. Then for any composable sequence of morphisms $A-f\rightarrow B\rtail m \rightarrow C$ with $m$ in $\cM$, the following construction provides $\mIPC{f}{m}$:
\begin{enumerate}
\item Take the \kl{FPC} $A-n\rightarrow F-g\rightarrow C$ of $(f,m)$ (cf.\ diagram $(i)$ in~\eqref{eq:mIPCconstruction} below).
\item For every factorization $A\rtail n'\rightarrow F' -n''\rightarrow F$ of $A\rtail n\rightarrow F$, where $n'$ is an $\cM$-morphism, take a pushout $F'-f'\rightarrow C'\leftarrow m' \ltail B$, which by the \kl{universal property of pushouts} entails that there exists a universal morphism $C'-m''\rightarrow C$. If $m''$ is an isomorphism, then $A\rtail n' \rightarrow F' - m''\circ f'\rightarrow C$ is an element of $\mIPC{f}{m}$.
\end{enumerate}
\end{proposition}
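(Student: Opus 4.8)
The plan is to establish that ``the construction provides $\mIPC{f}{\beta}$'' by proving three things: \emph{soundness}, that every composable pair output by the procedure lies in the class $\mIPC{f}{\beta}$ of~\eqref{eq:def:mIPC}; \emph{completeness}, that conversely every element of $\mIPC{f}{\beta}$ is, up to isomorphism, produced by the procedure; and \emph{universality}, that the resulting family satisfies the multi-universal property~\eqref{def:mIPC:UP}. Throughout I fix the \kl{final pullback complement (FPC)} $A\xrightarrow{\alpha}F\xrightarrow{f'}B'$ of the composable pair $A\xrightarrow{f}B\xrightarrow{\beta}B'$, which exists with $\alpha\in\cM$ by Theorem~\ref{thm:FPC}; since every FPC is in particular a pullback, $\bfC$ \kl(ssm){has pullbacks along $\cM$-morphisms}, and the construction is independent of the chosen FPC because these are unique up to isomorphism. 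For soundness I would fix a factorisation $\alpha=\gamma\circ\alpha'$ with $\alpha'\in\cM$, form the pushout $C\xrightarrow{g}P\xleftarrow{\delta}B$ of the span $C\xleftarrow{\alpha'}A\xrightarrow{f}B$ (so $\delta\in\cM$ by \kl(ssm){stability of $\cM$-morphisms under pushout}), and note that the identity $f'\circ\gamma\circ\alpha'=f'\circ\alpha=\beta\circ f$ makes the relevant outer square commute, so the \kl{universal property of pushouts} produces the unique $p\colon P\to B'$ with $p\circ g=f'\circ\gamma$ and $p\circ\delta=\beta$. If $p$ is an isomorphism then $C\xrightarrow{p\circ g}B'\xleftarrow{\beta}B$ is again a pushout of that span, hence the square with legs $(\alpha',p\circ g)$ and $(f,\beta)$ is a pushout and $\alpha'\in\cM$, i.e.\ $(\alpha',p\circ g)\in\mIPC{f}{\beta}$.

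For completeness, take $(\alpha^{\star}\colon A\to A^{\star},\ g^{\star}\colon A^{\star}\to B')\in\mIPC{f}{\beta}$, so $\alpha^{\star}\in\cM$ and the square with legs $(\alpha^{\star},g^{\star})$ and $(f,\beta)$ is a pushout. Because $\beta\in\cM$ and, by hypothesis, pushouts along $\cM$-morphisms are pullbacks in $\bfC$, this square is also a pullback, hence a pullback complement of the composable pair; the \kl{universal property of FPCs} then yields a unique $\gamma^{\star}\colon A^{\star}\to F$ with $\gamma^{\star}\circ\alpha^{\star}=\alpha$ and $f'\circ\gamma^{\star}=g^{\star}$. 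Thus $\alpha=\gamma^{\star}\circ\alpha^{\star}$ is a legitimate input factorisation for step~2, and when the procedure is run on it the pushout $P$ of $A^{\star}\xleftarrow{\alpha^{\star}}A\xrightarrow{f}B$ agrees, up to the canonical comparison isomorphism, with $B'$ equipped with $g^{\star}$ and $\beta$ (which is itself such a pushout); by uniqueness in the \kl{universal property of pushouts} this comparison isomorphism is precisely the induced morphism $p$, so $p$ is an isomorphism and the output $(\alpha^{\star},p\circ g)$ equals $(\alpha^{\star},g^{\star})$. Hence the procedure enumerates exactly the class $\mIPC{f}{\beta}$ up to isomorphism.

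It remains to verify~\eqref{def:mIPC:UP}, and the plan is to transport the \kl{universal property of FPCs} across the identification just obtained. Given the encompassing-pushout datum appearing in~\eqref{def:mIPC:UP}, I would reorganise it --- using once more that pushouts along $\cM$-morphisms are pullbacks --- into a pullback datum to which the FPC of $(f,\beta)$ responds with a unique mediating morphism; the resulting data singles out, through a factorisation of the FPC leg $\alpha$, the distinguished member of the family demanded by the existential part of~\eqref{def:mIPC:UP}. For the essential-uniqueness part, given two members of the family that dominate the same datum, the two induced factorisations of the single $\cM$-morphism $\alpha$ through their respective objects are to be compared using the \kl{universal property of pullbacks} together with the essential uniqueness of pushouts, which produces the unique mediating isomorphism required in~\eqref{def:mIPC:UP}. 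I expect this last step to be the main obstacle: it demands unwinding the precise shape of the encompassing-pushout premise and threading it carefully through both the FPC universal property and essential uniqueness of pushouts, and it is exactly here that the hypothesis that pushouts along $\cM$-morphisms are pullbacks does its real work, since without it a pushout complement need not be a pullback complement and the FPC would simply not ``see'' it. Finally, because $\bfC$ is \kl{finitary} in all of our applications, the \kl{$\cM$-partial map classifier} makes both the FPC and the whole family computable up to isomorphism: one enumerates the finitely many factorisations of $\alpha$ through $\cM$-subobjects and retains those for which the induced morphism $p$ is an isomorphism.
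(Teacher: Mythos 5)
Your soundness and completeness arguments coincide with the paper's own proof: there, too, the key step is that since pushouts along $\cM$-morphisms are pullbacks, any element of $\mIPC{f}{\beta}$ is also a pullback complement of $(f,\beta)$, so the universal property of the FPC yields the unique comparison morphism into $F$, and hence every element of the class arises from a factorization of $\alpha$ as in diagram $(i)$ of~\eqref{eq:mIPCconstruction}, the converse direction being immediate from the construction. Two minor remarks: your appeal to stability of $\cM$-morphisms under pushout is not among the stated hypotheses, but it is also not needed (membership in $\mIPC{f}{\beta}$ only requires $\alpha'\in\cM$ and that the square with boundary $(f,\beta)$ be a pushout); and the multi-universal property~\eqref{def:mIPC:UP}, which you leave as a sketch, is not established in the paper's proof of this proposition either---it is handled by Proposition~\ref{prop:mIPCexistence} under its own hypotheses---so its incompleteness is not a gap relative to the paper.
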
 
\begin{equation}\label{eq:mIPCconstruction}
\ti{mIPCconstruction}\qquad\qquad 
\ti{mIPCconstructionProof}
\end{equation}
\begin{proof} Since $\bfC$ is assumed to possess an \kl{$\cM$-partial map classifier}, by Theorem~\ref{thm:FPC} this entails that $\bfC$ \kl(ssm){has FPCs along $\cM$-morphisms}. %
For any element of $\mIPC{f}{m}$, i.e., for a pushout square as the vertical back square in diagram $(ii)$ of~\eqref{eq:mIPCconstruction}, if $(n, g)$ is an FPC of $(f,m)$, then by the \kl{universal property of FPCs} there exists a unique morphism $F'-\gamma\rightarrow F$ that makes the diagram commute. Consequently, up to essential uniqueness, every pushout square that is an element of $\mIPC{f}{\beta}$ fits a diagram of shape $(i)$ in~\eqref{eq:mIPCconstruction}.
\qed\end{proof}

The above construction is typically assumed to be applied in situations where the underlying category is \kl{finitary} with respect to a \kl{stable system of monics} $\cM$. According to~\cite[Prop.~14.9(2)]{adamek2006}, a sufficient condition for a composition of morphisms $f\circ g$ that is an $\cM$-morphism to also satisfy that $g\in \cM$ is that $f$ is a monomorphism (not necessarily in $\cM$). In concrete examples (see below), it seems that all factorizations used in constructions of mIPCs in the finitary setting are of the aforementioned form, yet it remains unclear to us whether this is in fact the only possible situation, hence we defer a full investigation of this point to future work.

An example of an \kl{$\cM$-multi-IPC} construction both in \kl{$\mathbf{SGraph}$} and in \kl{$\mathbf{Graph}$} is given in the diagram below. Note that in \kl{$\mathbf{Graph}$}, the \kl{$\cM$-multi-IPC} does not contain the FPC contribution (since in \kl{$\mathbf{Graph}$} the pushout of the relevant span would yield to a graph with a multi-edge).
\begin{equation}\label{eq:FPCexamples}
\vcenter{\hbox{\includegraphics[scale=0.5]{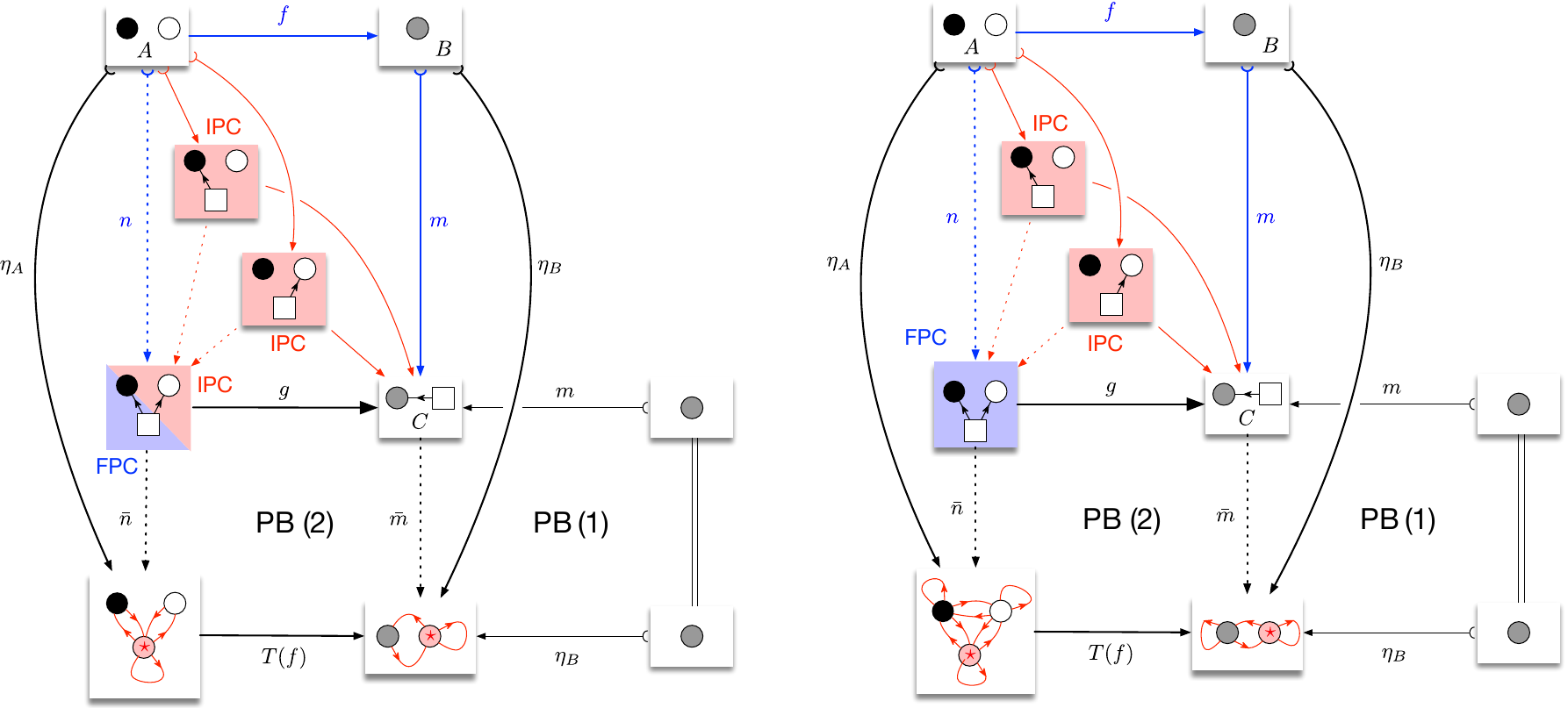}}}
\end{equation}

We conclude this subsection by mentioning the following useful result, which clarifies that under suitable assumptions on the underlying category, mIPCs along sequences of composable $\cM$-morphisms are essentially unique when they exist (in which case it is customary to speak of (``non-multi-'') \emph{pushout complements}, compare e.g.\ \cite[Fact~4.27.4]{ehrig:2006fund})]

\begin{proposition}\label{prop:POFPCs}
Let $\bfC$ be a category with a \kl{stable system of monics} $\cM$ and with an \kl{$\cM$-partial map classifier} $(T,\eta)$, such that \kl(ssm){pushouts along $\cM$-morphisms are stable under $\cM$-pullbacks}, such that \kl(ssm){pushouts along $\cM$-morphisms} are pullbacks, and such that \kl(ssm){$\cM$-morphisms are stable under FPCs}. Then if the \kl{$\cM$-multi-initial pushout complement} $\mIPC{f}{\beta}$ for a composable sequence of $\cM$-morphisms $A\rtail f\rightarrow B\rtail \beta \rightarrow B'$ (i.e., where both $f$ and $\beta$ are in $\cM$) is non-empty, then $\mIPC{f}{\beta}$ is essentially unique, and moreover every element of $\mIPC{f}{\beta}$ yields a square that is both a pushout and an FPC. 
\end{proposition}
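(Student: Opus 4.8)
The plan is to reduce this to the classical fact that, under the present hypotheses, a pushout complement of a composable pair of $\cM$-morphisms whose first leg also lies in $\cM$ is forced to be isomorphic to the \emph{final} pullback complement of that pair. Since $\bfC$ has an $\cM$-partial map classifier and $\beta\in\cM$, Theorem~\ref{thm:FPC} produces the FPC of $(f,\beta)$; write it $A\xrightarrow{\bar\alpha}\bar F\xrightarrow{\bar f}B'$. Its leg $\bar\alpha$ lies in $\cM$ by stability of $\cM$-morphisms under pullback, and its coleg $\bar f$ lies in $\cM$ by the assumed stability of $\cM$-morphisms under FPCs (which applies precisely because $f\in\cM$). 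We also use that the square $(\bar\alpha,\bar f,f,\beta)$ is a pullback, and that the universal property of FPCs provides, for every commuting square with boundary $(f,\beta)$, a unique mediating morphism into $(\bar\alpha,\bar f)$.

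First I would fix an arbitrary element $(\alpha,f')\in\mIPC{f}{\beta}$: then $\alpha\in\cM$ and $(\alpha,f',f,\beta)$ is a pushout, its coleg $f'$ lies in $\cM$ by stability of $\cM$-morphisms under pushout, and it is moreover a pullback since pushouts along $\cM$-morphisms are pullbacks. The universal property of the FPC then yields a unique $w\colon A'\to\bar F$ with $w\alpha=\bar\alpha$ and $\bar f w=f'$; from $f'=\bar f\circ w$ with $f',\bar f\in\cM$ and the decomposition property of $\cM$-morphisms one obtains $w\in\cM$.

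The decisive step is to promote $w$ to an isomorphism. To this end I would pull the pushout square $(\alpha,f',f,\beta)$ back along the $\cM$-morphism $\bar f\colon\bar F\to B'$, obtaining a cube whose bottom face is that pushout and whose four vertical faces are the corresponding pullbacks along $\bar f$. Since those vertical faces are pullbacks along an $\cM$-morphism and the bottom face is a pushout along an $\cM$-morphism, the hypothesis that pushouts along $\cM$-morphisms are stable under $\cM$-pullbacks forces the top face to be a pushout as well. Now $\bar f$ is monic, so the three corners of the top face lying over $B$, $A'$, $A$ are canonically identified---using the FPC pullback square $(\bar\alpha,\bar f,f,\beta)$ and the factorizations $f'=\bar f w$ and $\beta f=\bar f\bar\alpha$---with $A$, $A'$, $A$ respectively; under these identifications the top face becomes a pushout of a span one of whose legs is $\mathrm{id}_A$, with $w$ occupying the position of the comparison arrow. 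The pushout of a span one of whose legs is an identity is canonically the codomain of the other leg, so the comparison map $w$ is forced to be an isomorphism.

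Finally, this shows every $(\alpha,f')\in\mIPC{f}{\beta}$ is isomorphic, as a pushout complement of $(f,\beta)$, to the FPC square $(\bar\alpha,\bar f)$; hence $\mIPC{f}{\beta}$ is essentially unique, and, the FPC square being now exhibited as isomorphic to a pushout, every element of $\mIPC{f}{\beta}$ is at once a pushout and an FPC (recovering the usual uniqueness of pushout complements). I expect the only real obstacle to be the pull-back-along-$\bar f$ computation of the third paragraph: one must check with care that the four vertical faces are genuinely pullbacks along an $\cM$-morphism---this is exactly where $\bar f\in\cM$, hence the stability of $\cM$-morphisms under FPCs, is indispensable---and then identify the pulled-back span with the canonical one having leg $\mathrm{id}_A$, which rests on $\bar f$ being monic. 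The hypothesis that pushouts along $\cM$-morphisms are pullbacks enters only to bring each $\mIPC{f}{\beta}$-square within reach of the FPC universal property.
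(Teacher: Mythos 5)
Your proof is correct and is essentially the paper's argument: both compare an arbitrary element of $\mIPC{f}{\beta}$ with the FPC of $(f,\beta)$ obtained from the $\cM$-partial map classifier, use the hypothesis that $\cM$-morphisms are stable under FPCs to place the FPC coleg $\bar f$ in $\cM$, and then use stability of pushouts along $\cM$-morphisms under $\cM$-pullbacks to force the mediating morphism supplied by the FPC's universal property to be an isomorphism, whence essential uniqueness and the PO/FPC coincidence. The only difference is mechanical: the paper invokes the packaged pullback--pushout-decomposition lemma on the vertically pasted rectangle (plus stability of isomorphisms under pullback), whereas you re-derive the same conclusion by pulling the pushout square back along $\bar f$ and recognizing the comparison map as one leg of a pushout of a span with an identity leg. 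One minor remark: your intermediate claim that $f'\in\cM$ ``by stability of $\cM$-morphisms under pushout'' appeals to a property that is not among the stated hypotheses, but it is also never needed---your cube argument only uses $\bar f\in\cM$ and the monicity of $\bar f$ (and $f'\in\cM$ follows a posteriori once $w$ is an isomorphism).
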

\begin{proof}
It suffices to note that assumptions on $\bfC$ ensure that \kl{pullback-pushout decomposition} is applicable, and that according to Proposition~\ref{prop:constrMIPC}, every element of an mIPC fits into a diagram of the shape $(i)$ in~\eqref{eq:mIPCconstruction}, where $f$, $f'$ and $c$ are in $\cM$, and where $p$ is an isomorphism: since the outer square in~\eqref{eq:mIPCconstruction}(i) is an FPC and thus a pullback, the top square is a pushout, and the horizontal morphisms are in $\cM$, the bottom square is a pullback. Since isomorphisms are stable under pullback, $\gamma$ is an isomorphism, hence every pushout that is an element of $\mIPC{f}{\beta}$ is isomorphic (via the respective morphism $\gamma$) to the FPC square, demonstrating that if $\mIPC{f}{\beta}$ is non-empty, then every element yields both a pushout and an FPC.
\qed\end{proof}

\subsubsection{Final pullback complement pushout augmentations (FPAs)}\label{sec:exConstr:FPAs}

Let us now consider the final construction of interest, that of \kl{FPAs}, as given in Definition~\ref{def:FPA}. %
This naturally suggests the following explicit construction:
\begin{lemma}\label{lem:FPAconstr}
Let $\bfC$ be a category with a \kl{stable system of monics} $\cM$ that is a \kl{vertical weak adhesive HLR category} and \kl{finitary} with respect to $\cM$ (which entails by Theorem~\ref{thm:finitarityCats}(iii) that $\bfC$ is \kl{($\cE$,$\cM$)-structured}, for $\cE$ the class of \kl(ssm){extremal morphisms} w.r.t.\ $\cM$), such that \kl(ssm){pushouts along $\cM$-morphisms are stable under $\cM$-pullbacks} in $\bfC$, and such that \AP\intro(FPAconstr){pushouts along $\cM$-morphisms are pullbacks}. Then given a \kl(ssm){pushout along an $\cM$-morphism} such as the square in the diagram below, one may construct the \kl{FPAs} of this pushout as follows:
\begin{enumerate}
\item For every $\cE$-morphism $P-e\twoheadrightarrow E$ such that $e\circ \beta$ is in $\cM$, and such that $(\beta,id_B)$ is a pullback of $(e,e\circ \beta)$, take an FPC $(\varphi,g)$ of $(f,e\circ \beta)$. 
\item Since by assumption \kl(FPAconstr){pushouts along $\cM$-morphisms are pullbacks}, by \kl{pullback-pullback composition} $(\alpha,f)$ is a pullback of $(e\circ f',e\circ \beta)$, hence by the \kl{universal property of FPCs}, there exists a unique morphism $A'-\gamma\rightarrow F$ into the FPC object $F$. If $\gamma$ is in $\cM$, then $(\gamma, g,e)$ is in the \kl{FPA} of the pushout square.
\end{enumerate}
\begin{equation}\label{eq:FPAconstr}
\ti{FPAconstr}
\end{equation}
\end{lemma}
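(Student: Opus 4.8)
The plan is to verify that the construction described in the statement produces exactly the FPAs of the given pushout square, by checking the four defining conditions of Definition~\ref{def:FPA} one at a time, and then to argue essential uniqueness of the output. First I would set up the notation: we start from a pushout $(f,\beta,\alpha,f')$ along an $\cM$-morphism $\beta$, with $\beta'\in \cM$ the mediating morphism; we pick an $\cE$-morphism $B'-e\twoheadrightarrow E$ with $e\circ \beta\in \cM$, take the FPC $(\varphi,g)$ of $(f, e\circ \beta)$ (which exists by the assumption that $\bfC$ \kl(ssm){has FPCs along $\cM$-morphisms}, a consequence of being a \kl{vertical weak adhesive HLR category} together with the partial map classifier, or simply by hypothesis), and obtain the universal morphism $A'-\gamma\rightarrow F$ via Lemma~\ref{lem:FPAaux}. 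The claim to prove is then: $(\varphi, g, e)$ is an element of $\FPA{f,\beta,\alpha,f'}$ if and only if $\gamma\in \cM$; equivalently, the construction enumerates precisely the FPAs.

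Next I would discharge the conditions of Definition~\ref{def:FPA}. Condition (i): $e\in\cE$ and $e\circ\beta\in\cM$ hold by the choice of $e$; $\gamma\in\cM$ is exactly the acceptance test imposed in step 2 of the construction. Condition (iii): $(\gamma\circ\alpha, g)$ is an FPC of $(f, e\circ\beta)$ --- here I would observe that $\gamma\circ\alpha$ is precisely the composite that realizes $\alpha$ followed by the universal morphism into $F$, and that by the universal property of the FPC $(\varphi,g)$ the pair $(\varphi, g)$ already is this FPC, with $\varphi = \gamma\circ\alpha$ forced by the way $\gamma$ was produced in Lemma~\ref{lem:FPAaux} (the universal morphism is characterized by $\gamma\circ\alpha = \varphi$). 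Condition (iv): $\gamma$ is by its very definition the universal morphism supplied by the \kl{universal property of FPCs}, so this is immediate. Condition (ii): this is automatically satisfied by Lemma~\ref{lem:FPAaux}, since $\bfC$ satisfies \kl(ssm){pushouts along $\cM$-morphisms are stable under $\cM$-pullbacks}. This establishes that any output $(\varphi,g,e)$ with $\gamma\in\cM$ is an FPA; conversely, given an arbitrary FPA $(\gamma,g,e)$, its data $e$ is an $\cE$-morphism with $e\circ\beta\in\cM$, its $(\gamma\circ\alpha,g)$ is an FPC of $(f,e\circ\beta)$ (condition (iii)), so by essential uniqueness of FPCs it coincides up to isomorphism with the FPC taken in step 1, and the morphism $\gamma$ matches the universal morphism of step 2 (condition (iv)); hence every FPA arises from the construction.

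For essential uniqueness of the output, I would note that the only choice in the construction is the $\cE$-morphism $e$ (the FPC $(\varphi,g)$ being determined up to unique isomorphism by $(f,e\circ\beta)$, and $\gamma$ being determined by the universal property once the FPC is fixed). Two choices $e$ and $\tilde e$ that give rise to isomorphic augmentations must factor through a common target up to isomorphism; since $\bfC$ is \kl{finitary} with respect to $\cM$, there are only finitely many $\cM$-subobjects $e\circ\beta\hookrightarrow E$, so the set of valid $e$ up to isomorphism --- and hence the set of FPAs --- is finite, which is the content one wants for a practical construction.

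The main obstacle I expect is condition (iii), specifically pinning down that the universal morphism $\gamma$ of Lemma~\ref{lem:FPAaux} makes $(\gamma\circ\alpha, g)$ into an FPC \emph{of the correct cospan} $(f, e\circ\beta)$ --- one must carefully track which square is claimed to be the pullback-universal one, since Lemma~\ref{lem:FPAaux} guarantees only that certain vertical squares are pullbacks, and the identification $\varphi = \gamma\circ\alpha$ relies on the precise way the mediating morphism out of the composite pullback $A'$ is defined in the proof of that lemma. All remaining steps are routine bookkeeping with \kl{pullback-pullback decomposition} and the \kl(ssm){decomposition property of $\cM$-morphisms}.
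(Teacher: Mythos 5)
Your verification is correct and is essentially the argument the paper has in mind: the paper states Lemma~\ref{lem:FPAconstr} without an explicit proof, treating it as an immediate unpacking of Definition~\ref{def:FPA}, with condition (ii) discharged by Lemma~\ref{lem:FPAaux}, condition (iii) via the identity $\gamma\circ\alpha=\varphi$ delivered by the \kl{universal property of FPCs} applied to the composite pullback (pushout-as-pullback pasted with the square from (ii)), condition (iv) by construction, and the converse via essential uniqueness of FPCs --- exactly your route, including your correct flagging of (iii) as the only point needing care.

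One caveat on your closing paragraph, which goes beyond the lemma's claim: finitarity with respect to $\cM$ (i.e., $\cM$-well-poweredness) does \emph{not} by itself bound the number of admissible $\cE$-quotients $B'\twoheadrightarrow E$, so your finiteness conclusion is not justified as stated; the paper's remark immediately following the lemma makes precisely this point, noting that $\cE$-well-copoweredness is, strictly speaking, an additional assumption. Also note that the existence of the FPC taken in step 1 is not a consequence of the hypotheses listed in the lemma (vertical weak adhesive HLR, finitary, stability of $\cM$-pushouts under $\cM$-pullbacks); as you hedge, it is an implicit assumption inherited from the contexts in which the lemma is applied, where $\bfC$ \kl(ssm){has FPCs along $\cM$-morphisms}.
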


In the above definition, while we assumed that $\bfC$ is \kl{finitary} with respect to the \kl{stable system of monics} $\cM$, i.e., $\cM$-well-powered, in order for construction to be practicable, we strictly speaking also have to assume that $\bfC$ is \emph{$\cE$-well-copowered} (``co-finitary''). The latter entails that every object $X$ of $\bfC$ has only finitely many \kl{quotients}, where a \AP\intro{quotient} of $X$ is an isomorphism class of $\cE$-morphisms $X-e\twoheadrightarrow E$, with $X-e'\twoheadrightarrow E$ isomorphic to $e$ if there exists an isomorphism $E-\varepsilon \rightarrow E'$ such that $e'=\varepsilon \circ e$. One might wonder whether under the assumptions of the above definition, the assumptions on $\bfC$ (i.e., $\bfC$ being \kl{finitary}, \kl(ssm){having FPCs along $\cM$-morphisms}, being a \kl{vertical weak adhesive HLR category} and thus \kl{($\cE$,$\cM$)-structured}) might be sufficient in order for $\bfC$ to be $\cE$-well-copowered, albeit this does not seem to be a standard result in category theory  to the best of our knowledge (perhaps apart from the special case of Grothendieck topoi~\cite{johnstone2002sketches}, which are \kl{adhesive categories}~\cite{Lack2006} and thus in particular also \kl{vertical weak adhesive HLR categories}). 

\begin{example}\label{ex:FPA}
In order to illustrate the notion of \kl{FPAs} in their application to \kl(EMS){(auto-augmented, inert)-factorizations} of \kl{final pullback complements}, consider the example of an FPC in the category $\mathbf{Graph}$ of \kl{directed multigraphs} as shown in the diagram below left:
\begin{equation}
\vcenter{\hbox{\includegraphics[scale=0.18]{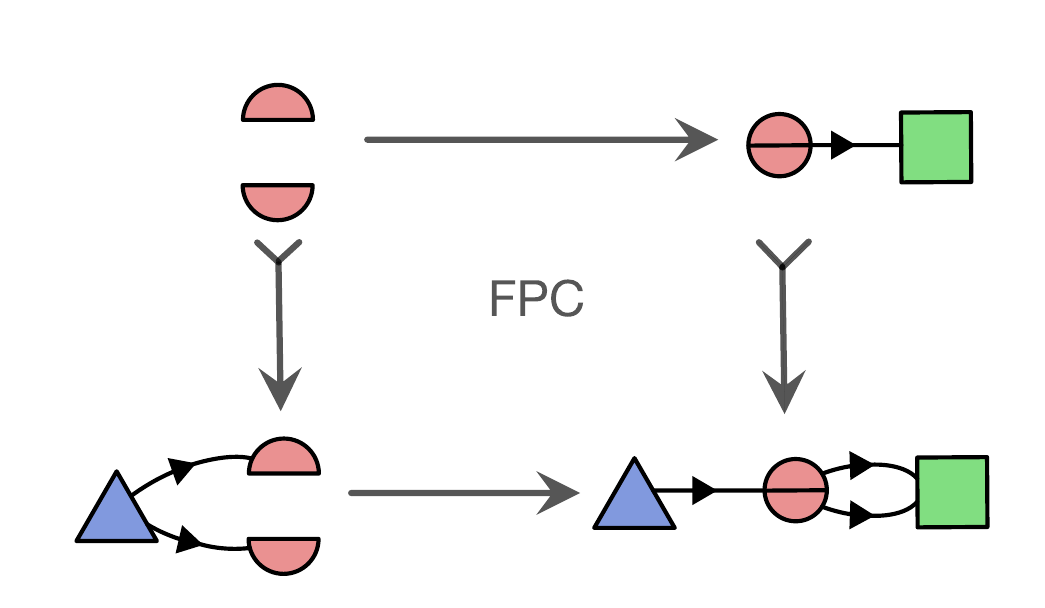}}} \xrightarrow{\text{take $\mathsf{PO}$}}
\vcenter{\hbox{\includegraphics[scale=0.18]{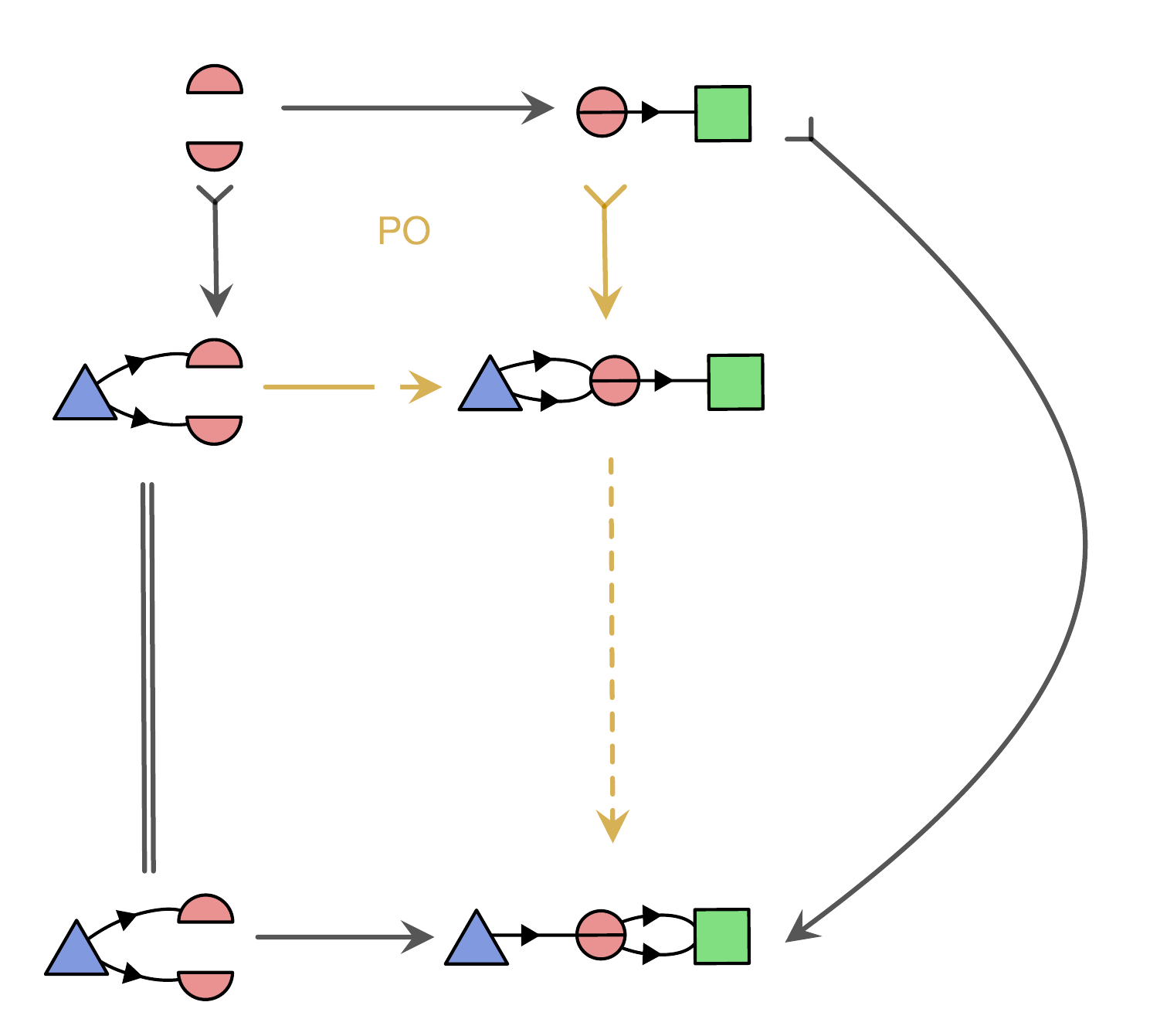}}} \xrightarrow[\text{take $\mathsf{PB}$}]{\text{epi-mono-fact.}}
\vcenter{\hbox{\includegraphics[scale=0.18]{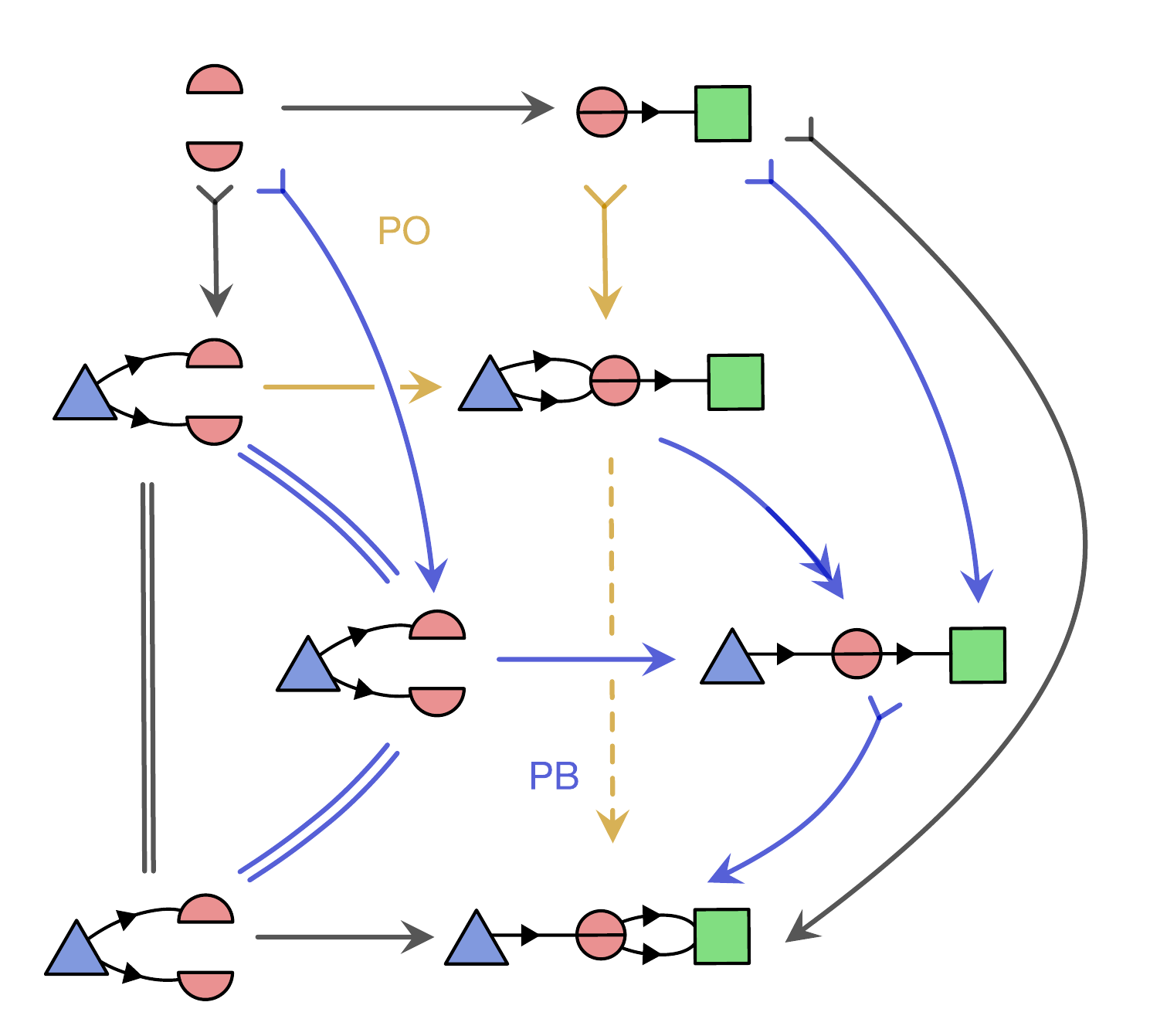}}} 
\end{equation}
In order to exhibit the FPC factorization, first take a pushout of the span of the FPC, which by the \kl{universal property of pushouts} yields the dashed arrow. Since $\mathbf{Graph}$ is \kl{(epi, mono)-structured}, we may take the \kl(EMS){epi-mono-factorization} of the dashed arrow, followed by taking a pullback along the monomorphism produced by the factorization. By \kl{vertical FPC-pullback decomposition}, the bottom square as well as the vertical composite of the middle and the top square are FPCs. One may verify that the aforementioned composite square is indeed an \kl(FPCfact){auto-augmented FPC}, while the bottom square is an \kl(FPCfact){inert FPC}, in the sense of Theorem~\ref{thm:FPCfact}. If we interpret the original FPC as a direct derivation under \kl{sesqui-pushout (SqPO) semantics} (i.e., along a rule of the form $r=(K\leftarrow id_K-K-i\rightarrow I)$), the \kl(FPCfact){auto-augmented FPC} encodes the minimal context for a SqPO-type direct derivation along the given rule to produce the graph at the bottom left of the original FPC square. On the other hand, the \kl(FPCfact){inert FPC} encodes a context extension that does not modify the outcome of the SqPO-type direct derivation, since it only adds edges to the context that are in effect implicitly deleted during the direct derivation.
\end{example}

\section{Examples of compositional rewriting semantics}\label{sec:rs}

In this section, we investigate a number of variants of \kl{DPO-semantics} and \kl{SqPO-semantics}, and illustrations thereof, for categorical rewriting, parametrized by the choice of the class of rules considered. The results of this section are summarized in Table~\ref{tab:main}, which makes explicit the conditions under which DPO and SqPO rewriting are compositional.

\begin{table}[h]
\centering
\renewcommand{\arraystretch}{1.2}
\begin{tabular}{C{2.5cm}|*{6}{C{1.5cm}|}C{1.5cm}}
 & \multicolumn{3}{c|}{\kl{Double-Pushout semantics}} & \multicolumn{4}{c}{\kl{Sesqui-Pushout semantics}} \\
Property & \kl(crsType){linear} & \kl(crsType){semi-linear} & \kl(crsType){generic} & 
\kl(crsType){linear} & \kl(crsType){output-linear} & 
\kl(crsType){input-linear} & \kl(crsType){generic}\\
\hline
\kl(crDC){$\bD_0$ has multi-sums} &
\multicolumn{7}{c}{$\YES$ (Lemma~\ref{lem:constrMS})}\\
\hline
\kl(dc){$\exists$ horizontal/vertical units} & \multicolumn{7}{c}{$\YES$ (Corollary~\ref{cor:CRShvUnits})}\\
\hline
\kl(dc){vertical composition} & \multicolumn{3}{c|}{$\YES$ (by \kl{pushout-pushout composition})} & \multicolumn{4}{c}{$\YES$ (by \kl{PO-PO-} and \kl{vertical FPC composition})}\\
\hline
\multirow{2}{2.5cm}{\centering\kl(dc){horizontal composition} (Proposition~\ref{prop:hCompCRDC})}
& \multicolumn{2}{C{3cm}|}{$\bfC$ \kl(ssm){has pullbacks along $\cM$-morphisms}} 
& $\bfC$ \kl{has pullbacks}
& \multicolumn{4}{c}{$\bfC$ \kl{has pullbacks}} \\
 & $\land$ \kl(notationVKa){(V-iii-a)} & 
 $\land$ \kl(notationVKa){(W-iii-a)} & 
 $\land$ \kl(notationVKa){(L-iii-a)} & 
 $\land$ \kl(notationVKa){(V-iii-a)} & 
 $\land$ \kl(notationVKa){(H-iii-a)} & 
 $\land$ \kl(notationVKa){(V-iii-a)} & 
 $\land$ \kl(notationVKa){(L-iii-a)}\\
\hline
\multirow{2}{2.5cm}{\centering\kl(crDC){horizontal decomposition} (Proposition~\ref{prop:hDeCompCRDC})} &
\multicolumn{3}{c|}{$\bfC$ \kl(ssm){has pushouts along $\cM$-morphisms}} & 
\multicolumn{4}{C{6cm}}{$\bfC$ \kl{has pullbacks} $\land$ \kl(ssm){has pushouts and FPCs along $\cM$-morphisms}}\\
&
\multicolumn{2}{C{3cm}|}{$\land$ \kl(ssm){has pullbacks along $\cM$-morphisms} $\land$ \kl(notationVKb){(V-iii-b)}} & 
$\land$ \kl{has pullbacks} $\land$ \kl(notationVKb){(L-iii-b)} &
$\land$ \kl(notationVKa){(V-iii-a)} $\lor$ \kl(notationVKa){(H-iii-a)} & 
$\land$ \kl(notationVKa){(H-iii-a)} & 
$\land$ \kl(notationVKa){(V-iii-a)} & 
$\land$ \kl(notationVKa){(L-iii-a)}\\
\hline
\kl(crDC){$\mathbb{D}_1$ has pullbacks} (Proposition~\ref{prop:D1PBcrDC})& 
\multicolumn{7}{c}{\kl(notationVKa){(V-iii-a)}} \\
\hline
\kl(crDC){$S$ is a multi-opfibration} (Theorem~\ref{thm:SmofTrmof}) &
\multicolumn{3}{C{4.5cm}|}{$\bfC$ is a \kl{vertical weak adhesive HLR category}} 
& \multicolumn{4}{C{6cm}}{$\bfC$ is \kl{vertical weak adhesive HLR} and \kl(ssm){has FPCs along $\cM$-morphisms}}\\
\hline
\kl(crDC){$T$ is a residual multi-opfibration} (Theorem~\ref{thm:SmofTrmof}) &
\multicolumn{3}{C{4.5cm}|}{$\bfC$ is a \kl{vertical weak adhesive HLR category}} 
& \multicolumn{4}{C{6cm}}{$\bfC$ is \kl{vertical weak adhesive HLR} and \kl(ssm){has FPCs along $\cM$-morphisms}}\\
\end{tabular}
\renewcommand{\arraystretch}{1}
\caption{\label{tab:main}Requirements on the underlying category for giving rise to compositional rewriting semantics of the various kinds. For all cases, we minimally assume that $\bfC$ has a \kl{stable system of monics}, %
with respect to which $\bfC$ is \kl{finitary}, %
with respect to which the variants of \kl{adhesivity properties} are required to hold, %
and such that \kl(crDC){$\bD_0:=\bfC\vert_{\cM}$ has pullbacks}. %
The latter is equivalent to requiring that $\bfC$ \AP\intro(ssm){has pullbacks of cospans of $\cM$-morphisms}, which is true for all of the listed \kl{adhesivity properties}. %
We moreover use the abbreviation \AP\intro(adhVK){(W-iii)} to denote \kl(adhVK){(V-iii)} $\land$ \kl(adhVK){(H-iii)}.}
\end{table}

\subsection{DPO and SqPO semantics}

In much of the traditional work on graph- and categorical rewriting theories~\cite{ehrig:2006fund}, while it was appreciated early in its development that SqPO-rewriting permits the \emph{cloning of subgraphs}~\cite{Corradini_2006}, and that both SqPO- and DPO-semantics permit the \emph{fusion of subgraphs} (i.e., via \kl(crsType){input-linear}, but output-non-linear rules), the non-uniqueness of pushout complements along non-monic morphisms for the DPO- and the lack of a \kl{concurrency theorem} in the SqPO-case in general has prohibited a detailed development of non-linear rewriting theories to date. Interestingly, the SqPO-type concurrency theorem for \kl(crsType){linear} rules as developed in~\cite{nbSqPO2019} exhibits the same obstacle for the generalization to non-linear rewriting as the DPO-type concurrency theorem, i.e., the non-uniqueness of certain pushout complements. 

Our proof for non-linear rules identifies in addition a new and highly non-trivial  ``back-propagation effect'', which will be highlighted in Section~\ref{sec:nlCRS} (cf.\ also Example~\ref{ex:SqPOexplicit} for an in-detail heuristic discussion of this effect). It may be worthwhile emphasizing that there exists previous work that aimed at circumventing some of the technical obstacles of non-linear rewriting either via specializing the semantics e.g.\ from double pushout to a version based upon so-called \emph{minimal} pushout complements~\cite{BRAATZ2011246}, or from sesqui-pushout to \emph{reversible} SqPO-semantics~\cite{reversibleSqPO,harmer2020reversibility} or other variants such as AGREE-rewriting~\cite{Corradini_2015}. In contrast, we will in the following introduce the ``true'' extensions of both SqPO- and DPO-rewriting to the non-linear setting, with our constructions based upon \kl{multi-sums}, \kl{multi-IPCs} and \kl{FPAs}.

We focus here on the following eight variants of categorical rewriting semantics:
\begin{definition}\label{def:crs}
Let $\bfC$ be a category with a \kl{stable system of monics} $\cM$. 
\begin{enumerate}[label=(\roman*)]
\item A \AP\intro(crs){rule}, denoted $O\leftharpoonup r-I$, is a span $r= (O\leftarrow o_r-K_r-i_r\rightarrow I)$ in $\bfC$. We refer to a rule as 
\begin{itemize}
\item \AP\intro(crs){output-linear} if $o_r$ is in $\cM$, 
\item \AP\intro(crs){input-linear} if $i_r$ is in $\cM$, and
\item  \AP\intro(crs){linear} if both $o_r$ and $i_r$ are in $\cM$.
\end{itemize}
We will also refer to arbitrary spans as \AP\intro(crs){generic} rules.
\item In \AP\intro{Double-Pushout (DPO) semantics}, a \AP\intro(crsDPO){direct derivation} is defined as a commutative diagram as in~\eqref{eq:defCRS} below, where the vertical morphisms are in $\cM$, and where the square marked $(\dag_{\alpha})$ is a pushout, while the square marked $(*_{\alpha})$ is an element of an \kl{$\cM$-multi-IPC} (and thus in particular also a pushout). A category $\bfC$ is thus \AP\intro(crsDPO){suitable for DPO-semantics} if it \kl{has multi-initial pushout complements (mIPCs) along $\cM $-morphisms}, if it \kl(ssm){has pushouts along $\cM$-morphisms}, and if \kl(ssm){$\cM$-morphisms are stable under pushout}.
\item In \AP\intro{Sesqui-Pushout (SqPO) semantics}, a \AP\intro(crsSqPO){direct derivation} is defined as a commutative diagram as in~\eqref{eq:defCRS} below, where the vertical morphisms are in $\cM$, and where the square marked $(\dag_{\alpha})$ is a pushout, while the square marked $(*_{\alpha})$ is a \kl{final pullback complement (FPC)}. A category $\bfC$ is thus \AP\intro(crsSqPO){suitable for SqPO-semantics} if it \kl(ssm){has FPCs along $\cM$-morphisms}, if it \kl(ssm){has pushouts along $\cM$-morphisms}, and if \kl(ssm){$\cM$-morphisms are stable under pushout}.
\end{enumerate}
\begin{equation}\label{eq:defCRS}\AP\phantomintro(crs){direct derivation}
\ti{defCRS}
\end{equation}
It is conventional to refer to the $\cM$-morphisms $m$ and $m^{*}$ as \AP\intro(crs){match} and \AP\intro(crs){co-match}, respectively. Finally, we will refer to either of the two semantics as \AP\intro(crsType){generic} if no special restrictions are imposed upon the underlying \kl(crs){rules}, and as \AP\intro(crsType){linear}, \AP\intro(crsType){output-linear} or \AP\intro(crsType){input-linear} if rules are restricted to being  \kl(crsType){linear}, \kl(crsType){output-linear} or \kl(crsType){input-linear}, respectively.  We will sometimes also use the term \AP\intro(crsType){semi-linear} as an abbreviation for ``\kl(crsType){output-linear} or \kl(crsType){input-linear}''.
\end{definition}
As discussed in further detail in Section~\ref{sec:intro:bg}, each of these eight types of semantics permits a different set of features, e.g., for the rewriting of \kl{directed multigraphs}, where the type of linearity of the rules entails whether or not fusing or cloning of subgraphs are possible, and where the choice of SqPO- versus DPO-semantics yields a difference also in whether or not edges may be implicitly deleted (in addition to modifying the precise type of cloning semantics for the non-input-linear variants of the semantics). %
It should also be noted that evidently there are many more kinds of categorical rewriting semantics available in the literature, including cases where the \kl(crs){matches} and \kl(crs){co-matches} are not required to be $\cM$-morphisms, yet we focus here on the aforementioned eight variants for concreteness as a sufficiently diverse set of test cases for our new theoretical framework for categorical rewriting theories.

We will now follow the proof-strategy set forth via the formalism introduced in Section~\ref{sec:comp} in order to determine efficiently sets of sufficient conditions under which the eight different semantics of Definition~\ref{def:crs} give rise to \kl{compositional rewriting double categories (crDCs)}, and thus to compositional rewriting theories. In this way, we are able to demonstrate the high level of modularization afforded by our novel approach, and at the same time highlight some of the similarities and crucial mathematical differences between the various rewriting semantics.

\subsection{Double-categorical structures}\label{sec:crs:dc}

For all eight semantics of Definition~\ref{def:crs}, we will let $\bD_0$ be defined as $\bfC\vert_{\cM}$, i.e., the restriction of $\bfC$ along $\cM$ (with objects the objects of $\bfC$, and morphisms the morphisms of $\cM$). We let $\bD_1$ be defined as \kl(crs){rules} for the \kl(dc){horizontal morphisms} (i.e., the objects of $\bD_1$), and via \kl(crs){direct derivations} (i.e., diagrams of the form in~\eqref{eq:defCRS}) for the \kl(dc){squares} (i.e., the morphisms of $\bD_1$). This  identifies the crDCs we will construct as double categories obtained via restriction of the double category $\mathbb{S}pan(\bfC)$ of spans (cf.\ e.g.\ \cite[Ex.12.3.16]{Johnson2021}), with $\bD_0=\bfC$, with spans of $\bfC$ as \kl(dc){horizontal morphisms} (i.e., as objects of $\bD_1$), and with commutative diagrams of the form below (without any restrictions on the squares other than commutativity) for the \kl(dc){squares} of $\mathbb{S}pan(\bfC)$.
\begin{equation}
\ti{spanCdc}
\end{equation}
Rather than having to work through a lengthy set of coherence conditions for our \kl{crDCs} to indeed qualify as double categories, the fact that they are all obtained as restrictions of $\mathbb{S}pan(\bfC)$ simplifies this task down to verifying the following properties, which ensure that the restrictions are compatible with the existence of horizontal and vertical units, and with horizontal and vertical compositions:

\begin{corollary}\label{cor:CRShvUnits}
For all eight semantics of Definition~\ref{def:crs}, the resulting definitions of $\bD_0$ and $\bD_1$ have \kl(dc){horizontal and vertical units} in the following form:
\begin{equation}\label{eq:crDCunitsCRS}
\ti{crDCunitsCRSa} := 
\ti{crDCunitsCRSb} \;,\qquad 
\ti{crDCunitsCRSc} := 
\ti{crDCunitsCRSd}
\end{equation}
\end{corollary}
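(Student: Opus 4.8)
The statement to prove is Corollary~\ref{cor:CRShvUnits}: that for each of the eight rewriting semantics, the horizontal and vertical units displayed in~\eqref{eq:crDCunitsCRS} are legitimate objects of the relevant categories, i.e.\ they are indeed squares (direct derivations) of the semantics in question. Since the double category $\mathbb{S}pan(\bfC)$ already supplies the underlying unit data, the only thing that needs checking is that the proposed units satisfy the extra conditions imposed on morphisms and squares by the restriction defining $\bD_0$ and $\bD_1$.

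\textbf{Plan of proof.} The plan is to verify the two defining requirements separately. First, for the \emph{horizontal units}: the horizontal unit on an object $X$ is the identity span $X \xleftarrow{id_X} X \xrightarrow{id_X} X$, and the horizontal unit square on a vertical morphism $X \rtail m \rightarrow Y$ is the evident square built from $m$ together with identities. For this to be a rule (an object of $\bD_1$) I would observe that the identity span has both legs in $\cM$ since $\cM$ \kl(ssm){includes all isomorphisms}, hence the identity span is in particular \kl(crs){linear}, so it qualifies under any of the eight restrictions on the class of rules. For the horizontal unit square on $m$ to be a direct derivation, I would check that the two constituent squares are of the required type: in the DPO case both squares must be pushouts and the right one an \kl{$\cM$-multi-IPC} element, while in the SqPO case the left square must be an \kl{FPC}. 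The key observation is that a square of the form $id_m$ (as introduced in~\eqref{eq:sqUnits}) is simultaneously a pushout, a pullback, and an FPC---precisely the content already established in the proof of Lemma~\ref{lem:PBPOFPCcats} (see~\eqref{eq:FPCidProof}), and for the mIPC clause one notes that such a trivial pushout square is the initial element of its pushout-complement family. Since the vertical morphisms involved are $m$ and identities, all of which lie in $\cM$, the diagram fits the shape of~\eqref{eq:defCRS}.

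Second, for the \emph{vertical units}: the vertical unit on a rule $r = (O \xleftarrow{o_r} K_r \xrightarrow{i_r} I)$ is the square consisting of $r$ over itself, with vertical morphisms given by identities. For this to be a direct derivation one again invokes that identity squares $id_f$ are simultaneously pushouts, pullbacks and FPCs, together with the trivial mIPC observation in the DPO case; and the vertical morphisms being identities are in $\cM$. No restriction on the rule class is at issue here since the source and target rules coincide with $r$ itself, which is whatever class it was assumed to belong to.

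\textbf{Main obstacle.} The step that requires the most care is the \kl{$\cM$-multi-IPC} clause in the DPO cases, since an mIPC element is not merely a pushout but a pushout satisfying a multi-universal property (Definition following~\eqref{eq:def:mIPC}); one must confirm that a trivial identity-pushout square is an element of $\mIPC{id}{\beta}$ for the relevant $\beta$. This follows because for the degenerate cospan $B \xrightarrow{id_B} B \rtail\beta\rightarrow B'$ the pushout complement $B \rtail\beta\rightarrow B'$ together with $id_{B'}$ manifestly witnesses an element of $\mIPC{id_B}{\beta}$, and any competing pushout complement factors through it uniquely by the \kl{universal property of pushouts}, so the \kl(mIPC){universal property} holds trivially. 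Beyond this, the remaining verifications are routine consequences of results already proved in the excerpt (principally Lemma~\ref{lem:PBPOFPCcats} and its identity-square computations), so the corollary follows.

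\qed
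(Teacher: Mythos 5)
Your proof is correct and follows essentially the same route as the paper's: the units are well-formed because identity squares are simultaneously pushouts, pullbacks and FPCs (Lemma~\ref{lem:PBPOFPCcats}) and because $\cM$, being a \kl{stable system of monics}, contains all isomorphisms. Your extra care about the multi-IPC clause is harmless but not strictly needed: membership in $\mIPC{f}{\beta}$ only requires being a pushout complement with $\cM$-leg, the multi-universal property being a property of the whole family, which holds by the standing assumption (Definition~\ref{def:crs}) that $\bfC$ \kl{has multi-initial pushout complements (mIPCs) along $\cM$-morphisms}.
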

\begin{proof}
The only non-trivial statement to prove is that the diagrams in~\eqref{eq:crDCunitsCRS} qualify as direct derivations of the respective types in a given semantics according to Definition~\ref{def:crs}. But this follows immediately from the results of Lemma~\ref{lem:PBPOFPCcats}, whereby all commutative squares of the types occurring in the direct derivations depicted in~\eqref{eq:crDCunitsCRS} are simultaneously pushouts and final pullback complements. Moreover, since by assumption $\cM$ is a \kl{stable system of monics}, it contains in particular all isomorphisms, which completes the proof that the direct derivations in~~\eqref{eq:crDCunitsCRS} are well-formed.
\end{proof}

As mentioned in Table~\ref{tab:main}, \kl(dc){vertical composition} is guaranteed to be well-posed in all cases because of \kl{pushout composition} and \kl{vertical FPC composition}. In contrast, it is considerably more intricate to prove that \kl(dc){horizontal composition} is well-posed, which is the first instance where \kl{adhesivity properties} are required in different forms depending on the precise nature of the chosen rewriting semantics:
\begin{proposition}\label{prop:hCompCRDC}
Under the additional assumptions on $\bfC$ presented in Table~\ref{tab:main}, each of the rewriting semantics of Definition~\ref{def:crs} yields a well-posed \kl(dc){horizontal composition} for direct derivations.
\end{proposition}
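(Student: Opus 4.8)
The plan is to prove well-posedness of horizontal composition of direct derivations by reducing, in each of the eight cases, to the categorical machinery already assembled in the excerpt: the categories $\mathsf{PB}_h$, $\mathsf{PO}_h$, $\mathsf{FPC}_h$ of Definition~\ref{def:cstCats}, their compositionality (Lemma~\ref{lem:PBPOFPCcats}), and the various pasting lemmata (\kl{pushout composition}, \kl{horizontal FPC composition}, \kl{pullback-pushout decomposition}, etc.). Recall that a \kl(crsDPO){direct derivation} (resp.\ \kl(crsSqPO){direct derivation}) consists of a two-cell diagram~\eqref{eq:defCRS} in which the left square $(*_\alpha)$ is an $\cM$-multi-IPC element, hence a pushout, in the DPO case, or an \kl{FPC} in the SqPO case, and the right square $(\dag_\alpha)$ is always a pushout; all vertical legs lie in $\cM$. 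Horizontally composing two direct derivations along a shared rule means pasting two such diagrams side-by-side and then \emph{re-decomposing} the resulting long diagram across the composite rule. So the first step is to make explicit what the composite rule is (a rule obtained via the concurrency-theorem construction of Theorem~\ref{thm:ccThm}, i.e.\ a multi-sum overlap of the output of the first rule with the input of the second), and what the candidate composite direct derivation is.

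The key steps I would carry out, case by case, are as follows. \emph{First}, I would observe that the right-hand (``$\dag$'') squares compose without any hypothesis beyond the existence of pushouts along $\cM$-morphisms: horizontal pasting of pushouts is a pushout (\kl{pushout composition}), and $\cM$ is closed under composition, so the composite right square is again a pushout along an $\cM$-morphism — this is exactly the content of $dom:\mathsf{PO}_h(\bfC,\cM)\rightarrow\bfC$ being well-defined (Corollary~\ref{cor:domPOFPC}(i)). \emph{Second}, for the left-hand (``$*$'') squares I would split into the DPO and SqPO families. In the SqPO family the left squares are FPCs, and horizontal pasting of FPCs along $\cM$-morphisms is again an FPC — precisely the \kl{horizontal FPC composition} statement verified in the proof of Lemma~\ref{lem:PBPOFPCcats}; so the composite left square is an FPC, and the composite diagram is again a \kl(crsSqPO){direct derivation}. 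In the DPO family the left squares are pushouts (multi-IPC elements), so horizontal pasting again gives a pushout; the only subtlety is to argue that the composite pushout is still realized as an element of an appropriate $\cM$-multi-IPC over the composite boundary data, which follows from the universal property of multi-IPCs (Proposition~\ref{prop:mIPCexistence}) together with \kl{pushout-pushout decomposition}. \emph{Third}, I would check that the middle vertical legs produced by these pastings land in $\cM$: this uses that $\cM$ is \kl(ssm){stable under composition} for the obvious composites, and, where a leg arises as a mediating morphism, the \kl(ssm){decomposition property of $\cM$-morphisms} (using that the outer leg and an $\cM$-leg it factors through are both in $\cM$). \emph{Fourth}, I would invoke \kl{pushout-pullback decomposition} (equivalently the $(X\text{-iii-b})$ half of the relevant \kl{adhesivity properties}) exactly where the argument needs to turn a pasted pushout into the required pullback/FPC shape — this is the single place where the various HLR adhesivity hypotheses listed in Table~\ref{tab:main} enter, and it is where the eight cases genuinely diverge (e.g.\ \kl(crsType){input-linear} rules need less, \kl(crsType){generic} rules need the full van Kampen-type axiom, and the SqPO cases additionally lean on \kl(ssm){pushouts along $\cM$-morphisms are stable under $\cM$-pullbacks} to keep FPC factorizations coherent, as in Lemma~\ref{lem:FPAaux}). \emph{Finally}, well-typedness being established, associativity and unitality of horizontal composition are inherited for free from the ambient double category $\mathbb{S}pan(\bfC)$ together with Corollary~\ref{cor:CRShvUnits}, so no coherence bookkeeping is needed.

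The main obstacle I expect is the \emph{third/fourth} step for the output-non-linear variants: when the rule leg $o_r$ is not in $\cM$, the mediating morphisms arising from pasting the right pushout squares are only \emph{generic} morphisms with a non-trivial \kl(EMS){$\cE$-$\cM$-factorization}, and one must track carefully that the composite $(*)$-square still satisfies the initiality half of the multi-IPC universal property (in the DPO case) or that the composite FPC is the \emph{final} pullback complement over the correct composite cospan (in the SqPO case) — a priori pasting final objects need not give a final object. The resolution is to phrase everything through the horizontal categories $\mathsf{PO}_h$ and $\mathsf{FPC}_h$ and quote Lemma~\ref{lem:PBPOFPCcats} (which already contains the delicate \kl{pullback-pullback decomposition} arguments making horizontal FPC composition work), rather than re-deriving the universal properties from scratch; the adhesivity hypotheses in Table~\ref{tab:main} are then exactly what is needed to guarantee the ambient decomposition lemmata hold in the generality required by each semantics. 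A secondary, more bookkeeping-level difficulty is simply organizing the eight cases so that the shared skeleton of the argument is written once and each case is dispatched by citing the one extra lemma it needs; I would present the proof with a single generic diagram and then a short per-row table of which decomposition lemma is invoked.
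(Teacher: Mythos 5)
There is a genuine gap here, and it starts with what ``horizontal composition'' means in these double categories. The crDCs of Section~\ref{sec:rs} are restrictions of $\mathbb{S}pan(\bfC)$, so $\hComp$ is induced by plain span composition of the rules: two direct derivations are horizontally composable only when the comatch of one literally equals the match of the other, and the composite rule is obtained by taking a \emph{pullback} of the two adjacent rule legs (and likewise for the bottom spans) --- no multi-sum is involved. Your identification of the composite rule with ``the concurrency-theorem construction of Theorem~\ref{thm:ccThm}, i.e.\ a multi-sum overlap'' conflates horizontal composition with rule composition, and is moreover circular: the \kl{concurrency theorem} is a theorem about crDCs, so it presupposes exactly the double-categorical structure (in particular well-posed $\hComp$) that Proposition~\ref{prop:hCompCRDC} is meant to establish. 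Because of this, your argument never reaches the actual difficulty. In the paper's proof one forms the two span-composition pullbacks, obtains the mediating morphism between the pullback objects from the \kl{universal property of pullbacks}, and uses \kl{pullback-pullback decomposition} to see that the two \emph{new} squares $(\dag_p)$ and $(*_p)$ are pullbacks; the whole point is then to upgrade these pulled-back squares to a pushout (over $(\dag_{\alpha_1})$) and to a pushout resp.\ FPC (over $(*_{\alpha_2})$). That upgrade is precisely where the per-case hypotheses of Table~\ref{tab:main} enter: the stability-under-pullbacks halves \kl(notationVKa){(L-iii-a)}, \kl(notationVKa){(H-iii-a)}, \kl(notationVKa){(V-iii-a)}, \kl(notationVKa){(W-iii-a)} for the pushout squares, and, in the SqPO cases, \kl{stability of FPCs under pullbacks} (hence the requirement that $\bfC$ \kl{has pullbacks}, Lemma~\ref{lem:FPCs}). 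Only after this do \kl{pushout composition} and \kl{horizontal FPC composition} finish the argument.

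Concretely, two of your steps would fail as written. First, the claim that the $\dag$-squares compose ``without any hypothesis beyond the existence of pushouts along $\cM$-morphisms'' is false for the squares that actually need to be composed: the square adjacent to $(\dag_{\alpha_2})$ is not a given pushout but the new square $(\dag_p)$ arising from the span-composition pullback, and without the relevant axiom (X-iii-a) there is no reason it is a pushout --- this is exactly why Table~\ref{tab:main} imposes adhesivity-type conditions on horizontal composition at all. Second, you invoke \kl{pushout-pullback decomposition} and ``the $(X\text{-iii-b})$ half'' of the van Kampen axioms; but the (X-iii-b) statements are used for the \kl(crDC){horizontal decomposition property} (Proposition~\ref{prop:hDeCompCRDC}), not here, and Lemma~\ref{lem:PBPOFPCcats} (horizontal pasting of already-adjacent pushout/FPC squares) needs no adhesivity and therefore cannot be the place where the hypotheses of Table~\ref{tab:main} are consumed. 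The appeals to \kl{FPAs}, Lemma~\ref{lem:FPAaux} and \kl(EMS){$\cE$-$\cM$-factorizations} are likewise not needed: the only residual typing issue is that the composed spans retain their linearity class, which follows from \kl(ssm){stability of $\cM$-morphisms under pullback}, and in the DPO case the composite $(*)$-square is a multi-IPC element simply because it is a pushout with the prescribed boundary.
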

\begin{proof}
Since the \kl{crDCs} for the different semantics are obtained via suitable restrictions of the double category $\mathbb{S}pan(\bfC)$, the operation $\hComp$ is for all situations induced via span composition of the horizontal morphisms:
\begin{subequations}
\begin{align}
\ti{defHcompCRSa}\label{eq:def:hCompCRSa}\\
\ti{defHcompCRSb}\label{eq:def:hCompCRSb}
\end{align}
\end{subequations}
In the commutative diagram in~\eqref{eq:def:hCompCRSb}, the notation $\mathsf{PB}^{*}$ indicates that the definition of $\hComp$ involves \emph{choices} of pullbacks for each cospan, so that $\hComp$  will in particular only be a pesudo-functor.\\

With regards to prerequisites on the underlying category, the definition of $\hComp$ via taking pullbacks requires that the underlying category \kl{has pullbacks} for the case of \kl(crsType){generic} rules, while for all variants of linearity it suffices that $\bfC$ \kl(ssm){has pullbacks along $\cM$-morphisms}.  Moreover, since by assumption $\cM$ is a \kl{stable system of monics}, \kl(ssm){$\cM$-morphisms are stable under pullback}, hence the types of the composite spans are indeed compatible with the types specified in \kl(crsType){generic}, \kl(crsType){output-linear}, \kl(crsType){input-linear} or \kl(crsType){linear} rewriting semantics, respectively.\\

Next, the \kl{universal property of pullbacks} entails the existence of a unique morphism $\lozenge-p\rightarrow \blacklozenge$ that makes the diagram in~\eqref{eq:def:hCompCRSb} commute. By \kl{pullback-pullback decomposition}, the front left and right vertical squares marked $(\dag_p)$ and $(*_p)$ in~\eqref{eq:def:hCompCRSb} are pullbacks.\\

It remains to demonstrate that the squares marked $(\dag_p)$ and $(*_p)$ in~\eqref{eq:def:hCompCRSb} are not only pullbacks, but indeed of the correct type (i.e., pushouts or FPCs, respectively) as required for the chosen rewriting semantics.
\begin{enumerate}[label=(\roman*)]
\item For the square marked $(\dag_p)$, since in all eight types of semantics according to Definition~\ref{def:crs} the square marked $(\dag_{\alpha_1})$ is a pushout, we require the appropriate notion of stability of this type of pushout under pullbacks (compare Table~\ref{tab:main}). More precisely, the distinction depends on the character of the horizontal morphisms in the pushout square $(\dag_{\alpha_1})$, and in the nature of the morphisms in the pullback squares over $(\dag_{\alpha_1})$ (i.e., $i_{r_2}$, $i_{r_2'}$, $p_1$ and $p_1'$), which depending on the rewriting semantics are either generic morphisms or $\cM$-morphisms:
\begin{itemize}
\item For \kl(crsType){generic} semantics, $(\dag_p)$ is a \kl(ssm){pushout along an $\cM$-morphism}, and the morphisms in the pullback squares over $(\dag_{\alpha_1})$ are generic morphisms, hence we require $\bfC$ to satisfy that \kl(ssm){pushouts along $\cM$-morphisms} are \kl(PO){stable under pullbacks} (i.e., axiom \kl(notationVKa){(L-iii-a)} of the definition of \kl{adhesive HLR categories}).
\item For \kl(crsType){output-linear} semantics, $(\dag_p)$ is a pushout of a span of $\cM$-morphisms, and the morphisms in the pullback squares over $(\dag_{\alpha_1})$ are generic morphisms, hence we require $\bfC$ to satisfy axiom \kl(notationVKa){(H-iii-a)} of the definition of \kl{horizontal weak adhesive HLR categories}.
\item For \kl(crsType){input-linear} semantics, $(\dag_p)$ is a \kl(ssm){pushout along an $\cM$-morphism}, and the morphisms in the pullback squares over $(\dag_{\alpha_1})$ are $\cM$-morphisms, hence we require $\bfC$ to satisfy axiom \kl(notationVKa){(V-iii-a)} of the definition of \kl{vertical weak adhesive HLR categories}.
\item For \kl(crsType){linear} semantics, $(\dag_p)$ is a pushout of a span of $\cM$-morphisms, and the morphisms in the pullback squares over $(\dag_{\alpha_1})$ are $\cM$-morphisms, hence we require $\bfC$ to satisfy axiom \kl(notationVKa){(V-iii-a)} of the definition of \kl{vertical weak adhesive HLR categories}.\footnote{Coincidentally, it would also be sufficient for $\bfC$ to satisfy axiom \kl(notationVKa){(H-iii-a)} of the definition of \kl{horizontal weak adhesive HLR categories}; however, it will become evident in the following that axiom \kl(notationVKa){(V-iii-a)} is in fact required for other properties of \kl{crDCs} to be satisfied (cf.\ Table~\ref{tab:main}).} 
\end{itemize}
\item For the square marked $(*_{\alpha_2})$ in~\eqref{eq:def:hCompCRSb}, since this square is a \kl{final pullback complement (FPC)} for \kl{sesqui-pushout semantics}, and under the condition that $\bfC$ \kl{has pullbacks} (which is also one of the necessary assumptions for the construction of \kl{FPCs} via an \kl{$\cM$-partial map classifier}, cf.\ Theorem~\ref{thm:FPC}), we obtain that the square marked $(*_p)$ is an FPC by \kl{stability of FPCs under pullbacks}. For \kl{double-pushout semantics}, we may repeat the analysis of the previous step (i) to demonstrate that since the square marked $(*_{\alpha_2})$ is a pushout for this semantics, under suitable conditions on $\bfC$ the square marked $(*_p)$ is a pushout as well. In particular, we find for the \kl(crsType){output-linear} and \kl(crsType){input-linear} variants of \kl{DPO-semantics} that $\bfC$ has to satisfy axiom \kl(notationVKa){(W-iii-a)} of the definition of \kl{weak adhesive HLR categories}, i.e., both of axioms \kl(notationVKa){(V-iii-a)} and \kl(notationVKa){(H-iii-a)}.
\end{enumerate}
Finally, by \kl{pushout composition} and \kl{horizontal FPC composition}, respectively, one may demonstrate that the horizontal composite of $(\dag_{\alpha_2})$ and $(\dag_p)$ is a pushout, while $(*_p)$ and $(*_{\alpha_1})$ compose into a pushout for \kl{DPO-semantics}, and into an FPC for \kl{SqPO-semantics}, which concludes the proof.
\qed\end{proof}

\begin{remark}
In earlier work on the \kl(crsType){linear} variant of \kl{sesqui-pushout semantics}~\cite{nbSqPO2019,behrRaSiR}, instead of requiring that $\bfC$ \kl{has pullbacks}, an alternative argument was utilized in order to prove that square $(*_p)$ in~\eqref{eq:def:hCompCRSb} is an \kl{FPC}: after completing step (i) in order to prove that $(\dag_p)$ is a pushout as above, and utilizing that pushouts of $\cM$-spans are also FPCs (compare Proposition~\ref{prop:POFPCs}), the pushout squares $(\dag_{\alpha_1})$ and $(\dag_p)$ are FPCs. Thus by \kl{horizontal FPC composition} the composite of squares $(*_{\alpha_2})$ and  $(\dag_p)$ is an FPC, hence applying \kl{horizontal FPC decomposition}, one may demonstrate that $(*_p)$ is an FPC. However, there exists to the best of our knowledge no example of a category that \kl(ssm){has FPCs along $\cM$-morphisms} where FPCs are not constructed via an \kl{$\cM$-partial map classifier} as in Theorem~\ref{thm:FPC}, and since the latter theorem requires that the category \kl{has pullbacks}, it appears to be more efficient to apply \kl{stability of FPCs under pullbacks} in order to prove that $(*_p)$ is an FPC. 
\end{remark}

\subsection{Properties specific to compositional rewriting double categories}

Having established the conditions on the underlying category under which direct derivations of one of the eight semantics of Definition~\ref{def:crs} give rise to a double category, it remains to determine whether additional conditions are required such that these double categories indeed qualify as \kl{compositional rewriting double categories}. The results of this part of the derivation are summarized in Table~\ref{tab:main}.

\begin{proposition}\label{prop:hDeCompCRDC}
Let $\bfC$ be a category suitable for one of the rewriting semantics of Definition~\ref{def:crs}, and such that $\bfC$ also satisfies the relevant additional assumptions stated in Table~\ref{tab:main}. Then the \kl(dc){horizontal composition} functor $\hComp$ of the  \kl{crDC} for the given choice of $\bfC$ and rewriting semantics is a \kl(crDC){isoglobular residual opfibration}.
\end{proposition}
\begin{proof}
Let us first verify that, under the given assumptions, $\hComp$ possesses \kl(rof){residual op-Cartesian lifts} whose \kl(rmof){residues} are \kl(dc){globular isomorphisms}. %
Consider thus a diagram of the form below, where $\mathsf{PB}^{*}$ denotes a pullback as chosen in the definition of $\hComp$,  $(\dag_{\alpha_{21}})$ is a pushout, and $(*_{\alpha_{21}})$ is a pushout for \kl{DPO-semantics} and an \kl{FPC} for \kl{SqPO-semantics}:
\begin{equation}\label{eq:proof:hDeCompCRDCa}
\ti{proofhDeCompCRDCa}
\end{equation}
We have to prove that for each of the semantics of Definition~\ref{def:crs}, one may obtain essentially uniquely a horizontal composition of direct derivations. To this end, consider first the case of \kl{DPO-semantics}, for which we transform the diagram of~\eqref{eq:proof:hDeCompCRDCa} into the diagram below:
\begin{equation}\label{eq:proof:hDeCompCRDCb}
\ti{proofhDeCompCRDCb}
\end{equation}
\begin{itemize}
\item Take pushouts to obtain the squares marked $(\dag_p)$ and $(*_p)$, which by the \kl{universal property of pushouts} entails that there exist unique morphisms $o_{r_2'}$ and $i_{r_1'}$. Moreover, by \kl{pushout-pushout decomposition}, the squares marked $(\dag_{\alpha_2})$ and $(*_{\alpha_1})$ are pushouts.
\item Take another pushout to obtain the square\footnote{Evidently, we could have equivalently obtained the square marked $(*_{\alpha_2})$ first by taking a pushout.} marked $(\dag_{\alpha_1})$, which by the \kl{universal property of pushouts} entails that there exists a unique morphism $i_{r_2'}$. Moreover, by \kl{pushout-pushout decomposition}, the square marked $(*_{\alpha_2})$ is a pushout.
\item It then remains to invoke the version of the van Kampen square property applicable to the given variant of \kl{DPO-semantics} (i.e., axiom  \kl(notationVKb){(L-iii-b)}  for the \kl(crsType){generic} and axiom \kl(notationVKb){(V-iii-b)} for the other variants, cf.\ Table~\ref{tab:main}) in order to demonstrate that the bottom square marked $(?)$ is indeed a pullback.
\end{itemize}

Finally, the pullback square marked $(?)$ will in general \emph{not} coincide with the pullback of the cospan chosen as part of the definition of $\hComp$; therefore, it remains to form the diagram below (where $\mathsf{PB}^{*}$ marks the chosen pullback):
\begin{equation}\label{eq:proof:hCompDPOchosenPB}
\ti{proofhCompDPOchosenPB}
\end{equation}

\begin{itemize}
\item By the \kl{universal property of pullbacks}, there exists a unique mediating isomorphism between the two pullback squares denoted $(?)$ and $\mathsf{PB}^{*}$.
\item Since each of the vertical squares with isomorphisms for vertical morphisms in~\eqref{eq:proof:hCompDPOchosenPB} is a pushout square (and also an FPC square, cf.\ proof of Lemma~\ref{lem:PBPOFPCcats}), we conclude that the morphism from the span $(o_{r_{21}'},i_{r_{21}'})$ to $(o_{r_{21}'}^{*},i_{r_{21}'}^{*})$ (i.e., the frontmost curved vertical squares) is indeed a \kl(dc){globular isomorphism} in the \kl{crDC} for the chosen DPO-type semantics.
\end{itemize}

For the case of \kl{SqPO-semantics}, we transform the diagram in~\eqref{eq:proof:hDeCompCRDCa} as follows (where once again $\mathsf{PB}^{*}$ marks a pullback as chosen in the definition of $\hComp$):
\begin{equation}\label{eq:proof:hDeCompCRDCc}
\ti{proofhDeCompCRDCc}
\end{equation}
\begin{itemize}
\item Take an \kl{FPC} to obtain the square marked $(*_{\alpha_1})$, which by the \kl{universal property of FPCs} entails the existence of a unique morphism $p_1'$, and thus by \kl{horizontal FPC decomposition} that the square marked $(*_p)$ is an FPC.
\item Take a pushout to obtain the square marked $(\dag_{\alpha_1})$, and an FPC to obtain the square marked $(*_{\alpha_2})$, which by the \kl{universal property of FPCs} yields also a unique morphism $p_2'$, and thus by \kl{pullback-pullback decomposition}, the square marked $(\dag_p)$ is a pullback.
\item For the case of \kl(crsType){generic} \kl{SqPO-semantics}, by invoking the Beck-Chevalley-Condition \kl(cmtSq){(BCC-1)} of Theorem~\ref{thm:domFunctorFP}, which allows us to conclude that the square marked $(\dag_p)$ is an FPC, and the bottom square marked $(?)$ is a pullback. It is then related to the chosen pullback $\mathsf{PB}^{*}$ according to the definition of $\hComp$ by a universal isomorphism (i.e., by a span isomorphism).
\item For the other types of \kl{SqPO-semantics}, we may develop more general variants of the Beck-Chevalley-Condition \kl(cmtSq){(BCC-1)} by suitably adapting the proof strategy of Theorem~\ref{thm:domFunctorFP}. To this end, consider the diagrammatic statement presented in~\eqref{eq:proof:BCC-1variant} below (which is a 3D-rotated and relabeled version of the statement in~\eqref{eq:proof:BCC-1} in order to facilitate the comparison to the diagram in~\eqref{eq:proof:hDeCompCRDCc}), where $\mathsf{PB}^{*}$ marks a chosen pullback according to the definition of $\hComp$. In all three cases, the proof strategy consists in (i) taking a chosen pullback (marked $\mathsf{PB}^{*}$) to obtain the second diagram in~\eqref{eq:proof:BCC-1variant} (where by the \kl{universal property of pullbacks} entails that there exist unique arrows $\lozenge -q\rightarrow \bullet$ and $\blacklozenge-q'\rightarrow \bullet$); (ii) using \kl{pullback-pullback decomposition} to prove that all squares of the interior commutative cube are pullbacks; (iii) invoking a suitable variant of stability of pushouts under pullbacks to show that the front left inner vertical square is a pushout; and finally (iv) to apply \kl(FPC){stability of FPCs under pullbacks} in order to demonstrate that the front right inner vertical square is an FPC, such that by the \kl{universal property of FPCs} the morphism $\lozenge -q\rightarrow \bullet$ is an isomorphism. It thus remains to clarify the variant of stability property of pushouts necessary for each kind of semantics:
\begin{itemize}
\item For \kl(crsType){output-linear} \kl{SqPO-semantics}, all morphisms of the back right vertical square are guaranteed to be $\cM$-morphisms, hence the claim follows if $\bfC$ satisfies axiom \kl(notationVKa){(H-iii-a)}.
\item For \kl(crsType){input-linear} \kl{SqPO-semantics}, the morphisms $i_{r_2}$, $i_{r_2'}$, $p_1$, and (by \kl(ssm){stability of $\cM$-morphisms under pullback}) $p_1''$ are guaranteed to be $\cM$-morphisms, hence the claim follows if $\bfC$ satisfies axiom \kl(notationVKa){(V-iii-a)}.
\item Since \kl(crsType){linear} \kl{SqPO-semantics} is a special case both of \kl(crsType){output-linear} and \kl(crsType){input-linear} \kl{SqPO-semantics}, the claim follows if $\bfC$ satisfies either \kl(notationVKa){(H-iii-a)} or \kl(notationVKa){(V-iii-a)}.
\end{itemize}
\end{itemize}
\begin{equation}\label{eq:proof:BCC-1variant}
\ti{proofBCC1Avariant} \xrightarrow{\text{take $\mathsf{PB}^{*}$}}
\ti{proofBCC1Bvariant}
\end{equation}
Finally, an argument analogous to the one demonstrated in~\eqref{eq:proof:hCompDPOchosenPB} then reveals that the isomorphism $q'$ in~\eqref{eq:proof:BCC-1variant} gives rise to a \kl(dc){globular isomorphism} in the chosen \kl{crDC} of SqPO-type.

We have thus proved for \kl{DPO-semantics} and for \kl{SqPO-semantics} that the functors $\hComp$ of the corresponding \kl{crDCs} possess \kl(rof){residual op-Cartesian lifts}, whose residues are indeed \kl(dc){globular isomorphisms}. Concretely, we found that from each diagram as in~\eqref{eq:proof:hDeCompCRDCa}, one may obtain a diagram of the following shape:
\begin{equation}\label{eq:proof:hCompIGRaux}
\ti{proofhCompIGRaux}
\end{equation}
Moreover, it follows from the universal properties of pushouts, pullbacks and FPCs that the two constructions are essentially unique.\\

It remains to prove the \kl(crDC){complex decomposition property}, which for \kl{crDCs} of either DPO- or SqPO-semantics takes as its premise a diagram of the following shape:
\begin{equation}\label{eq:proof:cmplxDecompA}
\ti{proofCmplxDecompA}
\end{equation}
The claim then follows by constructing the following diagram, where the top half is constructed analogously to the DPO- or SqPO-variants of the \kl(crDC){horizontal decomposition property}:
\begin{equation}\label{eq:proof:cmplxDecompPartB}
\ti{proofCmplxDecompPartB}
\end{equation}
The existence of the dashed vertical arrows in the bottom half of the diagram in~\eqref{eq:proof:cmplxDecompPartB} may be derived from the respective universal properties of the pushout and FPC squares present. Finally, the various splitting lemmata for pushouts and FPCs then permit to demonstrate that the back squares of the middle and bottom half of the diagram (i.e., the vertical squares adjacent to the dashed arrows) indeed constitutes a horizontal composition of two DPO- or SqPO-type direct derivations, which concludes the proof.
\qed\end{proof}

\begin{proposition}\label{prop:D1PBcrDC}
Let $\bfC$ be a category with a \kl{stable system of monics} $\cM$, and such that $\bfC$ is suitable for the chosen rewriting semantics according to Definition~\ref{def:crs}. Let~$\bD_1$ denote the category of \kl(crs){rules} as objects and direct derivations of the chosen semantics as morphisms. Then if $\bfC$ is a \kl{vertical weak adhesive HLR category}, \kl(crDC){$\mathbb{D}_1$ has pullbacks}. 
\end{proposition}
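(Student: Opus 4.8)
The plan is to realise $\bD_1$ as a subcategory of the category of horizontal arrows of $\mathbb{S}pan(\bfC)$ and to compute pullbacks componentwise. Fix a cospan $d_1\xrightarrow{\alpha_1}d\xleftarrow{\alpha_2}d_2$ in $\bD_1$, where $d=(O\xleftarrow{o}K\xrightarrow{i}I)$, $d_\ell=(O_\ell\xleftarrow{o_\ell}K_\ell\xrightarrow{i_\ell}I_\ell)$ for $\ell\in\{1,2\}$, and where $\alpha_\ell$ is a direct derivation whose vertical components $m_\ell^{*}\colon O_\ell\rightarrowtail O$, $k_\ell\colon K_\ell\rightarrowtail K$, $m_\ell\colon I_\ell\rightarrowtail I$ lie in $\cM$. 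Since these are $\cM$-morphisms, the cospans $O_1\rightarrowtail O\leftarrowtail O_2$, $K_1\rightarrowtail K\leftarrowtail K_2$, $I_1\rightarrowtail I\leftarrowtail I_2$ admit pullbacks in $\bfC$ (a \kl{vertical weak adhesive HLR category} has pullbacks along $\cM$-morphisms). First I would set $O':=O_1\times_O O_2$, $K':=K_1\times_K K_2$, $I':=I_1\times_I I_2$, let $d'$ be the span $O'\leftarrow K'\rightarrow I'$ whose legs $o',i'$ are induced by the universal property of pullbacks, and let $\pi_\ell\colon d'\to d_\ell$ be the evident projection squares.

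Next I would check that $(d',\pi_1,\pi_2)$ is a legitimate candidate pullback in $\bD_1$. Stability of $\cM$ under pullback makes all vertical components of the $\pi_\ell$ into $\cM$-morphisms; the substantive point is that the squares occurring in the $\pi_\ell$ are again direct derivations of the required type. For the pushout squares of a direct derivation --- the output square $(\dag)$ always, and in the DPO case also the input square $(*)$, which is a pushout along the $\cM$-morphism $k_\ell$ --- I would run a cube argument. One shows first, by a symmetric pullback-pasting computation ($K'=K_1\times_K K_2=(O_1\times_O K)\times_K K_2=O_1\times_O K_2=O'\times_{O_2}K_2$, and likewise on the $I$-side), that the $(\dag)$-squares of both $\pi_1$ and $\pi_2$ are pullbacks, using that the $(\dag)$-square of each $\alpha_\ell$ is simultaneously a pushout along an $\cM$-morphism and, by Corollary~\ref{cor:adhPOPB}, a pullback. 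Then, taking the $(\dag)$-square of $\alpha_2$ (resp.\ $\alpha_1$) as the ``bottom'' pushout-along-an-$\cM$-morphism, stacking the $(\dag)$-square of $\pi_1$ (resp.\ $\pi_2$) on top so that the two defining pullback squares are the back faces and the remaining two vertical faces are the just-established pullbacks, and observing that all four vertical morphisms of the cube lie in $\cM$, axiom \textup{(V-iii-a)} --- stability of pushouts along $\cM$-morphisms under $\cM$-pullbacks --- upgrades the top face to a pushout. For the FPC input squares $(*)$ of SqPO direct derivations the same componentwise-pullback cube yields that the $(*)$-square of $\pi_\ell$ is an FPC by stability of FPCs under $\cM$-pullbacks. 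Finally, since the $(\dag)$-squares of the $\pi_\ell$ are pushouts along $\cM$-morphisms, Corollary~\ref{cor:adhPOPB} again makes them pullbacks, so $o',i'$ are pullbacks of $o_\ell,i_\ell$ along the corresponding vertical $\cM$-morphisms; hence $d'$ is a rule of the same linearity class as $d_1$ (and $d_2$), so $d'\in\obj{\bD_1}$ and $\pi_\ell\in\mor{\bD_1}$.

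It then remains to transfer the universal property into $\bD_1$. Given $e\in\obj{\bD_1}$ with direct derivations $\varphi_\ell\colon e\to d_\ell$ such that $\alpha_1\vComp\varphi_1=\alpha_2\vComp\varphi_2$, the componentwise pullback property in $\bfC$ supplies a unique triple of morphisms from the objects of $e$ into $O'$, $K'$, $I'$ compatible with all the data, i.e.\ a unique $\mu\colon e\to d'$ in $\mathbb{S}pan(\bfC)$ with $\pi_\ell\vComp\mu=\varphi_\ell$. Each component of $\mu$, post-composed with the corresponding $\cM$-projection of $d'$, equals a vertical component of some $\varphi_\ell$ and is therefore in $\cM$; by the decomposition property of $\cM$-morphisms ($g\circ f,g\in\cM\Rightarrow f\in\cM$) each component of $\mu$ is itself in $\cM$, so $\mu\in\mor{\bD_1}$, and $\mu$ is the required mediating morphism.

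I expect the cube argument of the second paragraph to be the main obstacle: its content is precisely that componentwise pullbacks of the two constituent pushout squares (resp.\ FPC squares) of a cospan of direct derivations remain pushouts (resp.\ FPCs), and this is exactly where the van Kampen axiom \textup{(V-iii-a)}, the fact that pushouts along $\cM$-morphisms are pullbacks, and stability of FPCs under pullbacks are used in an essential way; everything else is routine diagram chasing with $\cM$-morphisms, and the argument is uniform across all eight semantics because only \textup{(V-iii-a)} is invoked regardless of the semantics chosen.
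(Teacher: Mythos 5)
Your construction of the candidate pullback object is essentially the paper's: componentwise pullbacks along $\cM$-morphisms, the observation (via Corollary~\ref{cor:adhPOPB} resp.\ the FPC property) that the constituent squares of the given cospan are pullbacks so that the projection squares are again pullbacks, and then axiom \kl(notationVKa){(V-iii-a)} to upgrade the pushout squares and \kl{stability of FPCs under pullbacks} to upgrade the FPC squares. That part is sound and matches the paper's argument in substance, if not in phrasing.

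The gap is in the universal-property step. A morphism of $\bD_1$ is by definition a \emph{direct derivation}, not merely a morphism of spans whose vertical components lie in $\cM$; so to conclude $\mu\in\mor{\bD_1}$ you must also show that the two squares constituting $\mu$ are of the correct type, i.e.\ that the output square is a pushout and the input square is a pushout (DPO) or an FPC (SqPO). Your proposal establishes only the $\cM$-membership of the components of $\mu$ (correctly, via the \kl(ssm){decomposition property of $\cM$-morphisms}) and then asserts membership in $\mor{\bD_1}$, which does not follow. This remaining verification is not automatic: the paper closes it by applying \kl{pushout-pullback decomposition} (valid here because the relevant vertical morphisms of the projection squares are in $\cM$ and those squares are pullbacks) to get that the square $(\dag_{\gamma})$ of the mediating morphism is a pushout, again \kl{pushout-pullback decomposition} for the input square in the DPO case, and \kl{vertical FPC-pullback decomposition} for the input square in the SqPO case. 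Without this step your $\mu$ is only a mediating morphism in $\mathbb{S}pan(\bfC)\vert_{\cM}$, and the claim that \kl(crDC){$\mathbb{D}_1$ has pullbacks} is not yet proved; adding these two or three decomposition arguments would complete the proof and bring it in line with the paper's.
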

\begin{proof}
Consider a cospan in $\bD_1$, which amounts to a diagram of the form below:
\begin{equation}\label{eq:proof:D1PBcrDCa}
\ti{proofD1PBcrDCa}
\end{equation}
By assumption on the underlying category, $\bfC$ \kl(ssm){has pullbacks along $\cM$-morphisms}, which permits us to construct the diagram below from the one in~\eqref{eq:proof:D1PBcrDCa} via taking three pullbacks:
\begin{equation}\label{eq:proof:D1PBcrDCb}
\ti{proofD1PBcrDCb}
\end{equation}
By the \kl{universal property of pullbacks}, there exist unique morphisms $o_{r_p}$ and $i_{r_p}$ that make the diagram commute, and thus by \kl{pullback-pullback decomposition}, we find that the squares marked $(\dag_{\beta_1})$, $(\dag_{\beta_2})$, $(*_{\beta_1})$ and $(*_{\beta_2})$ are pullbacks.
\begin{itemize}
\item Since axiom \kl(notationVKa){(V-iii-a)} of the definition of \kl{vertical weak adhesive HLR categories} holds in $\bfC$ (i.e., if \kl(ssm){pushouts along $\cM$-morphisms are stable under $\cM$-pullbacks}), the squares marked $(\dag_{\beta_1})$ and $(\dag_{\beta_2})$ are pushouts.
\item For \kl{DPO-semantics}, since \kl(notationVKa){(V-iii-a)} holds in $\bfC$, the squares marked $(*_{\beta_1})$ and $(*_{\beta_2})$ are pushouts.
\item For \kl{SqPO-semantics}, by \kl{stability of FPCs under pullbacks}, $(*_{\beta_1})$ and $(*_{\beta_2})$ are FPCs.
\end{itemize}
It then remains to demonstrate that the construction provided indeed yields a pullback in $\bD_1$. To this end, consider a diagram as below, where the upper blue squares encode a span in $\bD_1$ that together with the cospan in $\bD_1$ that was already depicted in~\eqref{eq:proof:D1PBcrDCa} yields a commutative square in $\bD_1$:
\begin{equation}\label{eq:proof:D1PBcrDCc}
\ti{proofD1PBcrDCc}
\end{equation}
We have to prove that there exists a unique mediating morphism in $\bD_1$ (i.e., the dashed $\bD_0$-morphisms that make the diagram commute, and such that the squares marked $(\dag_{\gamma})$ and $(*_{\gamma})$ are of the correct kinds for the given semantics:
\begin{itemize}
\item By the \kl{universal property of pullbacks}, there exist uniquely the morphisms $\bar{n}$, $\bar{k}$ and $\bar{m}$ marked with dashed arrows in~\eqref{eq:proof:D1PBcrDCc}, which make the diagram commute. By the \kl(ssm){decomposition property of $\cM$-morphisms}, these morphisms are moreover in $\cM$.
\item By \kl{pushout-pullback decomposition}, the square marked $(\dag_{\gamma})$ is a pushout.
\item For the case of \kl{DPO-semantics}, yet again by \kl{pushout-pullback decomposition}, the square marked $(*_{\gamma})$ is a pushout.
\item For the case of \kl{SqPO-semantics}, by \kl{vertical FPC-pullback decomposition}, the square marked $(*_{\gamma})$ is an FPC.
\end{itemize}
In summary, we have thus demonstrated the unique existence of a $\bD_1$-morphism consisting of the squares marked $(\dag_{\gamma})$ and $(*_{\gamma})$ that make the diagram in $\bD_1$ commute, which concludes the proof.
\qed\end{proof}

Finally, taking full advantage of the results presented in Section~\ref{sec:fib}, we will investigate the existence of the requisite fibrational structures for the source and target functors on the \kl{double categories} for all of the categorical rewriting semantics of Definition~\ref{def:crs}. Let us first recall the properties that have to be satisfied by a category $\bfC$ to be suitable to carry \kl{DPO-semantics} or \kl{SqPO-semantics}:
\begin{itemize}
\item For \kl{DPO-semantics}, it is required that $\bfC$ \kl{has $\cM$-multi-IPCs}, that it \kl(ssm){has pushouts along $\cM$-morphisms}, and that \kl(ssm){$\cM$-morphisms are stable under pushout} (i.e., the latter two points amount to axiom \kl(adhVK){(V-ii)}).
\item For \kl{SqPO-semantics}, it is required that $\bfC$ \kl(ssm){has FPCs along $\cM$-morphisms}, that it \kl(ssm){has pushouts along $\cM$-morphisms}, and that \kl(ssm){$\cM$-morphisms are stable under pushout} (i.e., the latter two points amount to axiom \kl(adhVK){(V-ii)}).
\end{itemize}

For \kl{DPO-semantics}, recall from Lemma~\ref{prop:mIPCexistence} that a sufficient condition to ensure that $\bfC$ \kl{has $\cM$-multi-IPCs} is that $\bfC$ \kl(ssm){has pullbacks along $\cM$-morphisms} (i.e., axiom \kl(adhVK){(V-i)}), that \kl(ssm){pushouts along $\cM$-morphisms are stable under $\cM$-pullbacks} (i.e., axiom \kl(notationVKa){(V-iii-a)}), and that \kl(aps){pushouts along $\cM$-morphisms are pullbacks} (cf.\ Theorem~\ref{cor:adhPOPB}); hence, in summary, it is sufficient to require that $\bfC$ is a \kl{vertical weak adhesive HLR category}. For SqPO-semantics, in addition to asking that $\bfC$ be a vertical weak adhesive HLR category, we must further ask that it has FPCs along $\cM$-morphisms.

We can now state the theorem:

\begin{theorem}\label{thm:SmofTrmof}
Let $\bfC$ be a category that is \kl{finitary} and a \kl{vertical weak adhesive HLR category} with respect to a \kl{stable system of monics} $\cM$. For the case of \kl{SqPO-semantics}, we assume further that $\bfC$ \kl(ssm){has FPCs along $\cM$-morphisms}. Let $\bD$ denote the \kl{double category} based upon $\bfC$ and direct derivations of the respective kind as introduced in Section~\ref{sec:crs:dc}. Then the following fibrational properties hold:
\begin{enumerate}[label=(\roman*)]
\item The functor $S:\bD_1\rightarrow \bD_0$ is a \kl{multi-opfibration}.
\item The functor $T:\bD_1\rightarrow \bD_0$ is a \kl{residual multi-opfibration}.
\end{enumerate}
\end{theorem}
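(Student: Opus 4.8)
The plan is to read off both fibrational properties from the anatomy of a \kl(crs){direct derivation} established in Section~\ref{sec:crs:dc}: every direct derivation of one of the semantics of Definition~\ref{def:crs} is the glueing, along the middle $\cM$-morphism $K_r\rightarrowtail K_{r_m}$, of the pushout square $(\dag_\alpha)$ (built along the output leg $o_r$) and the square $(*_\alpha)$ (built along the input leg $i_r$), which is a pushout lying in some \kl{$\cM$-multi-IPC} for \kl{DPO-semantics} and a \kl{final pullback complement (FPC)} for \kl{SqPO-semantics}. Under this decomposition the \kl(dc){source functor} $S$ returns the \kl(crs){match} (the right leg of $(*_\alpha)$) and the \kl(dc){target functor} $T$ the \kl(crs){co-match} (the right leg of $(\dag_\alpha)$). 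Accordingly, (i) is obtained by combining the \kl{multi-opfibration} structure of the $\mathsf{PO}_v$-target functor (Theorem~\ref{thm:TPOvMOF}), respectively the \kl{Grothendieck opfibration} structure of the $\mathsf{FPC}_v$-target functor (Theorem~\ref{thm:trgtFPC-FP}), with a pushout completion of $(\dag_\alpha)$; and (ii) by combining the \kl{Grothendieck opfibration} structure of the $\mathsf{PO}_v$-source functor (Theorem~\ref{thm:sourcePOgopf}), respectively the \kl{residual multi-opfibration} structure of the $\mathsf{FPC}_v$-source functor via \kl{FPAs} (Theorem~\ref{thm:SourceFPCvRMOF}), with the lifting structure of $T$ on $(\dag_\alpha)$. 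All hypotheses of these auxiliary statements are met: a \kl{finitary} \kl{vertical weak adhesive HLR category} has a \kl{stable system of monics} with \kl(ssm){pullbacks along $\cM$-morphisms} and \kl(ssm){pushouts along $\cM$-morphisms}, \kl(ssm){$\cM$-morphisms are stable under pushout}, \kl(ssm){pushouts along $\cM$-morphisms are stable under $\cM$-pullbacks} (so that Proposition~\ref{prop:mIPCexistence} yields mIPCs), and is \kl{($\cE$,$\cM$)-structured} by Theorem~\ref{thm:finitarityCats}(iii); in the \kl{SqPO-semantics} case one additionally assumes \kl(ssm){FPCs along $\cM$-morphisms}.

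For item~(i): given a \kl(crs){rule} $r$ and an $\cM$-morphism $f\colon I=S(r)\rightarrowtail X$, the \kl(mof){multi-op-Cartesian liftings} of $f$ are the \kl(crs){direct derivations} of $r$ with match $f$. These are produced by taking, as $(*_\alpha)$, the family $\mIPC{i_r}{f}$ for \kl{DPO-semantics} (a finite and essentially unique family by Proposition~\ref{prop:mIPCexistence}) and the single FPC of $K_r\to I\rightarrowtail X$ for \kl{SqPO-semantics}; in either case the middle morphism lies in $\cM$, and completing by the pushout $(\dag_\alpha)$ of $o_r$ along it — which exists and whose co-match is again an $\cM$-morphism since $\cM$ is \kl(ssm){stable under pushout} — yields the lifting. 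The \kl(mof){universal property of multi-opfibrations}, i.e.\ the factorization of any direct derivation $\alpha$ with $S(\alpha)=g\circ f$ as $\beta_j\vComp\varepsilon_j(f)$ with $S(\beta_j)=g$, follows by splitting $(*_\alpha)$ vertically using the universal property of the mIPC (resp.\ the \kl{universal property of FPCs}) and extending to the $(\dag)$-squares via \kl{pushout-pushout decomposition} (resp.\ \kl{vertical FPC-FPC decomposition} together with \kl{vertical FPC-pullback decomposition}); uniqueness of $\beta_j$ and \kl(mof){essential uniqueness} of the liftings are inherited from those of mIPCs/FPCs and the \kl{universal property of pushouts}.

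For item~(ii): given $r$ and an $\cM$-morphism $f\colon O=T(r)\rightarrowtail Y$, one first forms a member of $\mIPC{o_r}{f}$, obtaining a middle $\cM$-morphism $K_r\rightarrowtail K'$ and a pushout square whose co-match is exactly $f$. For \kl{DPO-semantics} one then takes $(*_\alpha)$ to be the pushout of $i_r$ along $K_r\rightarrowtail K'$ — whose match leg is automatically in $\cM$, again by \kl(ssm){stability of $\cM$-morphisms under pushout} — which gives $T$ a genuine \kl{multi-opfibration} structure, hence a \kl{residual multi-opfibration} structure with trivial \kl(rmof){residues}. For \kl{SqPO-semantics} the square $(*_\alpha)$ must instead be an FPC, and there is in general no FPC whose middle leg is exactly $K_r\rightarrowtail K'$: applying the \kl{residual multi-opfibration} structure of the $\mathsf{FPC}_v$-source functor (Theorem~\ref{thm:SourceFPCvRMOF}) to $i_r$ over $K_r\rightarrowtail K'$ produces an FPC whose middle leg overshoots by an $\cM$-morphism, and re-forming the pushout $(\dag_\alpha)$ along the enlarged middle leg propagates this overshoot — via \kl{pushout-pushout decomposition} and the \kl{universal property of pushouts} — to an $\cM$-morphism $f_\star\colon Y\rightarrowtail r_m(X)$ with $m^\ast=f_\star\circ f$; this $f_\star$ is the \kl(rmof){residue}, and the whole construction is precisely an \kl{FPA} of $(\dag_\alpha)$ in the sense of Definition~\ref{def:FPA} and Lemma~\ref{lem:FPAconstr}. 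The \kl(rmof){universal property of residual multi-opfibrations} and its \kl(rmof){essential uniqueness} then follow, exactly as in the proof of Theorem~\ref{thm:SourceFPCvRMOF}, from \kl(EMS){essential uniqueness} of \kl(EMS){$\cE$-$\cM$-factorizations} and the \kl{universal property of pullbacks} combined with the universal properties of $(*_\alpha)$ and $(\dag_\alpha)$; Corollary~\ref{cor:rmopfUP} then supplies the \kl(rmofrup){universal property of residues} invoked in Section~\ref{sec:comp:at}.

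The main obstacle is item~(ii) for \kl{SqPO-semantics}: one must verify that this \kl{FPA}-based lifting really is a \kl(rmof){residual multi-op-Cartesian lifting}, i.e.\ that for an arbitrary competing direct derivation $\alpha$ whose co-match factors through $f$, the back-propagated residue $f_\star$ fits together consistently with both the FPC $(*_\alpha)$ and the pushout $(\dag_\alpha)$ of $\alpha$, so that the mediating square $\beta_j$ exists, is unique, and satisfies $g=T(\beta_j)\circ f_{\star j}$. This is the ``back-propagation effect'' alluded to in Section~\ref{sec:rs}, and it is the single point where a plain \kl{multi-opfibration} or \kl{Grothendieck opfibration} argument is insufficient and the residual structure is genuinely needed; the remaining bookkeeping (finiteness of the families, wellpointedness of all universal constructions) is routine given \kl{finitary}-ness.
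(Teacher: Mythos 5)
Your proposal is correct and follows essentially the same route as the paper: both decompose a direct derivation into its pushout square $(\dag_\alpha)$ and mIPC/FPC square $(*_\alpha)$ and assemble the fibrational structures of $S,T:\bD_1\rightarrow\bD_0$ from Theorems~\ref{thm:TPOvMOF}, \ref{thm:trgtFPC-FP}, \ref{thm:sourcePOgopf} and~\ref{thm:SourceFPCvRMOF}, with FPAs supplying the residues of $T$ in the SqPO case and the DPO case reducing to a genuine multi-opfibration (trivial residues) by the symmetry of DPO direct derivations. The only cosmetic difference is that you occasionally unfold the functor-level lifting properties into the underlying universal properties (mIPC/FPC universal properties and pushout/FPC decomposition lemmata), and build the SqPO residue by forming the pushout along the enlarged middle leg and decomposing, whereas the paper invokes the boundary-functor fibrations of $\mathsf{PO}_v$ and $\mathsf{FPC}_v$ directly (pushout along the residue, then pushout-pushout composition) in its verification of (residual) multi-op-Cartesianity.
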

\begin{proof}
In the case of DPO-semantics, the category $\bfC$ therefore supports the following fibrational structures:
\begin{itemize}
\item By Theorem~\ref{thm:TPOvMOF}, since $\bfC$ \kl(ssm){has pullbacks along $\cM$-morphisms}, and since \kl(ssm){pushouts along $\cM$-morphisms are stable under $\cM$-pullbacks} in $\bfC$, the \kl{target functor $T_{\mathsf{PO}}:\mathsf{PO}_v(\bfC,\cM)\rightarrow \bfC\vert_{\cM}$ is a multi-opfibration}.
\item By Theorem~\ref{thm:sourcePOgopf}, since $\bfC$ \kl(ssm){has pushouts along $\cM$-morphisms}, and since \kl(ssm){$\cM$-morphisms are stable under pushout}, the \kl{source functor $S_{\mathsf{PO}}:\mathsf{PO}_v(\bfC,\cM)\rightarrow \bfC\vert_{\cM}$ is a 
Grothendieck opfibration}.
\end{itemize}

As for \kl{SqPO-semantics}, according to Theorem~\ref{thm:finitarityCats}(iii) that since $\bfC$ is \kl{finitary} and a \kl{vertical weak adhesive HLR category} with respect to the \kl{stable system of monics} $\cM$, $\bfC$ is \kl{($\cE$, $\cM$)-structured}, for $\cE$ the class of \kl(ssm){extremal morphisms} (w.r.t.\ $\cM$). We thus find the following results from Section~\ref{sec:fib}:
\begin{itemize}
\item By Theorem~\ref{thm:trgtFPC-FP}, since $\bfC$ \kl(ssm){has pullbacks along $\cM$-morphisms} and it \kl(ssm){has FPCs along $\cM$-morphisms}, the \kl{target functor $T_{\mathsf{FPC}}: \mathsf{FPC}_v(\bfC,\cM)\rightarrow \bfC\vert_{\cM}$ is a Grothendieck opfibration}.
\item By Theorem~\ref{thm:SourceFPCvRMOF}, since $\bfC$ is \kl{($\cE$, $\cM$)-structured}, \kl(ssm){has pullbacks, pushouts and FPCs along $\cM$-morphisms}, such that \kl(ssm){$\cM$-morphisms are stable under pushout}, and such that \kl(ssm){pushouts along $\cM$-morphisms are stable under $\cM$-pullbacks}, the \kl{source functor $S_{\mathsf{FPC}}:\mathsf{FPC}_v(\bfC,\cM)\rightarrow \bfC\vert_{\cM}$ is a residual multi-opfibration}.
\end{itemize}
With these preparations, it then remains to prove that indeed the functors $S,T:\bD_1\rightarrow \bD_0$ from $\bD_1$ (i.e., the category with rules as objects, and direct derivations as morphisms) to $\bD_0$ (i.e., the category $\bfC\vert_{\cM}$) are a \kl{residual multi-opfibration} in the case of $S$, and a \kl{multi-opfibration} in the case of $T$, respectively.

As for the functor $S:\bD_1\rightarrow \bD_0$, the existence of \kl(mof){multi-op-Cartesian liftings} is induced from the property that the functor $T_{\mathsf{PO}}:\mathsf{PO}_v(\bfC,\cM)\rightarrow \bfC\vert_{\cM}$ is a \kl{multi-opfibration}, and that both $S_{\mathsf{PO}}:\mathsf{PO}_v(\bfC,\cM)\rightarrow \bfC\vert_{\cM}$ and $T_{\mathsf{FPC}}:\mathsf{FPC}_v(\bfC,\cM)\rightarrow \bfC\vert_{\cM}$ are \kl{Grothendieck opfibrations}:
\begin{itemize}
\item Consider the following diagram:
\begin{equation}
\ti{proofSmofA}
\end{equation}
Since $T_{\mathsf{PO}}:\mathsf{PO}_v(\bfC,\cM)\rightarrow \bfC\vert_{\cM}$ is a \kl{multi-opfibration}, while $T_{\mathsf{FPC}}:\mathsf{FPC}_v(\bfC,\cM)\rightarrow \bfC\vert_{\cM}$ is a \kl{Grothendieck opfibration} (i.e., a special case of a \kl{multi-opfibration} where each family is a singleton), this entails the existence of a family of \kl(mof){multi-op-Cartesian liftings} (the $\kappa_{r'_j}$s, in blue in the diagram below), each of whose elements via the \kl{Grothendieck opfibration} property of $S_{\mathsf{PO}}:\mathsf{PO}_v(\bfC,\cM)\rightarrow \bfC\vert_{\cM}$ (the $\omega_j$s and $o_{r'_j}$s, cf.\ orange part of the diagram below) lifts into a \kl(crsDPO){DPO-type direct derivation} or an \kl(crsSqPO){SqPO-type direct derivation}, respectively, i.e., to an element of $\bD_1$:
\begin{equation}
\ti{proofSmofB}
\end{equation}
Moreover, consider a diagram as the one marked $(i)$ below:
\begin{equation}
\ti{proofSmofC} \xrightarrow{\text{via }T_{\mathsf{X}}}
\ti{proofSmofD}\xrightarrow{\text{via }S_{\mathsf{PO}}}
\ti{proofSmofE}
\end{equation}
\begin{itemize}
\item Diagram $(ii)$ is obtained via invoking the fact that $T_{\mathsf{PO}}$ is a \kl{multi-opfibration} in the \kl{DPO-semantics} case, or via the fact that $T_{\mathsf{FPC}}$ is a \kl{Grothendieck opfibration} in the \kl{SqPO-semantics} case.
\item Diagram $(iii)$ is obtained via using that $S_{\mathsf{PO}}$ is a \kl{Grothendieck opfibration}.
\end{itemize}
\item Since evidently the above constructions are essentially unique, we have thus proved that $S:\bD_1\rightarrow \bD_0$ ``inherits'' a \kl{multi-opfibration} structure from the properties of $S_{\mathsf{PO}}$ and $T_{\mathsf{PO}}$ or $T_{\mathsf{FPC}}$, respectively.
\end{itemize}

Next, for the case of the target functor $T:\bD_1\rightarrow \bD_0$ \kl{DPO-semantics}, due to the symmetry in the definition of \kl{DPO-semantics}, the derivation that $T$ is a \kl{multi-opfibration} follows the same line of arguments as the one for $S:\bD_1\rightarrow \bD_0$ for this semantics. Moreover, since a \kl{multi-opfibration} is a special case of a \kl{residual multi-opfibration}, namely the case when each \kl(rmof){residue} is an identity morphism, this demonstrates that indeed $T:\bD_1\rightarrow \bD_0$ also carries the structure of a \kl{residual multi-opfibration}.

Finally, the proof for the fibrational property of $T:\bD_1\rightarrow \bD_0$ for the case of \kl{SqPO-semantics} is considerably more involved. 
\begin{itemize}
\item Consider a diagram as below:
\begin{equation}\label{eq:proofTrmofA}
\ti{proofTrmofA}
\end{equation}
\begin{itemize}
\item By the \kl{multi-opfibration} property of $T_{\mathsf{PO}}$, %
there exists a family of \kl(mof){multi-op-Cartesian liftings} (blue part of the diagram in~\eqref{eq:proofTrmofA}). %
\item By the \kl{residual multi-opfibration} property of $S_{\mathsf{FPC}}$, %
for each element of the aforementioned \kl(mof){multi-op-Cartesian lifting} (indexed by $j$), %
there exists a family of \kl(rmof){residual multi-op-Cartesian liftings} %
(orange part of the diagram in~\eqref{eq:proofTrmofA}), %
where each such lifting (indexed by $k$) consists of the data of an \kl{FPA} %
(i.e., of a \kl(rmof){residue} $K'_{j} \rtail(\kappa_j)_{\star k}\rightarrow K''_{j;k}$ and a pair %
of morphisms $K''_{j;k}-i_{r''_{j;k}}\rightarrow I''_{j;k}$ and %
$I\rtail \iota_{j;k}\rightarrow I''_{j;k}$ such that %
the right commutative square of the above diagram is an FPC). %
Finally, in order to obtain an \kl(crsSqPO){SqPO-type direct derivation}, we use the \kl{Grothendieck opfibration} property of $S_{\mathsf{PO}}$ to obtain the yellow parts of the diagram in~\eqref{eq:proofTrmofA} (which in effect amounts to taking a pushout to obtain the cospan $O'\rtail (\omega'_j)_{\star k}\rightarrow O''_{j;k}\leftarrow o_{r''_{j;k}}-K''_{j;k}$). By \kl{pushout-pushout composition}, the composite of the top left and bottom left commutative squares in~\eqref{eq:proofTrmofA} yields a pushout, and hence the overall diagram indeed encodes an \kl(crsSqPO){SqPO-type direct derivation}.
\end{itemize}
\item In order to prove that $T:\bD_1\rightarrow \bD_0$ indeed satisfies the \kl(rmof){universal property} of \kl{residual multi-opfibrations}, consider diagram $(i)$ below left:
\begin{equation}\label{eq:proofSrmofB}
\ti{proofSrmofBpartA}
\qquad\qquad
\ti{proofSrmofBpartB}
\end{equation}
\begin{itemize}
\item Invoking the \kl{multi-opfibration} property of $T_{\mathsf{PO}}$ yields a family of \kl(mof){multi-op-Cartesian liftings}, i.e., the blue parts of diagram $(ii)$ in~\eqref{eq:proofSrmofB} (indexed by $j$).
\item For each element of the aforementioned lifting, which in particular includes a sequence of $\cM$-morphisms $K\rtail \kappa_j\rightarrow K'_j\rtail \kappa'_{j}\rightarrow K''$, invoke the \kl{residual multi-opfibration} property of $S_{\mathsf{FPC}}$ in order to obtain a family of \kl(rmof){residual multi-op-Cartesian liftings}, i.e., the orange parts of the diagram $(iii)$ in~\eqref{eq:proofSrmofC} below (indexed by $k$), where each element of the family consists of an \kl{FPA}, and with an induced bottom right square in~\eqref{eq:proofSrmofC} that is an FPC.
\begin{equation}\label{eq:proofSrmofC}
\ti{proofSrmofC}
\end{equation}
\item Finally, invoking the \kl{Grothendieck opfibration} property of $S_{\mathsf{PO}}$ allows to effectively split the bottom left part of the diagram in~\eqref{eq:proofSrmofC} into two pushout squares, i.e., the parts of diagram $(iv)$ in~\eqref{eq:proofSrmofD} below colored in yellow.
\begin{equation}\label{eq:proofSrmofD}
\ti{proofSrmofD}
\end{equation}
\end{itemize}
\item Essential uniqueness of the fibrational properties of $S:\bD_1\rightarrow \bD_0$ are inherited from the essential uniqueness of the functors $S_{\mathsf{PO}}$, $S_{\mathsf{FPC}}$ and $T_{\mathsf{PO}}$.
\end{itemize}
In summary, we have thus succeeded in demonstrating that $S:\bD_1\rightarrow \bD_0$ for \kl{SqPO-semantics} carries a \kl{residual multi-opfibration} structure, which concludes the proof.
\qed\end{proof}

We conclude this discussion of theoretical results with the following observations:

\begin{remark}\label{rem:crs}
A folklore result of categorical rewriting theory, and especially in the tradition of Ehrig et al.~\cite{ehrig2010categorical} has been that the notion of \kl{vertical weak adhesive HLR categories} is a reasonably general characterization of categories with sufficient properties to support some form of compositional semantics. While previous works did not consider the validity of an \kl{associativity theorem} as a prerequisite for a rewriting theory to be compositional, the main criterion was indeed the existence of a \kl{concurrency theorem} for the given theory. As our analysis demonstrates, \kl{vertical weak adhesive HLR categories} are \emph{almost} the main type of categories to support compositional rewriting, were it not for the additional properties required as presented in Table~\ref{tab:main} for the various generalizations of \kl(crsType){linear} semantics (which in effect was the only kind of semantics fully analyzed in the traditional literature~\cite{ehrig:2006fund}). Indeed, the discriminating factors in this regard are the \kl(dc){horizontal composition} (Proposition~\ref{prop:hCompCRDC}) and the \kl(crDC){horizontal decomposition} (Proposition~\ref{prop:hDeCompCRDC}) properties required for a given semantics to yield a \kl{compositional rewriting double category}, which for instance disqualifies the category $\mathbf{SGraph}$ of \kl{directed simple graphs} to support compositional \kl(crsType){generic} \kl{DPO-semantics} (i.e., due to failure of axiom \kl(adhVK){(L-iii)} in $\mathbf{SGraph}$; cf.\ also the discussion in Example~\ref{ex:SGraphCRSgenericDPOfailure}). On the other hand, referring to Table~\ref{tab:adh} for a list of practically relevant examples of categories with \kl{adhesivity properties}, in many cases properties beyond weak adhesivity such as the existence of all pullbacks are indeed verified, which raises the interesting theoretical question of whether it might be possible to find a more general classification of categories that takes the additional properties presented in Table~\ref{tab:main} as its basis, and that would permit an easier access to determining the kind of semantics a given category supports. Moreover, since many examples provided in Table~\ref{tab:adh} are indeed obtained as some form of \kl{comma category} construction based upon \kl{adhesive categories} such as $\mathbf{Set}$, one might envision an extension of Theorem~\ref{thm:comCats} that would permit to also determine whether a given comma category possesses additional structures such as an \kl{$\cM$-partial map classifier}, existence of pullbacks or an \kl(ssm){$\cM$-initial object}. We leave these open questions to future work.
\end{remark}

\subsection{Illustration: compositional non-linear double- and sesqui-pushout rewriting}\label{sec:nlCRS}

In this final part of the paper, we will present in some further detail the quintessential examples of compositional rewriting theories in the sense of our novel framework, i.e., the ``non-linear'' variants of double- and sesqui-pushout rewriting over suitable categories. By suitably restricting the formulae provided in the following to the relevant notion of linearity, one may moreover obtain explicit formulae also for the remaining six types of semantics according to Definition~\ref{def:crs}.\\

The aim of the ensuing results consists in providing explicit formulae for both the notion of \kl(crs){direct derivations} and of \emph{rule compositions}, i.e., in a formulation perhaps somewhat more familiar to experts in graph rewriting theory. This involves in particular extracting the important notion of \emph{rule compositions} from the \kl{concurrency theorems}:

\begin{lemma}\label{lem:genericDPOexplicit}
Let $\bfC$ be a category that \kl{has pullbacks}, that is \kl{finitary} and that is an \kl{adhesive HLR category} with respect to a \kl{stable system of monics} $\cM$. Consider \kl(crsType){generic} \kl{Double-Pushout (DPO) semantics} over $\bfC$, where \kl(crsDPO){direct derivations} are defined more explicitly as follows (compare Definition~\ref{def:crs}):
\begin{itemize}
\item The \AP\intro(crsDPO){set of DPO-admissible matches} of \kl(crs){rule} $r=(O\xleftarrow{o_r} K_r \xrightarrow{i_r} I)\in span(\bfC)$ into object $X\in \obj{\bfC}$ is defined as
\begin{equation}
\MatchGT{DPO}{r}{X}:= \{ (m,k_{\alpha},i_{\alpha})\mid m\in \cM \,,\; (k_{\alpha},i_{\alpha})\in \mIPC{i_r}{m}\}\diagup_{\sim}\,,
\end{equation}
where equivalence $\sim$ is defined as equivalence up to \kl(mIPC){universal isomorphisms} of \kl{$\cM$-multi-IPCs}.
\item A \kl(crsDPO){DPO-type direct derivation} of $X\in \obj{\bfC}$ with rule $r$ along $m\in \MatchGT{DPO}{r}{X}$ is defined as a diagram in~\eqref{eq:defDPONLRSdd} below, where $(1)$ is the \kl{$\cM$-multi-IPC} element chosen as part of the data of the \kl(crsDPO){admissible match}, while $(2)$ is formed as a pushout.
\begin{equation}\label{eq:defDPONLRSdd}
\ti{defDPONLRSdd}
\end{equation}
\end{itemize}

Moreover, the \kl(cct){synthesis} part of the \kl{concurrency theorem} (cf.\ in particular~\eqref{eq:concurrencyThm}) yields the following notions: 
\begin{itemize}
\item Given $r_2,r_1\in span(\bfC)$, let the \AP\intro(crsDPOrComp){set of DPO-type admissible matches} of rule $r_2$ into $r_1$ (also referred to as the \emph{dependency relation}~\cite{ehrig:2006fund}) be defined as follows:
\begin{equation}
\RMatchGT{DPO}{r_2}{r_1}:= \{
(j_2, j_1, \kappa_2, i_{\overline{r}_2}, \kappa_1,o_{\overline{r}_1})
\mid  (j_2,j_1)\in \Msum{I_2}{O_1} \,,\; (\kappa_2, i_{\overline{r}_2})\in \mIPC{i_{r_2}}{j_2}\,,\; %
(\kappa_1,o_{\overline{r}_1})\in \mIPC{o_{r_1}}{j_1}
\}\diagup_{\sim}
\end{equation}
Here, the equivalence $\sim$ by which we quotient is defined via the compatible universal isomorphisms of \kl{$\cM$-multi-sums} and \kl{$\cM$-multi-IPCs} (i.e., ``compatible'' relative to the diagram in~\eqref{eq:defNLRDPOrc} below).
\begin{equation}\label{eq:defNLRDPOrc}
\ti{defNLRDPOrc}
\end{equation}
\item A \AP\intro(crsDPOrComp){DPO-type rule composition} of two general rules $r_1,r_2\in span(\bfC)$ along an \kl(crsDPO){admissible match} $\mu\in\RMatchGT{DPO}{r_2}{r_1}$  is defined via a diagram as in~\eqref{eq:defNLRDPOrc} above, where $(1_2)$ and $(1_1)$ are the \kl{$\cM$-multi-IPC} elements chosen as part of the data of the match, while $(2_2)$ and $(2_1)$ are pushouts. %
We then define the \AP\intro(crsDPOrComp){composite rule} via span composition:
\begin{equation}\label{eq:def:DPOrComp}
\comp{r_2}{\mu}{r_1}:= (O_{21}\leftarrow\overline{K}_{2} \rightarrow J_{21})\circ (J_{21}\leftarrow \overline{K}_1\rightarrow I_{21})
\end{equation}
\end{itemize}
With these definitions, one recovers a variant of the \kl{concurrency theorem} whereby the statement of~\eqref{eq:concurrencyThm} is expressed as follows: 
\begin{itemize}
\item \textbf{Synthesis:} given an object $X_0\in \obj{\bfC}$, for every pair $((r_2,\nu_2),(r_1,\nu_1))$ of \kl(crs){rules} and \kl(crsDPO){admissible matches}, where $\nu_1\in \MatchGT{DPO}{r_1}{X_0}$ and $\nu_2\in \MatchGT{DPO}{r_2}{X_1}$ with $X_1:=r_{1_{\nu_1}}(X_0)$, there exists an \kl(crsDPOrComp){admissible match} $\mu\in \RMatchGT{DPO}{r_2}{r_1}$ of rule $r_2$ into rule $r_1$ and an admissible match $\nu_{21}\in  \MatchGT{DPO}{r_{2_{\mu}1}}{X_0}$ of the \kl(crsDPOrComp){composite rule} $r_{2_\mu 1}$ defined as in~\eqref{eq:def:DPOrComp} such that $(r_{2_\mu 1})_{\nu_{21}}(X_0)\kl{\gcong} r_{2_{\nu_2}}(r_{1_{\nu_1}}(X_0))$
\item \textbf{Analysis:} for every \kl(crsDPOrComp){admissible match} $\mu\in \RMatchGT{DPO}{r_2}{r_1}$ of rule $r_2$ into rule $r_1$ and for every admissible match $\nu_{21}\in  \MatchGT{DPO}{r_{2_{\mu}1}}{X_0}$ of the \kl(crsDPOrComp){composite rule} $r_{2_\mu 1}$ into the object $X_0$, there exists %
a pair $((r_2,\nu_2),(r_1,\nu_1))$ of \kl(crsDPO){admissible matches} $(\nu_1,\nu_2)$, where $\nu_1\in \MatchGT{DPO}{r_1}{X_0}$ and $\nu_2\in \MatchGT{DPO}{r_2}{r_{1_{\nu_1}}(X_0)}$, such that $r_{2_{\nu_2}}(r_{1_{\nu_1}}(X_0))\kl{\gcong} (r_{2_\mu 1})_{\nu_{21}}(X_0)$.
\end{itemize}
\end{lemma}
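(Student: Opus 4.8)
The plan is to obtain the statement as a specialization of the abstract \kl{concurrency theorem} (Theorem~\ref{thm:ccThm}) to the \kl{compositional rewriting double category} attached to \kl(crsType){generic} \kl{Double-Pushout (DPO) semantics} over $\bfC$, and then to read the explicit diagrams off the abstract data that theorem produces. First I would check that the \kl{double category} $\bD$ built, as in Section~\ref{sec:crs:dc}, from $\bfC$ and DPO-type \kl(crs){direct derivations} is indeed a \kl{crDC}: since $\bfC$ is a \kl{finitary} \kl{adhesive HLR category} that \kl{has pullbacks}, it is in particular a \kl{vertical weak adhesive HLR category}, axioms \kl(notationVKa){(L-iii-a)} and \kl(notationVKb){(L-iii-b)} hold because the relevant pushouts are full \kl{van Kampen squares}, and $\bD_0=\bfC\vert_{\cM}$ has pullbacks; hence every hypothesis collected for \kl(crsType){generic} \kl{DPO-semantics} in Table~\ref{tab:main} is met, and by Corollary~\ref{cor:CRShvUnits}, Propositions~\ref{prop:hCompCRDC}, \ref{prop:hDeCompCRDC} and~\ref{prop:D1PBcrDC}, Theorem~\ref{thm:SmofTrmof} and Lemma~\ref{lem:constrMS}, $\bD$ is a \kl{compositional rewriting double category}.

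The second step is the dictionary between the abstract and the explicit pictures. Unfolding the definitions of $\bD_0$ and $\bD_1$, a square of $\bD_1$ is precisely a DPO-type \kl(crs){direct derivation}~\eqref{eq:defDPONLRSdd}, and the functors $S,T:\bD_1\rightarrow\bD_0$ return the two feet of its trace span; an element of $\MatchGT{DPO}{r}{X}$ is such a square with the relevant foot equal to $X$, taken up to the \kl(mIPC){universal isomorphisms} of \kl{multi-IPCs}, which is exactly the essential uniqueness carried by $S$ as a \kl{multi-opfibration} (resp.\ $T$ as a \kl{residual multi-opfibration}). I would then trace the \kl(cct){synthesis} construction in the proof of Theorem~\ref{thm:ccThm}: the \kl(ms){universal property of multi-sums} supplies the cospan $I_2\rightarrowtail J_{21}\leftarrowtail O_1$, an element of $\Msum{I_2}{O_1}$, and since for \kl{DPO-semantics} the \kl(rmof){residues} of $T$ are identity morphisms by Theorem~\ref{thm:SmofTrmof}, the \kl(rmof){residue} marked $\star$ in~\eqref{eq:concurrencyThm} is an isomorphism, so $J_{21}$ coincides with the \kl{multi-sum} object; the $T$-lift of $\alpha_1$, which by the proof of Theorem~\ref{thm:SmofTrmof} factors as an $\cM$-\kl{multi-IPC} followed by a pushout, then produces the element $(\kappa_1,o_{\overline{r}_1})\in\mIPC{o_1}{j_1}$ (square $(1_1)$ of~\eqref{eq:defNLRDPOrc}) together with the pushout $(2_1)$, and the $S$-lift of $\alpha_2$ produces $(\kappa_2,i_{\overline{r}_2})\in\mIPC{i_2}{j_2}$ (square $(1_2)$) and the pushout $(2_2)$. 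The resulting horizontal composite $\beta_2'\hComp\beta_1'$, whose underlying horizontal morphism is the span composition~\eqref{eq:def:DPOrComp}, is the \kl(crsDPOrComp){DPO-type rule composition} $\comp{r_2}{\mu}{r_1}$ along the \kl(crsDPOrComp){admissible match} $\mu\in\RMatchGT{DPO}{r_2}{r_1}$ assembled from this data, and the \kl(cct){analysis} direction of Theorem~\ref{thm:ccThm}, which invokes the \kl(crDC){horizontal decomposition property}, gives the converse passage from a direct derivation along $r_{2_{\mu}1}$ back to a two-step sequence.

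Finally I would specialize the closing sentence of Theorem~\ref{thm:ccThm}. A DPO-type \kl(crs){direct derivation} $\alpha_1$ of $X_0$ by $r_1$ along $m_1$ and a DPO-type \kl(crs){direct derivation} $\alpha_2$ of $X_1:=r_{1_{m_1}}(X_0)$ by $r_2$ along $m_2$ are exactly a pair of squares of $\bD$ ``adjacent at the foot'' at the object $X_1$, so \kl(cct){synthesis} and \kl(cct){analysis} establish the bijection between such sequences and pairs $(\mu,m_{21})$ with $\mu\in\RMatchGT{DPO}{r_2}{r_1}$ and $m_{21}\in\MatchGT{DPO}{r_{2_{\mu}1}}{X_0}$, together with $r_{2_\mu1_{m_{21}}}(X_0)\cong r_{2_{m_2}}(r_{1_{m_1}}(X_0))$; moreover the compatibility of the \kl{multi-sum} and \kl{multi-IPC} universal isomorphisms that defines the equivalence $\sim$ on $\RMatchGT{DPO}{r_2}{r_1}$ is precisely the notion of universal isomorphism furnished by the \kl{crDC}, so the correspondence descends to the quotiented sets of \kl(crsDPO){admissible matches}. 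The main obstacle is bookkeeping rather than conceptual: one must fix the orientation conventions — which foot of the trace span is seen by $S$ and which by $T$, and in which order rules compose — so that ``adjacency at the foot'' in Theorem~\ref{thm:ccThm} is identified with the sequential composition $X_0\mapsto X_1\mapsto X_2$, and one must re-read the proof of Theorem~\ref{thm:SmofTrmof} in the present notation to confirm that the $S$- and $T$-lifts there genuinely reproduce the squares $(1_2)$ and $(1_1)$ of~\eqref{eq:defNLRDPOrc} on the nose, up to those universal isomorphisms.
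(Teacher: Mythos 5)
Your proposal is correct and follows essentially the same route as the paper's own proof: instantiate the abstract concurrency theorem for the crDC attached to generic DPO-semantics (via Table~\ref{tab:main} and Theorem~\ref{thm:SmofTrmof}), observing that the target functor's residues are identities so that the abstract residue collapses and the multi-sum plus multi-IPC squares $(1_1)$, $(1_2)$ and pushouts $(2_1)$, $(2_2)$ of~\eqref{eq:defNLRDPOrc} are read off directly. Your additional explicit verification of the crDC hypotheses and of the orientation dictionary is just a more detailed spelling-out of what the paper leaves implicit.
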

\begin{proof}
According to Lemma~\ref{prop:mIPCexistence}, $\bfC$ has \kl{$\cM$-multi-IPCs}, hence the notion of \kl(crsDPO){direct derivations} is well-posed. %
The rest of the proof then follows by instantiating~\eqref{eq:concurrencyThm} for the case of \kl(crsType){generic} \kl{DPO-semantics}. In particular, the explicit formula for the \kl(crsDPOrComp){rule composition} is obtained by taking advantage of the results of Theorem~\ref{thm:SmofTrmof}, i.e., noting that the \kl(dc){source functor} $S:\bD_1\rightarrow \bD_0$ of the \kl{compositional rewriting double category} is a \kl{multi-opfibration}, and that the \kl(dc){target functor} $T:\bD_1\rightarrow \bD_0$ is a \kl{multi-opfibration}, and hence a special type of \kl{residual multi-opfibration}. The latter statement is illustrated in the diagram below, which makes explicit the ``identity-\kl(rmof){residue}'' $\overline{K}_1-id_{\overline{K}_1}\rightarrow \overline{K}_1$:
\begin{equation}\label{eq:proof:explDPOrComp}
\ti{proofExplDPOrComp}
\end{equation}
\qed\end{proof}

\begin{example}\label{ex:SGraphCRSgenericDPOfailure}
One of the most striking outcomes of the analysis presented in this paper is the \textbf{failure} of the category $\mathbf{SGraph}$ of \kl{directed simple graphs} to support \kl(crsType){generic} \kl{DPO-semantics}. It is worthwhile emphasizing that even though in many cases the full-fledged generality of this type of semantics (which, as discussed also in Section~\ref{sec:intro:bg}, supports cloning and fusing subobjects) might not be needed, the case of $\mathbf{SGraph}$ is indeed much more fundamental, and, in a certain sense, was one of the main motivations for the developments presented in this paper. To wit, it is well-known (cf.\ e.g.\ \cite{quasi-topos-2007}) that $\mathbf{SGraph}$ is a \kl{vertical weak adhesive HLR category} only with respect to the \kl{stable system of monics} $\cM_{\mathbf{SGraph}}=\regmono{\mathbf{SGraph}}$ of \emph{regular monomorphisms}, which are the \emph{edge-reflecting} monomorphisms (cf.\ Theorem~\ref{thm:SGraphProperties}(S-ii)), i.e., in particular \emph{not} with respect to the class of all monomorphisms in $\mathbf{SGraph}$. Moreover, it is well-known that $\mathbf{SGraph}$ is \emph{not} an \kl{adhesive HLR category} w.r.t.\ the class $\regmono{\mathbf{SGraph}}$~\cite{ehrig:2006fund,quasi-topos-2007,GABRIEL_2014}, and hence indeed strictly does \textbf{not} support \kl(crsType){generic} \kl{DPO-semantics} (cf.\ Table~\ref{tab:main}). This failure is particularly startling since it is perfectly well possible to define \kl(crsDPO){DPO-type direct derivations}  (with $\cM=\regmono{\mathbf{SGraph}}$), as $\mathbf{SGraph}$ \emph{does} possess the properties required according to Definition~\ref{def:crs}. In fact, the examples of \kl{multi-initial pushout complements} depicted in the left diagram of~\eqref{eq:FPCexamples} may be interpreted as \kl(crsDPO){DPO-type direct derivations} along a rule with an identity output morphism, demonstrating that \kl(crsType){generic} \kl{DPO-semantics} is at least in principle definable in $\mathbf{SGraph}$.\\\
\end{example}

Finally, let us turn towards \kl{SqPO-semantics}, presented in the remainder of this section in explicit detail for its \kl(crsType){generic} variant\footnote{Note that in the original conference version~\cite{BehrHK21} of this paper, we had provided a variant of this definition for the case of the underlying category being a \kl{quasi-topos}, yet the results of the present paper permit to formulate this definition for a more general class of categories.} (from which explicit definitions for the other variants of \kl{SqPO-semantics} may be obtained by restricting the horizontal morphisms to be in $\cM$ as appropriate for the given semantics).

\begin{lemma}\label{lem:NLRSsqpo}
Let $\bfC$ be a category that \kl{has pullbacks}, that is \kl{finitary} and that is an \kl{adhesive HLR category} with respect to a \kl{stable system of monics} $\cM$. Assume further that $\bfC$ \kl(ssm){has FPCs along $\cM$-morphisms}. Consider \kl(crsType){generic} \kl{Sesqui-Pushout (SqPO) semantics} over $\bfC$, where \kl(crsSqPO){direct derivations} are defined more explicitly as follows (compare Definition~\ref{def:crs}):
\begin{itemize}
\item The \AP\intro(crsSqPO){set of SqPO-admissible matches} of a rule \emph{rule} $r=(O\leftarrow K\rightarrow I)\in span(\bfC)$ into an object $X\in \obj{\bfC}$ is defined as
\begin{equation}
\MatchGT{SqPO}{r}{X}:= \{ I-m\rightarrow X\mid m\in \cM\}\,.
\end{equation}
\item A \kl(crsSqPO){SqPO-type direct derivation}~\cite{Corradini_2006} of $X\in \obj{\bfC}$ with rule $r$ along $m\in \MatchGT{SqPO}{r}{X}$ is defined as a diagram as in~\eqref{eq:defNLRSdd} below, where $(1)$ is formed as an FPC, while $(2)$ is formed as a pushout.
\begin{equation}\label{eq:defNLRSdd}
\ti{defNLRSdd}
\end{equation}
\end{itemize}
Moreover, the \kl(cct){synthesis} part of the \kl{concurrency theorem} (cf.\ in particular~\eqref{eq:concurrencyThm}) yields the following notions: 
\begin{itemize}
\item Given $r_2,r_1\in span(\bfC)$, the \AP\intro(crsSqPOrComp){set of SqPO-type admissible matches} of rule  $r_2$ into $r_1$  is defined as\footnote{In the conference version of this paper, we had opted for a slightly different variant of the definition of \kl{FPAs} than in the current paper, i.e., where the pushout square was not explicitly mentioned; however, due to the nature of the equivalence relation $\sim$ in~\eqref{eq:defSqPOrCompMatches}, we in fact arrive at an equivalent notion of \kl(crsSqPOrComp){admissible SqPO-type matches of rules}.}
\begin{equation}\label{eq:defSqPOrCompMatches}
\begin{aligned}
\RMatchGT{SqPO}{r_2}{r_1}:= \{
(j_2,j_1,
o_{r_1'}, \kappa_1, 
(\kappa_1)_{\star k}, i_{r_{1_k}}, e_{1_k}
)&\mid (j_2,j_1)\in \Msum{I_2}{O_1}\,,\;
 (o_{r_1'}, \kappa_1)\in \mIPC{o_{r_1}}{j_1}\,,\\
&\quad ((\kappa_1)_{\star k}, i_{r_{1_k}}, e_{1_k})\in \FPA{i_{r_1},{\color{blue}\iota_1},\kappa_1,{\color{blue}i_{r_1'}}}\,,\\
&\qquad ({\color{blue}i_{r_1'}},{\color{blue}\iota_1}) \in \pO{\kappa_1,i_{r_1}}
\}\diagup_{\sim}\,,
\end{aligned}
\end{equation}
where the notation $({\color{blue}i_{r_1'}},{\color{blue}\iota_1}) \in \pO{\kappa_1,i_{r_1}}$ entails that the cospan $({\color{blue}i_{r_1'}},{\color{blue}\iota_1})$ is a pushout of the span $(\kappa_1,i_{r_1})$, and %
where equivalence is defined up to the compatible universal isomorphisms of \kl{$\cM$-multi-sums}, \kl{$\cM$-multi-IPCs} and \kl{FPAs} (i.e., ``compatibility'' relative to the diagram in~\eqref{eq:defNLRSrc} below).
\begin{equation}\label{eq:defNLRSrc}
\ti{defNLRSrc}
\end{equation}
\item An \AP\intro(crsSqPOrComp){SqPO-type rule composition} of two general rules $r_1,r_2\in span(\bfC)$ along an admissible match $\mu\in \RMatchGT{SqPO}{r_2}{r_1}$ is defined via a diagram as in~\eqref{eq:defNLRSrc} above, which may be constructed step-wise (going clockwise) by letting square $(1)$ be an \kl{mIPC}, square $(2)$ a pushout, square $(3)$ an \kl{FPA}, square $(4)$ a pushout, square $(5)$ an \kl{FPC}, and finally square $(6)$ a pushout. We then define the \AP\intro(crsSqPOrComp){composite rule} via span composition:
\begin{equation}\label{eq:def:SqPOrCompRule}
\sqComp{r_2}{\mu}{r_1}:= (O_{21}\leftarrow\overline{K}_{2} \rightarrow \overline{J}_{21})\circ (\overline{J}_{21}\leftarrow \overline{K}_1\rightarrow I_{21})
\end{equation}
\end{itemize}
With these definitions, one recovers a variant of the \kl{concurrency theorem} whereby the statement of~\eqref{eq:concurrencyThm} is expressed as follows: 
\begin{itemize}
\item \textbf{Synthesis:} given an object $X_0\in \obj{\bfC}$, for every pair $((r_2,m_2),(r_1,m_1))$ of \kl(crs){rules} and \kl(crsSqPO){admissible matches}, where $m_1\in \MatchGT{SqPO}{r_1}{X_0}$ and $m_2\in \MatchGT{SqPO}{r_2}{X_1}$ with $X_1:=r_{1_{m_1}}(X_0)$, there exists an \kl(crsSqPOrComp){admissible match} $\mu\in \RMatchGT{SqPO}{r_2}{r_1}$ of rule $r_2$ into rule $r_1$ and an admissible match $m_{21}\in  \MatchGT{SqPO}{r_{2_{\mu}1}}{X_0}$ of the \kl(crsSqPOrComp){composite rule} $r_{2_{\mu}1}$ defined as in~\eqref{eq:def:SqPOrCompRule} such that $(r_{2_\mu1})_{m_{21}}(X_0)\kl{\gcong} r_{2_{m_2}}(r_{1_{m_1}}(X_0))$.
\item \textbf{Analysis:} for every \kl(crsSqPOrComp){admissible match} $\mu\in \RMatchGT{SqPO}{r_2}{r_1}$ of rule $r_2$ into rule $r_1$ and for every admissible match $m_{21}\in  \MatchGT{SqPO}{r_{2_{\mu}1}}{X_0}$ of the \kl(crsSqPOrComp){composite rule} $r_{2_{\mu}1}$ into an object $X_0$, there exists a pair $(m_1,m_2)$ of \kl(crsSqPO){admissible matches}, where $m_1\in \MatchGT{SqPO}{r_1}{X_0}$ and $m_2\in \MatchGT{SqPO}{r_2}{r_{1_{m_1}}(X_0)}$, such that $(r_{2_\mu1})_{m_{21}}(X_0)\kl{\gcong} r_{2_{m_2}}(r_{1_{m_1}}(X_0))$.
\end{itemize}
\end{lemma}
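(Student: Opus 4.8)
The plan is to mirror the proof of Lemma~\ref{lem:genericDPOexplicit}: first establish that the stated hypotheses on $\bfC$ endow the data of generic SqPO-type direct derivations with the structure of a compositional rewriting double category, and then read off the explicit shape of the concurrency theorem by instantiating Theorem~\ref{thm:ccThm} and unwinding the fibrational constructions of Theorem~\ref{thm:SmofTrmof}. For well-posedness of direct derivations, note that an adhesive HLR category has pushouts along $\cM$-morphisms with $\cM$-morphisms stable under pushout, and by hypothesis $\bfC$ has FPCs along $\cM$-morphisms; hence the diagram~\eqref{eq:defNLRSdd} can always be formed, with square $(1)$ there a legitimate FPC and square $(2)$ a legitimate pushout.

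To obtain the double-categorical structure, I would verify the entries of the ``generic SqPO'' column of Table~\ref{tab:main}, using that an adhesive HLR category is in particular both a vertical and a horizontal weak adhesive HLR category (pushouts along $\cM$-morphisms are van Kampen squares, hence a fortiori vertical and horizontal weak van Kampen squares), and that finitarity together with vertical weak adhesivity yields an $(\cE,\cM)$-structure by Theorem~\ref{thm:finitarityCats}(iii). Concretely: $\bD_0 := \bfC\vert_{\cM}$ has multi-sums (Lemma~\ref{lem:constrMS}) and has pullbacks (of $\cM$-spans, which exist since $\bfC$ has all pullbacks); horizontal and vertical units exist (Corollary~\ref{cor:CRShvUnits}); vertical composition is well-posed by pushout--pushout and vertical FPC composition; horizontal composition is well-posed by Proposition~\ref{prop:hCompCRDC} (using that $\bfC$ has pullbacks and axiom (L-iii-a)); the horizontal decomposition property holds by Proposition~\ref{prop:hDeCompCRDC}; $\bD_1$ has pullbacks by Proposition~\ref{prop:D1PBcrDC}; and finally $S:\bD_1\to\bD_0$ is a multi-opfibration and $T:\bD_1\to\bD_0$ is a residual multi-opfibration by Theorem~\ref{thm:SmofTrmof}. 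Thus $\bD$ is a compositional rewriting double category, and Theorem~\ref{thm:ccThm} applies verbatim, giving both the synthesis and the analysis (``vice versa'') directions of the asserted statement.

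It then remains to translate the abstract form~\eqref{eq:concurrencyThm} into the explicit shape of the lemma. I would unwind the synthesis construction~\eqref{eq:cctSynthesisPartProof} on the pair $(r_2,r_1)$ with $dom(r_2)=I_2$ and $codom(r_1)=O_1$: the universal property of multi-sums yields a multi-sum element $(j_2,j_1)\in\Msum{I_2}{O_1}$; the residual multi-opfibration structure of $T:\bD_1\to\bD_0$ then supplies the residue $\star$ together with its lifting. By the construction inside the proof of Theorem~\ref{thm:SmofTrmof} for the SqPO case, this residue is built from the multi-opfibration $T_{\mathsf{PO}}$ (whose liftings are $\cM$-multi-IPCs), the residual multi-opfibration $S_{\mathsf{FPC}}$ (whose residues are realized by FPAs in the sense of Definition~\ref{def:FPA}), and the Grothendieck opfibration $S_{\mathsf{PO}}$ (whose liftings are pushouts); threading these through produces exactly the data $(o_{r_1'},\kappa_1)\in\mIPC{o_{r_1}}{j_1}$, the induced pushout $(i_{r_1'},\iota_1)\in\pO{\kappa_1,i_{r_1}}$, and the FPA $((\kappa_1)_{\star k},i_{r_{1_k}},e_{1_k})\in\FPA{i_{r_1},\iota_1,\kappa_1,i_{r_1'}}$, together with the output-side pushout; finally applying the multi-opfibration structure of $S:\bD_1\to\bD_0$ on the $r_2$-side supplies the remaining FPC and pushout squares of~\eqref{eq:defNLRSrc}, and span composition~\eqref{eq:def:SqPOrCompRule} yields the composite rule. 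Essential uniqueness of each universal step --- multi-sums (Lemma~\ref{lem:constrMS}), $\cM$-multi-IPCs, FPCs and FPAs --- makes the equivalence $\sim$ in~\eqref{eq:defSqPOrCompMatches} well-defined and gives the ``uniquely up to universal isomorphisms'' clauses, while the analysis direction follows from the corresponding part of Theorem~\ref{thm:ccThm}.

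I expect the main obstacle to be precisely this last step: carefully matching the six labelled squares of the explicit diagram~\eqref{eq:defNLRSrc} to the three constituent fibrational constructions ($T_{\mathsf{PO}}$-lifting, $S_{\mathsf{FPC}}$-residue, $S_{\mathsf{PO}}$-lifting) that jointly realize the residual multi-opfibration $T:\bD_1\to\bD_0$, and in particular verifying that the residue marked $\star$ in the abstract concurrency diagram corresponds to the FPA triple $((\kappa_1)_{\star k},i_{r_{1_k}},e_{1_k})$ with the relevant pushout square made explicit. This is genuine bookkeeping rather than a deep difficulty --- it is exactly the low-level gluing that the conference-version proof carried out by hand and that the present lemma is designed to reproduce at a higher level of abstraction --- and everything else reduces to direct citation of results already in hand.
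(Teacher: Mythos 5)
Your proposal is correct and follows essentially the same route as the paper's proof: well-posedness of the direct derivations and of the FPA/multi-IPC data from the stated hypotheses, verification that the generic SqPO data forms a compositional rewriting double category (the paper simply cites the results of Section~\ref{sec:rs} and Table~\ref{tab:main} for this), and then instantiation of the universal concurrency theorem~\eqref{eq:concurrencyThm} together with the unwinding of the fibrational structures of Theorem~\ref{thm:SmofTrmof} to recover the explicit shape of~\eqref{eq:defNLRSrc}. The only cosmetic difference is that the paper makes the well-posedness of the rule-composition data explicit via Lemma~\ref{lem:FPAconstr}, whereas you obtain the same FPA data implicitly through the residual multi-opfibration structure of $S_{\mathsf{FPC}}$; the content is the same.
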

\begin{proof}
By assumption, $\bfC$ \kl(ssm){has FPCs along $\cM$-morphisms}, hence the definition of \kl(crsSqPO){SqPO-type direct derivations} is well-posed. Moreover, according to Lemma~\ref{lem:FPAconstr}, the assumptions on $\bfC$ suffice to demonstrate that $\bfC$ has \kl{FPAs}, hence the notions of \kl(crsSqPOrComp){admissible matches of rules} and of \kl(crsSqPOrComp){composite rules} are also well-posed. %
The rest of the proof then follows by instantiating~\eqref{eq:concurrencyThm} for the case of \kl(crsType){generic} \kl{SqPO-semantics}. In particular, the explicit formula for the \kl(crsSqPOrComp){rule composition} is obtained by taking advantage of the results of Theorem~\ref{thm:SmofTrmof}, i.e., noting that the \kl(dc){source functor} $S:\bD_1\rightarrow \bD_0$ of the \kl{compositional rewriting double category} is a \kl{multi-opfibration}, and that the \kl(dc){target functor} $T:\bD_1\rightarrow \bD_0$ is a \kl{residual multi-opfibration}.
\qed\end{proof}

\begin{example}\label{ex:SqPOexplicit}
Let us illustrate the notion of \kl(crsSqPO){SqPO-type rule composition}, as given in Lemma~\ref{lem:NLRSsqpo}, with the following example in the setting of \kl{directed multigraphs}. %
\begin{equation}\label{eq:SqPOexample}
\vcenter{\hbox{\includegraphics[width=0.9\textwidth]{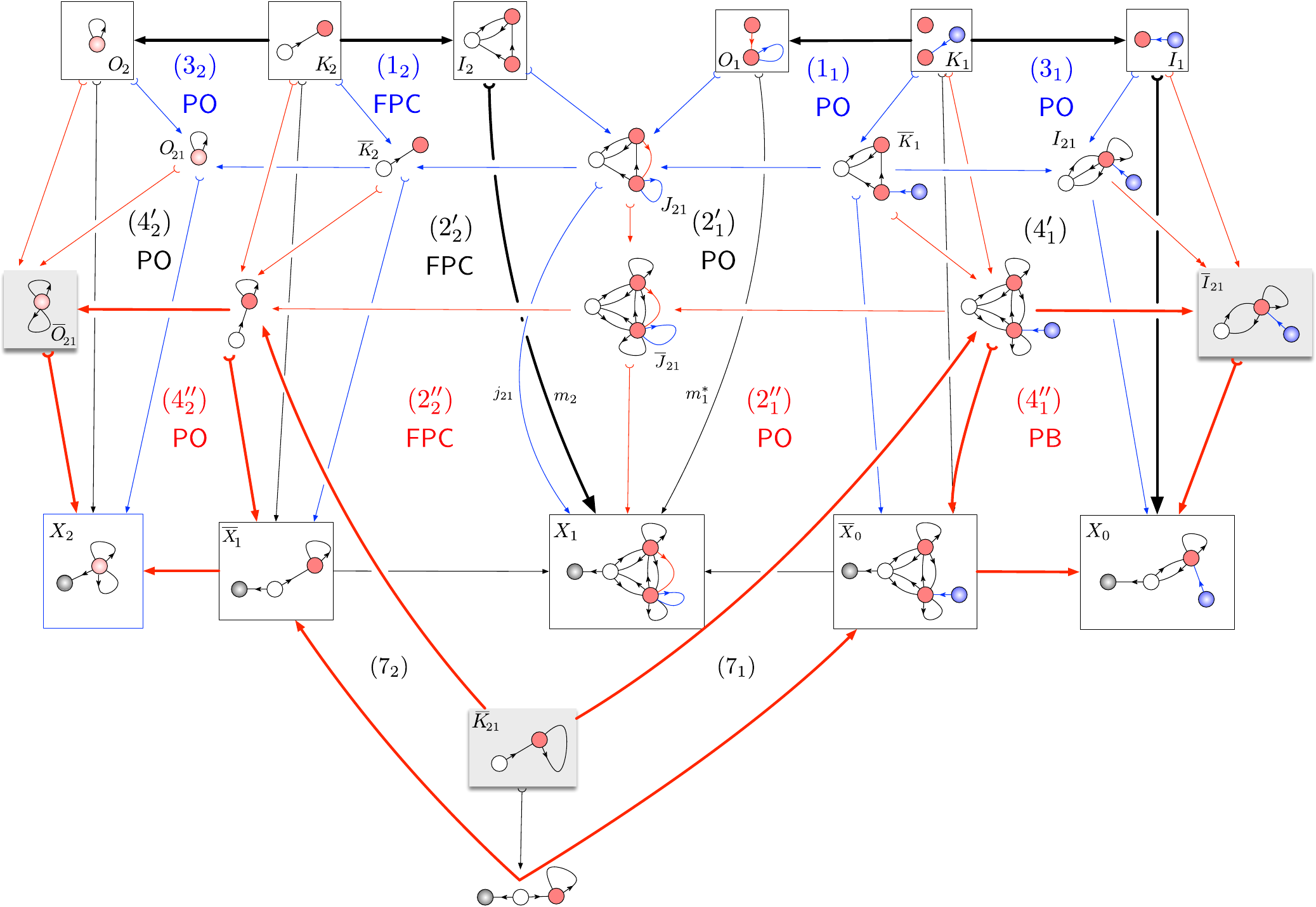}}}
\end{equation}
In order to provide the interested readers with some further intuitions for the relatively complex structure of \kl(crsSqPOrComp){rule compositions} in \kl(crsType){generic} \kl{SqPO-semantics}, we will present in the following a heuristic explanation for the precise shape of the diagram in~\eqref{eq:SqPOexample} (which is constructed via invoking Lemma~\ref{lem:NLRSsqpo}).

In this example, we have two rules. The first clones one node\footnote{Note that the structure of the homomorphisms may be inferred from the node positions, with the exception of the vertex clonings that are explicitly mentioned in the text.}, but not its incident edge, adds a new edge between that original node and its clone and then merges that original node with the other (blue) node of the input graph. The second rule deletes one node and then merges the two remaining nodes. The given applications to the graphs $X_0$ and $X_1$ illustrate some of the idiosyncrasies of SqPO-rewriting: 
\begin{itemize}
\item Since the node of $X_0$ that is being cloned possesses a self-loop, the result of cloning is two nodes, each with a self-loop, with one edge going each way between them. 
\item In the application of the second rule to $X_1$, we see the side-effect whereby all edges incident to the deleted node are themselves deleted (as also occurs in SPO-, but not in DPO-rewriting). 
\end{itemize}
The overall effect of the two rewrites can be seen in $X_2$; as usual, this depends on the overlap between the images of $O_1$ and $I_2$ in $X_1$. %
This overlap is precisely the \kl{multi-sum} element $J_{21}$. %
Since our example is set in an \kl{adhesive category}, this can be most easily computed by taking the \kl{pullback} of $m_1^*$ and $m_2$ and then the \kl{pushout} of the resulting span. %
The \kl{pushout} that defines the rewrite from $\overline{X}_0$ to $X_1$ can now be factorized by computing the \kl{pullback} of the arrow $j_{21}$ from $J_{21}$ to $X_1$ and the arrow from $\overline{X}_0$ to $X_1$; %
this determines $\overline{K}_1$ and its universal arrow from $K_1$ with the consequence that $(1_1)$ and $(2_1)$, the vertical pasting of $(2'_1)+(2''_1)$, are both \kl{pushouts}. %
Let us note that $\overline{K}_1$ is the appropriate member of the \kl{multi-IPC}, as determined by the particular structure of $\overline{X}_0$.

The \kl{pushout} $(3_1)$ induces a universal arrow from $I_{21}$ to $X_0$; %
but an immediate inspection reveals that this homomorphism is not a monomorphism (nor an epimorphism in this case). %
As such, we cannot hope to use $I_{21}$ as the input side of the composite rule. %
Furthermore, we find that the square $(4_1)$, the vertical pasting of $(4'_1)+(4''_1)$, is neither a \kl{pullback} nor a \kl{pushout}. %
However, the \kl{FPA} $\overline{I}_{21}$ resolves these problems by enabling a factorization of this square, %
based upon a factorization of the morphism $I_{21}\rightarrow X_0$ into an epimorphism $I_{21}\xrightarrow{e_{21}} \overline{I}_{21}$ and a monomorphism $\overline{I}_{21}\xrightarrow{m_{21}}X_0$. Note also that $(4''_1)$ and $(3_1)+(4'_1)$ are \kl{pullbacks} and indeed \kl{FPCs}. %
This factorization, as determined by $e_{21}$, can now be \emph{back-propagated} to factorize $(2_1)$ into \kl{pushouts} $(2'_1)$ and $(2''_1)$, %
which gives rise to an augmented version  $\overline{J}_{21}$ of the \kl{multi-sum} object $J_{21}$. %
Note moreover that the effect of back-propagation concerns also the contribution of the second rule in the composition: %
the final output graph contains an extra self-loop (compared to the graph $O_{21}$ defined by the \kl{pushout} $(3_2)$), which is induced by the extra self-loop of $\overline{J}_{21}$ that appears due to back-propagation.

We may then compute the composite rule via taking a \kl{pullback} to obtain $\overline{K}_{21}$, yielding in summary the rule $\overline{O}_{21}\leftarrow\overline{K}_{21}\rightarrow\overline{I}_{21}$. %
Performing the remaining steps of the \kl(cct){synthesis} part of the \kl{concurrency theorem} then amounts to constructing the commutative cube in the middle of the diagram, yielding the \kl{FPC} $(7_1)$ and the \kl{pushout} $(7_2)$, and thus finally the one-step \kl(crsSqPO){SqPO-type direct derivation} from $X_0$ to $X_2$ along the \kl(crsSqPOrComp){composite rule} $\overline{O}_{21}\leftarrow\overline{K}_{21}\rightarrow\overline{I}_{21}$.

Let us finally note, as a general remark, that if the first rule in an SqPO-type rule composition is \kl(crsType){output-linear} then the \kl{mIPC} is uniquely determined; and if it is \kl(crsType){input-linear} then the \kl{pushout} $(3_1)$ is also an \kl{FPC} (cf.\ Lemma~\ref{lem:FPCauxMonoStab}) and $(4_1)$ is a \kl{pullback}, by Lemma 2(h) of~\cite{nbSqPO2019}. In this case, the \kl{FPA} is trivial, and consequently so is the back-propagation process. Our rule composition can thus be seen as a conservative extension of that defined for \kl(crsType){linear} rules in~\cite{nbSqPO2019}. 
\end{example}

\section{Conclusions}\label{sec:conclusions}

\subsection{Concluding remarks}

We have presented a novel formalism for graph transformation and exploited this to provide generic results---the \kl{concurrency} and \kl{associativity theorems}---that characterize the categorical rewriting theories as compositional. %
These results have been proved once and for all and in a \emph{universal}, i.e., semantics-independent, fashion based upon our novel notion of \kl{compositional rewriting double categories (crDCs)}.
We have then investigated the conditions under which a variety of variants of both \kl{DPO-} and \kl{SqPO-semantics} yield \kl{crDCs}, and thus compositional categorical rewriting theories. %
In the case of \kl{SqPO-semantics}, we have in particular established that rewriting is compositional for fully general rules in the concrete setting of simple graphs. %
In the case of \kl{DPO-semantics}, our results also establish general conditions under which rewriting is compositional; it is worthwhile emphasizing that these conditions exclude the case of non-linear rules on \kl{directed simple graphs}. This failure (due concretely to \kl{$\mathbf{SGraph}$} not satisfying axiom \kl(notationVKb){(L-iii-b)}; cf.\ Table~\ref{tab:main}) appears to be indicative of the utility of our fine-grained analysis, in that the precise role played by each of the assumptions found to be sufficient for a given compositional rewriting semantics are highlighted, such that perhaps in future work alternative semantics or categorical constructions might be constructible in order to overcome these limitations. 

Our results are based on a new formalism that expresses the required categorical structure in terms of certain fibrational structures together with a small number of axioms specific to the case of rewriting. %
One particularly significant aspect of our approach is that it provides an intrinsic, and cognitively convenient, structuring of the very large number of lemmata used in graph transformation whose statements and proofs are scattered across the literature. %
We hope that this higher-level structuring will aid the process of formalizing this area of mathematics via proof assistants such as \texttt{Coq}~\cite{bertot2013interactive}, \texttt{Isabelle/HOL}~\cite{nipkow2002isabelle}, or \texttt{Lean}~\cite{conf/cade/MouraKADR15}. 
In this paper, the approach already enables more compact proofs and, indeed, we feel that the \kl{associativity theorem} would simply not be possible to express without these means. %
We are convinced that additional benefit may be extracted from this general setting.

\subsection{Relation to the extended abstract}

In the extended abstract of this paper~\cite{BehrHK21,BehrHK21-ext}, we showed that the important use case of transformation of \kl{directed simple graphs} under \kl{SqPO-semantics} requires the use of a restricted notion of matching---to \emph{edge-reflecting} injective homomorphisms, abstractly characterized by the so-called \emph{regular monos}---in order to prove the \kl{concurrency theorem} that provides a proper notion of rule composition. This led us to investigate more generally the categorical structure required to support rule composition, under the SqPO semantics, for fully general non-linear rules, and we established that \kl{quasi-topoi}~\cite{garner2012axioms,quasi-topos-2007,COCKETT2002223,adamek2006,COCKETT200361} naturally possess all the necessary structure.
In particular, our theorems do therefore apply to the category of \kl{directed simple graphs}---which forms a \kl{quasi-topos}~\cite{quasi-topos-2007,adamek2006} but fails to satisfy the axioms of \kl{adhesive} or \kl{quasi-adhesive} categories \cite{quasi-topos-2007}---for which no satisfactory prior \kl{concurrency theorem} has been proved\footnote{The setting of $\cM$-linear rules, where $\cM$ is the regular monos, prevents rules from deleting or adding edges and so has limited use in practice.}.
These results significantly generalized previous work on concurrency theorems for \kl(crsType){linear} \kl{SqPO rewriting} over \kl{adhesive categories}~\cite{nbSqPO2019}, and for \kl(crsType){linear} \kl{SqPO rewriting} with conditions in \kl{$\cM$-adhesive categories}~\cite{BK2020,behrRaSiR}.

This proof of the \kl{concurrency theorem} under the \kl{SqPO-semantics} relied on the existence of certain structures in quasi-topoi that, to the best of our knowledge, had not been previously noted in the literature: restricted notions of \kl{multi-sum} and \kl{multi-IPC} (mIPC) plus that of \kl{FPC-pushout-augmentation (FPA)}. The \kl{multi-sum} construction provides a generalization of the property of effective unions (in adhesive categories) that guarantees that all necessary monos are regular. The notions of \kl{mIPC} and \kl{FPA} handle the ``backward non-determinism'' introduced by non-linear rules: given a rule and a matching from its output graph, we cannot---unlike with linear or reversible non-linear rules---uniquely determine a matching from the input graph of the rule.

In the case of \kl{DPO-semantics}, we established (again in~\cite{BehrHK21,BehrHK21-ext}) a generalization of the \kl{concurrency theorem} to \kl(crsType){generic} \kl{DPO-semantics}, and presented \kl{rm-adhesive categories} as possessing sufficient properties in order to support this semantics. However, unlike in the case of SqPO-rewriting, this does not capture the case of generic DPO-rewriting in the category of simple graphs as this setting is well-known to satisfy only the weaker axioms of rm-quasi-adhesive categories. In the present paper, we have demonstrated that, slightly more generally, \kl{adhesive HLR categories} can also support \kl(crsType){generic} \kl{DPO-semantics}, thereby extending the range of applications of this type of semantics, although this still does not cover the important case of simple graphs.

In comparison to the much more technically involved ``direct'' proofs found in~\cite{BehrHK21,BehrHK21-ext}, the high-level abstraction offered by the novel fibrational approach to compositional rewriting theories permits the modularization of the proofs of the concurrency and associativity theorems in a very efficient manner. In particular, this relies upon making a clear separation of the concrete definitions of compositional rewriting theories, i.e., proving that a certain semantics and choice of base category gives rise to a \kl{compositional rewriting double category (crDC)}, from the universal structures offered by a crDC.

\subsection{Related work}

Conditions under which \kl{final pullback complements} (FPCs) are guaranteed to exist have been studied in~\cite{dyckhoff1987exponentiable} and also in~\cite{Corradini_2015} which provides a direct construction assuming the existence of appropriate \kl{$\cM$-partial map classifiers}~\cite{10.1007/978-3-642-15928-2_17,COCKETT200361}. We make additional use of these \kl{$\cM$-partial map classifiers} in order to construct \kl{multi-initial pushout complements}. This construction is a mild, but necessary for our purposes, generalization of the notion of minimal pushout complement defined in~\cite{BRAATZ2011246} that requires the universal property with respect to a larger class of encompassing pushouts---precisely analogous to the definition of \kl{FPC}---that additionally allows us to specify a family of solutions that collectively satisfy this universal property so as to handle the backward non-determinism of non-linear rules.

We also exploit the \kl(EMS){$\cE$-$\cM$-factorization} which every \kl{finitary} \kl{$\cM$-adhesive category} possesses~\cite{GABRIEL_2014}, where $\cE$ is the class of \kl(ssm){extremal morphisms} with respect to $\cM$ (cf.\ Theorem~\ref{thm:finitarityCats}). Noting that this factorization coincides with the well-known \kl(EMS){epi-regular mono-factorization}~\cite{adamek2006} in \kl{quasi-topoi}, we extended our approach from~\cite{BehrHK21,BehrHK21-ext} for constructing \kl{$\cM$-multi-sums} to the more general setting of \kl{finitary} \kl{$\cM$-adhesive categories} (cf.\ Lemma~\ref{lem:constrMS}). This makes it clear that the notion of \kl{$\cM$-multi-sum} is in fact nothing other than the well-known notion of \emph{($\cE'$, $\cM$)-pair factorizations} from the traditional literature on \kl{DPO-semantics} for \kl{$\cM$-adhesive categories}~\cite{ehrig:2006fund,GABRIEL_2014}. Moreover, we also find that the notion of \kl(EMS){$\cE$-$\cM$-factorization} permits a more general construction of \kl{FPC-pushout-augmentations (FPAs)} than the one we originally presented in~\cite{BehrHK21,BehrHK21-ext}, thereby opening up \kl(crsType){generic} \kl{SqPO-semantics} to categories other than \kl{quasi-topoi} (cf.\ Section~\ref{sec:exConstr:FPAs}). %

Overall, and as discussed throughout the paper, but in particular in full detail in Sections~\ref{sec:ccrts} and 6, our approach to determining suitable classes of categories supporting the various rewriting semantics relies heavily on the categorical rewriting and category theory literature, such as the traditional framework of Ehrig et al.\ that is based upon the notion of \kl{$\cM$-adhesive categories}~\cite{ehrig:2006fund}, but also a large number of works on a variety of other categories with \kl{adhesivity properties}~\cite{ls2004adhesive,lack2005adhesive,garner2012axioms,quasi-topos-2007,Corradini_2006,COCKETT2002223,ehrig2004adhesive,ehrig2010categorical,CORRADINI200543,COCKETT200361}.

While traditional results mostly concerned \kl(crsType){linear} or \kl(crsType){semi-linear} \kl{DPO-semantics}, for which \kl{$\cM$-adhesive categories} were found to pose a very general class of suitable categories~\cite{ehrig:2006fund,ehrig2010categorical}, and while every \kl{quasi-topos} is indeed also an \kl{$\cM$-adhesive category}~\cite[Lem.~13]{10.1007/978-3-642-15928-2_17} (for $\cM$ the class of regular monos), \kl{$\cM$-adhesive categories} or \kl{quasi-topoi} are in general \emph{not} \kl{adhesive HLR categories}, and may thus fail to support \kl(crsType){generic} \kl{DPO-semantics}. In particular, as mentioned above, the category \kl{$\mathbf{SGraph}$} of \kl{directed simple graphs} does \emph{not} support \kl(crsType){generic} \kl{DPO-semantics}. This is highly significant in view of practical applications: since restricting to edge-reflecting monomorphisms for defining the \kl(crs){rules} would mean that no edges could be created or deleted, in many practical applications, even those with no need for cloning or merging, it would be necessary to work with \emph{generic} monomorphisms, which consequently entails that one no longer has even a \kl{concurrency theorem} available for analyzing the resulting rewriting systems.

A notable application example in this regard is the \texttt{MØD} framework~\cite{andersen2016software}, which aims to implement a rewriting system capable of modeling organo-chemical reaction systems. As explained in further detail in~\cite{BK2020}, since this framework uses typed simple graphs with additional complex type- and degree-constraints for its base category, the operations of rule compositions in \texttt{MØD}~\cite{andersen2016software} would have to be considered strictly \textbf{not} mathematically consistent, since as demonstrated above $\mathbf{SGraph}$ does \textbf{not} support even a \kl{concurrency theorem} in \kl(crsType){generic} \kl{DPO-semantics} (and thus the same holds for typed and constrained variants of \kl{directed simple graphs}). This would be a technically severe failure, since ultimately organo-chemical rewriting is intended to faithfully model the \emph{continuous-time Markov chains (CTMCs)} that encode organo-chemical reaction systems; failure of being a compositional rewriting theory more explicitly would entail the absence of a suitable \emph{rule algebra} and \emph{stochastic mechanics} formalism, rendering a mathematically consistent formulation of the CTMCs impossible.

In summary, the work presented in the present paper represents a clear warning regarding the imprecise interpretation of rewriting semantics, and, more positively, provides a first step towards streamlining the framework of compositional rewriting theory such that verifying the consistency of applications of rewriting theory in a more transparent way.

\subsection{Future work}

The immediately preceding discussion about the failure of generic DPO-rewriting in certain concrete settings leads us naturally to the notion of $\cal{M},\cal{N}$-adhesive categories~\cite{10.1007/978-3-642-33654-6_15,habel2012m}: according to recent results of \cite[Thm.~3.1]{cgm:fossacs22}, $\mathbf{SGraph}$ (referred to as ``$\mathbf{DGraph}$'' in loc. cit.) is in fact \emph{($\regmono{\mathbf{SGraph}}$, $\mono{\mathbf{SGraph}}$)-adhesive} and \emph{($\mono{\mathbf{SGraph}}$, $\regmono{\mathbf{SGraph}}$)-adhesive}, i.e., carries two types of \emph{($\cM$,$\cN$)-adhesivity} structures. Based upon this type of property, as demonstrated in~\cite{10.1007/978-3-642-33654-6_15,habel2012m} one may modify the definition of \kl(crsType){generic} \kl{DPO-semantics} to a variant where the vertical morphisms in the definition of \kl(crsDPO){direct derivations} (cf.\ \eqref{eq:defCRSdpoMN}) are required to be in the class $\cN$, while the horizontal morphisms (i.e., those from which the rules are constructed) to be in the class $\cM\subseteq \mono{\bfC}$. For the concrete case of the category $\mathbf{SGraph}$, one possible choice of this modified type of DPO-semantics would be to let $\cN=\regmono{\mathbf{SGraph}}$ (depicted in the diagram below with $\rightarrowtail$ arrows) and $\cM=\mono{\mathbf{SGraph}}$ (depicted below\footnote{Coincidentally, this diagram is close in structure to the semantics of direct derivations in the aforementioned \texttt{MØD} framework, in that rules therein are in particular identities on the vertices of the graphs involved (albeit $\mathbf{SGraph}$ itself does of course not take into account the type- and degree-constraints relevant to organic chemistry).
} with $\hookrightarrow$ arrows):
\begin{equation}\label{eq:defCRSdpoMN}
\ti{defCRSdpoMN}
\end{equation}

According to~\cite[Thm.~1.5]{habel2012m}, pushout complements in an ($\cM$,$\cN$)-adhesive category are essentially unique if they exist, so the aforementioned notion of modified DPO-semantics lies in a certain sense ``in between'' \kl(crsType){linear} and \kl(crsType){generic} \kl{DPO-semantics}. More importantly, it was demonstrated in~\cite{habel2012m} that this modified DPO-semantics over ($\cM$,$\cN$)-adhesive categories admits a concurrency theorem (and most of the other properties of adhesive HLR categories in a suitably modified form), hence we believe it would be highly interesting to submit this notion of rewriting to our novel analysis method, i.e., to determine if (or under which additional conditions) the modified DPO-rewriting semantics yields a \kl{compositional rewriting double category}.

Clearly, it would also be interesting to study some other graph transformation semantics, such as PBPO(+)~\cite{CORRADINI2019213,10.1007/978-3-030-78946-6_4} or AGREE~\cite{Corradini_2015}, from the new viewpoint of our fibrational approach. It would also be fruitful to investigate how the proofs of additional key theorems, such as local confluence, might carry over to this framework as this would increase our confidence in its general applicability.
In general, it will be highly desirable to develop a curated collection of mathematical techniques (ideally formalized in proof assistants) that will permit to efficiently construct and analyze categories used in rewriting theories with suitable \kl{adhesivity properties}, including the aforementioned generalized notion thereof, as well as \kl{quasi-topos} structures, such that for practical applications of compositional categorical rewriting theories the entry barrier posed by the considerably technically intricate theoretical framework may be considerably lowered. As already presented in Sections~\ref{sec:caps} and~\ref{sec:qt}, at present there already exists a certain amount of mathematical methodology in this regard, most notably \kl{comma category} constructions of categories with \kl{adhesivity properties}, and a variant thereof, so-called \emph{Artin gluing}~\cite{quasi-topos-2007}, for constructing \kl{quasi-topoi} (cf.\ also~\cite{cgm:fossacs22} for recent advances in constructing ($\cM$,$\cN$)-adhesive categories). On the other hand, as summarized in Table~\ref{tab:main}, we are left to wonder whether it is indeed \kl{adhesivity properties} that most generally characterize categories suitable for \kl{SqPO-semantics}, since this semantics does \emph{not} require the fully-fledged variant of the relevant van Kampen square axioms (i.e., only axioms \kl(notationVKa){(X-iii-a)}, but not axioms \kl(notationVKb){(X-iii-b)}), while on the other hand in order to support compositional SqPO semantics, it is required that the underlying category \kl{has pullbacks}, and that it \kl(ssm){has FPCs along $\cM$-morphisms}. It would thus be highly desirable to find a more fine-grained and better adapted classification scheme for categories supporting compositional rewriting semantics of various kinds, for which in the present paper we have provided a first stepping stone.

\subsection*{Acknowledgements}

The authors would like to thank Richard Garner for insightful comments and suggestions related to the original conference version of this paper~\cite{BehrHK21}. In particular, he provided us with a sketch of the results presented in Theorem~\ref{thm:domFunctorFP} of Section~\ref{sec:examplesFib}; the developments for the other boundary functors were inspired by this particular example. %
We are also grateful to Paul-André Melliès and Noam Zeilberger for very fruitful related discussions which motivated us to investigate the fibrational properties of source and target functors and also the notion of double categories. %
Finally, we would like to thank the anonymous referees for their detailed remarks and questions which have greatly improved the present manuscript.

%
% BIBLIOGRAPHY
%

%

\clearpage
\appendix

\section{Collection of definitions and auxiliary properties}\label{sec:aux}

\subsection{Universal properties}\label{app:UP}

\begin{lemma}\label{lem:UP}
Let $\bfC$ be a category.
\begin{equation*}
\ti{up}
\end{equation*}
Then the following properties hold:
\begin{enumerate}
\item \AP\emph{\intro{Universal property of pushouts (POs)}}: Given a commutative diagram as in $(i)$, there exists a unique morphism $D-\bar{e}\to E$ such that $\bar{e}\circ d=e$ and $\bar{e}\circ c=e'$.
\item \AP\emph{\intro{Universal property of pullbacks (PBs)}}: Given a commutative diagram as in $(ii)$, there exists a unique morphism $X-\bar{x}\to A$ such that $a\circ\bar{x}=x$ and $b\circ\bar{x}=x'$.
\item \AP\emph{\intro{Universal property of final pullback complements (FPCs)}}: Given a commutative diagram as in $(iii)$ where $(a\circ x.y)$ is a PB of $(d,c')$, there exists a unique morphism $Y-x'\to C$ such that $c\circ x'=c'$, $x'\circ y=b\circ x$, and which satisfies that $(x,y)$ is the PB of $(b,x')$. 
\end{enumerate}
\end{lemma}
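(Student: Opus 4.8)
The plan is to observe that each of the three statements is merely a reformulation of the defining universal property of the corresponding categorical construction, so the proof consists of unwinding definitions and identifying, in each of the diagrams $(i)$--$(iii)$, the cone/cocone (resp.\ pullback-complement square) to which that universal property applies. The existence assertions are then immediate, and the clause ``unique up to isomorphisms'' is the standard artefact of the relevant object being determined only up to unique isomorphism: for a \emph{fixed} choice of pushout, pullback, or FPC object the mediating morphism is genuinely unique, and changing the object composes it with the comparison isomorphism.

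For part~(1), recall that a pushout $D$ of the span $B\leftarrow A\rightarrow C$ is initial among cocones under that span. The data of diagram~$(i)$ exhibits $E$, together with its two legs from $B$ and $C$, as precisely such a cocone, whence there is a unique $D-\bar e\to E$ commuting with the legs; passing to an isomorphic pushout object yields the stated uniqueness up to isomorphism. Part~(2) is formally dual: a pullback $A$ of the cospan $B\to D\leftarrow C$ is terminal among cones over it, the object $X$ of diagram~$(ii)$ with its two legs is such a cone, and the induced $X-\bar x\to A$ is unique with the same caveat on the choice of pullback object.

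For part~(3), recall (cf.\ the discussion in Section~\ref{sec:intro:bg} and \cite{dyckhoff1987exponentiable,Corradini_2015}) that the \kl{final pullback complement (FPC)} of a composable pair is by definition terminal among all pullback-complement squares over that pair. Given the FPC square with its four corners, and a configuration as in~$(iii)$ in which the composite pair $(a\circ x,y)$ is a pullback of $(d,c')$, the object $Y$ with its attendant legs constitutes a pullback-complement square over the same pair, so terminality supplies the unique mediating $Y-\bar{x'}\to C$; that the induced sub-square $(x,y)$ over $\bar{x'}$ is itself a pullback is exactly the extra clause built into the FPC universal property (and may alternatively be re-derived via \kl{pullback-pullback decomposition}, cf.\ Lemma~\ref{lem:dsl}). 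Uniqueness up to isomorphism follows once more from the FPC object being unique up to unique isomorphism.

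The only real obstacle is bookkeeping: one must draw diagram~$(iii)$ with labels matched so that the hypothesis ``$(a\circ x,y)$ is a PB of $(d,c')$'' coincides verbatim with the antecedent of the FPC universal property and the conclusion ``$(x,y)$ is the PB of $(b,x')$'' with its consequent. Once the diagram is set up in this way, there is nothing further to verify.
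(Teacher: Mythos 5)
Your proposal is correct and matches the paper's treatment: Lemma~\ref{lem:UP} is stated there without proof precisely because each item is a direct restatement of the defining universal property of pushouts, pullbacks and FPCs, which is exactly the unwinding you perform. Your one substantive remark --- that the strict uniqueness of the mediating morphism becomes ``unique up to isomorphism'' only when the (co)limit or FPC object itself is allowed to vary, and that the final clause ``$(x,y)$ is the PB of $(b,\bar{x}')$'' is either built into the FPC definition or recovered by \kl{pullback-pullback decomposition} --- is accurate and is all that needs saying.
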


\subsection{Stability properties}\label{sec:sp}

\begin{definition}\label{def:sPOFPC}
Let $\bfC$ be a category.
\begin{equation}\label{eq:stability}
\vcenter{\hbox{\ti{sp}
}}
\end{equation}
\begin{itemize}
\item A pushout $(*)$ in $\bfC$ is said to be \AP\intro(PO){stable under pullbacks} iff for every commutative cube over the pushout $(*)$ such as in the diagram above where all vertical squares are pullbacks, the top square $(\dag)$ is a pushout.
\item A final pullback complement (FPC) $(*)$ in $\bfC$ is said to be \AP\intro(FPC){stable under pullbacks} iff for every commutative cube over the FPC $(*)$ such as in the diagram above where all vertical squares are pullbacks, the top square $(\dag)$ is an FPC.
\end{itemize}
\end{definition}

\begin{lemma}\label{lem:poSP}
Two important examples of categories for which suitable stability properties for pushouts hold are given as follows:
\begin{enumerate}
\item In every \kl{adhesive category} $\bfC$, pushouts along monomorphisms (i.e., pushouts such as $(*)$ in~\eqref{eq:stability} with $a\in \mono{\bfC}$ or $b\in\mono{\bfC}$) are stable under pullbacks~\cite{ls2004adhesive}. This property is indeed axiom \kl(notationVKa){(A-iii-a)} of the van Kampen property of adhesive categories~\cite{garner2012axioms}.
\item In a \AP\intro{regular mono (rm)-quasi-adhesive category}~\cite[Def.~1.1 and Cor.~4.7]{garner2012axioms}, all pushouts along regular monomorphisms exist, these pushouts are also pullbacks, and in particular \kl(ssm){pushouts along regular monomorphisms} are \kl(PO){stable under pullbacks}. A useful characterization of \kl{rm-quasi-adhesive categories} is the following: a small category $\bfC$ which \kl{has pullbacks} and which \kl(ssm){has pushouts along regular monomorphisms} is rm-quasi-adhesive iff it has a full embedding into a \kl{quasi-topos} (preserving the aforementioned two properties). 
\end{enumerate}
\end{lemma}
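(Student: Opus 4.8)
The plan is to reduce both items to material already available in the excerpt: the definitions of the adhesivity variants (Definition~\ref{def:adhesivityProperties}), the notion of van Kampen square (Definition~\ref{def:VKS}), the notion of a pushout being stable under pullbacks (Definition~\ref{def:sPOFPC}), and the results of Garner and Lack cited in the statement \cite{ls2004adhesive,garner2012axioms}. Neither item carries genuine mathematical content beyond unwinding definitions and invoking the literature; the write-up is essentially a reconciliation of terminologies.

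For the first item I would place Definitions~\ref{def:sPOFPC} and~\ref{def:VKS} side by side. By Definition~\ref{def:adhesivityProperties}, in an adhesive category every pushout along a monomorphism is a van Kampen square, hence in particular satisfies the half \textbf{(VK-a)} of Definition~\ref{def:VKS}: for any commutative cube whose bottom face is that pushout, whose back faces are pullbacks, and whose two remaining (``front'' and ``right'') vertical faces are also pullbacks, the top face is a pushout. This is verbatim the hypothesis--conclusion pattern of Definition~\ref{def:sPOFPC}, which declares such a pushout stable under pullbacks exactly when the top face of every cube over it with all vertical faces pullbacks is again a pushout. So the first claim follows the moment one identifies the two cube diagrams as the same, and one records that this sub-statement is precisely what the paper abbreviates as (A-iii-a). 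The only actual work is the bookkeeping of matching the orientations and edge labels of the two diagrams so that the quantifier patterns coincide.

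For the second item the substantive input is a direct appeal to \cite{garner2012axioms}. I would recall the definition of rm-quasi-adhesivity (Def.~1.1 of loc.\ cit.): a category that has pullbacks, has pushouts along regular monomorphisms, and for which these pushouts enjoy the appropriate van Kampen behaviour. From this, the claim that such pushouts are also pullbacks follows either from Corollary~\ref{cor:adhPOPB} (whose proof builds the standard auxiliary cube with back and top faces simultaneously pushouts and pullbacks and then invokes the \textbf{(VK-b)} direction), or directly from Cor.~4.7 of \cite{garner2012axioms}; the stability under pullbacks follows from the \textbf{(VK-a)} direction exactly as in the first item. For the embedding characterization I would simply quote the relevant theorem of \cite{garner2012axioms} verbatim, since the nontrivial implication --- that a small category with pullbacks and van Kampen pushouts along regular monos admits a full, structure-preserving embedding into a quasi-topos --- is precisely the cited result and reproving it lies outside the scope of this paper.

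The main obstacle, such as it is, is not depth but precision of formulation: one must be scrupulous that ``stable under pullbacks'' in the sense of Definition~\ref{def:sPOFPC} is recognized as exactly the \textbf{(VK-a)} half of the van Kampen condition --- and not confused with the full van Kampen property --- so that the hypotheses actually used in each item are precisely those guaranteed by adhesivity, respectively rm-quasi-adhesivity.
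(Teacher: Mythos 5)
Your proposal is correct and takes essentially the same route as the paper, which records this lemma without proof as a recollection of the cited results: your identification of ``stable under pullbacks'' in the sense of Definition~\ref{def:sPOFPC} with the (VK-a) half of the van Kampen condition is exactly the intended reading of item~1, and item~2 is indeed just a quotation of Garner--Lack. One small caution: in Def.~1.1 of \cite{garner2012axioms}, rm-quasi-adhesivity is \emph{defined} by requiring pushouts along regular monomorphisms to be stable under pullback and to be pullbacks (it does not posit a full van Kampen axiom), so the route via (VK-b) or Corollary~\ref{cor:adhPOPB} is not available in that setting --- your fallback of citing Cor.~4.7 of \cite{garner2012axioms} directly is the correct move.
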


\begin{lemma}[\cite{reversibleSqPO}, Lem.~1]\label{lem:FPCs}
\AP Let $\bfC$ be a category that \kl{has pullbacks}. Then \intro{FPCs are stable under pullbacks}.
\end{lemma}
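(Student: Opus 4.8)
The statement to prove is Lemma~\ref{lem:FPCs}: in a category $\bfC$ that \kl{has pullbacks}, \kl{FPCs are stable under pullbacks}.

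\textbf{Setup.} The plan is to unfold Definition~\ref{def:sPOFPC}: we are given a commutative cube whose bottom face $(*)$ is an FPC, say $A-x\rightarrow B-y\rightarrow D$ with $A-a\rightarrow C$, $C-b\rightarrow D$ the complement (so $(x,y)$ is the pullback of $(b,\text{codomain leg})$ in the usual FPC diagram, cf.\ Lemma~\ref{lem:UP}(3)), and whose four vertical faces are pullbacks; we must show the top face $(\dag)$ is again an FPC. Denote the top face as the analogous square sitting over $(*)$, with primed objects $A'$, $B'$, $C'$, $D'$ and the vertical pullback legs connecting primed to unprimed objects. First I would record that, since $(*)$ is an FPC it is in particular a pullback, and since the vertical faces are pullbacks, by \kl{pullback-pullback composition} and \kl{pullback-pullback decomposition} (Lemma~\ref{lem:UP}/the decomposition lemmata) the top square $(\dag)$ is automatically a pullback; so the work is entirely in verifying the \emph{universality} clause of the FPC property for $(\dag)$.

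\textbf{Verifying the universal property.} Now suppose we are given a test configuration over $(\dag)$ as in the premise of the universal property of FPCs (Lemma~\ref{lem:UP}(3)): an object $Y$ with morphisms into $D'$ and the relevant cospan such that a certain composite square is a pullback. The strategy is to push this test data down into the bottom FPC $(*)$ using the vertical pullback legs, invoke the universal property of $(*)$ to obtain a unique mediating morphism into $C$, and then lift it back up to $C'$ using the universal property of the left vertical face (which is a pullback with codomain $C$ and $C'$). Concretely: compose the test morphisms with the appropriate vertical legs to land over $(*)$; check the pullback hypothesis is preserved under this composition (here one uses pasting of pullbacks again, plus the hypothesis that $\bfC$ \kl{has pullbacks} so that all auxiliary pullbacks one needs actually exist); apply the \kl{universal property of FPCs} for $(*)$ to get $Y\rightarrow C$; then the cone $(Y\rightarrow C,\ Y\rightarrow C')$ — wait, rather: the data $Y\rightarrow C$ together with $Y\rightarrow A'$ (or $Y\rightarrow D'$) forms a cone over the left vertical pullback face $C'\to C$, $C'\to\ldots$, so by its universal property there is a unique $Y\rightarrow C'$; and one checks this is exactly the mediating arrow required by the FPC property of $(\dag)$, and that it makes the requisite square a pullback (again by decomposition). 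Uniqueness follows because any such arrow, composed down to $C$, must equal the unique FPC-mediator of $(*)$, and the vertical face is a pullback, so it is pinned down.

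\textbf{Main obstacle.} The routine bookkeeping — which vertical leg to compose with, and repeatedly invoking pullback pasting/cancellation — is where care is needed; the genuinely delicate point is checking that the ``$(a\circ x\cdot y)$ is a PB of $(d,c')$'' hypothesis in the definition of the FPC universal property transports correctly between the top and bottom faces of the cube. This is essentially a diagram chase in a $3$-dimensional figure, and getting the orientations right (which square pastes with which) is the only real difficulty; once the cube is drawn, each individual step is an application of a pullback lemma already available in the appendix. I would therefore organize the proof around one carefully labelled cube diagram and then list the four or five pasting/cancellation steps explicitly. Since the paper cites this as \cite[Lem.~1]{reversibleSqPO}, I expect the intended proof is exactly this short chase, and I would present it at that level of detail rather than spelling out every commuting triangle.
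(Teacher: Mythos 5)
The paper does not actually prove this lemma --- it imports it by citation (\cite{reversibleSqPO}, Lem.~1) --- so there is no in-paper argument to compare against; your sketch is the standard three-dimensional pullback chase and it is essentially correct. One point to tighten: after obtaining the mediator $v\colon Y\to C$ from the universal property of the bottom FPC, the lift to $C'$ uses the vertical pullback face on $C'$, $C$, $D'$, $D$ with the cone $(v\colon Y\to C,\ Y\to D')$ --- so in your hedge ``$Y\rightarrow A'$ (or $Y\rightarrow D'$)'' only the second option is available, since $Y$ has no map to $A'$ (only the pulled-back test object does); the verification that the lifted square over the mediator is again a pullback then follows by cancellation against the vertical face on $A'$, $A$, $C'$, $C$, and uniqueness is pinned down exactly as you say, by post-composing with the two projections of the pullback $C'\cong C\times_D D'$.
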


\begin{proposition}\label{prop:qtt}
In a \kl{quasi-topos} $\bfC$,  \AP\intro(qt){unions of regular subobjects are effective}~\cite[Prop.~10]{quasi-topos-2007}, i.e., the union of two subobjects is computed as the pushout of their intersection, and moreover the following property holds: in a commutative diagram such as below, where $(c,a)$ is the pullback of $(h,p)$, $(d,b)$ the pushout of $(c,a)$, where all morphisms (except possibly $x$) are monomorphisms, and where either $p\in \regmono{\bfC}$ or $h\in \regmono{\bfC}$, then the induced morphism $x:D\rightarrow E$ is a \emph{monomorphism}~\cite[Prop.~2.4]{garner2012axioms}:
\begin{equation}
\ti{ebu}
\end{equation}
\end{proposition}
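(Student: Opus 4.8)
\emph{Proof plan.} Both assertions are established in the cited references, and the plan is to recall self-contained arguments. Write $p\colon A\to E$ and $h\colon B\to E$ for the two monomorphisms, one of which --- say $h$ --- lies in $\regmono{\bfC}$, let $P=A\times_E B$ be their intersection with projections $a\colon P\to A$ and $c\colon P\to B$, let $D=A+_P B$ be the pushout of $(a,c)$ with coprojections $b\colon A\to D$ and $d\colon B\to D$, and let $x\colon D\to E$ be the morphism obtained from $p$ and $h$ by the \kl{universal property of pushouts}. Since $\regmono{\bfC}$ is a \kl{stable system of monics} (Corollary~\ref{cor:main}), the pullbacks $a$ and $c$ of $h$ and $p$ are again monic, and $a$, being a pullback of the regular monomorphism $h$, is itself regular; hence $D$ arises as a pushout along a regular monomorphism, so that --- $\bfC$ being rm-quasi-adhesive by Corollary~\ref{cor:main} (cf.\ Lemma~\ref{lem:poSP}(2)) --- the pushout square defining $D$ is also a pullback and is stable under pullback. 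In a pullback square the edge opposite a monomorphism is monic, so $b$ and $d$ are monic too, and all morphisms of the diagram except possibly $x$ are monomorphisms.

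The core of the argument is to show that $x$ is monic, equivalently that its kernel pair $D\times_E D\rightrightarrows D$ is the diagonal. Here I would use that a \kl{quasi-topos} is locally cartesian closed (by definition), so that every pullback functor $f^{*}$ has a right adjoint and therefore preserves colimits. Viewing $x\colon D\to E$ as an object of $\bfC/E$, it is the pushout there of $(A\xrightarrow{p}E)\leftarrow(P\xrightarrow{pa}E)\to(B\xrightarrow{h}E)$; applying $x^{*}\colon\bfC/E\to\bfC/D$ gives $D\times_E D\cong (A\times_E D)+_{(P\times_E D)}(B\times_E D)$, and pulling the same pushout presentation of $D$ back along $p$, $h$ and $pa$ gives $A\times_E D\cong(A\times_E A)+_{(A\times_E P)}(A\times_E B)$, and analogously for $B\times_E D$ and $P\times_E D$. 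Now $A\times_E A\cong A$ and $A\times_E P\cong P$ because $p$ is monic, while $A\times_E B\cong P$ by the very definition of $P$; tracking the comparison maps, $A\times_E D$ is then the pushout of $a\colon P\to A$ along an isomorphism, hence $A\times_E D\cong A$, and likewise $B\times_E D\cong B$ and $P\times_E D\cong P$, whence $D\times_E D\cong A+_P B=D$ and $x$ is monic. An alternative route, closer to \cite{quasi-topos-2007,garner2012axioms} (which also treat the more general rm-quasi-adhesive setting), replaces this computation by a cube argument over the pushout square for $D$, using stability of pushouts along regular monomorphisms under pullback together with \kl{pullback-pullback decomposition}; in an \kl{adhesive category} the analogous statement for arbitrary monomorphisms follows directly from the van Kampen property.

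It remains, for the effectiveness statement, to treat the case $p,h\in\regmono{\bfC}$: one must see in addition that the monomorphism $x\colon D\to E$ is \emph{regular} and represents the join $p\vee h$ in the lattice of regular subobjects of $E$, which has binary joins since $\bfC$ has a regular-subobject classifier. By the \kl{universal property of pushouts}, any regular subobject of $E$ through which both $p$ and $h$ factor receives a comparison map out of $D$ that is monic by cancellation, so $x$ is the least upper bound of $p$ and $h$ among all subobjects of $E$; and $x$ is regular by a factorization argument that uses the identification $D\times_E D\cong D$ obtained above together with the fact that the pushout square for $D$ is a pullback. I expect the main obstacle to be precisely this bookkeeping --- verifying the comparison maps in the iterated pullback computation and establishing the regularity of $x$; the remaining steps are formal, and the full details appear in \cite[Prop.~10]{quasi-topos-2007} and \cite[Prop.~2.4]{garner2012axioms}.
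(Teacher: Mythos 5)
The paper itself gives no proof of Proposition~\ref{prop:qtt}: both halves are quoted from the literature (\cite[Prop.~10]{quasi-topos-2007} for effectiveness of unions of regular subobjects, \cite[Prop.~2.4]{garner2012axioms} for the monicity of $x$), so there is no in-paper argument to match your proposal against; what can be judged is whether your reconstruction stands on its own. Your central computation does: using that a quasi-topos is locally cartesian closed, so that every pullback functor preserves colimits, and then distributing the pushout presentation of $D$ over $(-)\times_E D$ to obtain $D\times_E D\cong (A\times_E D)+_{(P\times_E D)}(B\times_E D)\cong A+_P B\cong D$ with both projections the identity, is a correct descent-style argument for the monicity of $x$, and it is a genuinely different route from the cube/pullback-stability argument of the cited sources. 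Notice that it nowhere uses the hypothesis that $p$ or $h$ is regular --- that hypothesis is what makes the statement available in the weaker rm-quasi-adhesive setting of \cite{garner2012axioms}, where only pushouts along regular monos exist and are stable; in a quasi-topos your argument proves the monicity of $x$ for arbitrary monos, which is fine but worth saying explicitly rather than leaving the regularity hypothesis silently idle.

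Two points do need repair. First, the auxiliary claim ``in a pullback square the edge opposite a monomorphism is monic'' is false in general: in $\mathbf{Set}$ the square with corners $\emptyset,\emptyset,2,1$ obtained by pulling back $2\to 1$ along $\emptyset\to 1$ is a pullback in which $\mathrm{id}_\emptyset$ is monic while the opposite edge $2\to 1$ is not. In the mixed case the monicity of $b$ and $d$ is simply part of the hypotheses of the proposition, and in the regular--regular case it follows from stability of regular monos under pushout (Corollary~\ref{cor:main}(q-iii)/(q-iv)), so you should invoke that instead. Second, and more substantially, the effectiveness statement is not actually proved: its non-trivial content is that for $p,h\in\regmono{\bfC}$ the induced mono $x$ is again a \emph{regular} mono, so that the pushout really computes the join in the lattice of regular subobjects. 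Your ``factorization argument that uses $D\times_E D\cong D$'' is only a gesture --- triviality of the kernel pair gives monicity, not regularity; one has to show $x$ is a strong mono (equivalently produce the diagonal fill-ins, or argue via the regular-subobject classifier or the epi--regular-mono factorization of $x$ and show the epi part is invertible). As written, that half of the proposition is still delegated to \cite[Prop.~10]{quasi-topos-2007}, which is exactly what the paper does, but it should be flagged as such rather than presented as an argument.
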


\subsection{Single-square lemmata specific to $\cM$-adhesive categories}\label{app:SSL}

\begin{lemma}\label{lem:dsl}
Let $\bfC$ be an \kl{$\cM$-adhesive category}.
\begin{equation}
\ti{lemSSL}
\end{equation}
\begin{enumerate}
\item \AP\intro(Madh){Pushouts along $\cM$-morphisms are pullbacks}: if $(*)$ is a pushout and $\gamma \in \cM$, then $(*)$ is also a pullback.
\item \AP\intro(Madh){Stability of $\cM$-morphisms under pushout}: if $(*)$ is a pushout and $\gamma \in \cM$, then $\beta \in \cM$.
\end{enumerate}
\end{lemma}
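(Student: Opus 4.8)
The plan is to dispatch both items by reduction to structure already established for \kl{$\cM$-adhesive categories}; neither requires genuinely new work, and I would prove item (2) first, since item (1) uses it.

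For item (2) I would simply unfold definitions: by Definition~\ref{def:adhesivityProperties}, an \kl{$\cM$-adhesive category} is by definition a \kl{vertical weak adhesive HLR category}, and axiom \kl(adhVK){(V-ii)} of that definition asserts verbatim that \kl(ssm){$\cM$-morphisms are stable under pushout}. Hence, if the square $(*)$ is a pushout and $\beta\in\cM$, then its opposite leg $\gamma\in\cM$. There is nothing further to prove.

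For item (1) I would invoke Corollary~\ref{cor:adhPOPB}: in the case of a \kl{vertical weak adhesive HLR category} (equivalently, an \kl{$\cM$-adhesive category}) its conclusion is exactly that pushouts along $\cM$-morphisms are pullbacks, which is the claim for $\beta\in\cM$. If a self-contained argument is preferred, I would reproduce the cube construction underlying that corollary: build a commutative cube whose bottom face is $(*)$, using identity morphisms on the two endpoints of the $\cM$-leg, so that two of the vertical faces are ``identity squares'' — pullbacks because every $\cM$-morphism is a monomorphism — while the remaining two vertical faces are copies of $(*)$, i.e.\ pushouts along an $\cM$-morphism, so that by item (2) every vertical morphism of the cube lies in $\cM$; then the $\Leftarrow$ direction of the \emph{vertical} weak van Kampen property (axiom \kl(notationVKb){(V-iii-b)}), applied with the top and back faces being simultaneously pushouts and pullbacks, forces the remaining face $(*)$ to be a pullback.

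The one point requiring care — and the reason this small fact is worth recording separately — is arranging the auxiliary cube so that \emph{all} of its vertical morphisms are $\cM$-morphisms; this is precisely what permits use of the \emph{weak} (vertical) van Kampen axiom rather than the full van Kampen property, which in general fails in an \kl{$\cM$-adhesive category}. This is where item (2) enters, and it is the same subtlety already flagged in the proof of Corollary~\ref{cor:adhPOPB} (where the horizontal-weak case instead needed the outer morphisms to be in $\cM$). Everything else is routine bookkeeping of which squares of the cube are pushouts and which are pullbacks.
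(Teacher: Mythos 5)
Correct, and essentially the paper's own route: item (2) is verbatim axiom (V-ii) of Definition~\ref{def:adhesivityProperties} (an $\cM$-adhesive category being by definition a vertical weak adhesive HLR category), and item (1) is precisely Corollary~\ref{cor:adhPOPB}, whose proof is the cube argument via the $\Leftarrow$ (weak van Kampen) direction that you sketch — the paper gives no separate proof of this appendix lemma. One cosmetic slip in your optional self-contained cube: the identity vertical morphisms should sit over the two endpoints of the \emph{non}-$\cM$ leg of $(*)$, so that the lateral faces over $\beta$ and over $\gamma$ are the pullback squares induced by monicity, the face over the other plain leg is a degenerate square, and $(*)$ itself recurs only once, as the front face; with identities placed over both endpoints of the $\cM$-leg, $(*)$ cannot reappear as a front face of the cube (no morphism of $(*)$ has the span apex as codomain), so the van Kampen conclusion would not be applicable to it as literally described.
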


\subsection{Double-square lemmata}\label{app:DSL}

\begin{lemma}\label{lem:dsl}
Let $\bfC$ be a category. If in the following statements a class of monics $\cM$ is mentioned, these statements require $\bfC$ to possess a \kl{stable system of monics} $\cM$.

\begin{equation}\label{eq:lemDSL}
\ti{dsls}
\end{equation}
Given commutative diagrams as above, the following statements hold:
\begin{enumerate}
\item \AP \emph{\intro{Pushout-pushout-(de-)composition}}: if $(1)$ is a pushout, $(1)+(2)$ is a pushout iff $(2)$ is a pushout.
\item \AP\emph{\intro{Pullback-pullback-(de-)composition}}: if $(2)$ is a pullback, $(1)+(2)$ is a pullback iff $(1)$ is a pullback.
\item \AP \emph{\intro{Pushout-pullback-decomposition}}~\cite[Lem.~4]{reversibleSqPO}: if $(1)+(2)$ is a \textbf{stable pushout}\footnote{Here, ``stable'' refers to stability under pullbacks.} and $(1)$, $(2)$, $(*)$ are pullbacks, then $(1)$ and $(2)$ are both pushouts. (Note: If $a'$ and $b'$ are monomorphisms, the condition on $(*)$ is always satisfied.) Alternatively~\cite[Thm.~4.26.2]{ehrig:2006fund}, if $\bfC$ is a \kl{vertical weak adhesive HLR category} with respect to a \kl{stable system of monics} $\cM$, then if $a'$ and $b'$ are in $\cM$, $(1)+(2)$ is a pushout, and $(2)$ is a pullback, then $(1)$ and $(2)$ are both pushouts. If the category is in fact a \kl{weak adhesive HLR category} (i.e., if it also has \kl{horizontal weak VK squares}), then the decomposition also holds if instead of $a'$ the morphism $\alpha$ is in $\cM$.
\item \AP\emph{\intro{Pullback-pushout-decomposition}}~\cite[Lem.~B.2]{GOLAS2014}: if $\bfC$ is a \kl{vertical weak adhesive HLR category} with respect to a \kl{stable system of monics} $\cM$, $\chi$ is in $\cM$, $(1)+(2)$ is a pullback and $(1)$ is \textbf{stable pushout}, then $(1)$ and $(2)$ are both pullbacks.
\item \AP \emph{\intro{Horizontal FPC-FPC-(de-)composition}}~\cite[Lem.~2 \& 3]{Corradini_2006}, \cite[Prop.~36]{Loewe_2015}: if $(2)$ is an FPC (i.e., $(\beta,b')$ is an FPC of $(b,\chi)$), $(1)+(2)$ is an FPC iff $(1)$ is an FPC.
\item \AP \intro{Vertical FPC-FPC-(de-)composition}~\cite[Prop.~36]{Loewe_2015}: if $(3)$ is an FPC (i.e., $(\varphi,g)$ is an FPC of $(f,\varphi')$), 
\begin{enumerate}
\item if $(4)$ is an FPC (i.e., $(\gamma,h)$ is an FPC of $(g, \gamma')$), then $(3)+(4)$ is an FPC (i.e., $(\gamma\circ \varphi,h)$ is an FPC of $(f,\gamma'\circ\varphi')$)
\item if $(3)+(4)$ is an FPC (i.e., $(\gamma\circ \varphi,h)$ is an FPC of $(f,\gamma'\circ\varphi')$) and if $(4)$ is a pullback, then $(4)$  is an FPC (i.e., $(\varphi,g)$ is an FPC of $(f,\varphi')$).
\end{enumerate}
\item \AP \emph{\intro{Vertical FPC-pullback decomposition}}~\cite[Lem.~3]{reversibleSqPO}: if $(3)+(4)$ is an FPC (i.e., $(\gamma\circ \varphi,h)$ is an FPC of $(f,\gamma'\circ\varphi')$), both $(4)$ and $(\dag)$ are pullbacks, and if the diagram commutes, then $(3)$  is an FPC (i.e., $(\varphi,g)$ is an FPC of $(f,\varphi')$) and $(4)$ is an FPC (i.e., $(\gamma,h)$ is an FPC of $(g, \gamma')$). (Note: If $\gamma'$ and $\varphi'$ are monomorphisms, the condition on $(\dag)$ is always satisfied.)
\end{enumerate}
\end{lemma}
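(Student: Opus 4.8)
The plan is to dispatch the seven items by a combination of elementary diagram chases, for the pasting-type statements that hold in any category, and appeals to the literature already cited in the statement, for the items that genuinely require adhesivity or van~Kampen-type hypotheses.

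First I would treat items (1) and (2), the pushout- and pullback-pasting lemmas. These are formally dual, so it suffices to prove, say, (1): assuming $(1)$ is a pushout, one observes that any cocone on the outer boundary of $(1)+(2)$ restricts to a cocone on $(1)$, uses the \kl{universal property of pushouts} for $(1)$ to obtain a unique factorization through its pushout vertex, and thereby reduces the universal property of $(1)+(2)$ to that of $(2)$; the converse implication is symmetric, and item (2) follows by the dual argument. No hypotheses beyond existence of the relevant (co)limits enter here.

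Next I would handle item (5), horizontal FPC-FPC-(de-)composition. Since every FPC square is in particular a pullback (cf.\ Lemma~\ref{lem:UP}), item (2) already supplies the pullback half of the claim; the FPC half is then obtained by unwinding the \kl{universal property of FPCs} twice — a morphism into the outer FPC boundary factors first through $(2)$ and then through $(1)$, with the mediating morphisms forced to be unique and to induce pullbacks at each stage. This is the same style of argument already carried out for horizontal FPC composition in the proof of Lemma~\ref{lem:PBPOFPCcats}, so I would either reproduce it verbatim or simply refer back to it.

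For the remaining items I would invoke the references assembled in the statement: item (3) is \cite[Lem.~4]{reversibleSqPO}, with the \kl{$\cM$-adhesive} reformulation given by \cite[Thm.~4.26.2]{ehrig:2006fund}; item (4) is \cite[Lem.~B.2]{GOLAS2014}; item (6) is \cite[Prop.~36]{Loewe_2015}; and item (7) is \cite[Lem.~3]{reversibleSqPO}. The only real work here is bookkeeping — matching the orientation conventions and the placement of the $\cM$-morphisms in the diagrams displayed in the statement to those of the cited sources, and faithfully reproducing the parenthetical remarks (that the conditions on $(*)$ in (3) and on $(\dag)$ in (7) become automatic once the relevant horizontal morphisms are monic). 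I expect the main obstacle to be precisely this matching of hypotheses in items (3) and (4): the van~Kampen-type axioms come in several strengths (\kl{vertical weak adhesive HLR category}, \kl{horizontal weak adhesive HLR category}, \kl{weak adhesive HLR category}, full adhesivity), and one must take care to cite the variant whose stability assumption is actually available in the setting at hand, which is why the statement lists the alternative sufficient conditions explicitly rather than a single uniform hypothesis.
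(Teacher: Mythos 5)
Your proposal matches the paper's treatment: the paper states this lemma purely as a curated collection of standard results and gives no proof of its own, citing \cite[Lem.~4]{reversibleSqPO}, \cite[Thm.~4.26.2]{ehrig:2006fund}, \cite[Lem.~B.2]{GOLAS2014} and \cite[Prop.~36]{Loewe_2015} for items (3), (4), (6), (7), with items (1), (2), (5) left as the folklore pasting arguments you sketch. One caution: your fallback of ``simply refer back'' to Lemma~\ref{lem:PBPOFPCcats} for item (5) would be circular as the paper is organized, since the proof of that lemma itself invokes Lemma~\ref{lem:dsl}(5.\ \&\ 6.), so you should retain your direct universal-property argument there.
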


\section{Proofs not included in the main text}

\subsection{Proofs of Section 2}\label{app:ps2}

\noindent\textbf{Lemma~\ref{lem:ms1} (Multi-sum extension).}
Let $\bfC$ be a category that \kl(ms){has multi-sums} and that \kl{has pullbacks}. Then for every commutative diagram such as in~\eqref{eq:lem:multiSumExtension} below, where $A\rightarrow M\leftarrow B$ and $C\rightarrow N\leftarrow D$ are multi-sum elements, there exists a universal arrow $M\rightarrow N$ that makes the diagram commute.

\begin{equation}\label{eq:lem:multiSumExtension}
\ti{lemmaMultiSumExtension}
\end{equation}
\begin{proof}
Construct the diagram in~\eqref{eq:mselProof} below as follows:
\begin{itemize}
\item Since $\bfC$ has pullbacks, take a pullback in order to obtain the span $X\leftarrow P\rightarrow N$. Then by the \kl{universal property of pullbacks}, there exist morphisms $A\rightarrow P$ and $B\rightarrow P$.
\item Since $\bfC$ \kl(ms){has multi-sums}, %
and since the cospan $A\rightarrow P\leftarrow B$ and %
the morphism $P\rightarrow X$ provide a factorization of %
the cospan $A\rightarrow X \leftarrow B$, %
by the \kl(ms){universal property of multi-sums} there exists a unique morphism $M\rightarrow P$ that makes the diagram commute.
\item The morphism $M\rightarrow N$ claimed to exist is then obtained as the composition of the morphisms $M\rightarrow P$ and $P\rightarrow N$. Moreover, if $X\leftarrow P'\rightarrow N$ is another pullback of $X\leftarrow Y\rightarrow N$, by the \kl{universal property of pullbacks} there exists a unique isomorphism $P\rightarrow P'$; therefore, the composites of $M\rightarrow P$ and $P\rightarrow N$, and of $M\rightarrow P'$ and $P'\rightarrow N$, respectively, yield the same morphism $M\rightarrow N$, hence demonstrating unique existence, which concludes the proof.
\end{itemize}

\begin{equation}\label{eq:mselProof}
\ti{multiSumExtensionLemmaProof}
\end{equation}

\qed \end{proof}

\noindent\textbf{Lemma~\ref{lem:mofIsoLifting}.}
Let $M:\bfE\rightarrow \bfB$ be a \kl(mof){strong} \kl{multi-opfibration}. Then the following \kl(mof){lifting property of isomorphisms} is satisfied:
\begin{equation}\label{eq:lem:mofIsoLifting}
\begin{aligned}
&\forall\;  
\ti{mofIsoLiftingAapp}%
\; : \;\forall\; %
\ti{mofIsoLiftingBapp}\;:\\
&\qquad  (g\in \iso{\bfB}\;\Rightarrow \; \beta_j\in \iso{\bfE}) \land 
(f\in \iso{\bfB}\;\Rightarrow \; \varepsilon_j(f)\in \iso{\bfE})
\end{aligned}
\end{equation}
\begin{proof}
Let us first consider the claim $g\in \iso{\bfB}\;\Rightarrow \; \beta_j\in \iso{\bfE}$:
\begin{itemize}
\item Since $M$ is a functor, we have that $M(e'')=b''$ implies $M(id_{e''})=id_{b''}$. 
\item Since $b'-g\rightarrow b''$ is by assumption an isomorphism in $\bfB$, there exists a morphism $b''-g^{-1}\rightarrow b'$ such that $id_{b''}=g\circ g^{-1}$. 
\item By the defining properties of \kl{multi-opfibrations}, this in turn entails that there exist morphisms $e''-\varepsilon_k(g^{-1})\rightarrow e'_k$ and  $e'_k-\gamma_k\rightarrow e''$ in $\bfE$ such that $\gamma_k\circ \varepsilon_k(g^{-1})= id_{e''}$, $M(\varepsilon_k(g^{-1}))=g^{-1}$ and $M(\gamma_k)=g$, as depicted in diagram~\eqref{eq:mofIsoLiftingProofA} below.
\end{itemize}
\begin{equation}\label{eq:mofIsoLiftingProofA}
\ti{mofIsoLiftingProofA}
\end{equation}
\begin{itemize}
\item Since $M$ is a functor, $M(\varepsilon_k(g^{-1})\circ \beta_j\circ \varepsilon_j(f)) = f$. Thus the diagram in~\eqref{eq:mofIsoLiftingProofA} encodes two different liftings of $g\circ f$ (one via $\beta_j\circ \varepsilon_j(f)$, and the other one via $\gamma_k\circ(\varepsilon_k(g^{-1})\circ \beta_j\circ \varepsilon_j(f))$), hence by \kl(mof){essential uniqueness} of \kl(mof){strong} \kl{multi-opfibrations}, there exists an isomorphism $e'_j-\varphi\rightarrow e'_k$ that makes the diagram commute, so that in particular $\varepsilon_k(g^{-1})\circ \beta_j$ is an isomorphism.
\item By standard category theory, (i) a morphism is an isomorphism iff it is both a section and a retraction~\cite[Prop.~7.36]{adamek2006}, (ii) if the composite $y\circ x$ of two morphisms $y$ and $x$ is an isomorphism, $x$ is a section and $y$ is a retraction~\cite[Prop.~7.21 \& 7.27]{adamek2006},  and (iii) the composite of two isomorphisms is an isomorphism~\cite[Prop.~3.14]{adamek2006}.
\begin{itemize}
\item Since $\varepsilon_k(g^{-1})\circ \beta_j$ is an isomorphism, $\varepsilon_k(g^{-1})$ is a retraction (and $\beta_j$ a section).
\item Since $\gamma_k\circ \varepsilon_k(g^{-1})$ is an identity morphism and thus an isomorphism, $\varepsilon_k(g^{-1})$ is a section (and $\gamma_k$ a retraction).
\end{itemize}
Since $\varepsilon_k(g^{-1})$ is thus both a section and a retraction, it is an isomorphism, hence $\beta_j= (\varepsilon_k(g^{-1}))^{-1}\circ\varphi$ is the composite of two isomorphisms and therefore an isomorphism, which proves the claim.
\end{itemize}

The proof of the claim $f\in \iso{\bfB}\;\Rightarrow \; \varepsilon_j(f)\in \iso{\bfE}$ is completely analogous, with the salient steps summarized in diagram~\eqref{eq:mofIsoLiftingPartB} below:
\begin{equation}\label{eq:mofIsoLiftingPartB}
\ti{mofIsoLiftingProofB}
\end{equation}
\begin{itemize}
\item $f$ being an isomorphism entails that $f^{-1}\circ f=id_b$, hence since $M$ is a functor and $M(e)=b$, $M(id_e)=id_b = f^{-1}\circ f$.
\item By the \kl(mof){universal property of multi-opfibrations}, there exists an $\bfE$-morphism $e-\varepsilon_{\ell}(f)\rightarrow e_{\ell}'$ such that there exists a unique $\bfE$-morphism $e_{\ell}'-\gamma_{\ell}\rightarrow e$ satisfying $M(\varepsilon_{\ell}(f))=f$, $M(\gamma_{\ell})=f^{-1}$, and $\gamma_{\ell}\circ \varepsilon_{\ell}(f)=id_e$. 
\item By the \kl(mof){essential uniqueness} property of \kl(mof){strong} \kl{multi-opfibrations}, there exists a unique isomorphism $e'_{\ell}-\psi\rightarrow e'_j$ that makes the diagram commute, and so that in particular $\varepsilon_j(f)\circ\gamma_{\ell}=\psi$.
\item Since $\varepsilon_j(f)\circ\gamma_{\ell}=\psi$ is an isomorphism, $\gamma_k$ is a section; since $\gamma_{\ell}\circ \varepsilon_{\ell}(f)=id_e$ is an identity morphism and thus an isomorphism, $\gamma_{\ell}$ is a retraction; since thus $\gamma_{\ell}$ is both a retraction and a section, it is an isomorphism.
\item Finally, since $\varepsilon_j(f)=\psi\circ \gamma_{\ell}^{-1}$, $\varepsilon_j(f)$ is an isomorphism, which concludes the proof.
\end{itemize}
\qed \end{proof}

\noindent\textbf{Lemma~\ref{lem:pbsmopf} (Pullback-lifting lemma for strong multi-opfibrations).}
Let $\bfE$ be a category that \kl{has pullbacks}, and let $M:\bfE\rightarrow \bfB$ be a \kl(mof){strong} \kl{multi-opfibration}. Then the following property holds:
\begin{equation}\label{eq:mofPBsplittingLemmaApp}
\forall\;\ti{mofPBsplittingLemmaAapp}\;:\; 
\ti{mofPBsplittingLemmaBapp}
\end{equation}
More explicitly, for every diagram such as on the left of~\eqref{eq:mofPBsplittingLemmaApp}, whose bottom part contains a pullback square in $\bfB$, the following properties hold true:
\begin{enumerate}[label=(\roman*)]
\item There exists an $\bfE$-morphism $e-\varepsilon_j(f)\rightarrow e_j'$ such that there exists a unique $\bfE$-morphism $e_j'-\beta_j\rightarrow e'''$ with $M(\varepsilon_j(f))=f$ and $M(\beta_j)=h_1\circ g_1=h_2\circ g_2$, and such that the diagram commutes.
\item There then exist $\bfE$-morphisms $e_j'-\varepsilon_{j,k}(g_1)\rightarrow e''_{j,k}$ and $e_j'-\varepsilon_{j,\ell}(g_2)\rightarrow e''_{j,\ell}$ such that there exist unique $\bfE$-morphisms $e''_{j,k}-\gamma_{j,k}\rightarrow e'''$ and $e''_{j,\ell}-\delta_{j,\ell}\rightarrow e'''$ such that $M(\varepsilon_{j,k}(g_1))=g_1$, $\varepsilon_{j,\ell}(g_2))=g_2$, $M(\gamma_{j,k})=h_1$ and $M(\delta_{j,\ell})=h_2$, and such that the diagram commutes.
\item Moreover, the square in $\bfE$ into $e'''$ is a pullback.
\end{enumerate}
\begin{proof}
Claims (i) and (ii) follow directly from repeated applications of the \kl(mof){universal property of multi-opfibrations}. It thus remains to prove claim (iii), i.e., that the square in $\bfE$ on the top right of the diagram in~\eqref{eq:mofPBsplittingLemmaApp} is indeed a pullback. To this end, we construct the auxiliary diagram below by taking a pullback:
\begin{equation}\label{eq:proofMOFpbSplittingLemma}
\ti{proofMOFpbSplittingLemma}
\end{equation}
\begin{itemize}
\item By the \kl{universal property of pullbacks}, there exists an $\bfE$-morphism $e_j'-\pi\rightarrow e_p$ (where $e_p$ denotes the pullback object) that makes the diagram commute.
\item Since $M$ is a functor, we also obtain $\bfB$-morphisms $b'-p\rightarrow b_p$ (where $b_p=M(e_p)$), $p_1'=M(\pi_1')$ and $p_2'=M(\pi_2')$ that make the diagram commute.
\item By the \kl{universal property of pullbacks}, there exists a unique morphism $b_p-q\rightarrow b'$ that makes the diagram commute; since $(p\circ q)\circ p = p\circ ( q\circ p) = p$ and $p$ is unique, $(p\circ q)= id_{b'}$ and $q\circ p = id_b$, i.e., $p$ is both a section and a retraction, hence an isomorphism (and thus also $q=p^{-1}$).
\item Finally, by applying Lemma~\ref{lem:mofIsoLifting} for  $f=M(\pi)$ and $g=id_{b_p}$, we may demonstrate that $\pi$ is an isomorphism, hence indeed the claim that the square in $\bfE$ marked in blue is a pullback.
\end{itemize}
\qed \end{proof}

\noindent\textbf{Lemma~\ref{lem:rmopfUP}.}
Let $R:\bfE\rightarrow \bfB$ be a \kl{residual multi-opfibration}. Then \kl(rmof){residues} have the following \kl(rmofrup){universal property}:
\begin{equation}\label{eq:corUPrmofapp}
\forall \ti{corUPrmofAapp}%
\; : \; \exists %
\ti{corUPrmofBapp}\;\Rightarrow \tau\in \iso{\bfE}\land R(\tau)\in \iso{\bfB}
\end{equation}
In particular, this property entails that if a residue $f_{\star k}$ factorizes a residue $f_{\star j}$ as $f_{\star j}=R(\beta_k)\circ f_{\star k}$ for some $\beta_k\in \bfE$, then the residues $f_{\star j}$ and $f_{\star_k}$ (both of the same morphism $f\in \bfB$) are related by an isomorphism $R(\beta_k)\in \iso{\bfB}$, as are their liftings $\rho_j(f)=\beta_k\circ \rho_k(f)$ via $\beta_k\in \iso{\bfE}$.
\begin{proof}
It suffices to restate the second diagram in~\eqref{eq:corUPrmofapp} in the following equivalent form:
\begin{equation}
\ti{rmofUPresProof}
\end{equation}
The claim then follows by \kl(rmof){essential uniqueness} of \kl{residual multi-opfibrations}.
\qed \end{proof}

\subsection{Proofs of Section 4}\label{app:ps4}

\noindent\textbf{Lemma~\ref{lem:PBPOFPCcats}.}
The categories $\mathsf{T}_h(\bfC,\cM)$ and $\mathsf{T}_v(\bfC,\cM)$ for $\mathsf{T}\in \{\mathsf{PB}, \mathsf{PO} , \mathsf{FPC}\}$ as introduced in Definition~\ref{def:cstCats} are well-defined, i.e., their composition operations are well-typed, associative and unital.
\begin{proof}
Well-definedness of the horizontal and vertical composition operations is a standard result for pullback and pushout squares, while for final pullback complements (FPCs) this is a slight generalization of Lemma~\ref{lem:dsl}(5. \& 6.):
\begin{itemize}
\item \emph{horizontal FPC composition:} in sub-diagram $(a)$ of~\eqref{eq:FPCcompositions} below, given two horizontally composed FPC squares, an outer square which is a pullback, and morphisms $X\rightarrow A$ and $X\rightarrow C$ such that the diagram commutes, we have to prove that there exists a unique morphism $Y\rightarrow A'$ such that the diagram commutes and such that the square over $Y\rightarrow A'$ is a pullback.
\begin{enumerate}
\item Since the square $XCC'Y$ is a pullback and the triangle $XBC$ commutes, by the \kl{universal property of FPCs} there exists a unique morphism $Y-b\rightarrow B'$ that makes the diagram commute, and such that the square $XBB'Y$ is a pullback.
\item Since the square $XBB'Y$ is a pullback and the triangle $XAB$ commutes, by the \kl{universal property of FPCs} there exists a unique morphism $Y-a\rightarrow A'$ that makes the diagram commute, and such that the square $XAA'Y$ is a pullback.
\end{enumerate}
\item \emph{vertical FPC composition:} in sub-diagram $(b)$ of~\eqref{eq:FPCcompositions} below, given two vertically composed FPC squares, an outer square which is a pullback, and morphisms $X\rightarrow A$ and $X\rightarrow B$ such that the diagram commutes, we have to prove that there exists a unique morphism $Y\rightarrow A''$ such that the diagram commutes and such that the square over $Y\rightarrow A''$ is a pullback.
\begin{enumerate}
\item Take a pullback of $Y-b\rightarrow B''\leftarrowtail B'$, obtaining the span $Y\leftarrowtail P\rightarrow B'$ (where $Y\leftarrowtail P$ is in $\cM$ by \kl(ssm){stability of $\cM$-morphisms under pullback}). Since the square $XBB'Y$ commutes, by the \kl{universal property of pullbacks}, there exists a unique morphism $X-x\rightarrow P$ that makes the diagram commute. 
\item By \kl{pullback-pullback decomposition}, the square $XBB'P$ over $P\rightarrow B'$ is a pullback. Since moreover the triangle $XAB$ commutes, by the \kl{universal property of FPCs}, there exists a unique morphism $P-p\rightarrow A'$ that makes the diagram commute, and such that the square $XAA'P$ over $P-p\rightarrow A'$ is a pullback.
\item Invoking the \kl{universal property of FPCs} yet again, since the triangle $PB'A'$ commutes and the square $PB'B''Y$ is a pullback, there exists a unique morphism $Y\rightarrow A''$ that makes the diagram commute, and such that the square $PA'A''Y$ over this morphism is a pullback. By \kl{pullback-pullback decomposition} (or, equivalently, by vertical pasting of pullback squares), the square $XAA''Y$ is a pullback, which concludes the proof.
\end{enumerate}
\end{itemize}
\begin{equation}\label{eq:FPCcompositions}
\ti{fpcCompositions}
\end{equation}
Associativity of the horizontal and vertical pasting operations is manifest from the definition. Thus it remains to verify that these compositions are unital. To this end, note first that the units of the horizontal and vertical compositions are squares of the form $id_m$ and $id_f$, respectively, as below:
\begin{equation}\label{eq:sqUnits}
\ti{sqUnits}
\end{equation}
The only non-trivial part to prove is that squares of shapes $id_m$ and $id_f$ are simultaneously pullbacks, pushouts and FPCs. The first two properties are a standard exercise to prove, yet the proof of the FPC property deserves a brief clarification:
\begin{equation}\label{eq:FPCidProof}
\ti{FPCidProof}
\end{equation}
As depicted in~\eqref{eq:FPCidProof}(i), horizontal unitality trivially follows from \kl{pullback-pullback decomposition}. However, in order to prove that vertical unitality holds, in a situation as depicted in~\eqref{eq:FPCidProof}(ii), the additional observation necessary is that isomorphisms are stable under pullbacks, from which then together with \kl{pullback-pullback decomposition} the claim follows.
\qed \end{proof}

\noindent\textbf{Theorem~\ref{thm:domFunctorFP}.}
Let $\bfC$ be a category with a \kl{stable system of monics} $\cM$, and with the following additional properties:
\begin{enumerate}
\item $\bfC$ \kl{has pullbacks}.
\item $\bfC$ \kl(ssm){has pushouts and final pullback complements (FPCs) along $\cM$-morphisms}. 
\item Pushouts along $\cM$-morphisms are \kl(domCsqPBT){stable under pullbacks}.
\item \kl(domCsqPBT){Pushouts along $\cM$-morphisms are pullbacks}.
\end{enumerate}
Then the domain functor $dom:\textsf{PB}_h(\bfC,\cM)\rightarrow \bfC$ from the category of pullback squares along $\cM$-morphisms and under horizontal composition to the underlying category $\bfC$ satisfies the following properties:

\begin{enumerate}[label=(\roman*)]
\item \kl(cmtSq){$dom:\textsf{PB}_h(\bfC,\cM)\rightarrow \bfC$ is a Grothendieck fibration}  $dom:\textsf{PB}_h(\bfC,\cM)\rightarrow \bfC$ is a \kl{Grothendieck fibration}, with the \kl(gf){Cartesian liftings} given by FPCs.
\item \kl(cmtSq){$dom:\textsf{PB}_h(\bfC,\cM)\rightarrow \bfC$ is a Grothendieck opfibration}  $dom:\textsf{PB}_h(\bfC,\cM)\rightarrow \bfC$ is a \kl{Grothendieck opfibration}, with the \kl(gopf){op-Cartesian liftings} given by pushouts.
\item \kl(cmtSq){$dom:\textsf{PB}_h(\bfC,\cM)\rightarrow \bfC$ satisfies a Beck-Chevalley condition (BCC)}: adopting the notation $m-(f,f')\rightarrow n$ for morphisms in $\mathsf{PB}_h(\bfC,\cM)$ (cf.\ Figure~\ref{fig:boundaryFunctors}), consider a commutative square in $\mathsf{PB}_h(\bfC,\cM)$ that is mapped by $dom$ into a pullback square in $\bfC$:
\begin{equation}\label{eq:BCCpremise}
\ti{BCCpremise}
\end{equation}
Then the following two equivalent conditions hold:
\begin{itemize}
\item \kl(cmtSq){(BCC-1)}: $(f,f')$ is \kl(gopf){op-Cartesian} if $(i,i')$ is \kl(gopf){op-Cartesian} and $(g,g')$ and $(h,h')$ are \kl(gf){Cartesian}.
\item \kl(cmtSq){(BCC-2)}: $(g,g')$ is \kl(gf){Cartesian} if $(h,h')$ is \kl(gf){Cartesian} and $(f,f')$ and $(i,i')$ are \kl(gopf){op-Cartesian}.
\end{itemize}
\end{enumerate}
\begin{proof} 
\emph{Ad (iii) --- Beck-Chevalley condition (BCC):}

\begin{itemize}
\item \kl(cmtSq){(BCC-1)}: the premise of this condition is explicitly depicted in~\eqref{eq:proof:BCC-1}, i.e., the top and back squares are pullbacks, the left and right squares are FPCs, and the front square is a pushout. In order to demonstrate that this entails that the back square is a pushout, we take a pullback of the cospan $C'-i'\rightarrow D'\leftarrow g'-B'$, obtaining a span $C'\leftarrow P\rightarrow D'$.
\begin{itemize}
\item By the \kl{universal property of pullbacks}, there exist unique morphisms $A\rightarrow P$ and $A'\rightarrow P$ as indicated with dashed arrows.
\item By \kl{pullback-pullback decomposition}, the squares over $C'\leftarrow P$ and over $P\rightarrow B'$ are both pullbacks.
\item Since by assumption pushouts are \kl(domCsqPBT){stable under pullbacks}, the square over $P\rightarrow B'$ is a pushout.
\item Since according to Lemma~\ref{lem:FPCs} in a category such as $\bfC$ which by assumption \kl{has pullbacks}, FPCs are \kl(FPC){stable under pullbacks}, the square over $C'\leftarrow P$ is an FPC.
\item By the \kl{universal property of FPCs}, the arrow $A'\rightarrow P$ is an isomorphism, hence the back square is a pushout, which proves \kl(cmtSq){(BCC-1)}.
\end{itemize}
\end{itemize}

\begin{equation}\label{eq:proof:BCC-1}
\ti{proofBCC1A} \xrightarrow{\text{take PB}}
\ti{proofBCC1B}
\end{equation}

\begin{itemize}
\item \kl(cmtSq){(BCC-2)}: the premise of this condition is explicitly depicted in~\eqref{eq:proof:BCC-2}, i.e., the top and right squares are pullbacks, the left square is an FPC, and the front and back squares are pushouts. In order to demonstrate that this entails that the right square is an FPC, we take the final pullback complement of the sequence of morphisms $B-g\rightarrow D\rightarrowtail D'$ (which is admissible since by assumption $\bfC$ \kl(ssm){has FPCs along $\cM$-morphisms}), obtaining a sequence of morphisms $B\rightarrow F\rightarrow D'$.
\begin{itemize}
\item Since the front and left squares are morphisms in $\mathsf{PB}_h(\bfC,\cM)$ and thus pullbacks, by \kl{pullback-pullback composition} so is the vertical diagonal square $ADD'A'$ that arises as the composite of the front and left squares. %
Thus by the \kl{universal property of FPCs}, there exists a unique morphism $A'\rightarrow F$ as indicated with a dashed arrow. Since the back square is a pushout, by the \kl{universal property of pushouts} there exists a unique morphism $B'\rightarrow F$, again indicated with a dashed arrow.
\item Noting that the resulting configuration corresponds precisely to the precondition of \kl(cmtSq){(BCC-1)}, we find that the square $ABFA'$ over $A'\rightarrow F$ is a pushout. Thus by the \kl{universal property of pushouts}, $B'\rightarrow F$ is an isomorphism, and thus the right square is an FPC, which proves \kl(cmtSq){(BCC-2)}.
\end{itemize}
\end{itemize}

\begin{equation}\label{eq:proof:BCC-2}
\ti{proofBCC2A} \xrightarrow{\text{take FPC}}
\ti{proofBCC2B}
\end{equation}

\qed \end{proof}

\noindent\textbf{Theorem~\ref{thm:trgtFPC-FP}.}
Let $\bfC$ be a category with a \kl{stable system of monics}  $\cM$ and that \kl(ssm){has FPCs along $\cM$-morphisms}. %
Then the target functor $T: \mathsf{FPC}_v(\bfC,\cM)\rightarrow \bfC\vert_{\cM}$ is a \kl{Grothendieck opfibration}.
\begin{proof}
Let us first provide the claim in more explicit form, i.e., by instantiating the defining properties of a \kl{Grothendieck opfibration} (cf.\ equation~\eqref{eq:def:grothendieckOpfibration}) to the case at hand, where we use the shorthand notation $T$ for the target functor:
\begin{equation}\label{eq:proof:trgtFPCgopfClaim}
\ti{prooftrgtFPCgopfClaim}
\end{equation}
Recalling the definition of the category $\mathsf{FPC}_h(\bfC,\cM)$ (Definition~\ref{def:cstCats}), we may further expand the claim into the following equivalent form:

\begin{equation}\label{eq:proof:trgtFPCgopfClaimExplicit}
\ti{prooftrgtFPCgopfClaimExplicit}
\end{equation}
The first part of the claim, i.e., the existence of suitable liftings follows since $\bfC$ by assumption \kl(ssm){has FPCs along $\cM$-morphisms}. In order to prove the claim that these liftings have the \kl(gopf){op-Cartesianity} property, we take a pullback to arrive at the diagram below:
\begin{equation}\label{eq:proof:trgtFPCgopfClaimExplicit}
\ti{prooftrgtFPCgopfOpCartesianityPart}
\end{equation}
\begin{itemize}
\item By \kl{pullback-pullback decomposition}, the square over $g'$ is a pullback.
\item By \kl(ssm){stability of $\cM$-morphisms under pullback}, $P-\gamma'\rightarrow A''$ is an $\cM$-morphism. Since $A-\alpha''\rightarrow A''$ is in $\cM$ and $\alpha''=\gamma'\circ \gamma$, by \kl(ssm){decomposition property of $\cM$-morphisms}, we find that $\gamma\in \cM$.
\item Invoking \kl{vertical FPC-pullback decomposition}, the square under $P$ and the square over $g'$ are FPCs.
\item By the \kl{universal property of FPCs}, there exists thus a morphism $A'-\eta\rightarrow P$, which  by the \kl{universal property of FPCs} is an isomorphism.
\end{itemize}
Up until this point, we have proved that there exists a morphism $\alpha'=\gamma'\circ\eta$, and that the square under $\alpha'$ is an FPC. It remains to prove uniqueness of $\alpha'$. To this end, upon closer inspection of the second diagram in~\eqref{eq:proof:trgtFPCgopfClaimExplicit}, since $\beta'$ is in $\cM$ and thus in particular a monomorphism, the right vertical square is a pullback, and thus by \kl{pullback-pullback composition}, the composite of the right and back vertical squares is a pullback. Therefore, we may identify $\alpha'$ as the morphism that according to the \kl{universal property of FPCs} is guaranteed to exist (mediating before the aforementioned pullback square and the FPC in the front vertical square), and that is moreover unique as per the universal property.
\qed \end{proof}

\noindent\textbf{Lemma~\ref{prop:mIPCexistence}.}
Let $\bfC$ be a category with a \kl{stable system of monics} $\cM$. If \kl(ssm){pushouts along $\cM$-morphisms are stable under $\cM$-pullbacks}, %
and if \kl(lemMIPCexistence){pushouts along $\cM$-morphisms are pullbacks}, %
then $\bfC$ \kl{has multi-initial pushout complements (mIPCs) along $\cM$-morphisms}.
\begin{proof}
By definition, for every composable sequence of morphisms $A-f\rightarrow B\rtail\beta\rightarrow B'$ (i.e., with $\beta\in \cM$), the \kl{multi-initial pushout complement} $\mIPC{f}{\beta}$ consists of all composable sequences of morphisms $A\rtail\alpha\rightarrow A'-f'\rightarrow B'$ such that the resulting commutative square is a pushout:
\begin{equation}
\mIPC{f}{\beta} := \{(A\rtail\alpha\rightarrow A', A'-f'\rightarrow B')\in \mor{\bfC} \times \mor{\bfC} \mid \alpha\in \cM \land (f',\beta) =\pO{\alpha,f}\}\,.
\end{equation}
Since this class may in general be empty, it is non-trivial to prove the \kl(mIPC){universal property} of \kl{mIPCs}. To this end, let us construct the diagrams below:
\begin{equation}
\ti{mIPCexProof}
\end{equation}
Here, the left diagram encodes the premise of the \kl(mIPC){universal property of mIPCs}. The \emph{existence} part of the universal property may be demonstrated as follows:
\begin{itemize}
\item Taking a pullback as indicated to obtain the right diagram (which is admissible since by assumption $\bfC$ \kl(ssm){has pullbacks along $\cM$-morphisms}), we obtain morphisms $f'$, $\alpha$ and $\alpha'$.
\item By \kl(ssm){stability of $\cM$-morphisms under pullback}, $\alpha'$ is in $\cM$. Since $\alpha''=\alpha'\circ \alpha$ is in $\cM$ as well, by the \kl(ssm){decomposition property of $\cM$-morphisms}, we find that $\alpha\in \cM$.
\item Since $\alpha$ and $\beta$ are in $\cM$, the right and left vertical squares are pullbacks. The back vertical square is a pullback, since all squares of this form are such.
\item By assumption, \kl(ssm){pushouts along $\cM$-morphisms are stable under $\cM$-pullbacks}, hence the top square is a pushout. Thus by \kl{pushout-pushout decomposition}, so is the front square.
\end{itemize}
We have thus exhibited an element of $\mIPC{f}{\beta}$. %

It remains to prove the \emph{essential uniqueness} property of \kl{mIPCs}. Suppose we were given another pair of vertically composable pushouts as follows:
\begin{equation*}
\ti{defMIPCupCapp}
\end{equation*}
By assumption, \kl(lemMIPCexistence){pushouts along $\cM$-morphisms are pullbacks}, hence the pushout square $C'B'B''A''$ is also a pullback, which by the \kl{universal property of pullbacks} entails the existence of a unique isomorphism $C'\rightarrow A'$.
\qed \end{proof}

\noindent\textbf{Theorem~\ref{thm:FPCfact}.}
Let $\bfC$ be a category with a \kl{stable system of monics} $\cM$, that is \kl{($\cE$,$\cM$)-structured}, that \kl(ssm){has pushouts and FPCs along $\cM$-morphisms}, such that \kl(ssm){$\cM$-morphisms are stable under pushout}, and such that \kl(ssm){pushouts along $\cM$-morphisms are stable under $\cM$-pullbacks}.  %
Then the category $\mathsf{FPC}_v(\bfC, \cM)$ is \kl{(auto-augmented, inert)-structured}. Here, the class of \kl(FPCfact){auto-augmented FPCs}  is defined as
\begin{equation}\label{eq:defAAfpc}
\ti{defAAfpA} \in \mor{\mathsf{FPC}_v(\bfC,\cM)}\vert_{auto-augmented}
\; :\Leftrightarrow\; \exists\;
\ti{defAAfpcB}
\end{equation}
In words: an FPC square along an $\cM$-morphism (seen as a morphism in $\mathsf{FPC}_v(\bfC,\cM)$ is auto-augmented iff when taking a pushout of the span within the FPC, the mediating morphism into the cospan object of the FPC is a morphism in $\cE$.\footnote{Note that since we admit arbitrary morphisms of $\bfC$ for the horizontal morphisms, the mediating morphism would in general be a morphism with a non-trivial \kl(EMS){$\cE$-$\cM$-factorization}, hence for this morphism to be an $\cE$-morphism is indeed a non-trivial requirement.}
Moreover, the class of \kl(FPCfact){inert FPCs} is defined as 
\begin{equation}\label{eq:defIfpc}
\mor{\mathsf{FPC}_v(\bfC,\cM)}\vert_{inert} :=
\left.\left\lbrace\ti{defIfpc}\;\right\vert \alpha \in \cE\cap \cM = \iso{\bfC}\right\rbrace
\end{equation}
\begin{proof}
In order to demonstrate that the two classes of morphisms are both \kl(EMS){closed under composition with isomorphisms}, note first that this is true by definition for the \kl(FPCfact){inert FPCs}. For the \kl(FPCfact){auto-augmented FPCs}, it is useful to observe the following auxiliary fact about isomorphisms in $\mathsf{FPC}_v(\bfC, \cM)$:
\begin{equation}\label{eq:isosFPCv}
\forall
\ti{auxIsosFPCvA} \in \iso{\mathsf{FPC}_v(\bfC,\cM)}
\; :\;
\ti{auxIsosFPCvB}\;\Rightarrow \eta,q\in \cM\cap \cE=\iso{\bfC}
\end{equation}
In words: for every isomorphism in $\mathsf{FPC}_v(\bfC, \cM)$, which is an FPC square where the vertical morphisms are in $\cM\cap\cE=\iso{\bfC}$, if we take a pushout $A' \rightarrow P \leftarrow \eta- B$ of its span $A'\leftarrow \alpha - A\rightarrow B$, which by the \kl{universal property of pushouts} yields a mediating morphism $P-q\rightarrow B'$, then $\eta$  is in $\cM\cap\cE=\iso{\bfC}$ since isomorphisms are stable under pushout, hence $q = \beta\circ \eta^{-1}$ is the composite of two isomorphisms, and thus itself an isomorphism. Consequently, we find that isomorphisms in $\mathsf{FPC}_v(\bfC,\cM)$ are also pushout squares, and they are moreover both \kl(FPCfact){auto-augmented FPCs} and \kl(FPCfact){inert FPCs}. To conclude that the class of \kl(FPCfact){auto-augmented FPCs} is closed under composition with isomorphisms, it suffices then to consider the following diagrams:
\begin{equation}\label{eq:FPCvCompIsoAux}
(i)\; \ti{FPCvAAcompIsoA}
\qquad\qquad  (ii)\; \ti{FPCvAAcompIsoB}
\end{equation}
Diagram $(i)$ above demonstrates that post-composing an isomorphism with an  \kl(FPCfact){auto-augmented FPC}, and taking the indicated pushouts, the morphism $\gamma'$ in~\eqref{eq:FPCvCompIsoAux}(i) is an isomorphism by stability of isomorphisms under pushout, hence in particular also an $\cE$-morphism; thus the morphism $d$ which satisfies $d\circ \gamma'=\delta\circ e$ is an $\cE$-morphism, which proves that the composite square is indeed an \kl(FPCfact){auto-augmented FPC}. For diagram~\eqref{eq:FPCvCompIsoAux}(ii), which illustrates the pre-composition of an \kl(FPCfact){auto-augmented FPC} with an isomorphism, since isomorphisms in $\mathsf{FPC}_v(\bfC,\cM)$ are as demonstrated above also pushout squares, we find that the vertical composition of the pushout squares in~\eqref{eq:FPCvCompIsoAux}(ii) yields a pushout square with mediating morphism $e$ that is an $\cE$-morphism, which demonstrates that the pre-composition of an \kl(FPCfact){auto-augmented FPC} with an isomorphism yields an \kl(FPCfact){auto-augmented FPC}.

The next part of the proof amounts to showing that $\mathsf{FPC}_v(\bfC,\cM)$ \kl(EMS){has (auto-augmented, inert)-factorizations of morphisms}. To this end, for every morphism in $\mathsf{FPC}_v(\bfC,\cM)$, i.e., for an FPC square as in diagram~\ref{eq:proof:FPCvFact} below, we exhibit a factorization into an \kl(FPCfact){auto-augmented FPC} and an \kl(FPCfact){inert FPC} as follows:
\begin{itemize}
\item Take a pushout, thus obtaining a mediating morphism $P-p\rightarrow B'$.
\item Applying the \kl(EMS){$\cE$-$\cM$-factorization} to $p$ yields an $\cE$-morphism $P-e\rightarrow E$ and an $\cM$-morphism $E-m\rightarrow B'$.
\begin{itemize}
\item Since every square of the form as the square under $A'- e\circ g\rightarrow E$ is a pullback, by \kl{pullback-pullback decomposition} also the square over $e\circ g$ is a pullback. Thus by \kl{vertical FPC-pullback decomposition}, both the squares over and under $e\circ g$ are FPC squares.
\item By the \kl(ssm){decomposition property of $\cM$-morphisms}, since $\beta=m\circ (e\circ \pi)$ and $m$ are in $\cM$, $(e\circ \pi)$ is in $\cM$.
\end{itemize}
We thus confirm that the top subdiagram over $e\circ g$ manifestly has the structure of an  \kl(FPCfact){auto-augmented FPC}, while the bottom subdiagram over $e\circ g$ encodes an \kl(FPCfact){inert FPC}.
\end{itemize}

\begin{equation}\label{eq:proof:FPCvFact}
\ti{proofFPCvFactA}\; \xrightarrow{\text{take PO}}\;
\ti{proofFPCvFactB}\; \xrightarrow{\text{(epi,$\cM$)-fact.}}\;
\ti{proofFPCvFactC}
\end{equation}
It thus remains to prove that $\mathsf{FPC}_v(\bfC,\cM)$ \kl(EMS){has a unique (auto-augmented,inert) diagonalization property}. More explicitly, considering a diagram as in the left of~\eqref{eq:proof:FPCvUDprop} below, where the top square is  an \kl(FPCfact){auto-augmented FPC}, while the bottom square is an \kl(FPCfact){inert FPC}:
\begin{equation}\label{eq:proof:FPCvUDprop}
\ti{proofFPCvUDprop}
\end{equation}
\begin{itemize}
\item Since by definition of  \kl(FPCfact){inert FPCs} $\gamma$ is an isomorphism, we obtain a morphism $ \gamma^{-1}\circ \gamma'$, which is moreover in $\cM$ (since also $\gamma'$ is in $\cM$ by definition of morphisms in $\mathsf{FPC}_v(\bfC,\cM)$).
\item The existence of the morphism $ \gamma^{-1}\circ \gamma'$ in turn reveals that there exists a cospan $A'\xrightarrow{g\circ \gamma^{-1}\circ \gamma'} D\xleftarrow{\beta'}B$, which by the \kl{universal property of pushouts} entails the existence of a morphism $P-p\to D$. 
\item Extracting the subdiagram as in the right of~\eqref{eq:proof:FPCvUDprop} above, we find that by the \kl(EMS){unique $(E,M)$-diagonalization property} there exists a unique morphism $B'-b'\to D$ such that $\delta\circ b' =\delta'$; the latter then entails by the  \kl(ssm){decomposition property of $\cM$-morphisms} that $b'$ is in $\cM$.
\item Since the front and bottom squares are FPCs and thus in particular pullbacks, and since the diagram commutes, by \kl{pullback-pullback decomposition} the diagonal square containing the morphisms $\gamma^{-1}\circ \gamma'$ and $b'$ is a pullback.
\item Applying the two variants of \kl{vertical FPC-FPC decomposition}, we may then conclude that the diagonal square containing the morphisms $\gamma^{-1}\circ \gamma'$ and $b'$ is the unique FPC square that simultaneously decomposes both the back and the front FPCs, thus concluding the proof.
\end{itemize}
\qed\end{proof}

\noindent\textbf{Theorem~\ref{thm:SourceFPCvRMOF}.}
Let $\bfC$ be a category with a \kl{stable system of monics} $\cM$, %
that is \kl{($\cE$, $\cM$)-structured}, %
that \kl(ssm){has pullbacks, pushouts and FPCs along $\cM$-morphisms}, %
such that \kl(ssm){$\cM$-morphisms are stable under pushout}, %
and such that \kl(ssm){pushouts along $\cM$-morphisms are stable under $\cM$-pullbacks}. %
Then $S:\mathsf{FPC}_v(\bfC,\cM)\rightarrow \bfC\vert_{\cM}$ is a \kl{residual multi-opfibration}.
\begin{proof}
Let us first utilize the assumptions on the underlying category $\bfC$ in order to provide the following construction on FPC squares that are morphisms in $\mathsf{FPC}_v(\bfC,\cM)$ as described in~\eqref{eq:FPCvRMOPFaux} below:
\begin{itemize}
\item Taking a pushout of the span $A'\leftarrow \alpha- A-f\rightarrow B$, we obtain a cospan $A'-p\rightarrow P\leftarrow \beta-B$ where $\beta\in \cM$ by \kl(ssm){stability of $\cM$-morphisms under pushout}, as well as a unique mediating morphism $P-\beta'\rightarrow B''$.
\item Applying \kl(EMS){$\cE$-$\cM$-factorization} to $\beta'$, we obtain an $\cE$-morphism $P-e\rightarrow E$ and an $\cM$-morphism $E-m\rightarrow B''$ such that $\beta'=m\circ e$. 
\item Taking a pullback of the cospan $A''-f''\rightarrow B''\leftarrow m-E$, we obtain a span $A''\leftarrow \iota' - I -i\rightarrow E$, where by \kl(ssm){stability of $\cM$-morphisms under pullback} we find that $\iota'\in \cM$, and a unique mediating morphism $A'-\iota\rightarrow I$, which by the \kl(ssm){decomposition property of $\cM$-morphisms} is also in $\cM$.
\item Finally, applying \kl{pullback-pullback decomposition} followed by \kl{vertical FPC-FPC decomposition}, we may conclude that the bottom square is an FPC, and so is the vertical composite of the top and middle square. Moreover, the middle square is a \kl{FPC-pushout-augmentation} for the top (pushout) square.
\end{itemize}
\begin{equation}\label{eq:FPCvRMOPFaux}
\ti{FPCvRMOPFauxA}\;\xrightarrow{\text{take PO}}\;
\ti{FPCvRMOPFauxB}\;\xrightarrow[\text{\& take PB}]{\text{$\cE$-$\cM$-fact.}}\;
\ti{FPCvRMOPFauxC}
\end{equation}
It is clear that the vertical decomposition of the original FPC square provided via the above construction is essentially unique, in the sense that both the \kl(EMS){$\cE$-$\cM$-factorization} as well as the pullback taken in the last two steps are unique only up to isomorphisms.\footnote{Note that also the pushout taken in the first step is unique only up to isomorphisms, yet the pushout itself is not retained as part of the data of the vertical decomposition into two FPC squares, hence in this sense does not contribute to the effective ``degrees of freedom'' of the construction.} More explicitly, we have the following chain of arguments demonstrating the existence of unique isomorphisms mediating between any two vertical FPC decompositions obtained via the above procedure:
\begin{itemize}
\item If as in~\eqref{eq:proof:FPCvRMOPFeu} below $A'-q\rightarrow Q\leftarrow \gamma-B$ is another pushout of $A'\leftarrow \alpha-A-f\rightarrow B$, yielding also a unique mediating morphism $Q-\gamma'\rightarrow B''$, by the \kl{universal property of pushouts} there exists a unique isomorphism $P-\pi\rightarrow Q$ such that $\gamma \circ f = \pi\circ \beta\circ f = q\circ \alpha=\pi\circ p\circ \alpha$.%
\item If $Q-\tilde{e}\rightarrow \tilde{E}-\tilde{m}\rightarrow B''$ %
is an \kl(EMS){$\cE$-$\cM$-factorization} of $Q-\gamma'\rightarrow B''$, %
since $Q - e\circ \pi^{-1}\rightarrow E -m\rightarrow B''$ is another \kl(EMS){$\cE$-$\cM$-factorization} of $Q-\gamma'\rightarrow B''$, %
and since \kl(EMS){$\cE$-$\cM$-factorizations are essentially unique}, there exists a unique isomorphism $E-\varepsilon\rightarrow \tilde{E}$ such that $\varepsilon\circ e = \tilde{e}\circ \pi$ and $\tilde{m}\circ \varepsilon = m$.
\item Finally, if $A''\leftarrow \tilde{\iota}'-\tilde{I}-\tilde{i}\rightarrow \tilde{E}$ is a pullback of $A''-f''\rightarrow B''\leftarrow \tilde{m}-\tilde{E}$, with $A'-\tilde{\iota}\rightarrow \tilde{I}$ the unique mediating morphism, by the \kl{universal property of pullbacks}, there exists a unique isomorphism $I-\varphi\rightarrow \tilde{I}$ that makes the diagram commute.
\end{itemize}

\begin{equation}\label{eq:proof:FPCvRMOPFeu}
\ti{proofFPCvRMOPFeu}
\end{equation}
In summary, we have proved that for a given span of the form $A'\leftarrow \alpha-A-f\rightarrow B$ with $\alpha$ in $\cM$, the class of \kl{FPAs} for any pushout over $(\alpha,f)$ provides an explicit construction of \kl(rmof){residual multi-op-Cartesian liftings} (with \kl(rmof){residues} realized via \kl{FPAs}; compare~\eqref{eq:def:residualMultiOpfibration}), while the above-mentioned arguments demonstrate that this construction indeed yields the requisite vertical decomposition property of FPC squares up to residues (thus realizing the \kl(rmof){universal property of residual multi-opfibrations}) in an \kl(rmof){essentially unique} form.
\qed \end{proof}

\subsection{Proofs of Section 5}

\noindent\textbf{Lemma~\ref{lem:constrMS} (\cite{ehrig:2006fund}; \cite{GABRIEL_2014}, Fact~A.3.7).}
Let $\bfC$ be a \kl{finitary} \kl{vertical weak adhesive HLR category} with respect to a \kl{stable system of monics} $\cM$, and denote by $\cE$ the class of \kl(ssm){extremal morphisms} with respect to $\cM$.
\begin{enumerate}[label=(\roman*)]
\item \kl(msEM){Existence:} If $\bfC$ has binary coproducts, then every cospan of $\cM$-morphisms $A\rtail a\rightarrow Z\leftarrow b \ltail B$ factors essentially uniquely through a cospan of $\cM$-morphisms $A\rtail y_A\rightarrow Y\leftarrow y_B \ltail B$ and an $\cM$-morphism $Y\rtail m\rightarrow Z$, where $m$ is obtained via the \kl(EMS){$\cE$-$\cM$-factorization} $A+B-e\twoheadrightarrow Y \rtail m\rightarrow Z$ of the induced morphism $A+B-[a,b]\rightarrow Z$, and where $y_A=e\circ in_A$ and $y_B=e\circ in_B$.
\item \kl(msEM){Construction:} if $\bfC$ in addition has an \kl(ssm){$\cM$-initial object} $\mIO$, then $\Msum{A}{B}$ consists of cospans of $\cM$-morphisms obtained as pushouts $A\rtail p_A \rightarrow P\leftarrow p_B\ltail B$ of $\cM$-spans $A\leftarrow x_A \ltail X\rtail x_B\rightarrow B$ (i.e., ``$\cM$-partial overlaps'') extended by $\cE$-morphisms $P-q\twoheadrightarrow Q$ such that $q_A=q\circ p_A$ and $q_B=q\circ p_B$ are in $\cM$.
\item \kl(msEM){Refinements:} if $\bfC$ in addition \kl{has pullbacks}, and if \kl(ssm){pushouts along $\cM$-morphisms} in $\bfC$ are \kl(PO){stable under pullbacks}, then the extension morphisms $P-q\twoheadrightarrow Q$ are morphisms in $\cE\cap \mono{\bfC}$ (so-called ``refinements'').
\end{enumerate}
\begin{proof}
Even though most of this proof is in principle ``folklore'' in the graph rewriting literature~\cite{ehrig:2006fund,GABRIEL_2014}, we provide full details here, since we wish to demonstrate the claims in the generality stated, plus the presented statement regarding \kl(msEM){refinements} is a slight generalization of the corresponding statement in~\cite{BehrHK21}. As depicted in the diagram below left, the \kl(msEM){existence} of \kl{$\cM$-multi-sums} is guaranteed via \kl(EMS){$\cE$-$\cM$-factorization} of the induced morphism $A+B-[a,b]\rightarrow Z$, where $y_A=e\circ in_A$ and $y_B=e\circ in_B$ are in $\cM$ by the \kl(ssm){decomposition property of $\cM$-morphisms}:
\begin{equation}\label{eq:multiSumDef}
\ti{multiSumDef}
\end{equation}
In order to prove that the \kl(msEM){construction} provided for the case that $\bfC$ has an \kl(ssm){$\cM$-initial object} $\mIO$ (i.e., in addition to being a \kl{vertical weak adhesive HLR category}) is sound and characterizes the \kl{$\cM$-multi-sums} in $\bfC$ uniquely, consider diagram $(b)$ in~\eqref{eq:multiSumDef}.
\begin{itemize}
\item We first demonstrate that for every pushout $A\rtail p_A \rightarrow P\leftarrow p_B\ltail B$ of an $\cM$-span $A\leftarrow x_A \ltail X\rtail x_B\rightarrow B$, the induced morphism $A+B-[p_A,p_B]\rightarrow P$ is an $\cE$-morphism. To this end, construct the commutative cube below, where the top square is a pushout (cf.\ e.g.\ \cite[Fact~2.6]{GABRIEL_2014}), the bottom square is a pushout:
\begin{equation}\label{eq:msConstrProofA}
\ti{msConstrProofA}
\end{equation}
\begin{itemize}
\item Construct an \kl(EMS){$\cE$-$\cM$-factorization} $A+B-e_P\twoheadrightarrow E \rtail m_P\rightarrow P$ of the induced morphism $A+B-[p_A,p_B]\rightarrow P$.
\item Since $\bfC$ \kl(ssm){has pullbacks along $\cM$-morphisms}, we can take pullbacks in order to obtain the bottom front and right vertical squares in~\eqref{eq:msConstrProofA}, which by the \kl{universal property of pullbacks} induces unique morphisms $A-a'\rightarrow A'$ and $B-b'\rightarrow B'$, and by \kl(ssm){stability of $\cM$-morphisms under pullback}, all morphisms of the two pullback squares are in $\cM$.
\item Since $\bfC$ has an \kl(EMS){$\cE$-$\cM$-factorization}, isomorphisms such as in particular identity morphisms are both in $\cM$ and in $\cE$; since $id_A=\alpha\circ a'$, $id_B=\beta\circ b'$, by \kl(ssm){extremality} $\alpha$ and $\beta$ being $\cM$-morphisms implies that they are isomorphisms, and hence so are $a'$ and $b'$.
\item Take another pullback to obtain the middle horizontal square in~\eqref{eq:msConstrProofA}, which induces the unique $\cM$-morphism $\mIO\rtail \iota_{X'}\rightarrow X$ (since $\mIO$ is an \kl(ssm){$\cM$-initial object}). Since according to Corollary~\ref{cor:adhPOPB} pushouts along $\cM$-morphisms are pullbacks, we also obtain a unique morphism $X'-x\rightarrow X$ via the \kl{universal property of pullbacks}.
\item By \kl{pullback-pullback decomposition}, the bottom back and bottom left vertical squares are pullbacks. By stability of isomorphisms under pullback, $X'-x\rightarrow X$ is an isomorphism.
\item Since the bottom square is a \kl{vertical weak VK square}, and since all four vertical squares in the bottom half of the diagram are \kl(ssm){pullbacks along $\cM$-morphisms}, the middle horizontal square is a pushout. Thus by the \kl{universal property of pushouts}, $m_P$ is an isomorphism, which proves that $A+B-[p_A,p_B]\rightarrow P$ is an $\cE$-morphism.
\end{itemize}
\item It remains to demonstrate that any factorization of a cospan of $\cM$-morphisms $A\rtail a\rightarrow Z\leftarrow b\ltail B$ obtained via \kl(EMS){$\cE$-$\cM$-factorization} of the induced morphism $A+B-[a,b]\rightarrow Z$ may be equivalently obtained via extending a pushout $A\rtail p_A \rightarrow P\leftarrow p_B\ltail B$ of some $\cM$-span $A\leftarrow x_A \ltail X\rtail x_B\rightarrow B$ with an $\cE$-morphism $P-e\twoheadrightarrow Q$. To this end, consider yet again diagram $(b)$ in~\eqref{eq:multiSumDef}:
\begin{itemize}
\item Take a pullback of $A\rtail a\rightarrow Z\leftarrow b\ltail B$ to obtain a span $A\leftarrow x_A \ltail X\rtail x_B\rightarrow B$, which by \kl(ssm){stability of $\cM$-morphisms under pullback} is a span of $\cM$-morphisms.
\item Take a pushout $A\rtail p_A \rightarrow P\leftarrow p_B\ltail B$ of $A\leftarrow x_A \ltail X\rtail x_B\rightarrow B$, which by the \kl{universal property of pushouts} yields a unique morphism $P-z\rightarrow Z$. Moreover, since $\bfC$ is a \kl{vertical weak adhesive HLR category}, both $p_A$ and $p_B$ are $\cM$-morphisms.
\item Take an \kl(EMS){$\cE$-$\cM$-factorization} $P-q\twoheadrightarrow Q\rtail m_Q\rightarrow Z$.
\begin{itemize}
\item Let $q_A=q\circ p_A$ and $q_B=q\circ p_B$; since both $\cM$ and $\cE$ are \kl(EMS){closed under composition}, $q_A$ and $q_B$ are in $\cM$, while $q\circ e_P$ is in $\cE$.
\item By \kl(EMS){essential uniqueness} of \kl(EMS){$\cE$-$\cM$-factorizations}, there exists a unique isomorphism $Y\rightarrow Q$.
\end{itemize}
\end{itemize}
This concludes the proof of the soundness and completeness of our \kl(msEM){construction} for \kl{$\cM$-multi-sums}. 
\end{itemize}
Finally, let us consider the claim regarding \kl(msEM){refinements}, whereby if $\bfC$ in addition to being a \kl{vertical weak adhesive HLR category} also \kl{has pullbacks}, and moreover satisfies the property that pushouts along $\cM$-morphisms are \kl(PO){stable under pullbacks}, then the morphism $P-q\rightarrow Q$ of the above \kl(msEM){construction} of \kl{$\cM$-multi-sums} is both in $\cE$ and a monomorphism. To this end, first consider diagram $(i)$ in equation~\eqref{eq:proofMEmsMonoE} below:
\begin{itemize}
\item By \kl{pullback-pullback decomposition}, the left and back vertical squares in the bottom of diagram $(i)$ are pullbacks, thus by stability of isomorphisms under pullback, $Q'$ is isomorphic to $X$, hence also the upper left and back vertical squares are pullbacks.
\item By \kl{pullback-pushout decomposition}, the upper front and right vertical squares are pullbacks.
\end{itemize}

To finish the proof, construct diagram $(ii)$ in~\eqref{eq:proofMEmsMonoE} via taking a pullback (i.e., the square under $R$):
\begin{itemize}
\item As the span $\langle id_A, p_A\rangle$ is a pullback of the cospan $\rangle q_A,q\langle $, by \kl{pullback-pullback decomposition} $\langle id_A,r_A\rangle $ is a pullback of $\rangle a,p_R\langle$. Analogously, as $\langle id_B, p_B\rangle$ is a pullback of the cospan $\rangle q_B,q\langle $, by \kl{pullback-pullback decomposition} $\langle id_B,r_B\rangle $ is a pullback of $\rangle b,p_R\langle$.
\item Since the inner bottom horizontal square (i.e., the square marked $\mathsf{PO}$ into $P$) is a pushout of $\cM$-morphisms, and the vertical squares over its boundary are all pullbacks, by the assumed \kl(PO){stability under pullbacks} the inner horizontal middle square is a pushout.
\item By the \kl{universal property of pullbacks}, we find that $R-p_R\rightarrow P$ is an isomorphism (and thus also $P-r_P\rightarrow R$).
\end{itemize}
We have thus proved that the span $\langle id_P, id_P\rangle$ is a pullback of the cospan $\rangle q,q\langle$, which entails that $q$ is a monomorphism.

\begin{equation}\label{eq:proofMEmsMonoE}
\ti{proofMEmsMonoEpartA}\qquad
\ti{proofMEmsMonoEpartB}
\end{equation}
\qed\end{proof}


\begin{thebibliography}{71}
\expandafter\ifx\csname url\endcsname\relax
  \def\url#1{\texttt{#1}}\fi
\expandafter\ifx\csname urlprefix\endcsname\relax\def\urlprefix{URL }\fi
\expandafter\ifx\csname href\endcsname\relax
  \def\href#1#2{#2} \def\path#1{#1}\fi

\bibitem{BehrHK21}
N.~Behr, R.~Harmer, J.~Krivine, Concurrency theorems for non-linear rewriting
  theories, in: F.~Gadducci, T.~Kehrer (Eds.), Graph Transformation. ICGT 2021,
  Vol. 12741 of LNCS, Springer Cham, 2021, pp. 3--21.

\bibitem{BehrHK21-ext}
N.~Behr, R.~Harmer, J.~Krivine, {Concurrency theorems for non-linear rewriting
  theories (extended version with additional notes and proofs)},
  \href{http://arxiv.org/abs/2105.02842}{arXiv:2105.02842} (2021).

\bibitem{ehrig:2006fund}
H.~Ehrig, K.~Ehrig, U.~Prange, G.~Taentzer, {Fundamentals of algebraic graph
  transformation}, Monographs in Theoretical Computer Science. An EATCS series,
  Springer, 2006.

\bibitem{ls2004adhesive}
S.~Lack, P.~Soboci{\'{n}}ski, {Adhesive categories}, in: I.~Walukiewicz (Ed.),
  Foundations of Software Science and Computation Structures. FoSSaCS 2004,
  Vol. 2987 of LNCS, Springer Berlin Heidelberg, 2004, pp. 273--288.

\bibitem{lack2005adhesive}
S.~Lack, P.~Soboci{\'{n}}ski, {Adhesive and quasiadhesive categories}, {RAIRO}
  -- Theoretical Informatics and Applications 39~(3) (2005) 511--545.

\bibitem{garner2012axioms}
R.~Garner, S.~Lack, On the axioms for adhesive and quasiadhesive categories,
  Theory and Applications of Categories 27~(3) (2012) 27--46.

\bibitem{quasi-topos-2007}
P.~T. Johnstone, S.~Lack, P.~Soboci{\'{n}}ski, {Quasitoposes, quasiadhesive
  categories and Artin glueing}, in: T.~Mossakowski, U.~Montanari, M.~Haveraaen
  (Eds.), Algebra and Coalgebra in Computer Science. CALCO 2007, Vol. 4624 of
  LNCS, Springer Berlin Heidelberg, 2007, pp. 312--326.

\bibitem{GABRIEL_2014}
K.~Gabriel, B.~Braatz, H.~Ehrig, U.~Golas, Finitary $\mathcal{M}$-adhesive
  categories, Mathematical Structures in Computer Science 24~(4) (2014) 240403.

\bibitem{heckel2020graph}
R.~Heckel, G.~Taentzer, Graph transformation for software engineers: With
  applications to model-based development and domain-specific language
  engineering, Springer Cham, 2020.

\bibitem{braatz2008graph}
B.~Braatz, C.~Brandt, Graph transformations for the resource description
  framework, Electronic Communications of the EASST 10 (2008).

\bibitem{bonifati2019schema}
A.~Bonifati, P.~Furniss, A.~Green, R.~Harmer, E.~Oshurko, H.~Voigt, {Schema
  validation and evolution for graph databases}, in: A.~Laender, B.~Pernici,
  E.~Lim, J.~de~Oliveira (Eds.), Conceptual Modeling. ER 2019, Vol. 11788 of
  LNCS, Springer Cham, 2019, pp. 448--456.

\bibitem{chein2008graph}
M.~Chein, M.-L. Mugnier, Graph-based knowledge representation: Computational
  foundations of conceptual graphs, Advanced Information and Knowledge
  Processing, Springer London, 2008.

\bibitem{harmer2019bio}
R.~Harmer, Y.-S. Le~Cornec, S.~L{\'e}gar{\'e}, E.~Oshurko, Bio-curation for
  cellular signalling: The {KAMI} project, IEEE/ACM Transactions on
  Computational Biology and Bioinformatics 16~(5) (2019) 1562--1573.

\bibitem{harmer2020knowledge}
R.~Harmer, E.~Oshurko, Knowledge representation and update in hierarchies of
  graphs, Journal of Logical and Algebraic Methods in Programming 114 (2020)
  100559.

\bibitem{danos2007rule}
V.~Danos, J.~Feret, W.~Fontana, R.~Harmer, J.~Krivine, Rule-based modelling of
  cellular signalling, in: L.~Caires, V.~T. Vasconcelos (Eds.), Concurrency
  Theory. CONCUR 2007, Vol. 4703 of LNCS, Springer Berlin Heidelberg, 2007, pp.
  17--41.

\bibitem{faeder2009rule}
J.~R. Faeder, M.~L. Blinov, W.~S. Hlavacek, Rule-based modeling of biochemical
  systems with {B}io{N}et{G}en, in: I.~Maly (Ed.), Systems Biology. Methods in
  Molecular Biology, Vol. 500, Humana Press, 2009, pp. 113--167.

\bibitem{andersen2016software}
J.~L. Andersen, C.~Flamm, D.~Merkle, P.~F. Stadler, A software package for
  chemically inspired graph transformation, in: R.~Echahed, M.~Minas (Eds.),
  Graph Transformation. ICGT 2016, Vol. 9761 of LNCS, Springer Cham, 2016, pp.
  73--88.

\bibitem{bp2019-ext}
N.~Behr, P.~Sobocinski, {Rule algebras for adhesive categories (extended
  version)}, {Logical Methods in Computer Science} {Volume 16, Issue 3} (2020)
  2:1--2:38, {S}pecial issue for CSL 2018.

\bibitem{nbSqPO2019}
N.~Behr, {Sesqui-pushout rewriting: Concurrency, associativity and rule algebra
  framework}, EPTCS 309 (2019) 23--52, {P}roceedings of GCM 2019.

\bibitem{Behr2021}
N.~Behr, {On Stochastic Rewriting and Combinatorics via Rule-Algebraic
  Methods}, EPTCS 334 (2021) 11--28, {P}roceedings of TERMGRAPH 2020.

\bibitem{CorradiniMREHL97}
A.~Corradini, U.~Montanari, F.~Rossi, H.~Ehrig, R.~Heckel, M.~L{\"{o}}we,
  {Algebraic approaches to graph transformation - Part I: Basic concepts and
  double pushout approach}, in: Handbook of Graph Grammars and Computing by
  Graph Transformations, Volume 1: Foundations, World Scientific, 1997, pp.
  163--245.

\bibitem{Loewe_1993}
M.~L\"{o}we, {Algebraic approach to single-pushout graph transformation},
  Theoretical Computer Science 109~(1-2) (1993) 181--224.

\bibitem{Corradini_2006}
A.~Corradini, T.~Heindel, F.~Hermann, B.~K\"onig, {Sesqui-pushout rewriting},
  in: A.~Corradini, H.~Ehrig, U.~Montanari, L.~Ribeiro, G.~Rozenberg (Eds.),
  Graph Transformations. ICGT 2006, Vol. 4178 of LNCS, Springer Berlin
  Heidelberg, 2006, pp. 30--45.

\bibitem{dyckhoff1987exponentiable}
R.~Dyckhoff, W.~Tholen, Exponentiable morphisms, partial products and pullback
  complements, Journal of Pure and Applied Algebra 49~(1) (1987) 103--116.

\bibitem{taentzer1999distributed}
G.~Taentzer, Distributed graphs and graph transformation, Applied Categorical
  Structures 7 (1999) 431--462.

\bibitem{BRAATZ2011246}
B.~Braatz, U.~Golas, T.~Soboll, {How to delete categorically --- Two pushout
  complement constructions}, Journal of Symbolic Computation 46~(3) (2011)
  246--271.

\bibitem{diers1978familles}
Y.~Diers, Familles universelles de morphismes, Vol. 145 of Publications de
  l'U.E.R. math{\'e}matiques pures et appliqu{\'e}es, Universit{\'e} des
  sciences et techniques de Lille I, 1978.

\bibitem{reversibleSqPO}
V.~Danos, T.~Heindel, R.~Honorato-Zimmer, S.~Stucki, {Reversible sesqui-pushout
  rewriting}, in: H.~Giese, B.~K{\"o}nig (Eds.), Graph Transformation, ICGT
  2014, Vol. 8571 of LNCS, Springer Cham, 2014, pp. 161--176.

\bibitem{harmer2020reversibility}
R.~Harmer, E.~Oshurko, Reversibility and composition of rewriting in
  hierarchies, EPTCS 330 (2020) 145--162, {P}roceedings of GCM 2020.

\bibitem{Corradini_2015}
A.~Corradini, D.~Duval, R.~Echahed, F.~Prost, L.~Ribeiro, {AGREE} -- algebraic
  graph rewriting with controlled embedding, in: Parisi-Presicce, B.~F.,
  Westfechtel (Eds.), Graph Transformation. ICGT 2015, Vol. 9151 of LNCS,
  Springer Cham, 2015, pp. 35--51.

\bibitem{Boutillier:2018aa}
P.~Boutillier, M.~Maasha, X.~Li, H.~F. Medina-Abarca, J.~Krivine, J.~Feret,
  I.~Cristescu, A.~G. Forbes, W.~Fontana, {The Kappa platform for rule-based
  modeling}, Bioinformatics 34~(13) (2018) i583--i592.

\bibitem{BK2020}
N.~Behr, J.~Krivine, {Rewriting theory for the life sciences: A unifying
  framework for CTMC semantics}, in: F.~Gadducci, T.~Kehrer (Eds.), Graph
  Transformation. ICGT 2020, Vol. 12150 of LNCS, Springer Cham, 2020, pp.
  185--202.

\bibitem{behr2019tracelets}
N.~Behr, Tracelets and tracelet analysis of compositional rewriting systems,
  EPTCS 323 (2020) 44--71, {P}roceedings of ACT 2019.

\bibitem{streicher2018fibered}
T.~Streicher, {Fibered categories {\`a} la Jean Benabou},
  \href{http://arxiv.org/abs/1801.02927}{arXiv:1801.02927} (2018).

\bibitem{jacobs1999categorical}
B.~Jacobs, Categorical logic and type theory, Elsevier, 1999.

\bibitem{borceux1994handbook}
F.~Borceux, Handbook of categorical algebra: Volume 2, Categories and
  structures, Cambridge University Press, 1994.

\bibitem{benabou1985fibered}
J.~B{\'e}nabou, Fibered categories and the foundations of naive category
  theory, The Journal of Symbolic Logic 50~(1) (1985) 10--37.

\bibitem{grandis1999limits}
M.~Grandis, R.~Par{\'e}, Limits in double categories, Cahiers de topologie et
  g{\'e}om{\'e}trie diff{\'e}rentielle cat{\'e}goriques 40~(3) (1999) 162--220.

\bibitem{kelly1974review}
G.~M. Kelly, R.~Street, Review of the elements of 2-categories, in: G.~M. Kelly
  (Ed.), Category Seminar, Vol. 420 of LNM, Springer Berlin Heidelberg, 1974,
  pp. 75--103.

\bibitem{fiore2007pseudo}
T.~M. Fiore, Pseudo algebras and pseudo double categories, Journal of Homotopy
  and Related Structures 2~(2) (2007) 119--170.

\bibitem{hansen2019constructing}
L.~W. Hansen, M.~Shulman, Constructing symmetric monoidal bicategories
  functorially, \href{http://arxiv.org/abs/1910.09240}{arXiv:1910.09240}
  (2019).

\bibitem{FIORE20111174}
T.~M. Fiore, N.~Gambino, J.~Kock, Monads in double categories, Journal of Pure
  and Applied Algebra 215~(6) (2011) 1174--1197.

\bibitem{BMZ-2023}
N.~Behr, P.-A. Melli\`{e}s, N.~Zeilberger, {Convolution Products on Double
  Categories and Categorification of Rule Algebras}, in: M.~Gaboardi, F.~van
  Raamsdonk (Eds.), FSCD 2023, Vol. 260 of Leibniz International Proceedings in
  Informatics (LIPIcs), Schloss Dagstuhl -- Leibniz-Zentrum f{\"u}r Informatik,
  2023, pp. 17:1--17:20.
\newblock \href {https://doi.org/10.4230/LIPIcs.FSCD.2023.17}
  {\path{doi:10.4230/LIPIcs.FSCD.2023.17}}.

\bibitem{COCKETT2002223}
J.~Cockett, S.~Lack, {Restriction categories I: categories of partial maps},
  Theoretical Computer Science 270~(1) (2002) 223--259.

\bibitem{adamek2006}
J.~Adamek, H.~Herrlich, G.~E. Strecker, {Abstract and concrete categories: The
  joy of cats}, Number 17 of Reprints in Theory and Applications of Categories,
  2006.

\bibitem{ehrig2004adhesive}
H.~Ehrig, A.~Habel, J.~Padberg, U.~Prange, {Adhesive high-level replacement
  categories and systems}, in: H.~Ehrig, G.~Engels, F.~Parisi-Presicce,
  G.~Rozenberg (Eds.), Graph Transformations. ICGT 2004, Vol. 3256 of LNCS,
  Springer Berlin Heidelberg, 2004, pp. 144--160.

\bibitem{ehrig2010categorical}
H.~Ehrig, U.~Golas, F.~Hermann, {Categorical frameworks for graph
  transformation and HLR systems based on the DPO approach}, Bulletin of the
  EATCS 102 (2010) 111--121.

\bibitem{10.1007/978-3-642-15928-2_17}
T.~Heindel, {Hereditary pushouts reconsidered}, in: H.~Ehrig, A.~Rensink,
  G.~Rozenberg, A.~Sch{\"u}rr (Eds.), Graph Transformations (ICGT 2010), Vol.
  6372 of LNCS, Springer Berlin Heidelberg, 2010, pp. 250--265.

\bibitem{Grochau_Azzi_2019}
G.~Grochau~Azzi, A.~Corradini, L.~Ribeiro, {On the essence and initiality of
  conflicts in $\mathcal{M}$-adhesive transformation systems}, Journal of
  Logical and Algebraic Methods in Programming 109 (2019) 100482.

\bibitem{10.1145/3373718.3394762}
P.-A. Melli\`{e}s, L.~Stefanesco, Concurrent separation logic meets template
  games, in: Proceedings of the 35th Annual ACM/IEEE Symposium on Logic in
  Computer Science, LICS '20, Association for Computing Machinery, New York,
  NY, USA, 2020, pp. 742--755.

\bibitem{10.1007/978-3-319-61470-0_2}
J.~Padberg, Hierarchical graph transformation revisited, in: J.~de~Lara,
  D.~Plump (Eds.), Graph Transformation. ICGT 2017, Vol. 10373 of LNCS,
  Springer Cham, 2017, pp. 20--35.

\bibitem{padberg2017towards}
J.~Padberg, {Towards $\mathcal{M}$-adhesive categories based on coalgebras and
  comma categories}, \href{http://arxiv.org/abs/1702.04650}{arXiv:1702.04650}
  (2017).

\bibitem{cgm:fossacs22}
D.~Castelnovo, F.~Gadducci, M.~Miculan, A new criterion for
  $\mathcal{M,N}$-adhesivity, with an application to hierarchical graphs, in:
  P.~Bouyer, L.~Schr\"oder (Eds.), Foundations of Software Science and
  Computation Structures. FoSSaCS 2022, Vol. 13242 of LNCS, Springer Cham,
  2022, pp. 205--224.

\bibitem{Lack2006}
S.~Lack, P.~Soboci{\'{n}}ski, {Toposes are adhesive}, in: A.~Corradini,
  H.~Ehrig, U.~Montanari, L.~Ribeiro, G.~Rozenberg (Eds.), Graph
  Transformations. ICGT 2006, Vol. 4178 of LNCS, Springer Berlin Heidelberg,
  2006, pp. 184--198.

\bibitem{CORRADINI200543}
A.~Corradini, F.~Gadducci, {On term graphs as an adhesive category}, Electronic
  Notes in Theoretical Computer Science 127~(5) (2005) 43--56, {P}roceedings of
  the 2nd International Workshop on Term Graph Rewriting (TERMGRAPH 2004).

\bibitem{orejas2010symbolic}
F.~Orejas, L.~Lambers, Symbolic attributed graphs for attributed graph
  transformation, Electronic Communications of the EASST 30 (2010).

\bibitem{COCKETT200361}
J.~Cockett, S.~Lack, {Restriction categories II: partial map classification},
  Theoretical Computer Science 294~(1) (2003) 61--102.

\bibitem{johnstone2002sketches}
P.~T. Johnstone, {Sketches of an Elephant -- A Topos Theory Compendium: Volume
  1}, Oxford University Press, 2002.

\bibitem{MONRO1986141}
G.~Monro, Quasitopoi, logic and {H}eyting-valued models, Journal of Pure and
  Applied Algebra 42~(2) (1986) 141--164.

\bibitem{lmcs:1573}
V.~Danos, R.~Harmer, R.~Honorato-Zimmer, {Thermodynamic graph-rewriting},
  {Logical Methods in Computer Science,} {Volume 11, Issue 2} (2015) 1--26.

\bibitem{Johnson2021}
N.~Johnson, D.~Yau, 2-dimensional categories, Oxford University Press, 2021.

\bibitem{behrRaSiR}
N.~Behr, J.~Krivine, Compositionality of rewriting rules with conditions,
  {Compositionality} 3~(2) (2021).

\bibitem{bertot2013interactive}
Y.~Bertot, P.~Cast{\'e}ran, Interactive theorem proving and program
  development: Coq'Art: the calculus of inductive constructions, Texts in
  Theoretical Computer Science. An EATCS Series, Springer Berlin Heidelberg,
  2004.

\bibitem{nipkow2002isabelle}
T.~Nipkow, M.~Wenzel, L.~C. Paulson, Isabelle/HOL: a proof assistant for
  higher-order logic, Vol. 2283 of LNCS, Springer Berlin Heidelberg, 2002.

\bibitem{conf/cade/MouraKADR15}
L.~de~Moura, S.~Kong, J.~Avigad, F.~van Doorn, J.~von Raumer, {The Lean theorem
  prover (system description).}, in: A.~Felty, A.~Middeldorp (Eds.), Automated
  Deduction - CADE-25. CADE 2015, Vol. 9195 of LNCS, Springer Cham, 2015, pp.
  378--388.

\bibitem{10.1007/978-3-642-33654-6_15}
A.~Habel, D.~Plump, $\mathcal{M, N}$-adhesive transformation systems, in:
  H.~Ehrig, G.~Engels, H.~J. Kreowski, G.~Rozenberg (Eds.), Graph
  Transformation. ICGT 2012, Vol. 7562 of LNCS, Springer Berlin Heidelberg,
  2012, pp. 218--233.

\bibitem{habel2012m}
A.~Habel, D.~Plump,
  \href{http://formale-sprachen.informatik.uni-oldenburg.de/~skript/fs-pub/HaPl12b.pdf}{{$\mathcal{M,
  N}$-adhesive transformation systems (long version)}} (2012).
\newline\urlprefix\url{http://formale-sprachen.informatik.uni-oldenburg.de/~skript/fs-pub/HaPl12b.pdf}

\bibitem{CORRADINI2019213}
A.~Corradini, D.~Duval, R.~Echahed, F.~Prost, L.~Ribeiro, {The PBPO graph
  transformation approach}, Journal of Logical and Algebraic Methods in
  Programming 103 (2019) 213--231.

\bibitem{10.1007/978-3-030-78946-6_4}
R.~Overbeek, J.~Endrullis, A.~Rosset, {Graph rewriting and relabeling with
  PBPO{$^{+}$}}, in: F.~Gadducci, T.~Kehrer (Eds.), Graph Transformation. ICGT
  2021, Vol. 12741 of LNCS, Springer Cham, 2021, pp. 60--80.

\bibitem{GOLAS2014}
U.~Golas, A.~Habel, H.~Ehrig, {Multi-amalgamation of rules with application
  conditions in $\mathcal{M}$-adhesive categories}, Mathematical Structures in
  Computer Science 24~(4) (2014) 240405.

\bibitem{Loewe_2015}
M.~L\"{o}we, {Polymorphic sesqui-pushout graph rewriting}, in:
  F.~Parisi-Presicce, B.~Westfechtel (Eds.), Graph Transformation. ICGT 2015,
  Vol. 9151 of LNCS, Springer Cham, 2015, pp. 3--18.

\end{thebibliography}
\end{document}